\newcommand{\out}[1]{\mbox{{\small \sc OUT}\ensuremath{({#1})}}}
\newcommand{\droot}[1]{\mbox{{\small \sc ROOT}\ensuremath{({#1})}}}
\newcommand{\mb}[1]{\mbox{{\small \sc MB}\ensuremath{({#1})}}}
\newcommand{\nodes}[1]{\mbox{{\small \sc NODES}\ensuremath{({#1})}}}
\newcommand{\edges}[1]{\mbox{{\small \sc EDGES}\ensuremath{({#1})}}}
\newcommand{\mbn}[1]{\mbox{{\small \sc MBN}\ensuremath{({#1})}}}
\newcommand{\etiq}[1]{\mbox{{\small \sc LABEL}\ensuremath{({#1})}}}
\newcommand{\nf}[1]{\mbox{{\it nf}\ensuremath{({#1})}}}
\newcommand{\dagp}[1]{\mbox{{\it dag}\ensuremath{({#1})}}}
\newcommand{\pattern}[1]{\mbox{{\it pattern}\ensuremath{({#1})}}}
\newcommand{\collapse}[2]{\mbox{{\it collapse}\ensuremath{_{#1}({#2})}}}
\newcommand{\doc}[1]{\mbox{{\it doc}\ensuremath{({#1})}}}
\newcommand{\xpath}[1]{\mbox{{\it xpath}\ensuremath{({#1})}}}
\newcommand{\interleave}[1]{\mbox{{\it interleave}\ensuremath{({#1})}}}
\newcommand{\unfold}[1]{\mbox{{\it unfold}\ensuremath{({#1})}}}
\newcommand{\tp}[2]{\mbox{{\rm TP}\ensuremath{ _{#1}({#2})}}}
\newcommand{\sub}[2]{\mbox{{\rm SP}\ensuremath{ _{#1}({#2})}}}
\newcommand{\xp}{\textit{XP }}
\newcommand{\xpp}{\textit{XP}}
\newcommand{\xppes}{\ensuremath{\xpp_{es}}}
\newcommand{\xppdesc}{\ensuremath{\xpp_{\snippet{//}}}}
\newcommand{\xppescap}{\ensuremath{\xpp^{\cap}_{es}}}
\newcommand{\xppdesccap}{\ensuremath{\xpp^{\cap}_{//}}}
\newcommand{\xpi}{\mbox{{\it XP }}\ensuremath{^{\cap, \cup}~}}
\newcommand{\xpcap}{\mbox{{\it XP}}\ensuremath{^{\cap}~}}
\newcommand{\xpint}{\textit{XPint}~}
\newcommand{\bbbracketleft}{[\hspace{-1.5pt}[}
\newcommand{\bbbracketright}{]\hspace{-1.5pt}]}
\newcommand{\rpath}{\textit{rpath} }
\newcommand{\apath}{\textit{apath} }
\newcommand{\ipath}{\textit{ipath} }
\newcommand{\cpath}{\textit{cpath} }
\newcommand{\jpath}{\textit{jpath} }
\newtheorem{claim}{Claim}[section]
\newcommand{\f}[2]{\ensuremath{f_{\rm #1}({#2})}}
\newcommand{\snippet}[1]{\textsf{\small{#1}}}
\newcommand{\txt}[1]{\mbox{{\it text}}\ensuremath{\left({#1}\right)}}
\newcommand{\tst}[1]{\mbox{{\it test}}\ensuremath{({#1})}}
\newcommand{\algoname}[0]{\proc{Rewrite} }
\newcommand{\allrw}[0]{\proc{All-Rewrites} }
\newcommand{\efficient}[0]{\proc{Efficient-Rw} }
\newcommand{\apprules}[0]{\proc{Apply-Rules} }
\newcommand{\eat}[1]{}
\newcommand{\candRw}[2]{{\ensuremath{\mathit{cand}_{\mathit{RW}}({#1},{#2})}}}
\def\setliststart#1{\setcounter{\@listctr}{#1}%
\addtocounter{\@listctr}{-1}}
\begin{document}
\date{}
\title{Rewriting   XPath Queries using View Intersections: Tractability versus Completeness}
\markboth{}{Rewriting   XPath Queries using View Intersections: Tractability versus Completeness}
\author{
BOGDAN CAUTIS\affil{Universit\'{e} Paris-Sud -- INRIA\footnote{Work partially done while this author was affiliated with Telecom ParisTech.}}
ALIN DEUTSCH\affil{UC San Diego}
IOANA ILEANA\affil{Telecom ParisTech}
NICOLA ONOSE\affil{Google Inc.\footnote{Work partially done while this author was affiliated with UC San Diego.}}
}

\begin{abstract}
The standard approach for optimization of XPath queries by rewriting using views techniques consists in navigating inside a view's output, thus allowing the usage of only one view in the rewritten query. Algorithms for richer classes of XPath rewritings, using intersection or joins on node identifiers, have been proposed, but they either lack completeness guarantees, or require additional information about the data. We identify the tightest restrictions under which an XPath can be rewritten in polynomial time using an intersection of views and
propose an algorithm that works for any documents or type of identifiers. As a side-effect, we analyze the complexity of the related problem of deciding if an XPath with intersection can
be equivalently rewritten as one without intersection or union.   We extend our formal study of the view-based rewriting problem for XPath by describing also (i) algorithms for more complex rewrite plans, with no limitations on the number of intersection and navigation steps inside view outputs they employ, and (ii) adaptations of our techniques to deal with XML documents without persistent node Ids, in the presence of XML keys.  Complementing our computational complexity study, we describe  a proof-of-concept implementation of our  techniques and possible choices that may speed up execution in practice, regarding how rewrite plans are built, tested and executed. We also give a thorough experimental evaluation of these techniques, focusing on scalability and the  running time improvements achieved  by the execution of view-based plans. 
\vspace{-3mm}
\end{abstract}

\maketitle

\section{Introduction}\label{section:intro}
The problem of equivalently rewriting queries using views is fundamental to several classical data management tasks. While the rewriting problem has been well studied for the relational data model, its XML counterpart is not yet equally well understood, even for basic XML query languages such as XPath, due to the novel challenges raised by the features of the XML data model.

XPath~\cite{xpath1} is the standard for navigational queries over XML data and it is widely used, either directly, or as part of more complex languages (such as XQuery~\cite{xquery}). Early research~\cite{DBLP:conf/vldb/XuO05,mandhani-suciu,DBLP:conf/xsym/TangZ05,CachingVLDB03} studied the problem of equivalently rewriting an XPath by navigating inside a {\em single} materialized XPath view. This is the only kind of rewritings supported when the query cache can only store or can only obtain {\em copies} of the XML elements in the query answer, and so the original node identities are lost.

We have recently witnessed an industrial trend towards enhancing XPath queries with the ability to expose node identifiers and exploit them using intersection of node sets (via identity-based equality). This trend is supported by such systems as~\cite{DBLP:conf/vldb/BalminOBCP04} and has culminated in the adoption of intersection as a first-class primitive of the XPath standard, starting from XPath 2.0~\cite{xpath2} and through the new XPath 3.0 standard~\cite{xpath3}. In a more general setting, intersection between collections of nodes can be based not only on physical node identifiers, but also on logical ids or keys. Research on keys for XML, such as the ones proposed 
in~\cite{BunemanDFHT03}, led to the introduction of a special \emph{key} construct in the XML Schema \cite{xsd} standard, which allows to uniquely identify a node based on the result of an XPath expression.

This development enables for the first time multiple-view rewritings obtained by intersecting several materialized view results. The single-view rewritings considered in early XPath research have only limited benefit, as many queries with no single-view rewriting can be rewritten using multiple views. Our work is the first to characterize the complexity of the in\-tersection\--aware rewriting problem.  We are interested in sound and complete algorithms, which are guaranteed to find  a rewriting whenever one exists.


Our main objective  is to identify a   fragment of XPath that is as large as possible, while admitting polynomial-time rewriting that remains complete.  We exhibit a fragment of XPath with these properties showing that it is maximal in the sense that extending it renders the rewriting problem intractable (coNP hard). The fragment is practically interesting as it permits expressive queries and views with child and descendant navigation and path filter predicates, but no wildcard labels.

As a side-effect of our study on rewriting, we analyze the complexity of the problem of deciding if an XPath with intersection can be equivalently rewritten as one without intersection or union, case in which we say it is \emph{union-free}.  We also study the effect of intersection on the complexity of containment. Our hardness results thus immediately apply to XPath 2.0 and XPath 3.0 queries.

Prior work on XPath containment derived coNP lower bounds in the presence of wildcard  navigation, yet showed PTIME for tree patterns without wildcard \cite{DBLP:journals/jacm/MiklauS04}. In contrast, we show that extending wildcard-free tree patterns with intersection already leads to intractability.

\vspace{1mm}

\noindent
{\bf Running example.}  Throughout the paper we will consider an
example based on XPath queries over a digital library, which consists
in a large number of publications, including scientific papers. A
paper is organized into a hierarchy of sections, which may include,
among other things, figures and images, usually related to the
theorems and other results stated in the papers.

Let us assume that there has already been a query $v_1$, that
retrieved all images appearing in sections with theorem statements:
$$v_1: \snippet{doc(``L'')//paper//section[theorem]//image}$$
The result of $v_1$ is stored in the cache as a materialized view, rooted at 
an element named $v_1$.
Later, the query processor had to answer another XPath $v_2$ looking
for images inside (floating) figures that can be referenced:
$$v_2: \snippet{doc(``L'')/lib/paper//section//figure[caption//label]/image}$$
The result of $v_2$ is not contained in that of $v_1$, so it was also
executed and its answer cached.

Let us first look at an incoming query $q_1$, asking for all
postscript images that appear in sections with theorems:
$$q_1: \snippet{doc(``L'')//paper//section[theorem]//image[ps]}$$
$q_1$ can be easily answered by navigating inside the view $v_1$, using
the following XPath query:
$$r_1:~~~ \snippet{doc(``$v_1$'')/$v_1$/image[ps]}$$
Now, consider a query $q_2$ looking for the files corresponding to
images inside labeled figures from sections stating theorems:
$$q_2: \snippet{doc(``L'')/lib/paper//section[theorem]//figure[caption//label]/image/file}$$
It is easy to see that $q_2$ cannot be answered in isolation using only
$v_1$ or only $v_2$, because, for instance, there is no way to enforce
that an image is both in a section having theorems and inside a
labeled figure.
However, by intersecting the results of the two views 
(assuming they both preserve the identities of the original image elements), one can build a
rewriting equivalent to $q_2$:
$$r_2: \snippet{(doc(``$v_1$'')/$v_1$/image $\cap$ doc(``$v_2$'')/$v_2$/image)/file}$$

\vspace{-3mm} \noindent \textbf{Outline. }
This paper is organized as follows. We discuss related work in Section~\ref{sec:related}. Section~\ref{sec:preliminaries} introduces general notions for tree and DAG patterns, and the rewriting problem. In Section~\ref{sec:algo} we give a high-level view on our rewriting algorithm \algoname. We then zoom in on the rewrite rules on which it is based in Section~\ref{sec:rules}. We discuss in Section~\ref{sec:keys} how these techniques can apply even in the absence of persistent node Ids, under XML key constraints. We present the formal guarantees of algorithm \algoname in Section~\ref{sec:guarantees}, in terms of soundness,  completeness and complexity bounds;  we also analyze the related problem of union-freedom for DAG patterns. As the general rewriting problem is coNP-complete, we then study the most permissive restrictions on the language of queries or rewrite plans that enable a sound and complete approach (Sections~\ref{sec:frontier1} and~\ref{sec:frontier2}). We consider a richer language for rewrite plans in Section~\ref{sec:nestedinter}, which can have arbitrary many steps of intersection and compensation of views. We discuss implementation issues and optimization opportunities in Section~\ref{sec:implementation} and we present our experiments is Section~\ref{sec:experiments}. We conclude in Section~\ref{sec:conclusion}. We detail two of the more involved proofs in an appendix, in Sections~\ref{sec:proof1} and~\ref{sec:completenessUF-desc-akin}.

\section{Related Work}\label{sec:related}
The area of  rewriting XPath queries using views  lacks in general theoretical foundations, as most related works propose incomplete algorithms or impose strong limitations.

XPath rewriting using only one view (no intersection) was the target
of several studies \cite{DBLP:conf/vldb/XuO05,mandhani-suciu,DBLP:conf/xsym/TangZ05,CachingVLDB03,Wu2009}, possibly in the presence of DTD constraints~\cite{AravogliadisV11}.    
Previously proposed join-based rewriting methods either give no
completeness guarantees~\cite{DBLP:conf/vldb/BalminOBCP04,waterlooicde08} 
or can do so only if the query engine has extra knowledge
about the structure and nesting depth of the XML
document~\cite{DBLP:conf/vldb/ArionBMP07}.
Others~\cite{waterlooicde08} can only be used if the
node ids are in a special encoding, containing structural information.
Our algorithm works for any documents and type of identifiers,
including application level ids, such as the id attributes defined in the
XML standard~\cite{xml} or XML Schema keys~\cite{xsd}.
 In \cite{DBLP:conf/vldb/LakshmananWZ06,DBLP:conf/dexa/GaoWY07,Wang2011}, the authors 
look at a different problem, that of finding maximally contained
rewritings of XPath queries using views. Rewriting more expressive
XML queries using views was studied in
\cite{DBLP:conf/webdb/ChenR02,DBLP:conf/vldb/DeutschT03,DBLP:conf/sigmod/OnoseDPC06}, 
but without considering intersection.
Fan et al \cite{DBLP:conf/icde/FanGJK07} define views using DTDs instead
of queries and study the problem of rewriting an XPath using one view 
DTD.
 In~\cite{Afrati2011}, for a different XPath fragment (including wildcard labels), the authors describe a sound but incomplete algorithm for finding equivalent rewritings  as unions
of single-view rewritings.
 Several works considered the problem of choosing the optimal set of views to materialize in order to support a given query workload (see~\cite{Katsifodimos2012} and the main references therein).

\eat{
\bogdan{
\textbf{Comparison with prior publication.} This article is based on an earlier extended abstract~\cite{CautisWebDB08} (workshop paper),  with respect to which it brings several new important contributions.  We provide a more complete study on the computational complexity of the view-based rewriting problem, describing decision procedures for settings in which the techniques of  \cite{CautisWebDB08} would not be applicable. More precisely, (i) we consider more complex rewrite plans, with no limitations on the number of intersections steps they employ, and (ii) we  describe how our techniques  can be modified to deal with XML documents without persistent node Ids, in the presence of XML keys.  Moreover, optimization issues - on how  rewrite plans are chosen, built and evaluated - were not considered in \cite{CautisWebDB08}, whose focus was on computational complexity.  We discuss such issues in this paper, as well as certain implementation choices that may speed up execution in practice.  We  also  give a thorough experimental evaluation of all our techniques. We provide  the complete proofs for all theoretical results (no proofs were given in~\cite{CautisWebDB08}). We believe  these are of interest on their own,  as they are based on  various novel techniques for analyzing and  reasoning about XPath. We also present  additional examples, whose role is to illustrate more complex rewrite plans or to complement our proofs. }
}

\textbf{Most related prior work.} This article extends the results we present in the extended abstract~\cite{CautisWebDB08},  with respect to which it brings several new important contributions.  We provide a more complete study on the computational complexity of the view-based rewriting problem, describing decision procedures for settings in which the techniques of  \cite{CautisWebDB08} would not be applicable. More precisely, (i) we consider more complex rewrite plans, with no limitations on the number of intersection steps they employ, and (ii) we  describe how our techniques  can be modified to deal with XML documents without persistent node Ids, in the presence of XML keys.  Moreover, we report on a systems contribution, pertaining to the implementation
of an XPath rewriting engine.  Optimization issues -- on how  rewrite plans are chosen, built and evaluated -- were not considered in \cite{CautisWebDB08}, whose focus was on computational complexity only.  We discuss such issues in this paper, as well as certain implementation choices that may speed up execution in practice.   In particular, we introduce the theoretical foundations and we report on the implementation of a PTIME technique for partially minimizing redundancy in rewritings without paying the price of full minimization (which is NP-complete).  We  also  give a thorough experimental evaluation of the presented techniques. We provide  the complete proofs for all theoretical results (no proofs were given in~\cite{CautisWebDB08}). We believe  these are of interest on their own,  as they are based on  various novel techniques for analyzing and  reasoning about XPath. We also present  additional examples, whose role is to illustrate more complex rewrite plans or to complement our proofs.

Sound and complete algorithms for rewriting XML queries using multiple views  were also proposed later (after the  publication of~\cite{CautisWebDB08}), 
in~\cite{Manolescu11}.  There,  the focus is not on tractable rewriting. Indeed, the authors target a more expressive language,  tree pattern queries with value joins and multiple arity,  for which equivalence is intractable and no complete rewriting algorithm implementation can go below the exponential bound; this is for two reasons: (i) the coNP-hardness result (ref. theorem),  even in the absence of value joins and unary tree pattern queries, and (ii) the NP-hardness of relational conjunctive query rewriting, which can be encoded by tree patterns with joins (\cite{ChandraM77,DBLP:conf/sigmod/OnoseDPC06}).  Moreover,~\cite{Manolescu11} focuses on the minimality of rewrite plans, which brings another exponential in the total running time. In essence, both the algorithm in~\cite{Manolescu11} and the one in~\cite{CautisWebDB08} (for the intractable case) amount to reformulating an intersection of tree patterns into a union of intersection-free tree patterns (a.k.a. interleavings; see Section~\ref{sec:preliminaries}). There can be exponentially many interleavings, which is unavoidable given the coNP lower bound.   

 In contrast to~\cite{Manolescu11},  this submission studies the most expressive language for input queries (XPath with a few restrictions) for which  finding a rewriting is tractable (we present a sound and complete procedure for which we can guarantee polynomial time under these restrictions). In addition,  we study the more general problem of rewriting XPath queries using multiple views joined by Ids, show intractability beyond our restrictions and for that case we present  an exponential  rewriting algorithm that is sound and complete.
 
 Ways to explore the space of possible rewrite plans using views, for minimization purposes, have been considered in previous literature (see for instance~\cite{DBLP:conf/sigmod/PopaDST00}). Intuitively, they start from a rewriting and randomly prune certain components of the plan while maintaining equivalence; this reveals  a threshold on the size of minimal plans (in terms of number of views) to be considered. Even though minimization lies beyond the scope of our paper, we note that  the techniques presented here do create the search space in which  all minimal rewritings are to be found, creating the opportunity to plug in techniques for exploring the search space, such as in~ \cite{DBLP:conf/sigmod/PopaDST00}.


\eat{
\bogdan{If Ioana did a lot on minimization we should not enter into competition with her. Claim that completeness for minimal rewritings at exponential cost, many ways to explore the space, all this has been explored in previous literature (chase too far paper). Start from a rewriting, randomly prune while maintaining equivalence, gives  a threshold then just put views of number at most that threshold. The important thing is that we can create the search space in which the minimal rewritings can be found. But minimality beyond the scope of the paper, many works on this...}
}

Containment and satisfiability for several extensions of XPath with
intersection have been previously investigated, but all considered
problems were at least NP-hard or coNP-hard.  For our language,
containment is also intractable, but the equivalence test used in the
rewriting algorithm is in PTIME for practically relevant
restrictions. Satisfiability of XPath in the presence of the intersect
operator and of wildcards was analyzed in
\cite{DBLP:conf/dbpl/Hidders03}, which proved its NP-completeness.
As noticed in
\cite{DBLP:conf/pods/BenediktFG05}, there is a tight relationship
between satisfiability and containment for languages that can express
unsatisfiable queries. If containment is in the class K,
satisfiability is in coK and if satisfiability is K-hard, containment
is coK-hard.\footnote{This is based on the observation that a query is
satisfiable iff it is not contained in a query that always returns the empty set.} We give even stronger coNP completeness results for
the containment of an XPath $p_1$ into an XPath $p_2$,
by allowing intersection only in $p_1$ and disallowing
wildcards.  Satisfiability is analyzed in
\cite{DBLP:conf/pods/BenediktFG05} for various fragments of XPath,
including negation and disjunction, which could together simulate
intersection, but lead to coPSPACE-hardness for checking
containment. Richer sublanguages of XPath 2.0, including path
intersection and equality, are considered in
\cite{1265541}, where complexity of checking containment goes up to EXPTIME or
higher. None of these studies tries to identify an efficient test for
using intersection in query rewriting.  A different approach, taken by
\cite{DBLP:conf/icde/GroppeBG06} is to replace intersection by using a
rich set of language features, and then try to simplify the expression
using heuristics.

Finally, closure under intersection was analyzed in
\cite{BenediktTCS05} for various XPath fragments, all of which use
wildcard.  We study the case without wildcard and prove that
\emph{union-freedom} (equivalence between an intersection of XPaths
and an XPath without intersection or union) is coNP-hard.
However, under restrictions similar to those for the rewriting
problem, union-freedom can be solved in polynomial time.  Thus, we
also answer a question was previously raised
in~\cite{DBLP:conf/pods/CautisAM07} regarding whether an intersection
of XPath queries without wildcard can be reduced in PTIME to
only one XPath.

{\cite{DBLP:journals/tods/CautisDOV11}  describes an algorithm for rewriting using multiple views, designed especially for views specified by means of a program (a Query Set Specification). 
Completeness is achieved there for  input queries having at least one descendant edge in the root to output-node path (so called multi-token queries), and for a restricted language for rewrite plans (intersections of views). For this reason, the result  does not apply to our setting, and a different technique is needed. Indeed,  the technique of~\cite{DBLP:journals/tods/CautisDOV11} applies individual tests on the view definitions instead of rewrite rules on the corresponding DAG pattern.

\section{Preliminaries}\label{sec:preliminaries}
We consider an XML document as an unranked, unordered rooted tree $t$ modeled by a set of edges \edges{t}, a set of nodes \nodes{t}, a distinguished root node \droot{t} and a labeling function $\lambda_t$, assigning to each node a label from an infinite alphabet $\Sigma$. Every node $n$ of a tree has a text value \txt{n}, possibly empty.

We consider XPath queries with child / and descendant // navigation, without wildcards. We call the resulting language \xp  and define its grammar as:
\begin{eqnarray*}
\apath &::=& doc(``name")/\rpath ~|~ doc(``name")//\rpath\\
\rpath &::=& step ~ |~ \rpath/\rpath  ~|~ \rpath//\rpath\\
step &::=&label~pred \\
pred &::=& \epsilon ~ |~ [\rpath] ~ | ~ [\rpath=C] ~ | ~ [.//\rpath] ~ | ~  [.//\rpath=C] ~| ~ pred ~ pred
\end{eqnarray*}
Expressions in \xp are produced from the symbol \apath, and they correspond to \emph{absolute paths}, that is, queries expressed starting from the document root. The \rpath symbol generates \emph{relative path} expressions, i.e. encoding navigation relative to a given document context. The sub-expressions inside brackets are called \emph{predicates}. $C$ terminals stand for text constants.

The semantics of \xp can be defined as follows: 
\begin{definition}[\xp Semantics]
The result of evaluating an \xp expression $q$ over an XML tree $t$ is defined as a binary relation over \nodes{t}:
\begin{enumerate}
\item ${\bbbracketleft label \bbbracketright}_t = \{(n,n') | (n,n') \in \textsc{EDGES}(t),
\lambda_t(n')=label\}$
\vspace{1mm}
\item ${\bbbracketleft pred \bbbracketright}_t = \{n | n\in
\nodes{t}, \textit{pred}(n)=\textit{true}\}$.
\begin{enumerate}
\vspace{1mm}
\item Let \textit{pred} be defined as $[rp]$ or $[.//rp]$ and let $t_n$ denote the subtree rooted at $n$ in $t$.  We say that
$\textit{pred}(n)=\textit{true}$ iff ${\bbbracketleft \lambda_t(n)/rp
\bbbracketright}_{t_n} \neq \oslash $ ( ${\bbbracketleft
\lambda_t(n)//rp \bbbracketright}_{t_n} \neq \oslash$, resp.).
\vspace{1mm}
\item If
\textit{pred} is of the form $[rp=C]$ (or
$[.//rp=C]$) then $\textit{pred}(n)=\textit{true}$ iff
$\txt{ {\bbbracketleft \lambda_t(n)/rp \bbbracketright}_{t_n} } =C$
(or $\txt{{\bbbracketleft \lambda_t(n)//rp
\bbbracketright}_{t_n}} = C$, resp.).  
\end{enumerate}
\vspace{1mm}
\item ${\bbbracketleft pred_1 ~pred_2  \bbbracketright}_t = { \bbbracketleft pred_1 \bbbracketright}_t   \cap {\bbbracketleft pred_2  \bbbracketright}_t$
\vspace{1mm}
\item ${\bbbracketleft \epsilon  \bbbracketright}_t = \nodes{t}$
\vspace{1mm}
\item ${\bbbracketleft label ~ pred \bbbracketright }_t=\{(n, n') |
(n,n')\in {\bbbracketleft label\bbbracketright }_t, n'\in
{\bbbracketleft pred\bbbracketright }_t\}$
\vspace{1mm}
\item ${\bbbracketleft \rpath_1/\rpath_2\bbbracketright }_t = \{(n,
n') | (n,n')\in {\bbbracketleft \rpath_1\bbbracketright }_t \circ
{\bbbracketleft \rpath_2\bbbracketright }_t\}$
\vspace{1mm}
\item ${\bbbracketleft \rpath_1//\rpath_2\bbbracketright }_t = \{(n,
n') | (n,n')\in {\bbbracketleft \rpath_1\bbbracketright }_t \circ \textsc{EDGES}^*(t)
\circ {\bbbracketleft \rpath_2\bbbracketright }_t\}$
\vspace{1mm}
\item ${\bbbracketleft \doc{``name"}/\rpath \bbbracketright }_t = \{ (\droot{t},n)
  \,|\,  (\droot{t},n) \in {\bbbracketleft \rpath\bbbracketright }_t \}$
  \vspace{1mm}
\item ${\bbbracketleft \doc{``name"}//\rpath \bbbracketright }_t = \{ (\droot{t},n')
  \,|\,  (\droot{t},n') \in \textsc{EDGES}^*(t) \circ
     {\bbbracketleft \rpath\bbbracketright }_t\}$.
\end{enumerate}
 \doc{``name"} returns the root of the document storing $t$.  We denote by $\circ$ the standard binary relation composition, that is $R \circ S = \{ (r,s) | (r,x) \in R, (x,s) \in S \}$.
\end{definition}


In the following, we will prefer for XML queries an alternative representation widely used in literature, the unary \emph{tree patterns}~\cite{DBLP:journals/jacm/MiklauS04}\footnote
{Miklau and Suciu (and most follow-up works) provide a node set semantics for tree patterns. Our semantics is equivalent to node set semantics, despite the binary representation. 
We just  repeat the context node with each of the selected nodes, instead of writing it once for the entire set as in~\cite{DBLP:journals/jacm/MiklauS04}.  This will prove more convenient for our formal development. There exists a line of
 work on distinguishing the expressive power between binary (path set semantics) and node set semantics~\cite{DBLP:conf/bncod/WuGGP09}, but it does not apply here. In~\cite{DBLP:conf/bncod/WuGGP09}, the distinction boils down to 
 allowing one versus two distinguished nodes in the pattern. In our work we only have one distinguished node.}:
\begin{definition}\label{def:treepattern}
A \emph{tree pattern} $p$ is a non empty rooted tree, with a set of nodes \nodes{p} labeled with symbols from $\Sigma$, a distinguished node called the \emph{output node} $\out{p}$, and two types of edges: \emph{child edges}, labeled by $/$ and \emph{descendant edges}, labeled by $//$. The root of $p$ is denoted \droot{p}. Every node $n$ in $p$ has a test of equality \tst{n} that is either the empty word $\epsilon$, or a constant $C$. If $n$ is on a path between \droot{p} and \out{p}, then \tst{n} is $\epsilon$.
\end{definition}

Any \xp expression can be translated into a tree pattern query and vice versa (see, for instance~\cite{DBLP:journals/jacm/MiklauS04}). For a given \xp expression $q$, by \pattern{q} we denote the associated tree pattern $p$ and by $\xpath{p} \equiv q$ the reverse transformation.

The semantics of a tree pattern can be given using embeddings:
\begin{definition}\label{def:embedding}
An \emph{embedding} of a tree pattern $p$ into a tree $t$ over $\Sigma$ is a
function $e$ from \nodes{p} to \nodes{t} that has
the following properties:
\begin{enumerate}[(1)]
\vspace{1mm}
  \item\label{it:emb-root} $e(\droot{p}) = \droot{t}$;
  \vspace{1mm}
  \item\label{it:emb-label} \label{it:emb-label} for any $n \in \nodes{p}$,
    $\etiq{e(n)} = \etiq{n}$;
    \vspace{1mm}
  \item \label{it:emb-test} for any $n \in \nodes{p}$,
    if $\tst{n}=C$ then $\txt{e(n)}=C$;
    \vspace{1mm}
  \item \label{it:emb-child} for any /-edge $(n_1,n_2)$ in $p$,
    $(e(n_1),e(n_2))$ is an edge in $t$;
    \vspace{1mm}
  \item\label{it:emb-desc} \label{it:emb-desc} for any //-edge $(n_1,n_2)$ in $p$, there
    is a path from $e(n_1)$ to $e(n_2)$ in $t$.
\end{enumerate}
\end{definition}

\noindent The \emph{result} of applying a tree pattern $p$ to an XML tree $t$ is the set: 
$$\left\{\left(\droot{t}, e(\out{p})\right) | \textit{ e is an embedding of p into t } \right\}$$

We will consider in this paper the extension \xpcap of \xp with respect to intersection. Expressions in \xpcap are generated from the symbol \ipath, by adding the following rules to the grammar of \xp:
\begin{eqnarray*}
\ipath &::=& \cpath~|~(\cpath) |~(\cpath)/\rpath~|~(\cpath)//\rpath\\
\cpath &::=& \apath ~|~\cpath \cap \apath 
\end{eqnarray*}
The symbol \cpath defines a single level of intersection of \xp expressions, e.g. $$\snippet{doc(``$v_1$'')/$v_1$/image $\cap$ doc(``$v_2$'')/$v_2$/image}.$$
\ipath adds to this intersection an \rpath expression, thus allowing additional (relative) navigation from the nodes in the intersection result, e.g. $$\snippet{(doc(``$v_1$'')/$v_1$/image $\cap$ doc(``$v_2$'')/$v_2$/image)/file}$$
Note that by definition \xpcap does not include arbitrary nested intersections of \xp queries. We defer the analysis of the language expressing such nested intersections (a superset of \xpcap) to Section \ref{sec:nestedinter}.

Formally,  \xpcap has the following semantics:
\begin{itemize}
\vspace{1mm}
\item $\cpath \cap \apath = {\bbbracketleft \cpath \bbbracketright}_t \cap {\bbbracketleft \apath \bbbracketright}_t$\footnote{We overloaded the intersection operator: while on the left side it refers to the \xpcap syntax, on the right side it denotes the classic set intersection operation.}
\vspace{1mm}
\item ${\bbbracketleft \cpath/\rpath\bbbracketright }_t = \{(n,
n') | (n,n')\in {\bbbracketleft \cpath \bbbracketright }_t \circ
{\bbbracketleft \rpath\bbbracketright }_t\}$
\vspace{1mm}
\item ${\bbbracketleft \cpath//\rpath\bbbracketright }_t = \{(n,
n') | (n,n')\in {\bbbracketleft \cpath\bbbracketright }_t \circ  \textsc{EDGES}^*(t)
\circ {\bbbracketleft \rpath \bbbracketright }_t\}$

\end{itemize}

By \xpcap expressions over a set of documents $D$ we denote those that use only  \apath expressions that navigate inside the documents $D$. For a fragment $\cal L \subseteq \xp$,  by $\cal L^\cap \subseteq \xpcap$ we denote the \xpcap expressions that use only \apath expressions from $\cal L$.

Similar to the \xp - tree pattern duality, we can represent \xpcap ex\-pressions using the more general \emph{DAG patterns}:

\begin{definition}\label{def:dag}
A \emph{DAG pattern} $d$ is a directed acyclic graph, with a set of
nodes \nodes{d} labeled with symbols from $\Sigma $, a distinguished
node called the \emph{output node} $\out{d}$, and two types of edges:
\emph{child edges}, labeled by $/$ and \emph{descendant edges},
labeled by $//$.
$d$ has to satisfy the
property that any $n \in \nodes{d}$ is accessible via a path starting
from a special node \droot{d}. In addition, all the nodes that are not on a path from $\droot{d}$ to $\out{d}$ (denoted \emph{predicate nodes}) have only one incoming edge.
Every node $n$ in $d$ has a test of equality \tst{n}
that is either the empty word $\epsilon$, or a constant $C$.
If $n$ is on a path between \droot{d} and \out{d}, then 
\tst{n} is always $\epsilon$.
\end{definition}

Figure~\ref{fig:runexample}(a) gives an example of a DAG pattern.  \droot{d} is the \doc{L} node and \out{d} is the \textit{image} node indicated by a square.

\begin{figure}[t]
\vspace{-3mm}
\begin{center}
\input{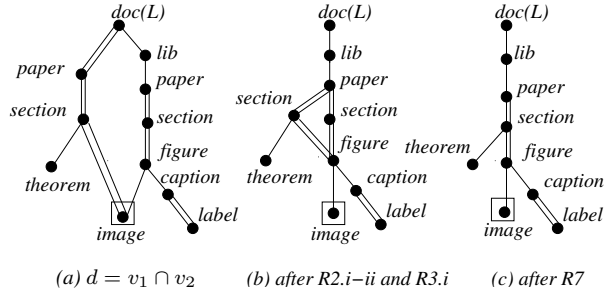}
\end{center}
\vspace{-1mm}
\caption{\small Running the rules on the example of Section~\ref{section:intro}.\label{fig:runexample} }
\vspace{-2mm}
\end{figure}

In our algorithm (Section~\ref{sec:algo}) we obtain the \xpcap expressions that are rewriting candidates directly. We only use the DAG pattern representation for the equivalence check involved in  validating  these candidates. We therefore only need to translate from \xpcap into 
DAG patterns, but not conversely.  We specify the one-way translation below. 

\textbf{Representing \xpcap by DAG patterns.} For a query $q$
in $\xpcap$, we construct the associated pattern, denoted \texttt{dag}($q$), as follows:
\begin{enumerate}
\vspace{1mm}
  \item for every \textit{apath} (\xp path with no $\cap$),
    \texttt{dag}(\apath\!\!) is the tree pattern corresponding to the \apath.
    
    \vspace{1mm}
  \item  \texttt{dag}($p_1 \cap p_2$) is obtained from \texttt{dag}($p_1$)
    and \texttt{dag}($p_2$) as follows: (i) provided there are no labeling conflicts and both $p_1$ and $p_2$ are not empty, by coalescing \droot{\texttt{dag}(p_1)} with
    \droot{\texttt{dag}(p_2)} and  \out{\texttt{dag}(p_1)} with
    \out{\texttt{dag}(p_2)} respectively, (ii) otherwise, as the empty pattern.
    \vspace{1mm}
  \item \texttt{dag}($x$/\rpath) and
    \texttt{dag}($x$//\rpath) are obtained as follows: (i) for non-empty $x$, by appending the
    pattern corresponding to \rpath to
    \out{\texttt{dag}(x)} with a /- and a //-edge respectively, (ii) as $x$, if $x$ is the empty pattern.
\end{enumerate}

By a pattern from the language $\cal L$ we denote any pattern built as $dag(q)$, for any $q \in {\cal L}$. Note that a tree pattern is a DAG pattern as well. The notion of \emph{embedding} and the semantics of a pattern can be extended in straightforward manner from trees to DAGs. In the following, unless stated otherwise, all patterns are DAG patterns. We can prove the following:
\begin{theorem}\label{thm:dag-equiv}
For any $q \in \xpcap$ and any tree $t$, $q(t) = \texttt{dag}(q)(t)$.
\end{theorem}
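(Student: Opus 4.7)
The plan is to proceed by structural induction on the derivation of $q$ in the \xpcap grammar, exploiting the inductive structure of the $\texttt{dag}(\cdot)$ construction. The base case is when $q$ is an \apath, in which case $\texttt{dag}(q)$ is a tree pattern and the claim reduces to the standard equivalence between \xp expressions and tree patterns (available from the literature on tree patterns, e.g.\ \cite{DBLP:journals/jacm/MiklauS04}, and already implicit in the translation mentioned earlier). The inductive cases are (i) $q = c \cap a$ with $c$ a \cpath and $a$ an \apath, and (ii) $q = c/\rpath$ or $q = c//\rpath$ with $c$ a \cpath.

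For case (i), I would show that $\texttt{dag}(c \cap a)(t) = \texttt{dag}(c)(t) \cap \texttt{dag}(a)(t)$ and invoke the induction hypothesis. Writing $D_1 = \texttt{dag}(c)$, $D_2 = \texttt{dag}(a)$, and $D = \texttt{dag}(c \cap a)$: the forward direction is easy, since restricting an embedding $e$ of $D$ to \nodes{D_i} yields an embedding of $D_i$, and both restrictions map the coalesced root to $\droot{t}$ and the coalesced output to the same node $e(\out{D})$. For the reverse direction, given embeddings $e_1$ of $D_1$ and $e_2$ of $D_2$ that agree on \droot{} and \out{}, I would glue them into a single function $e$ on \nodes{D}. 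Well-definedness follows from the fact that, by Definition~\ref{def:dag}, every predicate node has a single incoming edge, so the only nodes shared between $D_1$ and $D_2$ inside $D$ are precisely the coalesced root and output; the embedding conditions \eqref{it:emb-root}--\eqref{it:emb-desc} at these shared nodes hold because $e_1$ and $e_2$ agree there, and elsewhere they hold by inheritance from $e_1$ or $e_2$. The degenerate subcase in which $\texttt{dag}(c \cap a)$ is declared empty (labeling conflict at the output, or one operand already empty) must be handled separately: a label conflict forces $\texttt{dag}(c)(t) \cap \texttt{dag}(a)(t) = \emptyset$ regardless of $t$, matching the empty semantics of the empty pattern; and if one operand is empty, induction gives that side empty, so the intersection is empty too.

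For case (ii), I would show $\texttt{dag}(c/\rpath)(t) = \texttt{dag}(c)(t) \circ \bbbracketleft \rpath \bbbracketright_t$ (and the analogous statement with $\textsc{EDGES}^*(t)$ inserted for the $//$ variant), then conclude with the induction hypothesis on $c$ and the tree-pattern case on $\rpath$. Given an embedding of $\texttt{dag}(c/\rpath)$, the restriction to $\nodes{\texttt{dag}(c)}$ is an embedding witnessing some pair $(\droot{t}, n)$ in $\texttt{dag}(c)(t)$, and the restriction to the appended $\rpath$-subpattern witnesses an $\rpath$-embedding starting at $n$ (respectively, at a descendant of $n$); conversely, two such embeddings can be combined into one on the concatenated DAG because the appended subpattern shares only its root, $\out{\texttt{dag}(c)}$, with the rest.

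The main obstacle is the intersection case, specifically the gluing step: one must argue that the coalescing in the definition of $\texttt{dag}$ does not create spurious constraints beyond the two original embeddings. The key structural fact that makes this routine, rather than subtle, is the single-incoming-edge property of predicate nodes from Definition~\ref{def:dag}, together with the restriction $\tst{n}=\epsilon$ on the coalesced output-path, which prevents any hidden text-equality conflict at the fused nodes.
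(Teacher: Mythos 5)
Your proof is correct. The paper states Theorem~\ref{thm:dag-equiv} without giving any proof (it is introduced only by ``We can prove the following''), so there is nothing to compare against; your structural induction over the \xpcap grammar --- splitting and gluing embeddings at the coalesced root and output nodes, with well-definedness secured by the single-incoming-edge property of predicate nodes and with the empty/label-conflict cases dispatched separately --- is exactly the routine argument the authors evidently left to the reader.
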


By the main branch nodes of a pattern $d$, \mbn{d}, we denote the set of nodes found on paths starting with \droot{d} and ending with \out{d}. We refer to main branch paths between \droot{d} and \out{d} as \emph{main branches} of $d$. The (unique) main branch of a tree pattern $p$ is denoted \mb{p}. 

\begin{definition}
  A pattern $d_1$ is \emph{contained} in another pattern $d_2$
  iff for any input tree $t$, $d_1(t) \subseteq d_2(t)$.
  We write this shortly as $d_1 \sqsubseteq d_2$.
  We say that $d_1$ is \emph{equivalent} to $d_2$,
  and write $d_1 \equiv d_2$, iff $d_1(t) = d_2(t)$ for any
  input tree $t$.
\end{definition}

We say that a pattern $p$ is \emph{minimal}~\cite{DBLP:journals/vldb/Amer-YahiaCLS02} if it is equivalent to none of its strict sub-patterns. 

\begin{definition}\label{def:dag-mapping}
  A \emph{mapping} between two patterns $d_1$
  and $d_2$ is a function $h: \nodes{d_1} \rightarrow
  \nodes{d_2}$ that satisfies the properties
  \ref{it:emb-label},\ref{it:emb-desc} of an
  embedding (allowing the target to be a pattern) plus three others:
\begin{enumerate}[(1)]\setliststart{6}
\vspace{1mm}
 \item for any $n \in \mbn{d_1}$, $h(n) \in \mbn{d_2}$;
 
 \vspace{1mm}
 \item \label{it:mapping-child} for any /-edge $(n_1,n_2)$ in $d_1$,
    $(h(n_1),h(n_2))$ is a /-edge in $d_2$.
    
    \vspace{1mm}
 \item for any $n \in \nodes{d_1}$, if $\tst{n}=C$ then
     $\tst{h(n)} = C$;
\end{enumerate}

A \emph{root-mapping} is a mapping that satisfies (1).
A \emph{containment mapping} is a root-mapping $h$ such that $h(\out{d_1}) = \out{d_2}$.
\end{definition}

\begin{lemma}\label{l:mapping-suff}
  If there is a containment mapping from $d_1$ into $d_2$ then $d_2
  \sqsubseteq d_1$.
\end{lemma}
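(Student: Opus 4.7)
The standard approach for proving such a containment-mapping soundness result is by composition: given an arbitrary tree $t$ and an arbitrary embedding $e_2$ of $d_2$ into $t$, exhibit an embedding of $d_1$ into $t$ that selects the same output node. The natural candidate is the composition $e_1 \;=\; e_2 \circ h$, where $h$ is the containment mapping. Since $h(\out{d_1}) = \out{d_2}$, we immediately get $e_1(\out{d_1}) = e_2(\out{d_2})$, so establishing that $e_1$ is an embedding will force $(\droot{t}, e_2(\out{d_2})) \in d_1(t)$, giving $d_2(t) \subseteq d_1(t)$.

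The core of the argument is then a point-by-point verification that $e_1 = e_2 \circ h$ satisfies the five embedding conditions of Definition \ref{def:embedding}. Condition (1) follows because $h$ is a root-mapping, so $h(\droot{d_1}) = \droot{d_2}$, and $e_2$ sends $\droot{d_2}$ to $\droot{t}$. Conditions (2) and (3) on labels and text tests chain through: property (2) of a mapping gives $\etiq{h(n)} = \etiq{n}$, and property (8) gives $\tst{h(n)} = \tst{n}$ whenever the latter is a constant; applying $e_2$ (which itself satisfies the embedding label and test conditions on $d_2$) then transfers the equality down to $t$.

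For the edge conditions, the child case (4) uses the stronger mapping property (\ref{it:mapping-child}): a $/$-edge of $d_1$ is sent by $h$ to a $/$-edge of $d_2$, which $e_2$ in turn sends to an actual tree edge. The descendant case (5) uses mapping property (5): a $//$-edge $(n_1,n_2)$ of $d_1$ is witnessed in $d_2$ by a path $h(n_1) = m_0, m_1, \ldots, m_k = h(n_2)$ whose edges are each either $/$ or $//$; applying the embedding $e_2$ edge by edge turns $/$-edges into tree edges and $//$-edges into tree paths, and concatenating these yields a path in $t$ from $e_2(h(n_1))$ to $e_2(h(n_2))$. Hence $e_1$ is indeed an embedding of $d_1$ into $t$, which completes the argument.

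There is really no hard step here, only careful bookkeeping; the only point where one must pause is checking that mapping property (5) together with an embedding of $d_2$ composes correctly to yield a single path in $t$, since a $//$-edge in $d_1$ may be witnessed in $d_2$ by a path involving several edges of mixed types. Once the edges of this $d_2$-path are replaced by their $e_2$-images and concatenated, the result is the required path in $t$.
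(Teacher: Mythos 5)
The paper states Lemma~\ref{l:mapping-suff} without proof, and your composition argument ($e_1 = e_2 \circ h$, verified condition by condition against Definition~\ref{def:embedding}, with the output nodes matching because $h(\out{d_1})=\out{d_2}$) is exactly the standard argument the authors are implicitly relying on. It is correct, including the one point that needs care — splicing the $e_2$-images of the mixed $/$ and $//$ edges of the $d_2$-path witnessing a $//$-edge of $d_1$ into a single path of $t$ — so there is nothing to add.
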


\begin{lemma}
\label{lem:equiv-iso}
Two tree patterns  are equivalent iff they are isomorphic after minimization.
\end{lemma}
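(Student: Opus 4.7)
The plan is to prove both implications separately, using containment mappings as the key tool. Both tree patterns are wildcard-free, so I can invoke the classical completeness of containment mappings for this fragment (Miklau-Suciu): for two tree patterns $p, p'$, $p' \sqsubseteq p$ holds iff there is a containment mapping from $p$ into $p'$.

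For the easy direction, if the minimizations $p_1^\ast$ and $p_2^\ast$ are isomorphic, then the isomorphism bijects embeddings into any tree $t$ while preserving the image of the output node, so $p_1^\ast \equiv p_2^\ast$. Chaining with $p_i \equiv p_i^\ast$ (which holds by definition of minimization) gives $p_1 \equiv p_2$. This direction is essentially bookkeeping.

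For the hard direction, assume $p_1 \equiv p_2$, hence $p_1^\ast \equiv p_2^\ast$. By completeness of containment mappings, there exist containment mappings $f: p_1^\ast \to p_2^\ast$ and $g: p_2^\ast \to p_1^\ast$. The composition $g \circ f$ is a containment endomorphism of $p_1^\ast$. The crucial step is to argue that this endomorphism is surjective: if it were not, its image would be a proper sub-pattern $p'$ of $p_1^\ast$; but then $g \circ f$ would witness $p_1^\ast \sqsubseteq p'$ via Lemma \ref{l:mapping-suff}, while the inclusion $p' \hookrightarrow p_1^\ast$ witnesses $p' \sqsubseteq p_1^\ast$, so $p' \equiv p_1^\ast$, contradicting the minimality of $p_1^\ast$. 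Because $p_1^\ast$ is finite, surjectivity of $g \circ f$ upgrades it to a bijection that preserves labels, tests, edge types, the root, and the output node; the same argument applied to $f \circ g$ shows $f, g$ are mutually inverse bijections. Hence $f$ is a pattern isomorphism.

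The main obstacle is the sub-pattern minimality argument: one has to check carefully that the image $h(p_1^\ast)$ of an endomorphism genuinely qualifies as a sub-pattern of $p_1^\ast$ in the precise sense used by minimization (same root, same output node, edge-type and test respecting), so that equivalence with $p_1^\ast$ really contradicts minimality. This requires using the definition of containment mapping (conditions on /-edges, //-edges, labels, tests, plus preservation of the output node) to verify that $h(p_1^\ast)$ is closed under these structural requirements. Once this is in place, the rest of the argument follows cleanly from finiteness and the completeness of containment mappings for wildcard-free tree patterns.
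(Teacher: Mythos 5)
Your proposal is correct in spirit but takes a genuinely different route from the paper: the paper disposes of Lemma~\ref{lem:equiv-iso} in one line by citing Theorem~1 of Mandhani and Suciu, observing only that equivalence in the wildcard-free fragment is witnessed by containment mappings in both directions. You instead give the self-contained classical ``core uniqueness'' argument (compose the two containment mappings, show the resulting endomorphism of a minimal pattern must be surjective, upgrade to an isomorphism by finiteness). What the citation buys the paper is brevity; what your argument buys is independence from an external result and an explicit explanation of \emph{why} minimality forces isomorphism, which is the content actually used elsewhere in the paper.

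Two points in your sketch need tightening before it is a complete proof. First, the step ``if $g \circ f$ were not surjective, its image would be a proper sub-pattern'' is not literally correct: the image of a containment endomorphism need not be ancestor-closed (a //-edge may map to a path whose interior nodes are hit by nothing), so the induced structure on $h(\nodes{p_1^\ast})$ need not be a tree pattern obtained by pruning subtrees. The standard repair is to replace $h = g\circ f$ by an idempotent power $e = h^m$ (which exists by finiteness and is non-surjective whenever $h$ is): since $e$ fixes its image pointwise and maps the unique root-to-$n$ path onto a strictly descending sequence of the same length, every ancestor of a fixed node is fixed, so the image of $e$ \emph{is} ancestor-closed and yields a genuine strict sub-pattern equivalent to $p_1^\ast$, contradicting minimality. (You also have the two directions of Lemma~\ref{l:mapping-suff} swapped when attributing which witness gives which containment, though both containments do hold.) Second, ``bijective containment mapping'' does not immediately give ``isomorphism'': a //-edge is only required to map to a nonempty path. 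You need the depth-counting argument (bijectivity forces $\mathrm{depth}(f(n))=\mathrm{depth}(n)$ for all $n$, hence edges map to edges, and then counting /-edges forces //-edges to map to //-edges) to conclude that $f$ preserves edge types and is a true isomorphism. With these two repairs your argument is complete.
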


\begin{proof}
It is a direct consequence of Theorem~1 from~\cite{mandhani-suciu}, because equivalence in \xp (we remind that our language \xp has no wildcard) is always witnessed by containment mappings in both directions.
\end{proof}

\begin{lemma}\label{l:containment-tree-dag}
A tree pattern $p$ is contained into a  DAG pattern $d$ iff there is a containment mapping from $d$ into $p$.
\end{lemma}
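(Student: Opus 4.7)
The ``if'' direction is immediate from Lemma~\ref{l:mapping-suff}, so the work lies in the ``only if'' direction: assuming $p \sqsubseteq d$, exhibit a containment mapping $h : d \to p$. The plan is to use the standard canonical-model technique adapted to DAG targets: build a tree $t_p$ from $p$ in which $p$ embeds via the identity, derive an embedding of $d$ into $t_p$ from the containment hypothesis, and then read off a containment mapping into $p$.

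More precisely, pick a label $X \in \Sigma$ that does not occur in $p$ or $d$, and pick fresh constants $c_n$ (one per node) that do not occur in $d$. Let $N$ be any integer strictly larger than $|\nodes{d}|$. Construct $t_p$ from $p$ as follows: keep every $/$-edge as a parent--child edge; replace every $//$-edge $(n_1,n_2)$ by a chain of $N$ fresh intermediate nodes labeled $X$ and linked by $/$-edges; for every original node $n'$ of $p$, set $\txt{n'} = \tst{n'}$ when $\tst{n'}$ is a constant, and $\txt{n'} = c_{n'}$ otherwise; finally, assign fresh constants as text to the $X$-labeled intermediate nodes. The identity map of $p$'s nodes into their copies in $t_p$ is then an embedding witnessing $(\droot{t_p}, \out{p}) \in p(t_p)$. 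By $p \sqsubseteq d$ we obtain $(\droot{t_p}, \out{p}) \in d(t_p)$, hence an embedding $e : d \to t_p$ with $e(\droot{d}) = \droot{t_p}$ and $e(\out{d}) = \out{p}$.

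I would then define $h(n)$ to be the original $p$-node corresponding to $e(n)$. This is well-defined because no node of $d$ carries label $X$, so $e$ cannot map any $d$-node onto an inserted intermediate, i.e.\ every $e(n)$ is a copy of an original $p$-node. The required properties of a containment mapping are verified as follows. Label preservation and the root/output conditions transfer directly from $e$. For a $/$-edge $(n_1,n_2)$ of $d$, $e(n_2)$ is a child of $e(n_1)$ in $t_p$; but two original $p$-nodes are adjacent in $t_p$ only if they were connected by a $/$-edge in $p$ (the $//$-edges were stretched by $N \geq 1$ intermediate nodes), so $(h(n_1),h(n_2))$ is a $/$-edge of $p$. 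For a $//$-edge, $e$ gives a directed path in $t_p$ from $e(n_1)$ to $e(n_2)$, which contracts to an ancestor--descendant path in $p$. For the test condition, if $\tst{n} = C$ then $\txt{e(n)} = C$ and by the construction of $t_p$ only original $p$-nodes with $\tst = C$ carry text $C$, so $\tst{h(n)} = C$.

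The one step that requires a little care, and which I would highlight as the main obstacle, is the main-branch clause: showing that every $n \in \mbn{d}$ satisfies $h(n) \in \mb{p}$. I would argue by picking any $\droot{d}$-to-$\out{d}$ path through $n$ in $d$; its image under $e$ is a directed walk in $t_p$ from $\droot{t_p}$ to $\out{p}$, and since $t_p$ is a tree this walk must be contained in the unique root-to-$\out{p}$ path of $t_p$. Restricting that path to original $p$-nodes yields exactly $\mb{p}$, so $h(n) \in \mb{p}$ as required. This completes the verification that $h$ is a containment mapping from $d$ into $p$.
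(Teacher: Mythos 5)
Your proof is correct and follows essentially the same route as the paper's (sketched) argument: both construct a canonical tree model of $p$ by replacing each $//$-edge with a chain of fresh-labeled nodes, use $p \sqsubseteq d$ to obtain an embedding of $d$ into that model, and read off a containment mapping by collapsing back to $p$. Your version merely uses longer chains ($N$ intermediates where one suffices), adds explicit treatment of text constants, and spells out the main-branch and output-node conditions that the paper's two-line sketch leaves implicit.
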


\begin{proof}[Sketch]  Consider the model $\textit{mod}'_p$ of $p$ in which //-edges are replaced by a sequence $/z/$ (two child edges),where $z$ is a fresh new label.If $p \subseteq d$, then in particular $d(\textit{mod}'_p) \neq \emptyset$. Since $z$ is a new label, $d$ can only embed a //-edge in a path fragment containing $z$.
\end{proof}

Note that in \xpcap unsatisfiable DAG patterns are possible (when there exists no model with non-empty results). For the purposes of this paper, we assume in the following only satisfiable patterns.

We say that two \xp queries $q_1$ and $q_2$ are \emph{incomparable} if there is no containment mapping between them. 

We now prove that we can always reformulate a DAG pattern as a (possibly empty) union of tree patterns.

As in \cite{BenediktTCS05}, a \emph{code} is a string of  $\Sigma$ symbols alternating with either / or //.

\begin{definition}[Interleaving]
\label{def:interleaving} By the
\emph{interleavings} of a pattern $d$ we denote any tree pattern $p_i$ produced as follows:
\begin{enumerate}
\vspace{1mm}
\item choose a code $i$ and a total onto function $f_i$ that maps \mbn{d}
 into $\Sigma$-positions of $i$ such that:

\begin{enumerate}
 \vspace{1mm}
 \item for any $n \in \mbn{d}, \etiq{f_i(n)} =
    \etiq{n}$
    
    \vspace{1mm}
 \item for any /-edge $(n_1,n_2)$ in $d$, the code $i$ is of the form
    $\dots f_i(n_1)/f_i(n_2) \dots$,
    
    \vspace{1mm}
 \item for any //-edge $(n_1,n_2)$ in $d$, the code $i$ is of the form
   $\dots f_i(n_1)\dots f_i(n_2)\dots$.
\end{enumerate}

\vspace{1mm}
\item build the smallest pattern $p_i$ such that:

\begin{enumerate}
\vspace{1mm}
\item $i$ is a code for the main branch \mb{p_i},

\vspace{1mm}
\item for any $n \in \mbn{d}$ and its image $n'$ in $p_i$ (via
$f_i$), if a predicate subtree $st$ appears below $n$ then a copy of
$st$ appears below $n'$, connected by same kind of edge.
\end{enumerate}
\end{enumerate}
Two nodes $n_1$, $n_2$ from \mbn{d} are said to be
\emph{collapsed} (or \emph{coalesced}) if $f_i(n_1)=f_i(n_2)$, with $f_i$ as above. The tree
patterns $p_i$ thus obtained are called \emph{interleavings} of $d$
and we denote their set by \emph{interleave($d$)}.
\end{definition}
For instance, one of the seven interleavings of $d$ in
Figure~\ref{fig:runexample}(a) is the pattern in
Figure~\ref{fig:runexample}(c) and another one corresponds to the
XPath
$$\snippet{\doc{L}/lib/paper//paper//section[theorem]//figure[caption[.//label]]/image}$$
We say that a pattern $d$ is \emph{satisfiable} if it is non-empty and the set \emph{interleave($d$)} is non-empty. By definition, there is always a containment mapping from a satisfiable pattern into each of its interleavings. Then, by Lemma~\ref{l:mapping-suff}, a pattern will always contain its interleavings. Similar to a result from~\cite{BenediktTCS05}, it also holds that:

\begin{lemma}\label{lem:cap_sub_cup}
Any DAG pattern is equivalent to the union of its interleavings.
\end{lemma}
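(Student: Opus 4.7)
The plan is to prove the two containment directions separately, using Lemma~\ref{l:mapping-suff} for one direction and a direct model-theoretic argument based on the embedding witness for the other.

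For the easy direction $\bigcup_{p_i \in \text{interleave}(d)} p_i \sqsubseteq d$, I would fix an arbitrary interleaving $p_i$ and exhibit a containment mapping $h\colon d \to p_i$. On main branch nodes, $h$ is just $f_i$ (so $h(m)$ is the node at position $f_i(m)$ in the main branch of $p_i$); on each predicate subtree of $d$ hanging off a main branch node $m$, $h$ sends the subtree to its copy placed below the node $f_i(m)$ during the construction of $p_i$. The three conditions (a), (b), (c) of Definition~\ref{def:interleaving} translate directly into the label, /-edge, and //-edge conditions of a mapping; the root and output constraints are also satisfied because $f_i$ is onto, and root/out of $d$ are necessarily sent to the first/last position of the code (they occur on every main branch). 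Lemma~\ref{l:mapping-suff} then gives $p_i \sqsubseteq d$, and closure of $\sqsubseteq$ under union yields the desired inclusion.

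For the harder direction $d \sqsubseteq \bigcup_{p_i} p_i$, I would start from an arbitrary pair $(\droot{t},n) \in d(t)$ with witnessing embedding $e\colon d \to t$. Because $t$ is a tree with a unique root-to-$n$ path $\pi = v_0, v_1, \ldots, v_L$ (with $v_0 = \droot{t}$ and $v_L = n$), and every main branch node of $d$ lies on some root-to-out path of $d$ that $e$ must embed into a root-to-$n$ path of $t$, the image $e(m)$ of every $m \in \mbn{d}$ lies on $\pi$. Writing $e(m) = v_{\alpha(m)}$ and letting $i_1 < \cdots < i_k$ enumerate the set $\{\alpha(m) \mid m \in \mbn{d}\}$, I define the code $i$ to have $\Sigma$-positions $\lambda_t(v_{i_1}), \ldots, \lambda_t(v_{i_k})$, connected by $/$ whenever $i_{j+1} = i_j + 1$ and by $//$ otherwise; and I set $f_i(m) = j$ iff $\alpha(m) = i_j$. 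Conditions (a)--(c) of Definition~\ref{def:interleaving} follow immediately: the label condition from $\lambda_t \circ e = \lambda$; condition (b) because a /-edge in $d$ forces parent-child in $t$, hence $\alpha(m_2) = \alpha(m_1) + 1$ and thus a /-edge in the code; condition (c) because a //-edge forces $\alpha(m_2) \geq \alpha(m_1)$ in $t$, hence $f_i(m_1)$ precedes $f_i(m_2)$ in the code. Let $p_i$ be the resulting interleaving.

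To finish, I exhibit an embedding $e'\colon p_i \to t$ with $e'(\droot{p_i}) = \droot{t}$ and $e'(\out{p_i}) = n$. On the main branch, $e'$ maps the node at position $j$ to $v_{i_j}$; the endpoints work because $\droot{d}$ and $\out{d}$ both sit on every main branch, forcing $i_1 = 0$ and $i_k = L$. The main-branch /- and //-edges are respected by the definition of the code edges. For each predicate subtree of $p_i$ attached at a node of position $j$, that subtree is by construction a copy of some predicate subtree of $d$ originally hanging off a main branch node $m$ with $f_i(m) = j$; I use $e$ (composed with the copy identification) to embed it, which is valid since $e$ already embedded it below $e(m) = v_{i_j}$. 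The main obstacle here is keeping the bookkeeping clean when several main branch nodes of $d$ collapse (i.e.\ $f_i(m_1) = f_i(m_2)$): the predicate subtrees below each of them must all be carried to the single collapsed node in $p_i$, but since $e$ sent each to the subtree of $t$ rooted at the common image $v_{i_j}$, the combined embedding $e'$ is still well defined. This produces $(\droot{t}, n) \in p_i(t) \subseteq \bigcup_{p_j} p_j(t)$, completing the argument.
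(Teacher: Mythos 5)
Your proposal is correct and follows essentially the same route as the paper: the containment $\bigcup_i p_i \sqsubseteq d$ is the easy direction via Lemma~\ref{l:mapping-suff}, and for $d \sqsubseteq \bigcup_i p_i$ both you and the paper read off a code from the positions hit by $e$ on the unique root-to-$n$ path of $t$ (collapsing unused positions into //-edges), check that this yields a legal interleaving, and then re-embed that interleaving into $t$ using $e$. The only cosmetic difference is that you spell out the easy direction's containment mapping explicitly, whereas the paper disposes of it in the remark preceding the lemma.
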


\begin{proof}
We only need to consider the other inclusion, from $d$ into $\bigcup p_i$. We show that for any XML tree $t$ and any node $n \in t$ such that $(\droot{t},n) \in e(t)$, for some embedding $e$ of $d$ into $t$ (so $e(\out{d})=n$), we can always find an interleaving $p_i$ and embedding $e_i$ of $p_i$ in $t$ such that $(\droot{t},n) \in
e_i(t)$. This would be enough to conclude the proof of inclusion (and equivalence).

Let $p$ denote the linear path from $\droot{t}$ to $n$ (endpoints included) and let $c$ denote the code of $p$. Let $id$ denote the one-to-one mapping from $p$ to $c$. Note that $e$ gives us a mapping $\textit{id} \circ e$ from \mbn{d} to $c$, such that all the child/descendant relationships between main branch nodes are accordingly translated in
the ordering of $c$. Let $c'$ denote the code obtained from $c$ by: Step 1) replacing by the empty string all the positions that are not the image of some node $n' \in \mbn{d}$ under $e\circ id$, Step 2) replacing any sequence of consecutive /-characters of length more than 2 (i.e.,``///\dots'') by the slash-slash sequence (i.e, ``//'').

We can now construct the interleaving $p_i$ and its embedding $e_i$, such that $(\droot{t},n) \in e_i(t)$.

Let us book keep by a partial function $f_c$ the correspondence
between used $c$ positions and $c'$ positions. Let $p_i$ be defined by
the code $i=c'$, and let $f_i$ be defined by $f_c \circ id \circ e$ on
all the nodes in \mbn{d}. It is easy to see that $i$ and $f_i$ give
indeed an interleaving $p_i$, as it obeys all the conditions and $p_i$
is minimal. Let $id''$ denote the one-to-one mapping from \mb{p_i}
into $c'$. Now, we can define its embedding $e_i$ into $t$ as follows:
for all main branch nodes $n' \in \mb{p_i}$ we have
$e_i(n')=id^{-1}\circ f_c^{-1} \circ id''$. It is easy to see that for
any node $n'' \in MBN(d)$ such that $n_i'=id''^{-1}(f_i(n''))$, we
have $e(n'')= e _i(n')$ so all the predicate subtrees in $p_i$ can be
mapped at $e_i(n')$ for all $n'$.

Since $t$ and $n$ were chosen at random, this concludes the proof of containment for $d \sqsubseteq \bigcup_i p_i$.
\end{proof}

The following also hold:\footnote{This is reminiscent of similar results from relational database theory, on comparing  conjunctive queries with unions of conjunctive queries.}
\begin{lemma}\label{lem:equiv_tree_union}
If a tree pattern is equivalent to a union of tree patterns, then it is equivalent to a member of the union.
\end{lemma}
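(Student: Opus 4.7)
Let $p$ be a tree pattern with $p \equiv \bigcup_i p_i$. I would argue by exhibiting a particular $p_j$ such that there are containment mappings both ways between $p$ and $p_j$; Lemma~\ref{l:mapping-suff} then gives $p \equiv p_j$.

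The easy direction is $p_j \sqsubseteq p$ for every $j$: this is immediate from $p_j(t) \subseteq \bigcup_i p_i(t) = p(t)$ for every tree $t$. By Lemma~\ref{l:containment-tree-dag} this already yields a containment mapping from $p$ into each $p_j$.

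For the reverse direction I would reuse the canonical-model idea from the sketch of Lemma~\ref{l:containment-tree-dag}. Namely, build the tree $\textit{mod}'_p$ by replacing every //-edge $(a,b)$ of $p$ by a path $a / z / b$ through a fresh node labeled by a symbol $z$ that occurs nowhere in $p$ nor in any $p_i$ (the $p_i$ can be assumed to use only finitely many labels, so such a $z$ exists). The identity-like embedding of $p$ into $\textit{mod}'_p$ witnesses that $(\droot{\textit{mod}'_p}, n^\ast) \in p(\textit{mod}'_p)$, where $n^\ast$ is the image of $\out{p}$ in $\textit{mod}'_p$. By the assumed equivalence $p \equiv \bigcup_i p_i$, this pair must also lie in $p_j(\textit{mod}'_p)$ for some index $j$, i.e.\ there is an embedding $e: p_j \to \textit{mod}'_p$ with $e(\droot{p_j}) = \droot{\textit{mod}'_p}$ and $e(\out{p_j}) = n^\ast$.

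The core step is to turn $e$ into a containment mapping $h: p_j \to p$. Since no node of $p_j$ carries the fresh label $z$, $e$ cannot send any node of $p_j$ onto a $z$-node of $\textit{mod}'_p$, so every $e(n)$ corresponds to a unique node $h(n)$ of $p$. Labels and equality tests are preserved because they were preserved by $e$. For any /-edge $(n_1,n_2)$ in $p_j$, $e(n_1)$ is the parent of $e(n_2)$ in $\textit{mod}'_p$; since both endpoints are non-$z$, the only way this can happen is that $(h(n_1),h(n_2))$ is already a /-edge of $p$ (the //-edges of $p$ were broken by inserting a $z$-node). For //-edges, a path in $\textit{mod}'_p$ between non-$z$ endpoints projects down to a path in $p$. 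The root and output conditions are inherited from $e$. Thus $h$ is a containment mapping from $p_j$ into $p$, so by Lemma~\ref{l:mapping-suff}, $p \sqsubseteq p_j$, which together with $p_j \sqsubseteq p$ gives $p \equiv p_j$.

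The only delicate point is the claim that an embedding of $p_j$ into $\textit{mod}'_p$ avoids the $z$-nodes and faithfully reflects /-edges versus //-edges; this is precisely what the fresh-symbol trick buys us, mirroring the argument already sketched for Lemma~\ref{l:containment-tree-dag}.
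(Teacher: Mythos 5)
Your proof is correct. The paper itself gives no proof of Lemma~\ref{lem:equiv_tree_union} --- it only gestures at the analogous Sagiv--Yannakakis result for conjunctive queries versus unions of conjunctive queries --- so there is nothing to diverge from; what you supply is precisely the natural instantiation of that argument in the wildcard-free XPath setting, and it reuses the same fresh-label canonical-model trick that the paper itself sketches for Lemma~\ref{l:containment-tree-dag}. The evaluation of $p$ on the single model $\textit{mod}'_p$ forces some one $p_j$ to select $n^\ast$, and the freshness of $z$ guarantees that the resulting embedding reflects /-edges as /-edges and //-edges as downward paths of $p$, yielding a containment mapping $p_j \to p$ and hence $p \sqsubseteq p_j$; together with the trivial $p_j \sqsubseteq p$ this closes the argument. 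One small condition of the paper's Definition~\ref{def:dag-mapping} that you do not check explicitly is that $h$ sends main branch nodes of $p_j$ to main branch nodes of $p$: this holds because the images under $e$ of the root-to-output path of $p_j$ lie on the root-to-$n^\ast$ path of $\textit{mod}'_p$, whose non-$z$ nodes are exactly the main branch nodes of $p$; similarly, preservation of equality tests needs the (standard) convention that untested nodes of the canonical model receive no matching text value. Both are one-line additions, not gaps in the idea. Note also that the argument needs the union to be finite (or at least to leave some label unused) so that a genuinely fresh $z$ exists; this is the case everywhere the lemma is applied in the paper, since the union ranges over the interleavings of a DAG.
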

\vspace{-3mm}
\begin{lemma}\label{l:cnt-union-tp}
Let $p = \cup_i p_i$ and $q = \cup_j q_j$ be two finite unions of tree patterns. Then $p \sqsubseteq q$ iff $\forall i, \exists j$ s.t. $p_i
\sqsubseteq q_j$.
\end{lemma}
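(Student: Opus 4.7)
My approach is to adapt the classical canonical-model argument for unions of conjunctive queries, leveraging the tree-style analogue already invoked in the sketch of Lemma~\ref{l:containment-tree-dag}.

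The ``if'' direction is immediate by monotonicity: if for every $i$ there is some $j(i)$ with $p_i \sqsubseteq q_{j(i)}$, then for any tree $t$ we have $p_i(t) \subseteq q_{j(i)}(t) \subseteq q(t)$, so $p(t) = \bigcup_i p_i(t) \subseteq q(t)$.

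For the ``only if'' direction, fix an arbitrary index $i$ and build the canonical model $\mathit{mod}'_{p_i}$ of $p_i$ exactly as in the sketch of Lemma~\ref{l:containment-tree-dag}: replace every $/\!/$-edge $(u,v)$ of $p_i$ by a two-step path $u\,/\,z\,/\,v$ where $z$ is a fresh label not occurring in $p_i$ or in any $q_j$; keep $/$-edges and test constants intact. The identity map on $p_i$ lifts to an embedding of $p_i$ into $\mathit{mod}'_{p_i}$ (a $/\!/$-edge is trivially witnessed by the inserted two-edge path), so the image $n^\star$ of $\out{p_i}$ satisfies $(\droot{\mathit{mod}'_{p_i}}, n^\star) \in p_i(\mathit{mod}'_{p_i}) \subseteq p(\mathit{mod}'_{p_i})$.

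By the assumption $p \sqsubseteq q$, the pair $(\droot{\mathit{mod}'_{p_i}}, n^\star)$ must also lie in $q(\mathit{mod}'_{p_i}) = \bigcup_j q_j(\mathit{mod}'_{p_i})$. So some $q_j$ admits an embedding $e$ into $\mathit{mod}'_{p_i}$ with $e(\out{q_j}) = n^\star$. Now reinterpret $e$ as a mapping back into $p_i$: since $z$ does not appear as a label in $q_j$, no node of $q_j$ can be sent to a $z$-node, hence every node of $q_j$ lands on a node inherited from $p_i$. Each $/$-edge of $q_j$ is sent to a single edge in $\mathit{mod}'_{p_i}$; by the $z$-freshness, such a single edge either coincides with a $/$-edge of $p_i$ or is one half of an inserted $/z/$ path, but in the latter case the label discipline is violated. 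So $/$-edges of $q_j$ correspond to $/$-edges of $p_i$. Any $/\!/$-edge of $q_j$ is witnessed by a path in $\mathit{mod}'_{p_i}$ between two non-$z$ nodes, which projects to a path in $p_i$ witnessing reachability. This yields a containment mapping from $q_j$ into $p_i$, and Lemma~\ref{l:mapping-suff} gives $p_i \sqsubseteq q_j$, as required.

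The only delicate step is step~four: ensuring that an embedding into the canonical model pulls back to a \emph{containment} mapping (in particular, respecting the distinction between $/$ and $/\!/$ edges and sending $\out{q_j}$ to $\out{p_i}$). This is handled uniformly by the choice of a globally fresh label $z$, which blocks any embedding of $q_j$ from ``cheating'' on $/$-edges while still allowing $/\!/$-edges to be witnessed by inserted paths, mirroring the sketch already used for Lemma~\ref{l:containment-tree-dag}.
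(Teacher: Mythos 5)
Your proof is correct. The paper states this lemma without proof (only a footnote pointing to the analogous result for unions of conjunctive queries), and your canonical-model argument --- fresh label $z$ on each $/\!/$-edge, embed $p_i$ into $\mathit{mod}'_{p_i}$, pull the resulting embedding of some $q_j$ back to a containment mapping --- is exactly the standard argument the paper implicitly relies on, and it is the same device used in its sketch of Lemma~\ref{l:containment-tree-dag}.
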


Given an DAG pattern  $d$, by the \emph{normal form of $d$} (in short, \nf{d}) we denote the equivalent formulation of $d$ as the union of incomparable interleavings with respect to containment.

Note that the set of interleavings $p_i$ of a DAG pattern $p$ can be exponentially larger than $p$. Indeed, it was shown that the \xpcap fragment is not included in \xp (i.e, the union of its interleavings cannot always be reduced to one \xp query by eliminating interleavings contained in others) and that a DAG pattern may only be
translatable into a union of exponentially many tree patterns (\cite{BenediktTCS05}). Nevertheless, testing if a DAG is satisfiable can be done in polynomial time. 

\begin{definition}
A DAG pattern is \emph{union-free} iff it is equivalent to a single tree pattern. \end{definition}

By Lemmas~\ref{lem:cap_sub_cup} and~\ref{lem:equiv_tree_union}, a satisfiable pattern is union-free iff it has an interleaving that contains all other possible interleavings. A naive, exponential-time procedure to test union-freedom would thus be to generate all possible interleavings and to check whether one of them contains all others. 

\subsection{Additional notation}
A \emph{/-pattern} is a tree pattern that has only /-edges in the main branch. We call
\emph{predicate subtree} of a pattern $p$ any subtree of $p$ rooted at a non-main branch node. By a \emph{/-subpredicate} $st$ we denote a predicate subtree whose root is connected by a /-path to the main branch node to which $st$ is associated.
A \emph{//-predicate} is a predicate subtree connected by a //-edge to the main branch.  A \emph{tree skeleton} is a tree pattern without //-edges in predicate subtrees.

A \emph{prefix} $p$ of a tree pattern $q$ is any tree pattern with $\droot{p} = \droot{q}$, $m=\mb{p}$ a subpath of \mb{q} and having all the predicates attached to the nodes of $m$ in $q$. For instance, the pattern shown in Figure~\ref{fig:runexample}(c) is a prefix of the pattern of $q_2$, since it has all the nodes of $q_2$, except for the output one.

A \emph{lossless prefix} $p$ of a tree pattern $q$ is any tree pattern obtained from $q$ by setting the output node to some other main branch node (i.e., an ancestor of \out{q}). Note that this means that the rest of the main branch becomes a side branch, hence a predicate.

For a pattern $d$ and node $n \in \mbn{d}$, by \sub{d}{n} we denote the subpattern rooted at $n$ in $d$.

The $\func{compensate}$ function generalizes the concatenation operation from~\cite{DBLP:conf/vldb/XuO05}, by copying extra navigation from the query into the rewrite plan. For $r \in \xpcap$ and a tree pattern $p$, $\func{compensate}(r,p,n)$ returns the query obtained by deleting the first symbol from $x\!=\!\xpath{\sub{p}{n}}$ and concatenating the rest to $r$. For instance, the result of compensating $r$ = \snippet{a/b} with $x$ = \snippet{b[c][d]/e} at the $b$-node is the concatenation of \snippet{a/b} and \snippet{[c][d]/e}, i.e. \snippet{a/b[c][d]/e}.

We also refer to the \emph{tokens} of tree pattern $p$: more specifically, the main branch of a tree pattern $p$ can be partitioned by its sub-sequences separated by //-edges, and each /-pattern from this partitioning is called a \emph{token}. We can thus see a pattern $p$ as a sequence of tokens (/-patterns) $p = t_1//t_2//\dots//t_k$. We
call $t_1$, the token starting with \droot{p}, the \emph{root token} of $p$. The token $t_k$, which ends by \out{p}, is called the \emph{result token} of $p$. The other tokens are denoted \emph{intermediary tokens}, and by the \emph{intermediary part} of a
tree pattern we denote the sequence of intermediary tokens. Note that a tree pattern may have only one token, if it does not have //-edges in the main branch. By a \emph{token-suffix} of $p$ we denote any tree pattern defined by a suffix of the sequence of tokens $(t_1, \dots, t_k)$. Symmetrically, we introduce the notion of \emph{token-prefix} of $p$.

\subsection{The rewriting problem} 
Given a set of views ${\cal V}$, defined by \xp queries over a document $D$, by $D_{\cal V}$ we denote the set of view documents $\{doc(``v")| v \in {\cal V}\}$, in which the topmost element is labelled with the view name. Given a query $r \in \xpcap$ over the view documents $D_{\cal V}$, we define \unfold{r} as the \xpcap query obtained by replacing in $r$ each $\doc{``v"}/v$ with the definition of $v$.

We are now ready to describe the view-based rewriting problem. Given a query $q$ and a finite set of views ${\cal V}$ over $D$ in a language $\cal L \subseteq \xp$,
we look for an alternative plan $r$, called a \emph{rewriting}, that can be used to answer $q$. We define rewritings as follows:

\begin{definition}\label{def:equiv-rw}
For a given document $D$, an \xp query $q$ and \xp views ${\cal V}$ over $D$, a \emph{rewrite plan} of $q$ using ${\cal V}$ is a query $r \in \xpcap$ over $D_{\cal V}$. If $\unfold{r} \equiv q$, then we also say r is a \emph{rewriting}.
\end{definition}

According to the definition above and the definition of \xpcap, a rewriting $r$ is of the form ${\cal I} = (\bigcap_{i,j} u_{ij})$, ${\cal I}/\rpath$ or ${\cal I}//\rpath$, with $u_{ij}$ of the form $\doc{``v_j"}/v_j/p_i$ or $\doc{``v_j"}/v_j//p_i$.

\begin{lemma}\label{l:rwplan-eval}
A rewrite plan from \xpcap can be evaluated over a set of view documents $D_{\cal V}$ in polynomial time in the size of $D_{\cal V}$.
\end{lemma}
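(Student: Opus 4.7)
The plan is to decompose the rewrite plan structurally according to the \xpcap grammar and invoke known polynomial-time procedures for \xp evaluation on each component. By definition, $r$ has the form ${\cal I}$, ${\cal I}/\rpath$, or ${\cal I}//\rpath$, where ${\cal I}=\bigcap_k u_k$ is a finite intersection of expressions $u_k$, each of which is a plain \xp \apath navigating from some view document root $\doc{``v"}/v$.

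First, I would invoke the standard result that a pure \xp query (no intersection, no wildcards, with child/descendant navigation and predicate filters only) can be evaluated on an XML tree in polynomial time in the size of the tree (for instance by the bottom-up algorithms of Gottlob, Koch, and Pichler, or by the classical tree-pattern embedding procedure that, for each node of the document, checks whether a partial embedding can be extended, in time polynomial in both pattern and document). Applying this to each $u_k$ on the corresponding view document in $D_{\cal V}$ yields, in polynomial time in $|D_{\cal V}|$, a binary relation of pairs $(\droot{\doc{``v"}},n)$ whose second components form a subset of $\nodes{D_{\cal V}}$ of size at most $|D_{\cal V}|$.

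Next, I would handle the intersection. Since the semantics of $\cap$ in \xpcap is ordinary set intersection on pairs, and since each operand has been computed as an explicit set of size at most $|D_{\cal V}|$, the intersection of a constant (or at worst linear in $|r|$) number of such sets can be computed in polynomial time by standard set operations (e.g.\ hashing node identifiers). This yields the result of the ${\cal I}$ subexpression.

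Finally, for the compensating navigation (if any): $\cpath/\rpath$ and $\cpath//\rpath$ are defined by composing the relation computed for $\cpath$ with the \xp semantics of $\rpath$ (possibly preceded by a transitive edge step). Since $\rpath$ is itself in pure \xp, one can for each node $n$ in the second component of ${\cal I}$'s result either recompute the relevant \xp relation, or, more efficiently, evaluate $\rpath$ once against the whole document using the same polynomial-time \xp procedure and then take the relational composition; both approaches remain polynomial in $|D_{\cal V}|$. Putting the three stages together gives a polynomial bound overall. There is no real obstacle here: the statement is essentially a consequence of polynomial-time \xp evaluation together with the fact that \xpcap admits only one level of intersection before any further navigation, so no exponential blow-up can arise from the query structure.
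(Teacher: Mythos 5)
Your proposal is correct and follows essentially the same route as the paper's (sketched) proof: evaluate each intersected branch as an ordinary tree pattern in polynomial time, observe that the intersection operates on node sets whose size is bounded by the view documents, and note that the trailing navigation is again a polynomial-time tree-pattern evaluation, with the total number of such stages proportional to $|r|$. Your version merely spells out the steps that the paper's sketch compresses into its induction remark, so no further comparison is needed.
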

\vspace{-3mm}

\begin{proof}[Sketch]
Consider a plan $r$ over a set of view documents $D_{\cal V}$.  $r$ 
gives a tractable evaluation strategy that: start from the
document nodes and navigate from each of them down to the
intersection node. All navigations can be done in PTIME, as they can
be seen equivalently as tree patterns. We can prove  by induction
on the structure of $r$ that the input of each intersection node
is polynomial, hence its input is also polynomial, because the
result is always a set (arity is 1), and it has at most as many
elements as the largest of its inputs. Hence the size of each
intermediate result is bounded by the size of the largest view.
As  the number of steps, navigation and intersections is constant
w.r.t. $t$ (it is proportional to the size of $r$), the overall
computation is in PTIME in $|D_{\cal V}|$.
\end{proof}
\vspace{-1mm}
\textbf{Completeness.} Hereafter, an algorithm is said to be \emph{complete for rewriting} $\cal L \subseteq \xp$ if it solves the rewriting problem for queries and views in $\cal L$, i.e., it finds a rewriting whenever one exists.



\section{The Rewriting Algorithm}\label{sec:algo}
Our approach for testing the existence of a rewriting (algorithm
$\algoname$) is the following: for each rewrite plan $r$ using views that satisfies certain conditions w.r.t the query $q$, we test whether its unfolding is equivalent to $q$. 
A remarkable feature of the algorithm is that it considers only \emph{a linear number of  candidate plans}. Indeed, we show in Section \ref{sec:guarantees} that if a rewriting exists then one of these candidate plans is a rewriting, which implies the completeness of our algorithm. This result is a pleasant surprise, given that intuitively one would expect the number of distinct plans to be inspected to reflect the number of distinct subsets of views.

\emph{Testing equivalence between the tree pattern $q$ and a DAG pattern $d$ corresponding to the unfolding of $r$ will be the central task in our algorithm}. As by construction the plans / DAGs to be considered will always contain $q$, testing equivalence will amount to testing the opposite containment, of $d$ into $q$.

However, Lemmas~\ref{lem:cap_sub_cup} and~\ref{lem:equiv_tree_union} imply that equivalence holds iff $d=\unfold{r}$ has an interleaving $p_i$ such that $d \equiv p_i \equiv q$. From this observation, a na\"ive approach for the rewrite test would be to simply compute the interleavings of $d$ (a union of interleavings), check that this union reduces  by containments to one interleaving $p_i$ (union-freedom), and that $p_i$ is equivalent to $q$. The reason we call this approach na\"ive is that the number of interleavings in the union can be exponential (recall Section~\ref{sec:preliminaries}), even when the unfolding of $r$ is equivalent to a single tree pattern. In this case, one ``dominant'' interleaving in the union will contain all others. 

To avoid the cost of the na\"ive approach in these cases, we set out to directly detect the dominant interleaving prior to checking equivalence.  We  devise an algorithm, \proc{Apply-Rules}, that operates a series of transformations on the candidate rewrite plans, expressed in the form of nine \textit{rewrite rules}. Starting from $d$, each rule application will produce \textit{an equivalent DAG pattern that is one step closer to the dominant interleaving that contains all others, if one such interleaving exists,  i.e., if $d$ is union-free.}

Our rule-based rewriting algorithm applies to any DAG patterns and is a decision procedure for union-freedom under practically relevant restrictions. More precisely, we show in Section~\ref{sec:guarantees} that under the restrictions,   $\proc{Apply-Rules}(d)$ is a tree whenever $d$ is union-free. In general, this is not guaranteed and additional containment tests between the remaining possible interleavings may be necessary to find one $p_i$ such that $p_i \equiv d$. Nevertheless, using $\proc{Apply-Rules}$ can be beneficial even in the general case, by reducing the number of interleavings we have to check.

We give below the global form of our rule-based algorithm. Section \ref{sec:rules} will be dedicated to the detailed description of each of the nine rules, showing that they preserve equivalence. We will discuss several possible optimizations and strategies for triggering rewrite rules in Section~\ref{sec:implementation}. Section \ref{sec:guarantees} shows PTIME complexity for \proc{Apply-Rules}.

\begin{codebox}
\Procname{$\proc{Apply-Rules}(d)$}
\li \Repeat
\li \Repeat apply R1 to $d$
\li \Until no change
\li \Repeat apply R2-R9 to $d$, in arbitrary order
\li \Until no change
\li \Until no change
\end{codebox}


We use \proc{Apply-Rules} in the \proc{Rewrite} algorithm, that rewrites $q$ using views $\cal V$:

\begin{codebox}
\Procname{$\algoname(q,\cal V)$}
\li \label{li:init-prefs}$\id{Prefs} \gets \{ (p,\{(v_i,b_i)\}) \;|\; v_i \in {\cal V},
  p \textrm{ a lossless prefix of } q, b_i \in \mb{p}, $
\zi $\exists \textrm{ a root-mapping } h \textrm{ from } u_i\!=\!\pattern{v_i}
    \textrm{ into } p, h(\out{u_i}) = b_i \}$
\li\label{li:mainloop} \For $(p,W) \in \id{Prefs}$
\li\label{li:rw-build-compensations} \Do let ${\cal V'} \gets \{\func{compensate}(\doc{``v"}/v,p,b) \;|\; (v,b) \in W \}$
\li\label{li:build-plan}let $r$ be the \xpcap query $\left(\bigcap_{v_j \in {\cal V'}} v_j\right)$
\li let $d$ be the DAG corresponding to \unfold{r}
\li\label{li:apply-rules}\proc{Apply-Rules}(d)
\li\label{li:rw-cnt-check} \If $d \sqsubseteq p$
\li\label{li:ret-rw} \Then \Return $\func{compensate}(r,q,\out{p})$ 
\End
\End
\li\label{li:ret-fail} \Return {\bf fail}
\end{codebox}

\algoname starts the construction of rewrite-plan candidates by collecting the sets of relevant compensated views w.r.t. the input query or lossless prefixes thereof. For each prefix $p$ in separation, all possible compensated views are combined in the intersection step. The resulting DAG pattern is then tested for equivalence w.r.t. $p$, and if this holds (i.e., we have a rewriting for $p$) this prefix is compensated once more to obtain  a rewriting for $q$. At line~\ref{li:ret-rw}, if $p$ is $q$ itself, \func{compensate} returns just $r$,  as all needed navigation had already been added at line~\ref{li:rw-build-compensations}. Note  that $d$ is, in all cases, satisfiable, because we intersect views that contain a satisfiable query. Note also that, while the output of \proc{Apply-Rules} may be an arbitrary DAG, the algorithm always returns the initial DAG (plus some compensation), thus ensuring straight-forward conversion towards an \xpcap expression.

As an extension to \algoname , \allrw searches for all the rewritings of $q$ using views $\cal V$:

\allrw -- same code as \algoname with the modifications:
\begin{itemize}
\item replace line~\ref{li:mainloop} with: ($\ref{li:mainloop}'$) \For $(p,U) \in \id{Prefs}$ \For $W \subseteq U$
\item remove line~\ref{li:ret-fail},
\item  continue to run even when the return at line~\ref{li:ret-rw} is reached.
\end{itemize}

While we will show in Theorem \ref{th:completeness} that \algoname is sound and complete for all queries and views in \xp, its complexity depends on that of the containment test on line \ref{li:rw-cnt-check}. While in general this containment test is hard, it becomes efficient if $d$ is a tree. We identify fairly permissible restrictions under which the resulting $d$ is always a tree (thus allowing the containment test in PTIME), and consider a specialized version of \algoname, as below:

\efficient -- same code as \algoname, with the following modification 
\begin{itemize}
\item  line~\ref{li:rw-cnt-check}  becomes: ($\ref{li:rw-cnt-check}'$)
\If $d$ is a tree $\Then \If\, d\, \sqsubseteq p$.
\end{itemize}

As mentioned above, the number of plans to be considered is linear, and both  \proc{Apply-Rules} and the containment test when $d$ is a tree have PTIME complexity, thus announcing overall polynomial complexity for \efficient. We indeed show in Section \ref{sec:guarantees} that \efficient always runs in PTIME. Moreover, we show that under fairly permissible and practically relevant restrictions, the resulting $d$ is always a tree, thus \efficient becomes sound and complete. 

\color{black}

\section{The Rewrite Rules (of subroutine \proc{Apply-Rules})}
\label{sec:rules}
We present in this section a set of rewrite rules, such that  each application of one of the rules  brings the DAG pattern one step closer to a tree pattern. We will prove that the result of \proc{Apply-Rules} is always
equivalent to the original DAG. This implies that \algoname gives a
sound algorithm for the rewriting problem, and we will show it is also
a decision procedure.

We present the rules
R1-R9 as pairs formed by a test condition, which checks if the rule is
applicable, and a graphical description, which shows how the rule
transforms the DAG. The left-hand side of the rule description will
match main branch nodes and paths in the DAG. If the matching nodes
and paths verify the test conditions, then the consequent
transformation is applied on them. Each transformation either 
\begin{itemize}
\item
collapses two main branch nodes $n_1$, $n_2$ into a new node $n_{1,2}$
(which inherits the predicate subtrees, incoming and outgoing main
branch edges), 
\item  removes some redundant main branch nodes and edges, or 
\item  appends a new predicate subtree below an existing main branch
node.
\end{itemize}
%

\noindent \textbf{Graphical notation.} We use the following notation in the illustration of our rewrite rules: linear paths corresponding to part
of a main branch are designated in italic by the letter $p$, nodes are
designated by the letter $n$, the result of collapsing two nodes
$n_i$, $n_j$ will be denoted $n_{i,j}$, simple lines represent
/-edges, double lines represent //-edges, simple dotted lines
represent /-paths, and double dotted lines represent arbitrary paths
(may have both / and //). We only represent main branch nodes or paths
in the depiction of rules (predicates are omitted). Exception are rules R5 and R9, where we need to refer to a subtree predicate, respectively a /-subpredicate, by its \xp
expression $[Q]$. We refer to the tree pattern containing just a main
branch path $p$ simply by $p$, and to the tree pattern having $p$
as main branch by \tp{d}{p}:  for a main branch path $p$ in $d$, given by a sequence of nodes $(n_1,
\dots, n_k)$, we define \tp{d}{p} as the tree pattern having $p$ as main
branch, $n_1$ as root and $n_k$ as output, plus all the
predicate subtrees (from $d$) of the nodes of $p$.
We represent by a rhombus main branch paths that are not followed by
any / (main branch) edge.  Paths include their end points.


\textbf{Test Conditions.}
In the test conditions, we say that a pattern $d$ is
\emph{immediately unsatisfiable} if by applying to saturation rule R1
on it we reach a pattern in which either there are two /-paths of
different lengths but with the same start and end node, or there is a
node with two incoming /-edges $\lambda_1/\lambda$ and
$\lambda_2/\lambda$, such that $\lambda_1 \neq \lambda_2$. Note
that the test of immediate unsatisfiability is just a sufficient
condition for the unsatisfiability of the entire DAG. 
 For instance, a DAG pattern that has in parallel the branches \snippet{doc(``L'')//paper//section} and \snippet{doc(``L'')/book/section} is not satisfiable yet R1 does not apply on it.
\eat{
For a main branch path $p$ in $d$, given by a sequence of nodes $(n_1,
\dots, n_k)$, we define \tp{d}{p} as the tree pattern having $p$ as main
branch, $n_1$ as root and $n_k$ as output, plus all the
predicate subtrees (from $d$) of the nodes of $p$.
}
\begin{definition}
\label{def:similar}
We say that two /-patterns $p_1$, $p_2$ are \emph{similar} if (a)
their main branches have the same code, and (b) both have root
mappings into any pattern $p_{12}$ built from $p_1$, $p_2$ as
follows:
\begin{enumerate}
\item
choose a code  $i_{12}$ and a total onto function $f_{12}$ that maps the nodes of
$m_{12} = \mbn{p_1} \cup \mbn{p_2}$ into $i_{12}$ such that:
\begin{enumerate}
\item
for any node $n$  in $m_{12}$, $\etiq{f_{12}(n)} =
    \etiq{n}$
\item
for any /-edge $(n_1,n_2)$ in the main branch of $p_1$ or
    $p_2$, the code $i_{12}$ contains $f_{12}(n_1)/f_{12}(n_2)$
\end{enumerate}
\item
build the minimal pattern $p_{12}$ such that:
\begin{enumerate}
\item
$i_{12}$ is a code for the main branch \mb{p_{12}},
\item
for each node $n$ in $\mbn{p_1}\,\cup\, \mbn{p_2}$ and its image $n'$ in \mb{p_{12}} (via
$f_{12}$), if a predicate subtree $st$ appears below $n$ then a copy of
$st$ appears below $n'$, connected by the same kind of edge.
\end{enumerate}
\end{enumerate}
\end{definition}
\begin{example} For instance,  the patterns $p_1=\snippet{a/b[.//c]/d[.//e]}$ and $p_2=\snippet{a[b//e]/b/d[.//c]}$ are similar, given that the  patterns $p_{12}$ that can be built from them, according to Definition~\ref{def:similar}, are of the form (before minimization) $p_{12}=\snippet{a[b//e]/b[.//c]/d[.//e][.//c]}$, $p_{12}= \snippet{a/b[.//c]/d[.//e] \dots a[b//e]/b/d[.//c]}$ or $p_{12}=\snippet{  a[b//e]/b/d[.//c] \dots a/b[.//c]/d[.//e]}$.
\end{example}

For two nodes $n_1, n_2 \in \mbn{d}$, such that $\lambda_d(n_1) =
\lambda_d(n_2)=\lambda$, by \collapse{d}{n_1, n_2} we denote the DAG
obtained from $d$ by replacing $n_1$ and $n_2$ with a
$\lambda$-labeled node $n_{1,2}$ that inherits the incoming and
outgoing edges of both $n_1$ and $n_2$. We say that two nodes $n_1$,
$n_2$ are \emph{collapsible} iff they have the same label and the DAG
pattern \collapse{d}{n_1, n_2} is not immediately unsatisfiable.



We have now all the ingredients to present the rewrite rules. With each rule presentation we will also prove soundness, i.e., that equivalence is preserved.  We thus have the following result:
\begin{proposition}
\label{l:step-soundness}
  The application of any of the rules from the set R1-R9
  on a DAG $d$ produces another DAG $d'$ such that $d' \equiv d$.
\end{proposition}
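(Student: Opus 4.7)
The plan is to prove each of the nine rules separately, in each case establishing $d \equiv d'$ by exhibiting containment mappings in both directions, as permitted by Lemma~\ref{l:mapping-suff}.  For each rule I identify the ``easy'' direction, obtained from a canonical map induced by the rule's syntactic action, and the ``hard'' direction, which requires the rule's test condition to do the actual work.

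The nine rules partition into three structural families, and the proof will follow this partition.  For a \emph{node-collapsing} rule, which replaces $n_1,n_2 \in \mbn{d}$ by $n_{1,2} = \collapse{d}{n_1,n_2}$, the containment $d' \sqsubseteq d$ is witnessed by the mapping that sends each of $n_1, n_2$ to $n_{1,2}$ and is the identity elsewhere; checking the properties of Definition~\ref{def:dag-mapping} is routine once the rule's hypotheses guarantee label compatibility.  For a \emph{redundancy-removing} rule, which deletes a sub-structure of $d$, the containment $d \sqsubseteq d'$ follows from the inclusion root-mapping.  For a \emph{predicate-adding} rule, which appends a new predicate subtree below some $n \in \mbn{d}$, the containment $d' \sqsubseteq d$ again follows from the obvious inclusion mapping.

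The hard direction in each case will be extracted from the test condition.  For node-collapsing rules, the condition ensures that $\collapse{d}{n_1,n_2}$ is not immediately unsatisfiable and that the two matched main-branch paths are \emph{similar} in the sense of Definition~\ref{def:similar}; I will show that any embedding $e$ of $d$ into a tree $t$ is forced to satisfy $e(n_1) = e(n_2)$, which yields the containment mapping $h$ from $d$ into $d'$ by defining $h(n_{1,2})$ as the common image and $h = \mathit{id}$ elsewhere.  The argument here will proceed via Lemma~\ref{lem:cap_sub_cup}: any embedding of $d$ factors through one of its interleavings, and the similarity condition, combined with the rule's matched code pattern, rules out all interleavings in which the two main-branch nodes sit at distinct positions.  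For predicate-adding rules, the hard direction $d \sqsubseteq d'$ is obtained by exhibiting, inside $d$, a root-mapping from the freshly appended subtree into an already-present subtree of $d$, guaranteed by the test condition — so the predicate is already implied.  For redundancy-removing rules, the symmetric argument produces a containment mapping from the deleted part into what remains.

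The main obstacle will be the node-collapsing rules and rule R9 concerning /-subpredicates, because there the reasoning genuinely depends on the notion of similarity, on the \xp-without-wildcard containment machinery from Lemma~\ref{lem:equiv-iso}, and on ruling out ``bad'' interleavings.  Once each of the nine rules is verified in turn, the proposition follows: Proposition~\ref{l:step-soundness} is simply the conjunction of these nine equivalences, and, since equivalence of DAG patterns is transitive, any finite sequence of rule applications inside $\proc{Apply-Rules}$ preserves equivalence with the original input.
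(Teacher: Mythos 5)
Your overall architecture (easy direction via the canonical syntactic mapping, hard direction via the test condition) matches the paper's, and your partition of the rules into collapsing, deleting and predicate-adding families is a reasonable way to organize the case analysis. But both ``hard direction'' arguments you sketch would fail. For the collapsing rules you claim that every embedding $e$ of $d$ into a tree is forced to satisfy $e(n_1)=e(n_2)$, equivalently that the test conditions rule out all interleavings placing the two nodes at distinct positions. That is true for R1 (both nodes are /-children of a common parent on main branches that reconverge at $\out{d}$) and essentially for R8, but it is false for R3 and R6: there the two paths can genuinely map to different positions of the witness tree, and the corresponding interleavings exist --- they are merely \emph{subsumed} by the collapsed one, not excluded. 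The paper's proofs of R3.i and R6 accordingly do not argue that $e(n_1)=e(n_2)$; they take an arbitrary embedding $e$ and \emph{construct a different embedding} $e'$ of $f_r(d)$ (for R3.i by precomposing $e$ with the internal mapping of $\tp{d}{p_2}$ into $\tp{d}{p_1}$; for R6 by re-routing the predicate subtrees of $p_2$ through a pattern $p_{12}$ furnished by similarity) that preserves the image of the output node. Without this construction your argument collapses exactly where the work is.

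For the predicate-adding rules you propose to exhibit, inside $d$, a root-mapping from the freshly appended subtree into an already-present subtree, ``so the predicate is already implied.'' R5's second test condition states the precise opposite: $\pattern{\lambda_d(n_2)[Q]}$ has \emph{no} root-mapping into $\sub{d}{n_2}$. The predicate is not syntactically implied in $d$ itself; it is only implied in every collapsed variant $\collapse{d}{n_4,n_3}$ and in the configuration where $n_3$ embeds below all of $p_2$, which is why the paper's proof of R5 performs a three-way case analysis on where $e(n_3)$ lands and builds a different mapping $\widetilde{f}$ in each case. More generally, since a containment mapping is only a \emph{sufficient} condition for containment between DAG patterns (Lemma~\ref{l:mapping-suff} is one-directional), the hard inclusion $d \sqsubseteq f_r(d)$ cannot in general be witnessed by a containment mapping at all; the paper instead proves the semantic statement of Lemma~\ref{lemma:dintofr}, quantifying over embeddings into arbitrary trees and producing, for each, an embedding of $f_r(d)$ with the same output image. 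Your proof needs to be recast in that form before the per-rule verification can go through.
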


We use the following schema for the soundness proofs. Each rule $r$ in our set has an associated function $f_r$ that takes a
DAG $d$ as input and outputs another DAG $f_r(d)$ that is the result
of applying $r$ to $d$. By the way rules transform $d$, the containment $f_r(d) \sqsubseteq d$ is immediate.  We will thus discuss why $d \sqsubseteq f_r(d)$ holds after any  rule $r$ triggers, proving in fact the following lemma. 
\begin{lemma}\label{lemma:dintofr}
  For a rule $r$, a DAG $d$ and a document $t$, if $d$ has an
  embedding $e$ in $t$ and $r$ is applicable to $d$, then $f_r(d)$ has
  also an embedding $e'$ into $t$ such that
  $e(\out{d})=e'(\out{f_r(d)})$.
\end{lemma}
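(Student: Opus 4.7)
The plan is to prove the lemma by case analysis on the rule $r \in \{R1,\ldots,R9\}$ that fires, treating the nine rules through three structural templates dictated by the informal taxonomy above: collapsing two main branch nodes, removing a redundant main branch node/edge, and appending a new predicate subtree. In every case, I start from the hypothesized embedding $e$ of $d$ in $t$ (with $e(\out{d}) = n^\star$) and explicitly construct $e'\colon \nodes{f_r(d)} \to \nodes{t}$ satisfying Definition~\ref{def:embedding} and $e'(\out{f_r(d)}) = n^\star$. The embedding properties I must reverify are exactly the root-preservation, label-preservation, test-preservation, /-edge and //-edge properties.

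For the collapse-type rules (those that fuse $n_1,n_2 \in \mbn{d}$ into $n_{1,2}$), the key first step is to argue from the test condition -- similarity of the /-patterns around $n_1$ and $n_2$ plus collapsibility in the sense of $\collapse{d}{n_1,n_2}$ not being immediately unsatisfiable -- that any embedding $e$ of $d$ into any tree $t$ must already satisfy $e(n_1) = e(n_2)$. Once this identification is granted, I set $e'(n_{1,2}) := e(n_1) = e(n_2)$ and $e'(n) := e(n)$ elsewhere. Because $n_{1,2}$ inherits all incoming and outgoing edges and all predicate subtrees of both $n_1$ and $n_2$, each edge and predicate condition to check on $e'$ reduces to one already verified by $e$, and the output-node equation $e'(\out{f_r(d)}) = n^\star$ holds since collapsing preserves the main branch and the output node.

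For the removal-type rules, $e'$ is simply the restriction of $e$ to $\nodes{f_r(d)}$. The verification reduces to showing that the edges that survive the removal are still respected in $t$: a /-edge $(n,n')$ that remains is unchanged, while a //-edge introduced to bridge a deleted node is witnessed by composing the ancestor/descendant relations that $e$ already established in $t$ (transitivity of \textsc{EDGES}$^*(t)$). For the append-type rules, the newly added subtree $[Q]$ is always a copy of a subpattern already present below some main branch node $m$ whose embedding image $e(m)$ stands in the appropriate relation to $e(n)$ (typically $e(m) = e(n)$ after a prior collapse, or $e(n)$ is an ancestor of $e(m)$). I extend $e$ to the fresh $[Q]$-copy by replaying, under $e'$, the embedding that $e$ already provides for the original occurrence of $Q$.

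The main obstacle is the collapse template: deriving $e(n_1)=e(n_2)$ from the applicability conditions. This is where the notion of similar /-patterns does the work, because an embedding of $d$ must in particular embed each /-pattern surrounding $n_1$ and $n_2$ on the unique linear path from $e(\droot{d})$ to $e(\out{d})$ in $t$; similarity forces these two linear embeddings to coincide at the positions of $n_1$ and $n_2$, while collapsibility rules out the pathological cases where merging the two nodes' incident structure would be unsatisfiable. The remaining rules then plug into the three templates above with essentially mechanical verification of the embedding properties, so each of the nine rule-by-rule arguments can be discharged immediately after the corresponding rule is introduced in Section~\ref{sec:rules}, as the authors' schema indicates.
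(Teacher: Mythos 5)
Your overall architecture (rule-by-rule case analysis, explicit construction of $e'$ from $e$, three structural templates) matches the paper's schema, and your treatment of the removal template (R7) and of the genuinely forced collapses (R1, R8) is sound. The gap is in your central claim for the collapse template: that the applicability conditions force $e(n_1)=e(n_2)$ in \emph{every} embedding $e$. This is false for R3 and R6, which are precisely the hard collapse rules. For R3.i, when the branch carrying $p_2$ ends strictly above $\out{d}$, an embedding $e$ may legitimately place $e(n_1'')$ strictly below $e(n_1')$; the paper's proof does not derive coincidence but instead builds a \emph{new} embedding $e\circ m$, where $m$ is the self-mapping of $d$ sending $\tp{d}{p_2}$ into $\tp{d}{p_1}$ — and this is exactly where the test condition ``$\tp{d}{p_2}$ contains $\tp{d}{p_1}$'' is consumed, to guarantee $p_2$'s predicates are satisfied at the shifted images. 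Similarly for R6: the similarity condition does not force the two /-patterns to land on the same nodes of $t$; the paper splits into three cases according to which of $e(n_1')$, $e(n_1'')$ is higher, and in the non-coincident cases it uses the existence of root-mappings of both patterns into an interleaving $p_{12}$ (chosen to match the order that $e$ happens to induce) to relocate the predicate subtrees of the lower pattern onto the images of the higher one. Your sentence ``similarity forces these two linear embeddings to coincide at the positions of $n_1$ and $n_2$'' is therefore not just unproven but untrue, and an argument built on it cannot be repaired without switching to the remapping strategy.

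Two smaller issues of the same flavor: R2 is not a removal rule in your sense — it adds a //-edge asserting a strict ordering, and the verification needs the non-collapsibility hypothesis to exclude $e(n_1)=e(n_2)$ (your ``transitivity of $\textsc{EDGES}^*(t)$'' does not suffice); and R4.i needs the condition that $p_2//n_4$ does not map into $p_1$ to rule out $e(n_4)$ landing inside $e(p_1)$, which again is an argument by contradiction against the current embedding rather than a mechanical restriction. For the append rules R5/R9 your idea of ``replaying'' the existing occurrence of $Q$ is directionally right, but in the case where $n_3$'s image falls below all of $p_2$ the replay must be mediated by the root-mapping demanded in R5's fourth test condition; a direct copy of $e$ on the original occurrence does not produce an embedding of the fresh predicate at $n_2$'s image.
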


\eat{
We divided the proof in two parts:
Lemma~\ref{lemma:frintod} gives the containment $f_r(d) \sqsubseteq d$
and Lemma~\ref{lemma:dintofr} $d \sqsubseteq f_r(d)$.

\begin{lemma}\label{lemma:frintod}
  For every rule $r$, $f_r$ is a containment mapping.
\end{lemma}
}

\textbf{Remark 1.} Lemma~\ref{lemma:dintofr} implies that
$d \sqsubseteq \f{r}{d}$ and since the opposite containment mapping trivially holds, Proposition~\ref{l:step-soundness} follows immediately (i.e., $d \equiv \f{r}{d}$). 


\subsection{Rule R1}
 This rule triggers when $\lambda_d(n_1) = \lambda_d(n_2)$

\begin{center}
\includegraphics[trim=0mm 163mm 60mm 2mm, clip=true, scale=0.45]{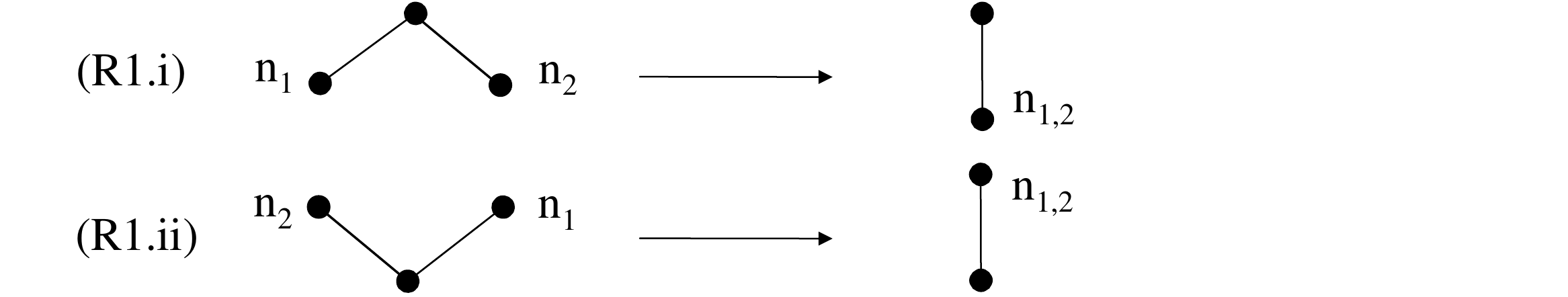}
\end{center}
\begin{example}
The DAG pattern that would be obtained by intersecting some two views $\snippet{doc(``L'')/paper//\dots}$ and  $\snippet{doc(``L'')/paper/ \dots}$ would be subject to R1's application,
with $n_1$ and $n_2$ being its two nodes labeled $\snippet{paper}$. 
\end{example}
\begin{proof}[for Lemma\ref{lemma:dintofr} -  soundness of R1]
$n_1$ and $n_2$ belong to two different main branches,
but they have a common parent $n$. (Remember that all paths depicted in
the rules are part of main branches.)
Remember also that, by the definition of a main branch, the branches
of $n/n_1$ and $n/n_2$ have at least one common node below $n$: \out{d}.
Then, in any embedding $e$ of $d$ into a tree $t$,
$n/n_1$ and $n/n_2$ need to map in the same path of $t$
and it must be true that $e(n_1) = e(n_2) = x$, where $x \in \nodes{t}$.
Thus there is also an embedding $e'$ from \f{R1}{d} into $t$
that maps $n_{1,2}$ into $x$ and is equal to $e$ on all the other nodes.
\end{proof}

\subsection{Rule R2}
This rule triggers if $n_1$ and $n_2$ are not collapsible
and $n_2$ is not reachable from $n_1$ (resp. $n_1$ is not reachable
from $n_2$, in the case of R2.ii). 

\begin{center}
\includegraphics[trim=0mm 144mm 60mm 3mm, clip=true, scale=0.45]{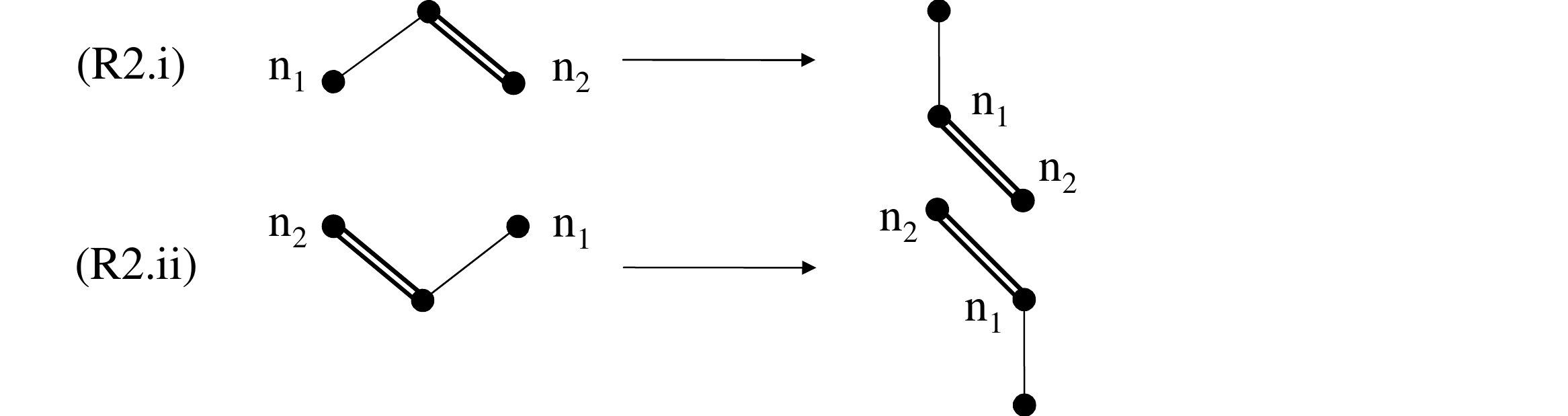}
\end{center}
\begin{example}
Notice the application of rule R2.i in our running example (Figure~\ref{fig:runexample}), with $n_1$ being the node  labeled $\snippet{lib}$ and $n_2$ being the node labeled $\snippet{paper}$ in the left branch of the DAG pattern. Symmetrically, rule R2.ii applies with     $n_1$ being the node  labeled $\snippet{figure}$ and $n_2$ being the node labeled $\snippet{section}$ in the left branch of the DAG pattern. 
\end{example}
\begin{proof}[for Lemma\ref{lemma:dintofr} -  soundness of R2]
We first discuss R2.i. Let $n_0$ be the parent of $n_1$ and $n_2$. From the condition that $n_1$ and $n_2$ are not collapsible, we
infer that either they have different labels, or the pattern obtained
by trying to collapse $n_1$ and $n_2$ is immediately
unsatisfiable. Both cases imply that, for an embedding $e$ into $t$,
we cannot have $e(n_1)= e(n_2)$. The former case is obvious. For the
latter, supposing that $e(n_1)= e(n_2) = x$, we observe that the
beginning of main branches under $n_1$ and $n_2$, formed only by
/-edges, call them $p_{n1}$ and $p_{n2}$ respectively, need to map
into the same nodes under $x$ (as all main branches have at least one
common ending point, \out{d}, and we are mapping them into a tree).
Then the pattern obtained by collapsing $n_1$ and $n_2$ would also
have an embedding into $t$, that can be computed from $e$ by equating
nodes $n_1$ and $n_2$. But this contradicts the assumption that the
pattern obtained by trying to collapse $n_1$ and $n_2$ is immediately
unsatisfiable.  Hence, for an embedding $e$ into $t$, $e(n_1) \neq
e(n_2)$, and, since $e(n_1)$ has to be a child of $e(n_0)$, $e(n_2)$
has to be a strict descendant of $e(n_1)$. This guarantees that
$n_0/n_1//n_2$ will also map into $t$ if $d$ does. 
%

The proof  for R2.ii is very similar to the one for R2.i.
\end{proof}

\subsection{Rule R3.i} This rule triggers if the following conditions hold:
\begin{itemize}
\item $p_1\equiv p_2$,
 \item each of $p_2$'s nodes has only one incoming main branch edge,
  \item  \tp{d}{p_2} contains \tp{d}{p_1}.
\end{itemize}
\begin{center}
\includegraphics[trim=0mm 177mm 95mm 0mm, clip=true, scale=0.45]{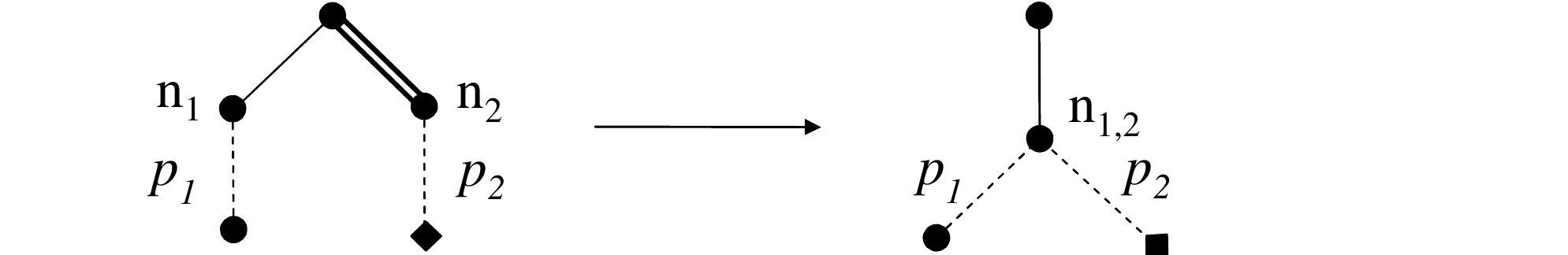}
\end{center}
\begin{example}
Notice the application of this rule in our running example (Figure~\ref{fig:runexample}), with $n_1$ and $n_2$ being the two nodes  labeled $\snippet{paper}$ and the paths $p_1$ and $p_2$ consisting of only these nodes.
\end{example}
\begin{proof}[for Lemma\ref{lemma:dintofr} -  soundness of R3.i]
Let $n_0$ be the parent of $n_1$ and $n_2$. For convenience, let us first rename $n_1$ by
$n_1'$ and $n_2$ by $n_1''$, let $p_1$ be defined by the sequence of
nodes $(n_1', \dots, n_k')$ and let $p_2$ be defined by the sequence of
nodes $(n_1'',\dots, n_k'')$. We know that $\lambda_d(n_i')=\lambda_d(n_i'')$, for each $i=1,k$.

Note that by applying R1 to saturation (after R3.i) all the pairs $(n_i', n_i'')$  will be collapsed. By $\f{R3i}{d}$ we denote
directly the result of R3i followed by these R1 steps. Let $n_{ii} \in \mbn{\f{R3i}{d}}$ denote the node that results from the collapsing of the $(n_i',n_i'')$ pair.

First, if the two main branches $n_0/p_1$ and $n_0//p_2$ contain the output
node, for any embedding $e$, $e(n_1')$ needs to be equal to $e(n_1'')$
and likewise, for each $i$ the image of $n_i'$ needs to be equal to the one of $n_i''$.
The reason is that the two branches have the
same endpoints ($n_0$ and \out{d}), the same length and, since $p_1$ has
no //-edge, $n_0/p_1$ needs to be isomorphic to the path
from $e(n_0)$ to $e(\out{d})$. Thus, it is obvious that by merging each
$n_i'$ and $n_i''$ we obtain a pattern that also has an embedding, if
$d$ does, and it maps \out{d} into the same node.

Let us now assume that $p_2$ ends above \out{d}. 
%
 Let $m$ be the function from $d$ into $d$ that maps
\tp{d}{p_2} into \tp{d}{p_1} and is the identity everywhere else (in particular, $m(p_2)=p_1$).  We can show that $e \circ m$ is another
embedding of $d$ into $t$, one that takes each pair $n_i', n_i''$ into the same image and preserves the image of the output. The main
reason is that $p_1$ and $p_2$ contain only /-edges, hence they can map only into
a sequence of /-edges in $t$. And since the branches containing
$p_1$ and $p_2$ respectively both start at $n_0$ and meet at \out{d} or at some other node above it, they have to map into the same path $p$ of $t$, from $e(n_0)$ to $e(\out{d})$.  So $e(n_1')$ is either above or equal to
$e(n_1'')$, because $n_1'$ is connected by a /-edge to its parent
$n_0$). But then all the nodes on $p_1$ map above or in the same place
in $p$ as the nodes of $p_2$; in particular $e(n_k')$ is above the
image of any main branch node $n$ that is //-child of $n_k''$. Therefore $e \circ m$ satisfies the
condition imposed by the //-edge between $n_k''$ and $n$. All the other
conditions for showing $e\circ m$ is an embedding follow directly from
the fact $e$ is an embedding. The image of \out{d} is the same in $e
\circ m$ and $e$, since \out{d} is not part of $p_2$.

We argue now that
 $e \circ m$ is also an embedding for the DAG $d'$ obtained from $d$ as follows: (a) for each $i$, append the predicate subtrees of $n_i''$ below $n_i'$, (b)  remove the edge $n_0/n_1''$ and the tree pattern $\tp{d}{p_2}$, and (c) connect the dangling incoming //-edges of children of $n_k''$ to $n_k$.  (By the test conditions these must be the only dangling edges.)

But \f{R3i}{d} has a straightforward mapping $h$ into $d'$, as $h = \{n_{ii} \mapsto n_i'; x\mapsto x \textrm{ elsewhere } \}$, hence we obtain the desired embedding $e'$ as $h \circ e \circ m$, with $e'(\out{\f{R3i}{d}}) = e(\out{d})$.
\end{proof}

\subsection{Rule R3.ii} This rule triggers if the following conditions hold:
\begin{itemize}
\item $p_1\equiv p_2$,
 \item  each of $p_2$'s nodes has only one outgoing main branch edge,
  \item  \tp{d}{p_2} contains \tp{d}{p_1}.
\end{itemize}
\begin{center}
\includegraphics[trim=0mm 163mm 100mm 0mm, clip=true, scale=0.45]{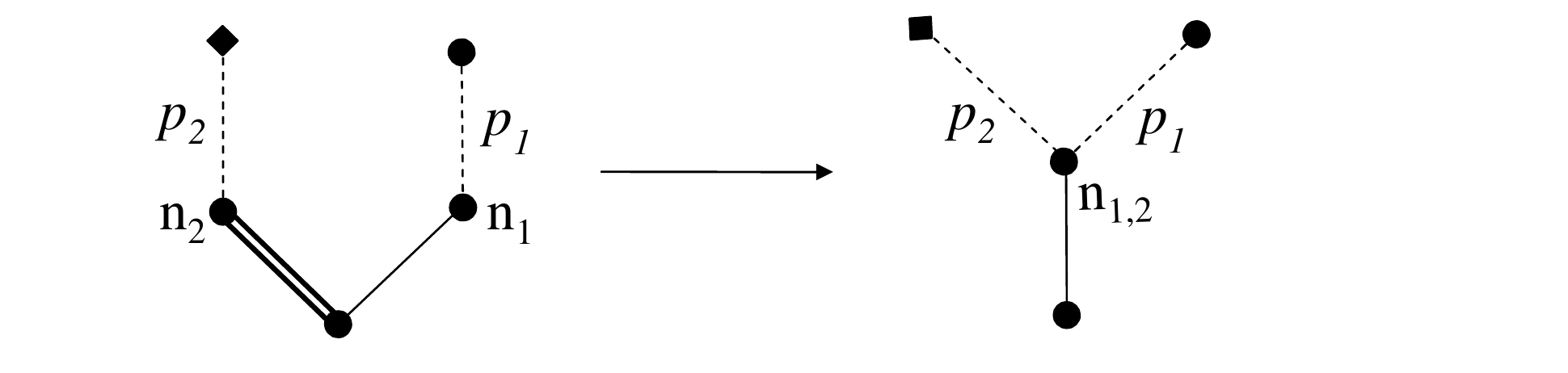}
\end{center}


\begin{proof}[for Lemma\ref{lemma:dintofr} -  soundness of R3.ii]
We can use same argument as for R3.i.
\end{proof}

\subsection{Rule R4.i} The rule triggers if the following conditions hold \emph{for all nodes $n_4$}:

 \begin{itemize}
 \item $n_3$ has one incoming main branch edge, all other nodes of $p_2$ have one incoming and one outgoing main branch edge,

   \item  there exists a mapping from \tp{d}{p_2} into \sub{d}{n_1},  mapping all the nodes of $p_2$ into
   nodes of $p_1$.

  \item the path $p_2//n_4$ does not map into $p_1$.

\end{itemize}

\begin{center}
\includegraphics[trim=0mm 152mm 100mm 0mm, clip=true, scale=0.45]{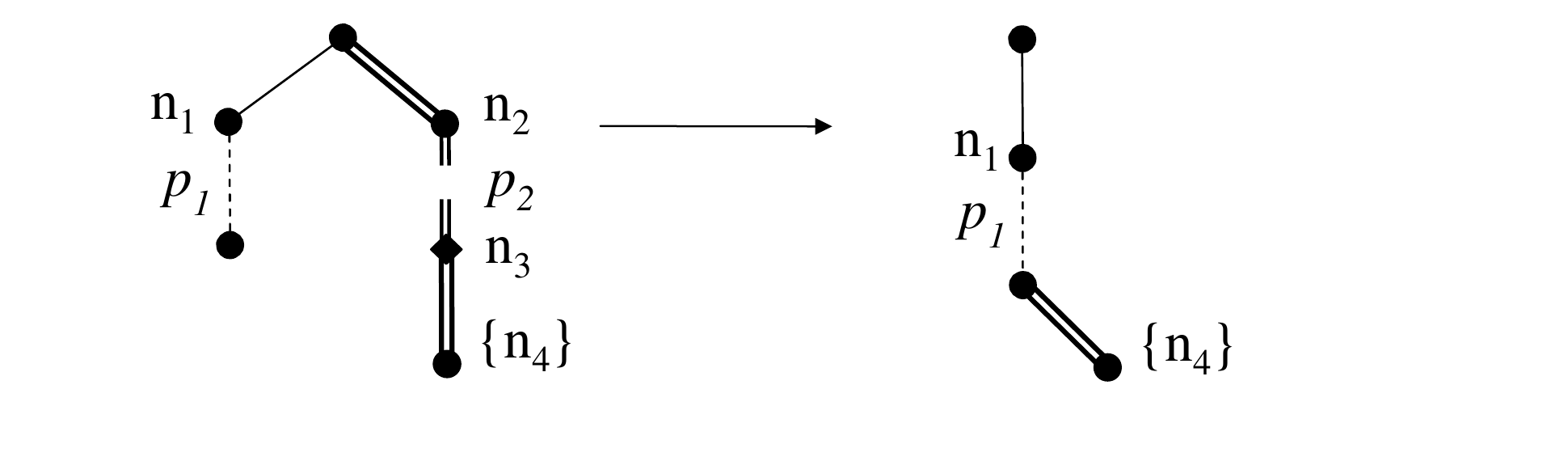}
\end{center}

\begin{example} The DAG pattern that would be obtained by intersecting some two views $\snippet{doc(``L'')/lib/paper/section/\dots/figure[caption]}$ and  $\snippet{doc(``L'')//lib[.//caption]//section//theorem//\dots}$ would be subject to R4.i's application,
with $p_1$ being the path corresponding to  $\snippet{lib/paper/section}$, $p_2$ being the path corresponding to $\snippet{lib//section}$, and $n_4$ being the node labeled $\snippet{theorem}$. 
\end{example}
\begin{proof}[for Lemma\ref{lemma:dintofr} -  soundness of R4.i]
 Let $n_0$ be the parent of $n_1$ and $n_2$. Suppose that $d$ has an embedding $e$ into a tree $t$.
$p_1$ and $p_2$ are parts of main
branches starting from $n_0$ and ending in a common node, at or above
\out{d}. Hence, if $d$ has an embedding $e$ into a tree $t$, the nodes
of $p_1$, of $p_2$ and each $n_4$ must all map into the same path $p$ of
$t$.  Moreover, since $n_0/p_1$ has only /-edges, it is necessarily
isomorphic to the fragment of $p$ starting at $n_0$ and of length
$|p_1| + 1$.

Let $n_5$ be the end node of $p_1$ and $n_3$ the end node of $p_2$.
With necessity, either there is a node $n'\in p_1$ such that $e(n_3) =
e(n')$ or $e(n_3)$ is below $e(n_5)$. In the former case, we can show
that for any $n_4$ there is no node $n''$ of $p_1$ such that $e(n'') =
e(n_4)$. For a given $n_4$, let us assume that such a node $n''$ exists. Since $p_1$ is isomorphic to $e(p_1)$, and the mapping of
$p_2//n_4$ through $e$ would imply also a mapping of $p_2//n_4$ into a
suffix of $p_1$, this leads to a contradiction. Since $n_0/p_1$ is isomorphic to the
beginning of $p$, it means that $e(n_4)$ is below $e(n_5)$. 
 But then we can also map all nodes of \f{R4i}{d} exactly following $e$ because
$e$ verifies the condition imposed by the //-edge between $p_1$ and
$n_4$ and the part that was removed ($p_2$) was not connected to other
main branches. Moreover, the image of the output is the same, because
\out{d} is below $p_1$ and \f{R4i}{d} keeps all nodes under $p_1$
unchanged.

If $e(n_3)$ is below $e(n_5)$ in $t$, then, following the same
reasoning, we can argue that $e$ can be reused to map the nodes of
\f{R4i}{d} into $t$.
\end{proof}

\subsection{Rule R4.ii} This rule triggers if the following conditions hold\emph{ for all nodes $n_4$}:

 \begin{itemize}
 \item $n_3$ has only one outgoing main branch edge, all the other nodes of $p_2$  have one incoming and one outgoing main branch edge,

   \item  there exists a mapping from \tp{d}{p_2} into \tp{d}{p_1}, mapping all
   the nodes of $p_2$ into nodes of $p_1$.

  \item the path $n_4//p_2$ does not map into $p_1$.

\end{itemize}

\begin{center}
\includegraphics[trim=0mm 149mm 90mm 2mm, clip=true, scale=0.45]{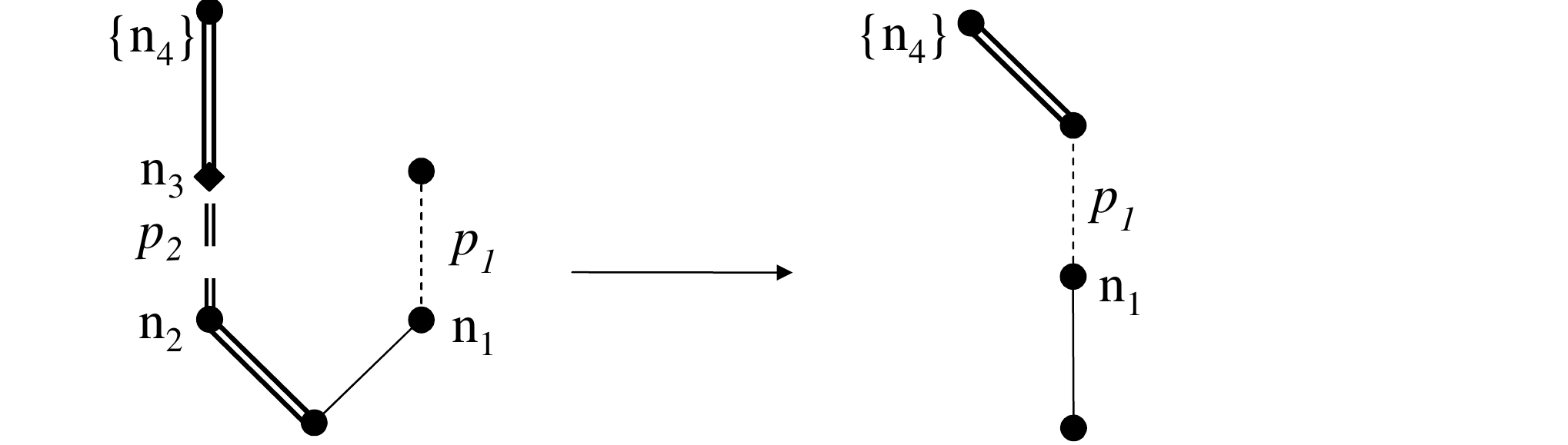}
\end{center}

\begin{proof}[for Lemma\ref{lemma:dintofr} -  soundness of R4.ii]
Proof similar to R4.i: here the rule's test condition guarantees
that, in any embedding, the beginning of $p_1$ is mapped below
$n_4$.
\end{proof}

\subsection{Rule R5}  This rule triggers if the following conditions hold:

\begin{itemize}
\item  $n_2$ and $n_3$ are collapsible and $p_1 \equiv p_3$,

 \item \pattern{\lambda_d(n_2)[Q]} does not have a root-mapping into \sub{d}{n_2},

 \item for any node $n_4$ in $p_2$ such that
$d'=\collapse{d}{n_4, n_3}$ is not immediately unsatisfiable,
\pattern{\lambda_d(n_2)[Q]} has a root mapping into \sub{d'}{n_2},

\item if there is no path from $n_3$ to a node of $p_2$,
there has to be a root-mapping from
\pattern{\lambda_d(n_2)[Q]} into the pattern
obtained from $\tp{d}{p_2}$ by appending $[Q]$'s pattern, via a
//-edge, below the node \out{\tp{d}{p_2}}.
%
 \end{itemize}
\indent ~~~~~ (Special case: $p_1$ and $p_3$ empty.)

\begin{center}
\includegraphics[trim=20mm 158mm 95mm 1mm, clip=true, scale=0.44]{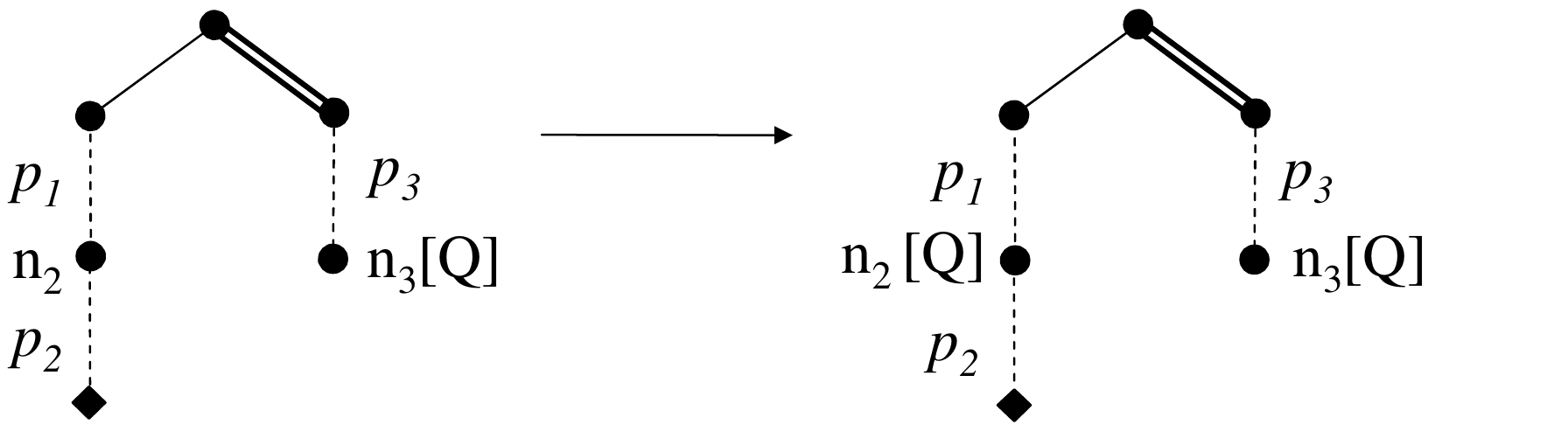}
\end{center}

\begin{example}The DAG pattern that would be obtained by intersecting some two views $\snippet{doc(``L'')/lib/paper/section//\dots}$ and  $\snippet{doc(``L'')//paper[.//caption]//\dots}$ would be subject to R5's application,
copying the predicate $\snippet{[.//caption]}$  on the node labeled $\snippet{paper}$ coming from the former view.
\end{example}
\begin{proof}[for Lemma\ref{lemma:dintofr} -  soundness of R5]
Let $st$ denote the subtree predicate introduced by \f{R5}{d} on $n_2$,
and let $st'$ denote its copy under $n_3$. Given the embedding $e$ of
$d$ in $t$, there are three possible cases:

\begin{enumerate}
\item the image of $n_3$  is the same with the image of $n_2$,
\item the image of $n_3$  is the same with the image of some other node $n_4$ from $p_2$,
\item the image of $n_3$ is below the image of any node from $p_2$.
\end{enumerate}

In the first case, $e$ gives immediately an embedding $e'$ of
\f{R5}{d} in $t$, since $st'$ has already an image in the subtree
rooted at $e(n_3)$.

In the second case, we can first conclude that the DAG pattern
$d'=collapse_d(n_4,n_3)$ must not be immediately unsatisfiable. This
is because $e$ gives also an embedding $e''$ from $d'$ in $t$, one
that maps the node $n_{3,4}$ into $e(n_3)=e(n_4)$. Let us now consider
the tree pattern $p=\pattern{\lambda_d(n_2)[Q]}$, which modulo renaming is
the pattern formed by the main branch node $n_2$ and the predicate
subtree $st$. Since we know that $p$ maps into \sub{d'}{n_2}, by some
mapping $f$, we can also build a containment mapping $\widetilde{f}$
from \f{R5}{d} into $d'$, defined as follows: (a) $n_3$ and $n_4$
have the same image, $n_{3,4}$, (b) $\widetilde{f}=f(n)$ for all the
nodes $n$ of $st$ and (c) $\widetilde{f}$ is the identity mapping for
all the other nodes.  Finally, we obtain the embedding $e'$ as the
composition $e'' \circ \widetilde{f}$.

In the third case,
let us consider the DAG pattern $d'$ obtained from $d$ by appending the pattern of $Q$ below \out{\tp{d}{p_2}}, via a //-edge (as described in the rule condition). Note that since the image of $n_3$ under $e$ is below the image of any node from $p_2$, we can easily obtain from $e$ an embedding $e''$ from $d'$ into $t$.

Moreover, by the test condition, we have a root-mapping $f$ from \pattern{\lambda_d(n_2)[Q]} into the modified pattern, $\tp{d'}{p_2}$.  From $f$ we will construct a mapping $\widetilde{f}$ from \f{R5}{d} in $d'$.

Let \textit{st} be new subtree predicate in \f{R5}{d}, corresponding
to $[Q]$ at node $n_2$. We define $\widetilde{f}$ from \f{R5}{d} in
$d'$as follows: (a) $\widetilde{f}$ is the identity function for nodes
outside \textit{st} and (b) $\widetilde{f}(n)=f(n)$ for all the nodes $n
\in \textit{st}$.

Finally, by the composition $e'' \circ \widetilde{f}$ we obtain an
embedding of \f{R5}{d} into $t$.
\end{proof}

\subsection{Rule R6} This rule triggers if the following conditions hold:

\begin{itemize}
 \item $n_3, n_4$ have only one incoming main branch edge, all other
 nodes of $p_1$ and $p_2$ have one incoming and one outgoing main
 branch edge,
\item  \tp{d}{p_1} and \tp{d}{p_2} are similar.

\end{itemize}

\begin{center}
\includegraphics[trim=10mm 172mm 95mm 1mm, clip=true, scale=0.45]{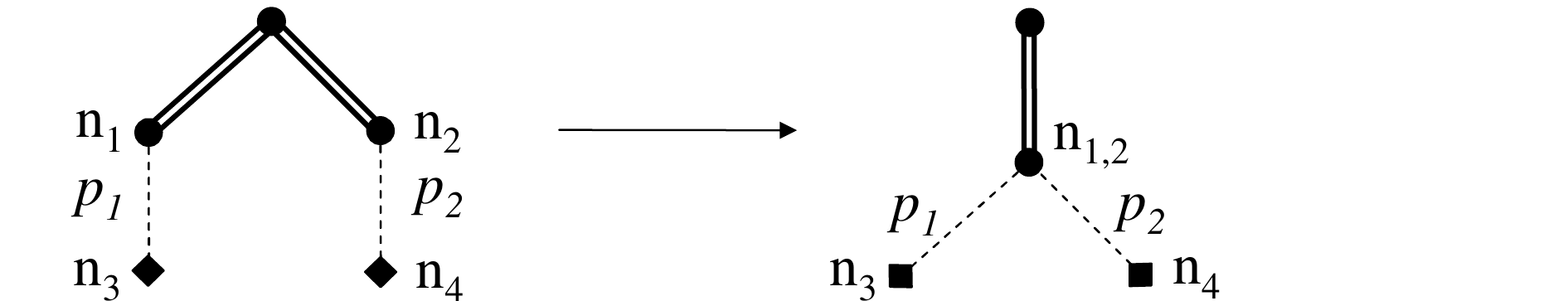}
\end{center}

\begin{example}
The DAG pattern that would be obtained by intersecting some two views $\snippet{doc(``L'')//lib/paper[.//caption]/section//\dots}$ and  $\snippet{doc(``L'')//lib[.//figure]/paper/section//\dots}$ would be subject to R6's application,
with the paths $p_1$ and $p_2$ corresponding to the $\snippet{lib/paper/section}$ parts of the views.
\end{example}

\begin{proof}[for Lemma\ref{lemma:dintofr} -  soundness of R6]
Let $n_0$ be the parent of $n_1$ and $n_2$. For
convenience, let us first rename $n_1$ by $n_1'$ and $n_2$ by $n_1''$,
let $p_1$ be defined by the sequence of nodes $(n_1', \dots, n_k')$
and let $p_2$ be defined by the sequence of nodes $(n_1'',\dots,
n_k'')$. We know that $\lambda_d(n_i')=\lambda_d(n_i'')$, for each
$i=1,k$.

As mentioned, our rule-rewriting algorithm would behave in the same
way if R6 collapsed the entire $p_1$ and $p_2$ paths. To simplify the
presentation of the rules, we delegated to R1 this task. However, to
simplify the presentation of the proof, we will consider the pattern
$d''$, representing the result of applying R6 followed by these R1
steps.  Let $n_{ii} \in \mbn{d''}$ denote the node that results from
the collapsing of the $(n_i',n_i'')$ pairs. It is easy to see that
there is always a mapping $h$ from \f{R6}{d} into $d''$, given by the
composition of the applications of $f_{\rm R1}$ that merge all these
$(n_i',n_i'')$ pairs.  Hence, it is sufficient to show that for any
embedding $e$ of $d$ into a tree $t$, there is an embedding $e'$ of
$d''$ into $t$, which guarantees that $e' \circ h$ is an embedding of
\f{R6}{d} into $t$.

 Let us consider any embedding $e$ of
$d$ in a tree $t$. We have three possible cases:
\begin{enumerate}
\item the image of $n_1'$ is the same with the image of $n_1''$,
\item the image of $n_1'$ is above the image of $n_1''$, 
\item the image of $n_1''$ is above the image of $n_1'$. 
\end{enumerate}

In the first case, $e$ gives immediately an embedding $e'$ of
$d''$ into $t$, since the nodes of $p_1$ and $p_2$ have the same
images.

In the second case, let $e(p_1)$ denote the image of $p_1$ in
$t$.
We show that the function $e'$ from $d''$ into $t$ which: (a) for each $i$ maps $n_{ii}$ into $e(n_i')$, and (b)
maps $n$ into $e(n)$ for all the nodes $n$ outside $\tp{d}{n_{11}/\dots/n_{kk}}$,
 can be extended to a full embedding of $d''$ into $t$. For that, we must show that all the predicate subtrees rooted at
$n_{ii}$ nodes can be mapped in the subtree rooted at $e(n_i')$. Since
predicate subtrees from $n_i'$ obviously have an image at
$e(n_i')$, $e'$ is defined by $e$ on these nodes.  What remains is to describe $e'$ over the predicates subtrees that originate at $n_i''$ nodes.

Since \tp{d}{p_1} and \tp{d}{p_2} are similar, we always have a
root-mapping from both /-patterns into any $p_{12}$, as given in
Definition~\ref{def:similar}. Note that the embedding $e$ (as any
other embedding of $d$ in general) imposes an order on the nodes of
$p_1$ and $p_2$, consistent with their /-edges. We can thus always
find an associated pattern $p_{12}$  that has an embedding in the
subtree rooted at $e(n_1')$ in $t$. Let the pattern fixed in this
way be $p$ and let his embedding into the subtree rooted at $e(n_1')$
be $e''$.  Let $f_2$ be the root-mapping of \tp{d}{p_2} into $p$ (by
definition, we can always find such an $f_2$).

We can now define $e'$ over the predicates subtrees of $n_{ii}$'s
that originate at $n_i''$'s using $e'' \circ f_2$.
\eat{
Since predicate subtrees from $n_i'$ obviously have an image at
$e(n_i')$, $e'$ is defined by $e$ on these nodes. Moreover, since
\tp{d}{p_1} and \tp{d}{p_2} are similar, by definition it means that
we can also map
}

By symmetry, the third case can be handled in the same manner.
\end{proof}

\subsection{Rule R7} This rule triggers if the following conditions hold: 

\begin{itemize}

\item the nodes of $p_2$ have only one incoming and one outgoing main branch edge,
\item there exists a mapping from \tp{d}{p_2} into    $\sub{d}{n_1}$,  
such that the nodes of $p_2$ are mapped into nodes of $p_1$.


 \end{itemize}

\begin{center}
\includegraphics[trim=10mm 152mm 125mm 10mm, clip=true, scale=0.45]{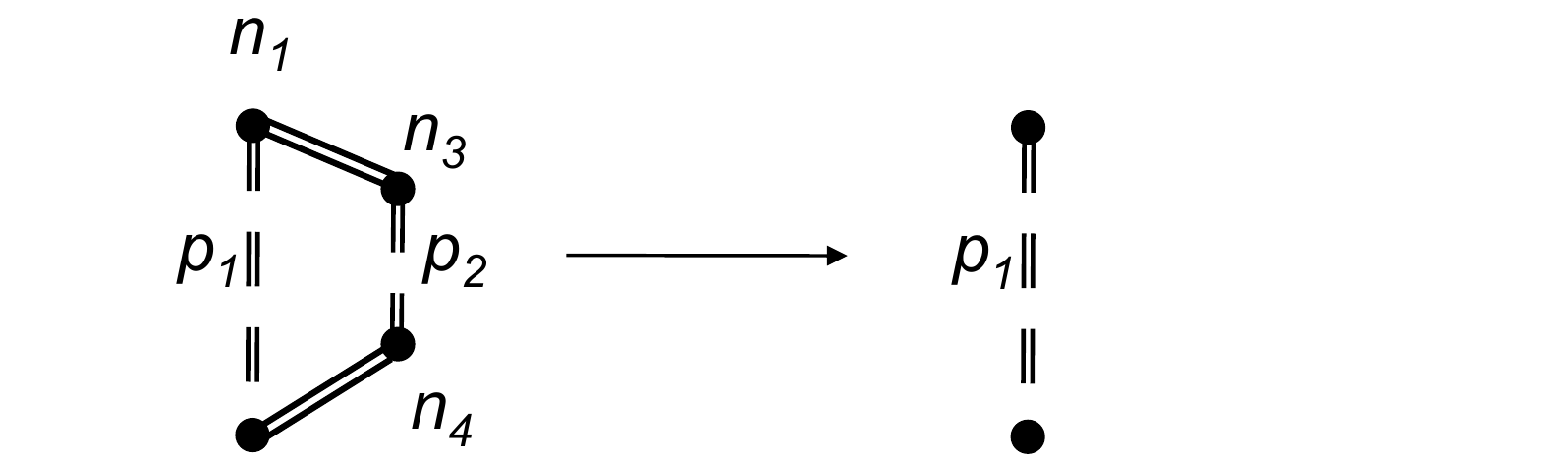}
\end{center}

\begin{example}Notice the application of this rule in our running example (Figure~\ref{fig:runexample}), with $p_1$ and $p_2$ corresponding to the two paths $\snippet{paper//section//figure}$  in parallel.
\end{example}
\begin{proof}[for Lemma\ref{lemma:dintofr} -  soundness of R7]
 \eat{ The case in which $p_2$ is empty is trivial: any
embedding having $n_1//n_2$ stays an embedding if $n_1//n_2$ is
removed, because the same condition is implied the path connecting
$n_1$ with $n_5$ through $n_2$. <-- needed to show \f{R7}{} is a
mapping.}
With $n_3$ and $n_4$ denoting the end points of $p_2$, 
$n_1$ denoting the common parent of $p_1$ and $p_2$, let $n_2$
denote the image of $n_3$ under the mapping from $\tp{d}{p_2}$ into
$\sub{d}{n_1}$.

Since $n_3$, $n_4$ and $p_2$ are not connected to any other parts of the
DAG, for any embedding $e$ into a tree $t$, the restrictions $e'$ of
$e$ to $\nodes{d}\, \backslash\, \{n | n \in p_2\}$ is a
partial embedding into $t$ (because $n_3$, $n_4$, $p_2$ do not
affect the conditions needed to embed the other nodes of $d$.  
 But $e'$ is a total embedding for \f{R7}{d} and
$e'(\out{\f{R7}{d}}) = e'(\out{d}) = e(\out{d})$.
\end{proof}

\subsection{Rule R8}This rule triggers if the following conditions hold:
\begin{itemize}
\item the nodes of $p_2$ have only one incoming and one outgoing main branch edge,
\item in any possible mapping of $p_{2}$ into $p_{1}$ the image of $n_2$ is $n_1$.
\end{itemize}

\begin{center}
\includegraphics[trim=0mm 162mm 98mm 1mm, clip=true, scale=0.45]{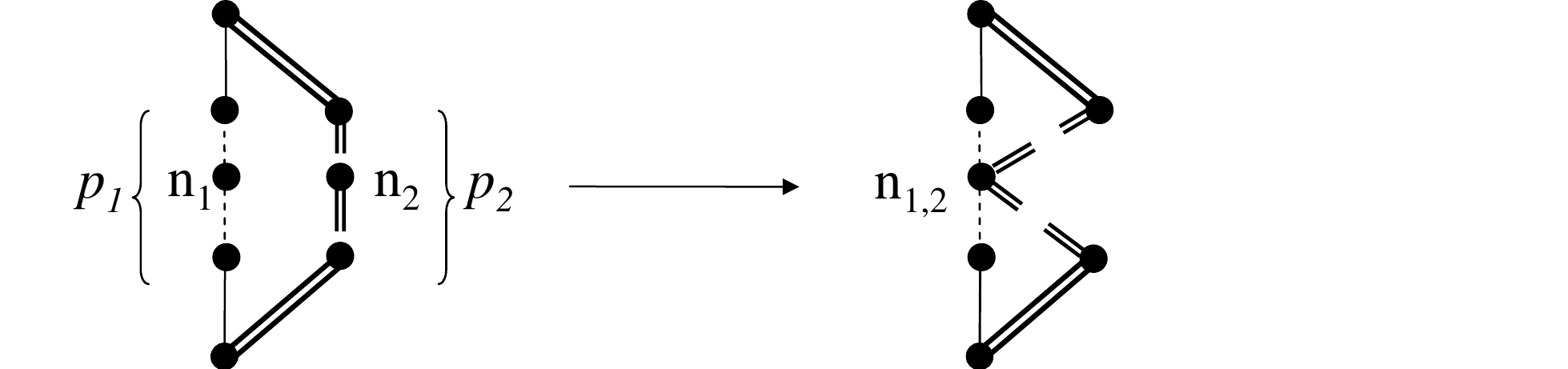}
\end{center}

\begin{example}The DAG pattern that would be obtained by intersecting some two views $\snippet{doc(``L'')/lib/paper/section/figure/image}$ and  $\snippet{doc(``L'')//paper[.//caption]//image}$ would be subject to R8's application,
with  $n_1$ and $n_2$ corresponding to  the two nodes labeled $\snippet{paper}$.
\end{example}
\begin{proof}[for Lemma\ref{lemma:dintofr} -  soundness of R8]
 Let $n_0$ denote the common parent of the two branches in parallel, and let $n_3$ denote their common child node. As the branches of $p_1$ and $p_2$ are fragments of main branches
between the nodes $n_0$ and $n_3$, and $n_0/p_1/n_3$ has only /-edges,
we can argue, as for R7, that for any embedding $e$ into a tree $t$,
$p_1$ and $e(p_1)$ are isomorphic. Also, $e(p_2) \subseteq e(p_1)$, as
the nodes of $e(p_2)$ must lie on the same path fragment, between
$e(n_0)$ and $e(n_3)$. Since $p_1 \rightarrow e(p_1)$ is an
isomorphism, there is an inverse mapping $i$ from $e(p_1)$ into $p_1$
which is surjective. But $e(p_1) \supset e(p_2)$, hence $g = i \circ e$
is a mapping from $p_2$ into a suffix of $p_1$. As we assumed the
conditions of R8 to be satisfied, it follows that
$g(n_2) = n_1 \Leftrightarrow e(n_2) = e(n_1)$, for any embedding $e$.
But then, if we take
$$e'(x) = \left\{
\begin{array}{l}
e(n_1) ( = e(n_2)), \textrm{ if } x = n_{1,2}\\
e(x), \textrm{ otherwise}
\end{array}
\right.$$
$e'$ will be an embedding for \f{R8}{d} with $e'(\out{\f{R8}{d}}) = e(\out{d})$.
\end{proof}

\subsection{Rule R9} For any  /-subpredicate $Q$ in $d$, any node $n$ in $p_1$ s.t. the presence of $Q$ as predicate at $n$ on $p_1$ would
 verify the condition of extended skeletons (see Section~\ref{sec:guarantees}), the rule triggers if
\begin{itemize}
\item  the nodes of $p_2$ have only one incoming and one outgoing main branch edge, and
\item for all mappings $\psi_i$ of $p_{2}$ into $p_{1}$, 
\item for $d'$ being the pattern obtained from $d$ by collapsing each $n_i \in p_2$ with $\psi(n_i)$, 
\item we have that \pattern{\lambda_d(n)[Q]} has a root-mapping into \sub{d'}{n}.
\end{itemize}
(Special cases:  $Q$ is attached to a node of $p_2$ itself,  or   $Q$ is a full predicate in $d$, or both.) 
\begin{center}
\includegraphics[trim=0mm 137mm 95mm 1mm, clip=true, scale=0.45]{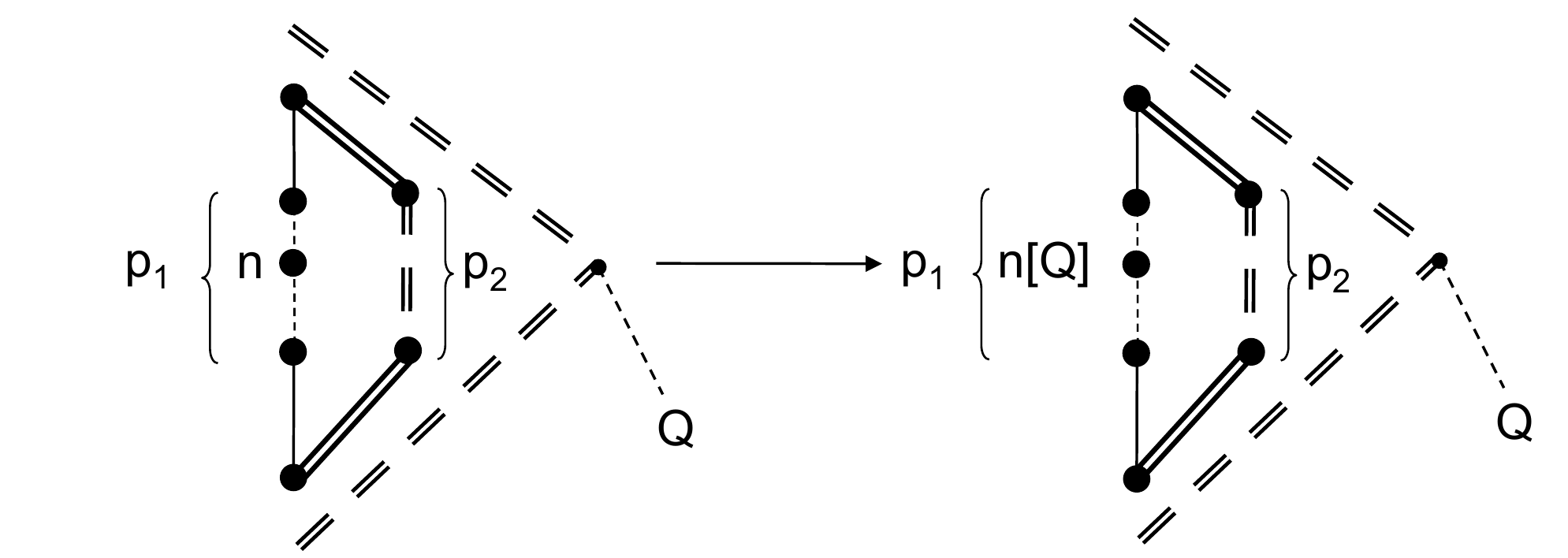}
\end{center}

\textbf{Remark.} This rule was not included in the  extended abstract published in \cite{CautisWebDB08}, but it is required to achieve completeness.

\begin{example} The DAG pattern that would be obtained by intersecting some two views $\snippet{doc(``L'')/lib/section/section/section[figure]/image}$ and  $\snippet{doc(``L'')//section[figure]/section[figure]//image}$ would be subject to R9's application,
with predicate $Q$ being $\snippet{[figure]}$ and the node $n$ corresponding to the second node labeled $\snippet{section}$ in the former view. Note that only after adding $Q$ on node $n$  R7 can apply,  removing the branch from the latter view and yielding a tree pattern $\snippet{doc(``L'')/lib/section/section[figure]/section[figure]/image}$. 
\end{example}
\begin{proof}[for Lemma\ref{lemma:dintofr} -  soundness of R9]
The conditions ensure that in any possible interleaving of $d$, in particular in any interleaving of the parts $p_1$ and $p_2$, the predicate $[Q]$ will be present at the position of $n$. Equivalence follows.
\end{proof}

\textbf{Remark.} Some of the rules, such as R3 or R6, could safely collapse several nodes, without changing any of the results in
Section~\ref{sec:guarantees}. We opted for the current version for
ease of presentation.

\begin{example}We illustrate in Figure~\ref{fig:runexample} how  the unfolding
of the intersection of views $v_1$ and $v_2$ from the running example  is rewritten it into a prefix of $q_2$ (see
Figure~\ref{fig:runexample}.(c)).  Then, line~\ref{li:ret-rw} in
\algoname adds the navigation \snippet{/file} and the
intuitive rewriting $r_2$ that we described there is obtained.
\end{example}

\section{Using XML keys}
\label{sec:keys}
We discuss in this section a more general setting in which view-based rewriting can be solved by the techniques of our paper,  even when nodes do not necessarily have persistent Ids. More precisely, in the presence of \emph{key constraints}, the kind of rewrite plans we considered here can still be supported.  For example, key attributes are often present in XML documents, e.g., an SSN attribute could be associated with patient elements in a medical record, and could  play the role of node identifiers in a plan which intersects view results at the level of patient elements.

Several formalisms for specifying integrity constraints on XML data have been consider in standards and research literature~\cite{BunemanDFHT03,HartmannL09}.  In general, they specify in terms of path expressions three components: i) a context, either the root  or a set of nodes, (ii) the target of the key constraint,  namely the nodes that should be uniquely identified within the given context, and (iii) the key nodes, descendants of the target ones, whose values will form the  keys.    The key constraints for which the context is  the root of the document are called absolute, while the others are called relative.  One of the most important problems over key constraints is the one of \emph{constraint implication}, which consists in inferring new constraints from  the existing ones.
 For our purposes, only absolute constraints (explicit or inferred) are relevant, given that query results are always computed in the root context.

In the presence of key constraints,  additional steps of reasoning about node identity  are necessary in order to understand which rewrite plans (involving the intersection operator)  are valid. A naive approach would  be to perform a complete  pass over the data, annotating each node that is the scope of at least one key constraint with a (fresh) node Id. Then, only plans that intersect result sets in which at least one of the sets is composed only of  annotated notes will be valid and executable.   This would allow us to  transparently apply algorithm \algoname on XML data under key constraints, modulo this  final validation step that needs to inspect the actual view results. 

An alternative approach could be to perform the additional reasoning step directly at the query level.  More precisely,  the algorithm \algoname could  be modified to handle also XML data with key constraints by replacing line $1$  with the following one (modification in italic):
\begin{codebox}
\li \label{li:init-prefs}$\id{Prefs} \gets \{ (p,\{(v_i,b_i)\}) \;|\; v_i \in {\cal V},
  p \textrm{ a lossless prefix of } q \textrm{ and } p \emph{ \textrm{contained in a target path}}, $
  \zi ~~~~~~~~~~~~~~~~~~~$ b_i \in \mb{p},  \exists  \textrm{ a mapping }  h \textrm{ from } u_i\!=\!\pattern{v_i}
    \textrm{ into } p, h(\out{u_i}) = b_i \}$
\end{codebox}
Testing whether a certain path selects only nodes that are uniquely identified by a key would only require some containment tests (depending on the formalism in which target paths are formulated).

Following these two possible approaches - either by Id-annotating  the XML data based on keys or matching the input query and prefixes thereof against the specified key target paths - all the results formulated under the assumption of persistent Ids remain valid in this new setting and \algoname remains a decision procedure for view-based rewriting.
\section{Formal Guarantees of Algorithm \algoname}\label{sec:guarantees}
Using Proposition~\ref{l:step-soundness}, we first show that algorithm \algoname
(and \efficient and \allrw) is sound, i.e. it gives no false
positives.
\begin{theorem}
\label{thm:soundness}
  If algorithm \algoname (or \efficient or \allrw) returns a
  DAG pattern $r$, then $\unfold{r} \equiv q$.
\end{theorem}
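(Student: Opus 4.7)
The plan is to focus on the iteration of \algoname on which it returns successfully: let $p$ be the selected lossless prefix of $q$, $W$ the associated set of view--anchor pairs, $r_0$ the intersection built at line~\ref{li:build-plan}, and $d$ its unfolding. The algorithm returns $\textit{compensate}(r_0, q, \out{p})$. I would prove the theorem in two stages: first show $\unfold{r_0} \equiv p$, and then use the structure of a lossless prefix to lift this equivalence up to $q$. The arguments for \efficient and \allrw will be identical, since \efficient only further restricts when the success path is taken, and \allrw differs only by iterating through additional subsets before returning by the same success criterion.

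For the first stage, one direction is immediate: by Proposition~\ref{l:step-soundness} each rule application preserves equivalence, so after \proc{Apply-Rules} we have $d \equiv \unfold{r_0}$, and the containment test $d \sqsubseteq p$ on line~\ref{li:rw-cnt-check} yields $\unfold{r_0} \sqsubseteq p$. For the reverse containment $p \sqsubseteq \unfold{r_0}$, I would argue view by view. For each pair $(v_i,b_i) \in W$, line~\ref{li:init-prefs} guarantees a root-mapping $h_i$ from $u_i = \pattern{v_i}$ into $p$ with $h_i(\out{u_i}) = b_i$; restricting the codomain to the prefix $p_{b_i}$ of $p$ ending at $b_i$ turns $h_i$ into a containment mapping, so Lemma~\ref{l:mapping-suff} gives $p_{b_i} \sqsubseteq u_i$. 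The unfolding of the compensated view $\textit{compensate}(\doc{``v_i''}/v_i, p, b_i)$ is $u_i$ continued with exactly the navigation from $b_i$ down to $\out{p}$ along the main branch of $p$, so this unfolding contains $p$ as a whole. Intersection preserves containment, giving $p \sqsubseteq \unfold{r_0}$.

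For the second stage, unfolding commutes with compensation at the output position, so $\unfold{\textit{compensate}(r_0, q, \out{p})}$ is $\unfold{r_0}$ extended with the main-branch navigation from $\out{p}$ to $\out{q}$ in $q$. Because $p$ is a \emph{lossless} prefix of $q$, $p$ already carries, as a predicate attached to $\out{p}$, the entire subtree of $q$ rooted at $\out{p}$; appending the main-branch navigation from $\out{p}$ to $\out{q}$ to any pattern equivalent to $p$ therefore yields a pattern equivalent to $q$. Combined with $\unfold{r_0} \equiv p$ from the first stage, this establishes $\unfold{\textit{compensate}(r_0, q, \out{p})} \equiv q$. The main subtlety I expect is making rigorous the claim that unfolding commutes with compensation at the intersection output and that a lossless-prefix compensation respects equivalence; both reduce to unfolding the definitions of $\textit{compensate}$, \unfold{\cdot} and the \xpcap semantics of path concatenation at the output of an intersection.
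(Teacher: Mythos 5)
Your proposal is correct and follows essentially the same route as the paper's proof: one containment comes from Proposition~\ref{l:step-soundness} together with the test $d \sqsubseteq p$ and monotonicity of the final compensation, and the other from the containment mapping into $q$ assembled from the root-mappings of line~\ref{li:init-prefs} plus the copied compensation. Your two-stage packaging (first $\unfold{r_0}\equiv p$, then lifting through the lossless prefix) is only an organizational variant, and the side claim $p_{b_i}\sqsubseteq u_i$ is an unnecessary (and slightly imprecise) detour that the rest of your argument does not rely on.
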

\begin{proof}
By construction $d = \dagp{\unfold{r}}$  maps into
$\pattern{q}$, hence, by Lemma~\ref{l:mapping-suff}, $q' \sqsubseteq d$,
where $q'$ is $q$ modified to have its output at the last main branch node of the prefix $p$.

By Proposition~\ref{l:step-soundness}, every rule application preserves
equivalence. Hence the final $d$ is equivalent to the $d$ initially
built.
Let $r'$ be the rewriting returned at line~\ref{li:ret-rw}.
Since $r'$ is just $r$ with, possibly, some more navigation, and
navigation is monotonic, then $d \sqsubseteq p$ guarantees that
$\unfold{r'} \sqsubseteq q$.
We already knew that $d$ had a containment mapping into $q'$. Then,
the last compensation
just ``moves'' the output node lower, and guarantees that $q \sqsubseteq \unfold{r'}$.
Hence $\unfold{r'} \equiv \pattern{q}$.
\end{proof}

\eat{
\begin{lemma}
\label{l:step-soundness}
  The application of any of the rules from the set R1-R9
  on a DAG $d$ produces another DAG $d' \equiv d$.
\end{lemma}
}
Moreover, it is also complete, in the sense described in Section~\ref{sec:preliminaries}.
\begin{theorem} \label{th:completeness} \hspace{-3mm} (1) Algorithm \algoname is complete for rewriting \xp. \\
$~~~~~~~~~~~~~~~~~~~~~~~~~~~~~~~~~~$  (2) If the input query $q$ is minimal, \allrw
finds all minimal rewritings.

We say a rewriting $r$ is \emph{minimal} if the DAG pattern corresponding to $r$ has no equivalent strict sub-pattern.\footnote{The focus on minimal rewritings is standard in the literature on rewriting using views.  One reason is that, for a large class of cost models, minimal  size rewritings are provably also  cost minimal.  This class corresponds to 
monotonic cost models, in which the cost of evaluating the entire query is never less than the cost of evaluating any of its subqueries.  While there are exceptions to monotonicity in real-life scenarios, the class is still quite prevalent in 
practice. An additional reason for the traditional focus on minimal rewritings is that the set of non-minimal rewritings is infinite even in the most basic scenarios (relational conjunctive queries and views), rendering a search for
all non-minimal rewritings meaningless.}   
\end{theorem}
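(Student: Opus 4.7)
I will prove the two parts separately, with Part~(1) providing the core machinery that Part~(2) refines. The plan is to show that for every rewriting $r^*$ of $q$ there exists a pair $(p, W_p) \in \id{Prefs}$ such that the DAG $d$ built by \algoname on line~\ref{li:build-plan} is equivalent to the lossless prefix $p$, whence line~\ref{li:rw-cnt-check} succeeds and a rewriting is returned on line~\ref{li:ret-rw}.

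First I would extract structural information from an arbitrary rewriting $r^*$. By the \xpcap grammar, $r^*$ has the form $\left(\bigcap_{j} u_j^*\right)$ possibly followed by navigation, where each $u_j^*$ is a compensated view $\doc{``v_j''}/v_j/p_j^*$ (or $//p_j^*$). Let $b$ be the node of $\pattern{q}$ corresponding to the output of the intersection, and let $p$ be the lossless prefix of $q$ with $\out{p}=b$. Since $\unfold{r^*}\equiv q$, monotonicity gives $p \sqsubseteq \unfold{u_j^*}$ for every $j$, and by Lemma~\ref{l:containment-tree-dag} there is a containment mapping $m_j$ from $\unfold{u_j^*}$ into $p$. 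The restriction of $m_j$ to $\pattern{v_j}$ is a root-mapping with $m_j(\out{v_j}) = b_j^* \in \mb{p}$. Hence each $(v_j, b_j^*)$ belongs to $W_p$, the set associated with $p$ in $\id{Prefs}$ on line~\ref{li:init-prefs}.

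Next I would show that the DAG $d$ produced by \algoname at $(p,W_p)$ satisfies $d \equiv p$. For the $\sqsupseteq$ direction, note that for every $(v_i,b_i)\in W_p$ the canonical compensation $u_i^{\mathrm{alg}} = \func{compensate}(\doc{``v_i''}/v_i,p,b_i)$ appends an exact copy of $\sub{p}{b_i}$ at $\out{v_i}$; combining the root-mapping $\pattern{v_i}\to p$ with the identity mapping of $\sub{p}{b_i}$ yields a containment mapping $u_i^{\mathrm{alg}}\to p$, so $p\sqsubseteq u_i^{\mathrm{alg}}$ and hence $p\sqsubseteq d$. For the $\sqsubseteq$ direction, I restrict attention to the subset of $W_p$ coming from $r^*$. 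I build a containment mapping $\unfold{u_j^*}\to u_j^{\mathrm{alg}}$ by using the identity on the $v_j$-part and, on the $p_j^*$-part, by composing the restriction of $m_j$ (which sends $p_j^*$ into $\sub{p}{b_j^*}$) with the identity embedding of $\sub{p}{b_j^*}$ into the compensation copy inside $u_j^{\mathrm{alg}}$. This gives $u_j^{\mathrm{alg}}\sqsubseteq \unfold{u_j^*}$, and intersecting over $j$ yields $d \sqsubseteq \bigcap_j \unfold{u_j^*}\equiv p$. By Proposition~\ref{l:step-soundness}, \proc{Apply-Rules} preserves this equivalence, so the containment test on line~\ref{li:rw-cnt-check} succeeds and the final $\func{compensate}(r,q,\out{p})$ unfolds to $p$ followed by $\sub{q}{b}$, i.e., to (a pattern equivalent to) $q$.

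For Part~(2), given a minimal rewriting $r$ of $q$, I extract its $(p, W)$ exactly as above: $W$ is the finite set of $(v_j, b_j^*)$ pairs induced by the views appearing in $r$. Since \allrw enumerates all subsets of $U = W_p$, the subset $W\subseteq U$ is considered, and by the same argument as in Part~(1) the corresponding $d$ is equivalent to $p$, so a rewriting $r'$ using exactly the views of $r$ is returned. What $r'$ differs from $r$ in is the choice of compensation paths: \allrw always uses the canonical $\sub{p}{b_j}$, whereas $r$ may use $p_j^*$. I would argue that when $q$ is minimal the canonical compensation coincides (up to equivalence) with any compensation used in a minimal rewriting, by exploiting the fact that minimality of $q$ forces any containment mapping $q\to q$ to be an isomorphism, so the $p_j^*$-into-$\sub{p}{b_j^*}$ mapping from Part~(1) is witnessed by an isomorphism, making $r'\equiv r$. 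The main obstacle I anticipate is precisely this last step: making the correspondence between canonical and arbitrary compensations tight enough to conclude that \emph{every} minimal rewriting is recovered (and not just some equivalent one), which requires careful use of Lemma~\ref{lem:equiv-iso} together with the minimality hypothesis on $q$.
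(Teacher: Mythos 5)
Your reduction of completeness to the single pair $(p,W_p)$, and your argument that $p \sqsubseteq d$ always holds (root-mappings of the views into $p$ extended by the identity on the canonical compensations), match the paper's strategy, and the chain $d \sqsubseteq \bigcap_j u_j^{\mathrm{alg}} \sqsubseteq \bigcap_j \unfold{u_j^*}$ is fine. The gap is the final link: you write ``$\bigcap_j \unfold{u_j^*}\equiv p$'' as though it followed from $\unfold{r^*}\equiv q$ by monotonicity, but at that point you have only established the one-sided containment $p \sqsubseteq \bigcap_j \unfold{u_j^*}$. The missing direction, $\bigcap_j \unfold{u_j^*} \sqsubseteq p$, is precisely the non-trivial heart of the completeness proof: a priori the intersection $\bigcap_j \unfold{u_j^*}$ could be equivalent to a union of several incomparable interleavings, none of them contained in $p$, while the compensated plan $(\bigcap_j \unfold{u_j^*})/\textit{rpath}$ still collapses to the single tree pattern $q$. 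Your argument never rules this out; it silently assumes that the intersection part of an arbitrary rewriting is already equivalent to a lossless prefix of $q$, which is what has to be proven. The paper closes exactly this hole with Claim~\ref{c:completeness} (the compensated DAG $d'=\unfold{r'}$ can be union-free only if the uncompensated $d$ is), whose proof is a genuine piece of work -- a case analysis on interleavings with ``minimal'' versus ``non-minimal'' result tokens ending in a contradiction with the minimality of an interleaving -- and nothing in your proposal substitutes for it.

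A secondary remark on Part~(2): you correctly flag as an obstacle the passage from the canonical compensations $\sub{p}{b_j^*}$ to the arbitrary compensations $p_j^*$ of a given minimal rewriting, but you leave it unresolved. The paper's argument is different and does not need that correspondence: it observes that when $q$ is minimal every subtree copied by $\func{compensate}$ is minimal, so every plan output by \allrw is minimal, and that \allrw enumerates all subsets $W\subseteq U$ of views mapping into each lossless prefix, so every minimal rewriting arises from one of the enumerated plans. Even if you keep your route for Part~(2), you would still need the Part~(1) claim above to know that the enumerated plan built from the views of $r$ passes the test on line~\ref{li:rw-cnt-check}.
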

\begin{proof}[of Point (1)]  If there is a DAG $d$ equivalent to the query $q$, then $q \sqsubseteq d$,
and then, by Lemma~\ref{l:containment-tree-dag}, there must be a containment
mapping from $d$ into $q$.  

The completeness of algorithm \algoname is not affected by the usage of  \proc{Apply-Rules}, in the following sense:  the result of applying the  containment test directly on the DAG pattern corresponding to the unfolding of $r $ is the same as when applying it to the result of  \proc{Apply-Rules}. This follows from the equivalence preservation property of the rules. For this reason, we do not need to consider \proc{Apply-Rules} in the completeness proof.


Thus, if there is a rewriting
of the form ${\cal I}$ or ${\cal I}/\textit{comp}$ or ${\cal I}//\textit{comp}$,
where $\cal I$ is an intersection,
in particular there must be a root-mapping from \unfold{\cal I}
into $q$.
Then, looking for all rewrite plans amounts to testing for all prefixes of \mb{q}
if there is an intersection of views, with possibly some compensation on each
branch, that map their main branches into that prefix, then add the compensation
below the prefix, which will guarantee a containment mapping from the unfolding
into the entire $q$.

Step~\ref{li:init-prefs} in \algoname tries all prefixes of \mb{q} and
for each such prefix it finds all views that map their main branch
inside that prefix. Hence, for each $\cal I$ as before, there will be
a $\cal I'$, built at step~\ref{li:build-plan}, of the form
${\cal I} \cap {\cal J}$, where $\cal J$, possibly empty, is an
intersection of other views (with maybe compensation) that map their
main branches inside the same prefix. But then $q \sqsubseteq {\cal
I}' \sqsubseteq {\cal I}$, so ${\cal I}$ is a rewriting iff ${\cal
I}'$ is one.

We argue next that in order to test whether $r'=\func{compensate}(r,q,\out{p})$ is an equivalent rewrite plan (which amounts here  to testing that $\unfold{r'} \sqsubseteq q$) it is sufficient (and obviously necessary) to test that $d\sqsubseteq p$ (in \algoname at line 7). In other words, if the test $d \sqsubseteq p$ fails, then $r$ (potentially compensated) cannot yield an equivalent rewriting.

The test for $d \sqsubseteq p$ amounts to (a) testing that $d$ is union-free, with some interleaving $i$ such that $i \equiv d$, and (b) testing that $i \sqsubseteq p$. If $d$ is union-free, then the statement \emph{$\unfold{r'} \sqsubseteq q$ iff $d\sqsubseteq p$} can be proven in straightforward manner.

To conclude the proof of completeness, we use Claim~\ref{c:completeness} , which shows   that $d'=\unfold{r'}$ cannot be union-free if $d$ is not union-free (proof given separately hereafter).   
\vspace{-1mm}
\begin{claim}
\label{c:completeness}
$d'$ is union-free only if $d$ is union-free.
\end{claim}
\vspace{-1mm}
Finally, since \algoname is sound (Th.~\ref{thm:soundness}),
any rewrite plan that is output is indeed a rewriting.
\end{proof}
\vspace{-2mm}
\begin{proof}[of Point (2)]
  If $q$ is minimal, all the subtrees added by the function
  \func{compensate} are also minimal, hence the rewritings that are
  produced are minimal. It is left to prove that \proc{\allrw} finds
  them all.
  We already saw that if there is a rewriting, \algoname tries one
  which is equivalent to it. \allrw does even more: it really tries all
  the rewrite plans, because it considers all subsets of views that
  map down to a certain position in \mb{q}, at line $\ref{li:mainloop}'$.
  Any other compensation that may be needed
  is copied from $q$ in the call to \func{compensate} from the return
  clause (line~\ref{li:ret-rw}). It is easy to prove that a tree pattern is minimal iff one
  cannot drop subtrees from it (which is the definition of minimality
  in \cite{mandhani-suciu}). Therefore, if $q$ is minimal, all its
  subtrees are also minimal.
\end{proof}
\vspace{-2.5mm}
\begin{proof}[of Claim~\ref{c:completeness}] Let $d \equiv i_1 \cup \dots i_k$, for $i_1, \dots, i_k$ being the non-reducible interleavings of $d$ (for each $i_j$ we cannot build \emph{another}, i.e. non-equivalent, interleaving that contains $i_j$). For each $i_j$, let $i_j'$ denote the compensated interleaving $i_j'=\func{compensate}(i_j,q,\out{p})$.

First, it is immediate that $d' \equiv  \bigcup_j i_j'$ by showing containment mappings in both directions.

We show next that if $d$ is not union-free, then $\bigcup_j i_j'$ is not union-free either (there is no query in this union that contains all other queries). We will rely on the following two observations:
\vspace{-1mm}
\begin{enumerate}
\item The compensation applied on $d$'s interleavings will only extend the main branch, but will not bring (or qualify) new predicates for the existing main branch nodes, hence yielding no new mapping opportunities in this sense. This is because the node $\out{p}$ in prefix $p$ is assumed to have already ``inherited'' the rest of $q$ as a predicate ($p$ is a lossless prefix).

    In other words, if there exists a containment mapping between $i_j'$ and $i_l'$, then there must exist a root-mapping between the corresponding $i_j$ and $i_l$.

\item We can partition the interleavings of $d$ into two classes:
 \begin{enumerate}
 \item those that have a ``minimal'' (or certain) result token (essentially obtained after applying R1 rewrite steps on the result tokens of $d$'s parallel branches),
     \item the remaining ones, which by definition must have a result token that cannot map in the result token of interleavings of the first kind (the result token has a longer main branch or some predicates that are not necessarily present in all interleavings\footnote{Since this is the result token and everything else can be put ``above'' it, it is easy to obtain this minimal, certain result token.}).
\end{enumerate}

The first class cannot be empty, the second one may be empty.

\end{enumerate}


 By the first observation, for any two interleavings of $d$, $i_j$ and $i_l$, knowing that $i_j \not \sqsubseteq i_l$,  we can have $i_j' \sqsubseteq i_l'$ only if there exists already a root-mapping $\phi$ from $i_l$ into $i_j$ which fails to be a full containment mapping only because the image of $\out{i_l}$ is not $\out{i_j}$. If such an root-mapping does not exist then surely we have that $i_j' \not \sqsubseteq i_l'$.

 By the second observation, for interleavings of the first kind, their result token will always map in the result token of any other interleaving (of both kinds) such that the image of the output node is the output node in the other interleaving.

Putting everything together, since the first class cannot be empty, in order to prove the claim it is now sufficient to show that a query $i_l'$  corresponding to an interleaving $i_l$ of the second kind  cannot reduce (contain) a query $i_j'$ corresponding to an interleaving $i_j$ of the first kind.   Let $i_j$ and $i_l$ be the two interleavings, with  $t^o_j$ and $t^o_l$ denoting their result tokens, such that there exists a root-mapping  $\phi$ which takes $\out{i_l}$ into some main branch node that is not part of $t^o_j$ (by the definition of the two kinds of interleavings $t^o_l$ cannot map into $t^o_j$).

We show that this leads to a contradiction, namely that $i_j$ \emph{is not minimal}, in the sense that a main branch node of $i_j$ can be removed, obtaining another (simpler) valid interleaving of $d$. 

Each of the branches in parallel in $d$ (denoted hereafter $x$) must have a containment mapping $\psi_x$ into $i_l$ (and a containment mapping $\tau_x$ into $i_j$ as well), so we obtain by  $\phi \circ \psi_x$ a root-mapping from any branch $x$ into $i_j$. Importantly,  the image of \out{x},  $n=\phi(\psi_x(\out{x}))$, is some main branch node of $i_j$ \emph{above} $t^o_j$.  This means that we can also map (by a root-mapping) each $x$ branch into $i_j$  somewhere higher than the result token $t^o_j$. But this hints that we can in fact simplify $i_j$ into an interleaving having a shorter main branch, yet containing $i_j$. More precisely, this interleaving of the parallel branches $x$, described by a code-mapping pair $(i,f_i)$, can be obtained as follows:

 \begin{enumerate}
  \item take as the code $i$ the main branch of $i_j$ \emph{without} $n$,
   \item define $f_i$ as (a) for the main branch nodes of each $x$ except those of the result token, by the $\phi \circ \psi_x$ mapping, (b) for the main branch nodes of the result token of $x$ by $\tau_x$.
 \end{enumerate}

 It is easy to check that the interleaving obtained in this way has a containment mapping into $i_j$ based on the ``identity'' mapping for the main branch nodes.
\end{proof}
We now consider the computational complexity of \algoname. We can observe that it  runs in worst-case exponential time, 
as it uses a containment check (line~\ref{li:rw-cnt-check}) that is
inherently hard:

\begin{theorem}\label{thm:containment-dag-tree}
\hspace{-3mm}Containment of an  \xpcap query $d$ into an  \xp query $p$ is
coNP-complete in $|d|$ and $|p|$.
\end{theorem}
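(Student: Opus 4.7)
The plan is to prove the two directions separately, using the interleaving machinery already developed in Section~\ref{sec:preliminaries} for the upper bound, and a reduction from a canonical \textbf{coNP}-complete problem for the lower bound.

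For membership in coNP, I would combine Lemma~\ref{lem:cap_sub_cup} and Lemma~\ref{l:cnt-union-tp}: since $d$ is equivalent to the union of its interleavings and the right-hand side $p$ is a tree pattern, $d \sqsubseteq p$ holds iff every interleaving $p_i$ of $d$ satisfies $p_i \sqsubseteq p$. Hence non-containment is witnessed by a single offending interleaving. Each interleaving has size polynomial in $|d|$: its main branch is at most the sum of main-branch lengths of $d$'s branches (collapsings can only shorten it), and its predicate subtrees are verbatim copies of subtrees of $d$. Since containment between wildcard-free tree patterns is in \textbf{PTIME} by the Miklau--Suciu theorem, non-containment admits a short certificate together with a polynomial-time verifier (first check that the guessed $p_i$ is a legal interleaving of $d$ in the sense of Definition~\ref{def:interleaving}, then run the PTIME containment test to verify $p_i \not\sqsubseteq p$). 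This places containment in coNP.

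For \textbf{coNP}-hardness, I would reduce from 3SAT-UNSAT. Given a formula $\phi$ on variables $x_1,\ldots,x_n$ with clauses $C_1,\ldots,C_m$, the reduction constructs an \xpcap query $d$ and a wildcard-free \xp query $p$ such that the interleavings of $d$ are in (many-to-one) correspondence with truth assignments, and an interleaving embeds into $p$ iff its associated assignment falsifies $\phi$. Then $d \sqsubseteq p$ iff every assignment falsifies $\phi$, iff $\phi$ is unsatisfiable. The core building block is a \emph{variable gadget} per $x_i$: a pair of parallel main-branch fragments that, because they must coexist under the intersection $\cap$, admit exactly two essentially distinct ways of being merged along a single linear main branch in the resulting interleaving; these two merging orders encode the choice $x_i = \top$ vs. $x_i = \bot$. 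I would separate successive gadgets by fresh, pairwise-distinct label tokens on the main branch so that different gadgets interleave independently, yielding $2^n$ distinct interleavings, one per assignment.

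The main obstacle is designing the containing pattern $p$: since $p$ must be wildcard-free and is a \emph{single} tree pattern (hence conjunctive), it cannot directly express the disjunction ``some clause is falsified''. The standard coNP-hardness reduction for tree-pattern containment (Miklau--Suciu) relies crucially on wildcards in the containing side to obtain this flexibility, which is unavailable to us. I would work around this by encoding falsification as a \emph{single structural witness} inside the interleaving: arrange the variable gadgets' labels and separators so that, whenever any clause $C_j$ is falsified by the assignment encoded by the interleaving, a fixed path pattern $\pi_j$ (depending only on $C_j$) is guaranteed to occur as a main-branch fragment inside that interleaving; then $p$ can simply be a tree pattern that, via a descendant-anchored predicate, searches for any one of the $\pi_j$. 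Verifying both the forward direction (assignment falsifies $\phi$ $\Rightarrow$ $p$ embeds into the interleaving) and the converse (if $p$ embeds, then some clause is falsified) is the delicate combinatorial part of the argument, since the absence of wildcards forces the gadgets and separators to be rigid enough that no spurious embeddings arise.
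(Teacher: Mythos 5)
Your coNP membership argument is correct and is essentially the paper's own: guess a single offending interleaving (which is of polynomial size), check that it is a legal interleaving of $d$, and apply the PTIME containment test for wildcard-free tree patterns to certify $p_i \not\sqsubseteq p$.

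The hardness direction, however, has a genuine gap, and it sits exactly at the point you yourself identify as ``the main obstacle'' and then do not actually resolve. You correctly note that $p$ must be a single conjunctive, wildcard-free tree pattern and so cannot directly express the disjunction ``some clause is falsified''; your proposed fix is that $p$ ``searches for any one of the $\pi_j$'' via a descendant-anchored predicate. But that phrase is the unresolved disjunction restated: a predicate of the form $[.//\pi]$ tests for one fixed subpattern, and since your $\pi_j$ are by construction distinct patterns (one per clause), no single \xp pattern can test ``$\pi_1$ occurs or $\dots$ or $\pi_m$ occurs'' without additional machinery. As written, $p$ would have to be a union, which is outside \xp, so the reduction does not go through. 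The paper closes precisely this gap with a calibration trick you would need to reinvent: the intersection $d$ contains a chain of $c$-nodes whose $i$-th node explicitly carries every clause predicate $Q_{C_j}$ \emph{except} $Q_{C_i}$, and each $Q_{C_i}$ is given an incoming path of length $m-i+1$ so that when clause $C_i$ fires under the assignment encoded by an interleaving, the one missing predicate becomes satisfied exactly at the $i$-th $c$-node. The containing pattern $p$ then conjunctively demands a single $c$-node carrying all $m$ predicates at once; this conjunction is completable at position $i$ iff $C_i$ fires, which is how the disjunction over clauses is realized inside one conjunctive pattern. (A further rigidity issue your sketch leaves open: the paper must insert a special $v$-node to prevent the $Q_{C_i}$ predicates from being inherited along the $c$-chain and creating spurious matches.) Without this device, or an equivalent one, your reduction is incomplete.
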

\begin{proof}
To show that our problem is in coNP we can use an argument similar to
the one used in \cite{DBLP:journals/lmcs/NevenS06} for the case of
XPath with disjunction. We know that $\textit{nf}(d) = U$, where $U$
is a union of queries from \xp. A non-deterministic algorithm that
decides $d \not \subseteq p$ guesses $u \in U$ (without computing
$U$), making a certain choice at each step of interleaving. Then, it
checks that $u \not \subseteq p$, which can be done in PTIME as $u, p \in
\xp$.

coNP-hardness is proven by reduction from the 3DNF-tautology problem~\cite{DBLP:books/fm/GareyJ79}, which is known to be coNP-complete. We start from a 3DNF
formula $\phi(\bar x) = C_1(\bar x) \vee C_2(\bar x) \vee \dots
C_m(\bar x) $ over boolean variables $\bar x = (x_1,\dots x_n)$,
 $C_i(\bar x)$ being conjunctions of literals.

Out of $\phi$ we build $d \in \xpi$ and
$p \in \xp$ over $\Sigma = \{x_1, \dots, x_n, b, v, true, false, yes\}$ such that $\phi$ is a tautology iff $d
\subseteq p$. Intuitively, $d$ will encode all possible truth assignments for $\phi$.

We build $d$ by intersecting two branches, based on the following gadgets (illustrated in Figure~\ref{fig:cnt_reduction}):
\begin{enumerate}
\item the pattern $x_1[yes]//x_2[yes]//...//x_n[yes]$ (denoted $T$)

\item the pattern $x_1[true]/x_1[false]/x_2[true]/x_2[false]/\dots/x_n[true]/x_n[false]$ (denoted $S$)


\item for each clause $C_i$, the pattern obtained from $x_1/a/x_1/a/x_2/a/\dots /x_n/a/x_n/a$ by putting the $[yes]$ predicate below each of the $3$ nodes corresponding to the literals of $C_i$ (this pattern is denoted $P_{C_i}$ ). For instance,  for $C_i=(x_1 \wedge \bar{x_2} \wedge x_3)$, we have the pattern $P_{C_i} =x_1/a[yes]/x_1/a/x_2/a/x_2/a[yes]/x_3/a[yes]/x_3/a\dots /x_n/a/x_n/a$.
\item for each clause $C_i$, $Q_{C_i}$ denotes the predicate $[c/c/c/\dots/c/v[.//P_{C_i}]]$, with $m-i +1$ $c$-nodes,

\item for each $C_i$, the predicate $Q_i=[Q_{C_1}, \dots Q_{C_{i-1}}, Q_{C_{i+1}}, \dots, Q_{C_m}]$, that is the list of all $Q_{C_j}$ predicates for $j \neq i$.
    \item the  pattern $c[Q_1]/c[Q_2]/c[Q_3]/\dots c[Q_m]/c$ (denoted $U$)
\end{enumerate}


\begin{figure}[t]
\begin{center}
\includegraphics[trim=0mm 50mm 0mm 0mm, clip=true, scale=0.45]{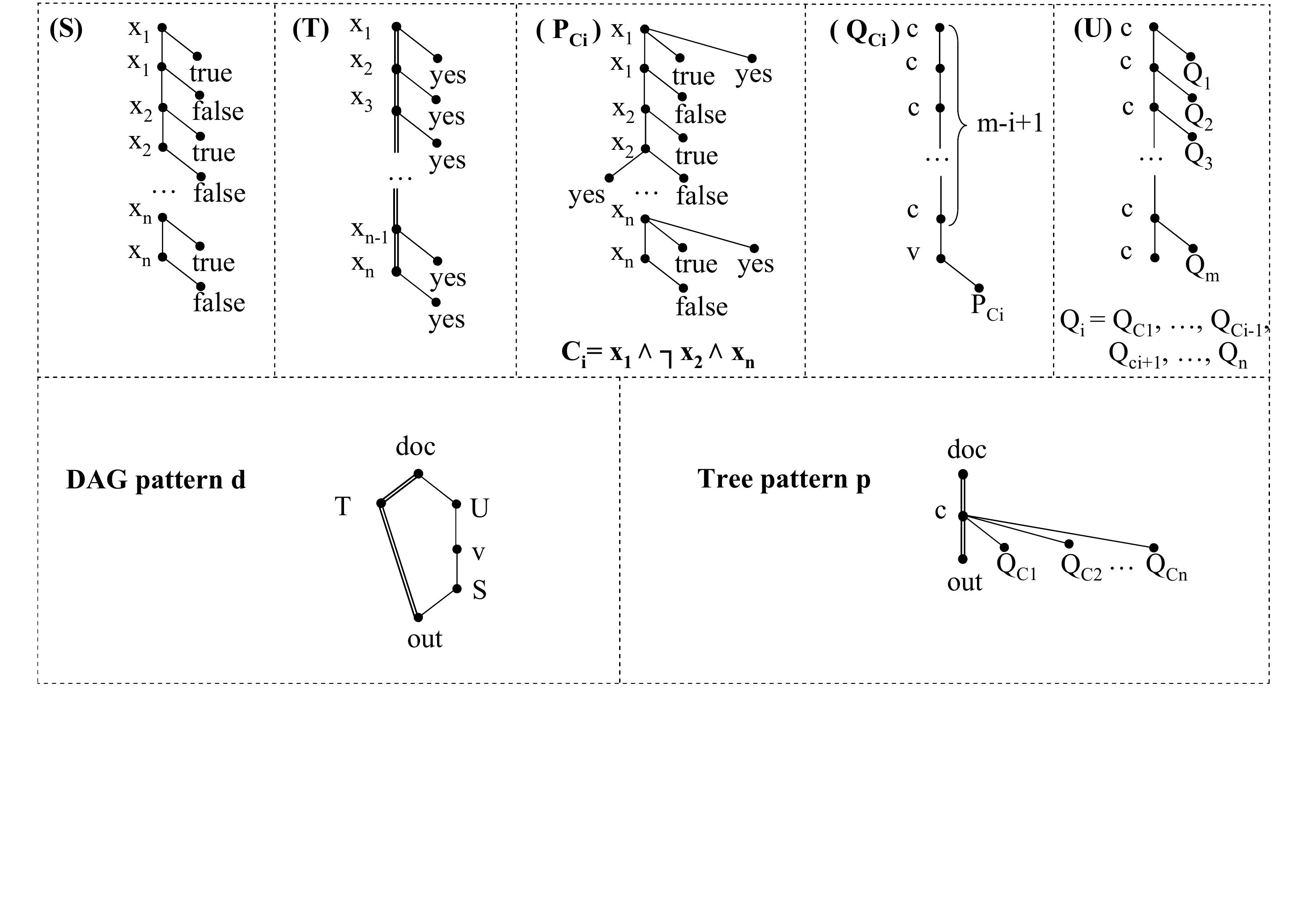}
\end{center}
\vspace{-2mm}
\caption{The patterns used in the reduction of TAUTOLOGY to DAG containment \label{fig:cnt_reduction}}
\vspace{-3mm}
\end{figure}

Now, we define $d$ as $d=(doc(A)//T//out) \cap (doc(A)/U/{\mathbf v}/S/out)$.

An important observation is that $Q_{C_i}$ predicates are not inherited at $c$-nodes. This is ensured by the  $v$ node we introduced in $Q_{C_i}$s and in $U$ (for inheritance, the last $c$-node of $Q_{C_i}$'s branch would have to match with a $v$-node).

Next, when interleaving the two branches in $d$ , $x_k$ in the left branch can either be coalesced with the first $x_k$ (having predicate $true$) in the
second branch, corresponding to the case $x_k =
\textsc{true}$, or with the second $x_k$ (having predicate $false$), corresponding to
$x_k=\textsc{false}$. Hence interleavings correspond to truth assignments for the variables of $\phi$.

Note also that when some clause $C_i$ is made \textsc{true} by a truth assignment (i.e., all its literals are \textsc{true}), then the predicate $P_{C_i}$ will hold at the last $c$ node in the $U$ part, or, put otherwise, the $Q_{C_i}$ predicate will now hold at the $i$th $c$ node in $U$.

Finally, let $p$ be the pattern $doc(a)//c[Q_{C_1}, Q_{C_{2}}, Q_{C_{3}}, \dots, Q_{C_m}]//out$.

We can now argue that $d \sqsubseteq p$ iff $\phi$ is a tautology. The if direction is immediate since in this case each truth assignment makes at least one $C_i$ true. This means that the $Q_{C_i}$ predicate will now hold at the $i$th $c$-node in $d$ and, since all other $Q_{C_j}$ predicates, for $j\neq i$, were already explicitly present at this node, it is now easy to see that there exist now a containment mapping from $p$ into $d$ that takes $p$'s $c$-node into $d$'s $i$th $c$-node.

The only if direction is similar. If for some truth assignment, none of the clauses is \textsc{true} (in the case $\phi$ is not a tautology), then it is easy to check that $p$ will not have a containment mapping into the interleaving corresponding to that truth assignment.
\eat{
\\

\textbf{Alternative construction:}

We slightly modify two of our previous gadgets:
\begin{enumerate}

\item  for each clause $C_i$, $Q_{C_i}$ is now the predicate $[c/c/c/\dots/c/v[.//P_{C_i}]]$, with $m-i +1$ $c$-nodes,

    \item $U$ is now the simpler pattern $c[Q_1]/c[Q_2]/c[Q_3]/\dots c[Q_m]/c$.
\end{enumerate}

And we define $d$ as $$d=(doc(A)//T//b) \cap (doc(A)/U/{\mathbf v}/S/b).$$
The $p$ pattern remains the same. The $v$ node we introduced in $Q_{C_i}$s and in $U$ will ensure that predicates are not inherited (for inheritance, the last $c$-node of $Q_{C_i}$'s branch would have to match with a $v$-node).

The same reasoning applies then to this construction.
}
\eat{
The hardness part is proven by reduction from validity of CNF
formulas, which is known to be coNP-complete. We start from a CNF
formula $\phi(\bar x) = C_1(\bar x) \wedge C_2(\bar x) \wedge \dots
C_m(\bar x) $ over the boolean variables $\bar x = (x_1,\dots x_n)$,
where $C_i(\bar x)$ are disjunctions of literals. We denote by
$\mathcal{C}_l = \{C^{(l)}_j\}_j \subseteq \{C_k\}$ the clauses in
which the literal $l$ appears. Out of $\phi$ we build $d \in \xpi$ and
$p \in \xp$ over $\Sigma = \{x_1, \dots, x_n, C_1,
\dots, C_m\} \cup \{a,b,\textit{yes}\}$ such that $\phi$ is valid iff $d
\subseteq p$.
Intuitively, $d$ encodes all possible truth assignments for $\phi$.

We build it by intersecting two branches, based on gadgets of the
form: $//x_k/a[\textit{yes}]$ and $/x_k/a\
\mathcal{P}(x_k, \phi)/x_k/a\ \mathcal{P}(\neg x_k, \phi)$ where,
for a given literal $l$, $\mathcal{P}(l,
\phi)$ is a set of predicates $[C^{(l)}_1][C^{(l)}_{2}]....$ formed by
considering all clauses $C^{(l)}_k \in \mathcal{C}_l$.  When
overlapping the $\cap$ in $d$ , $x_k$ in the first member of the
intersection can either be identified with the first $x_k$ in the
second gadget, corresponding to the case $x_k =
\textsc{true}$, or the second $x_k$, corresponding to
$x_k=\textsc{false}$. $p$ will check that all the clauses of $\phi$
are true, i.e. they all have a \textit{yes} sibling.

The general forms of $d$ and $p$ are: {\small
\begin{eqnarray*}
d & = & (\textit{doc}(A)//x_1/a[\textit{yes}]//x_2/a[\textit{yes}]//....//x_n/a[\textit{yes}]//b)\ \cap \\
  & \ & (\textit{doc}(A)/x_1/a\ \mathcal{P}(x_1, \phi)/x_1/a\ \mathcal{P}(\neg x_1, \phi)/.... \\
  & \ & \ \ \ \ \ ..../x_n/a \mathcal{P}(x_n, \phi)/x_n/a \mathcal{P}(\neg x_n, \phi)/b) \\
p & = & \textit{doc}(A)[.//a[C_1][\textit{yes}]]....[.//a[C_m][\textit{yes}]]//b
\end{eqnarray*}}
\noindent where $a$, $b$ are new labels,
$a,b \not \in \{x_1, \dots x_n, C_1, \dots C_m\}$.

\begin{figure}[htbp]
\begin{center}
\includegraphics[height=6.7cm]{images/cnt_dag_tp.pdf}
\end{center}
\caption{Example for reduction of VALID to DAG containment \label{fig:cnt_reduction}}
\end{figure}

For instance, if $\phi = (x_1 \vee \neg x_1 \vee x_2) \wedge (x_1 \vee
x_2 \vee \neg x_2)$, then

{\small
\begin{eqnarray*}
d & = & \textit{doc}(A)//x_1/a[\textit{yes}]//x_2/a[\textit{yes}]//b\ \cap \\
  & \ & \textit{doc}(A)/x_1/a[C_1][C_2]/x_1/a[C_1]/x_2/a[C_1][C_2]/x_2/a[C_2]/b \\
p & = & \textit{doc}(A)[.//a[C_1][\textit{yes}]][.//a[C_2][\textit{yes}]]//b
\end{eqnarray*} }

}
\end{proof}

One might hope there is an alternative polynomial time solution for the rewriting problem, which would not require such a potentially expensive containment test. We
prove this is not the case,  showing that that the rewriting problem itself is hard.
\begin{theorem}
\label{thm:coNPcompleteness}
The rewriting problem for queries and views from \xp is coNP-complete.
\end{theorem}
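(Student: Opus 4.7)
My plan is to establish membership in coNP and coNP-hardness separately.

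For membership in coNP, I would rely on the completeness of \algoname (Theorem~\ref{th:completeness}) together with the NP upper bound on non-containment that is implicit in Theorem~\ref{thm:containment-dag-tree}. The algorithm \algoname inspects only a polynomial number of candidate plans $r_1, \ldots, r_k$, each of polynomial size, and a rewriting exists iff one of the $r_i$ satisfies $\unfold{r_i} \equiv q$. Since by the construction of each candidate one has $q \sqsubseteq \unfold{r_i}$ (as observed in the proof of Theorem~\ref{thm:soundness}), equivalence reduces to the single direction $\unfold{r_i} \sqsubseteq q$. The proof of Theorem~\ref{thm:containment-dag-tree} shows that non-containment of a DAG pattern into a tree pattern admits a polynomial-size XML tree as witness. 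Thus, to certify that \emph{no} rewriting exists, it suffices to exhibit, for each candidate $r_i$, a polynomial counter-example tree $t_i$ witnessing $\unfold{r_i} \not\sqsubseteq q$. The entire certificate is of polynomial size and verifiable in polynomial time, placing the complement of the rewriting problem in NP, hence the problem itself in coNP.

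For coNP-hardness, I would reduce from the DAG-into-tree containment problem shown coNP-hard in Theorem~\ref{thm:containment-dag-tree}, which itself encodes 3DNF tautology. Starting from the constructions $d = v_1 \cap v_2$ and $p$ of that proof (so that $d \sqsubseteq p$ iff the input formula $\phi$ is a tautology), I would build a rewriting instance in which the query is $p$, suitably augmented with fresh-label gadgets, and the view set is obtained from $\{v_1, v_2\}$, slightly adapted so that $p \sqsubseteq v_1$ and $p \sqsubseteq v_2$ hold by construction and both views admit a root-mapping into a prefix of $q$. Under this setup, the existence of a rewriting collapses to $v_1 \cap v_2 \equiv p$, which, given the ``easy'' inclusion $p \sqsubseteq v_1 \cap v_2$, is equivalent to $v_1 \cap v_2 \sqsubseteq p$, hence to $\phi$ being a tautology.

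The main obstacle will be ruling out \emph{spurious} rewritings, i.e.\ guaranteeing that when $\phi$ is not a tautology no other combination of (sub)views with compensation yields a rewriting of $q$. To control this, fresh labels will be embedded in both $q$ and the views so that (i) each view has essentially a unique root-mapping into a prefix of $q$, (ii) no single view nor any alternative intersection of views can cover $q$, and (iii) the only useful candidate produced by \algoname on this instance is the intended plan $\doc{v_1}/v_1 \cap \doc{v_2}/v_2$ with trivial compensation. With these precautions, the reduction from Theorem~\ref{thm:containment-dag-tree} transfers coNP-hardness to the rewriting problem, completing the proof.
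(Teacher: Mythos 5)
Your coNP membership argument is sound and matches the paper's: the paper likewise certifies non-existence of a rewriting by exhibiting, for each of the polynomially many candidate plans, one polynomial-size witness (there, an interleaving of the unfolding into which $q$ does not map; your canonical counter-example tree is an equivalent choice), and verification is in PTIME since each witness is a tree-pattern/tree comparison.

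The hardness half, however, has a genuine gap. You propose to start from the construction of Theorem~\ref{thm:containment-dag-tree}, where $d=v_1\cap v_2 \sqsubseteq p$ iff $\phi$ is a tautology, set $q=p$, and ``slightly adapt'' $v_1,v_2$ so that $p\sqsubseteq v_1$ and $p\sqsubseteq v_2$. But in that construction $p = doc(a)//c[Q_{C_1},\dots,Q_{C_m}]//out$ is far more general than either branch: there is no containment mapping from $v_1=doc(A)//T//out$ or from the $U$-branch into $p$, and no small perturbation makes $p\sqsubseteq v_i$ without destroying the structure on which the tautology correspondence rests. In other words, even when $\phi$ is a tautology you only get $d\sqsubseteq p$, never $d\equiv p$, so the intended plan is not a rewriting of $q=p$; forcing the missing containment is not a side condition but the entire difficulty, and your ``fresh-label gadgets'' for excluding spurious plans are likewise left unspecified. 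The paper avoids this by reducing instead through the union-freedom construction of Theorem~\ref{th:hardnessUF-desc-2}: it takes the views to be the $n+1$ patterns $p_0,\dots,p_n$ and the query to be $p_c$, the \emph{unique candidate interleaving} of $d=\dagp{\unfold{v_0\cap\dots\cap v_n}}$. Because $p_c$ is an interleaving, $q\sqsubseteq d$ holds for free; because the output nodes of the views can only map to $\out{q}$, the full intersection is the only viable plan; and $d\sqsubseteq q$ iff $d$ is union-free iff $\phi$ is a tautology. To repair your proof you would essentially have to rebuild this construction, so you should reduce from union-freedom (Theorems~\ref{th:hardnessUF-desc-2} and~\ref{thm:hardness-rewriting-desc-2}) rather than from DAG-into-tree containment.
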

\begin{proof}
For the coNP-hardness proof, we refer the reader to the proof of Theorem~\ref{th:hardnessUF-desc-2}, which shows an even stronger result, as it deals with a restricted fragment of \xp.

We discuss now the complexity upper-bound. Let $q$ be the input query, let $\cal V$ be the set of views.  First, note that rewriting could be solved using an oracle
for union-freedom, but this does not provide any easy map reduction. 
This is why we prove the following result independently.  

\begin{figure}[t]
\begin{center}
  \includegraphics[trim=8mm 155mm 175mm 3mm, clip=true,
scale=0.5]{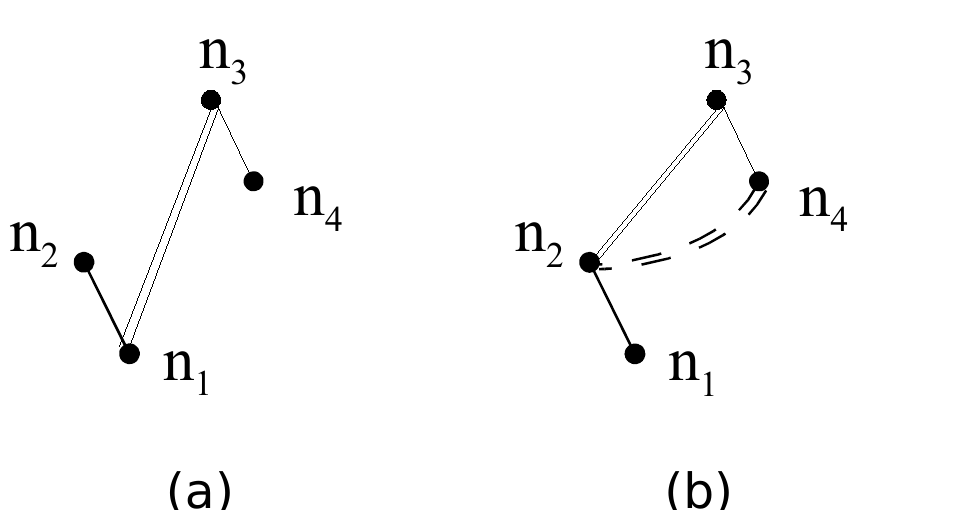}
\end{center}
\caption{Interaction between R2.i and R2.ii \label{fig:termin_R2}}
\vspace{-3mm}
\end{figure}

For each node of \mb{q} in which some of the views ${\cal V}_1
\subseteq {\cal V}$ map, it is enough to guess one interleaving of
$\bigcap_{{\cal V}_1} v_j$ in which $q$ does not map.  If we put
together a polynomial number of polynomially large witnesses, they
make up a polynomial witness for the entire problem. In other words,
one can verify in polynomial time that there is no rewriting.
\end{proof}
However, we show next that our rule rewriting procedure is polynomial, hence rewritings can be found efficiently whenever the containment test of \algoname's line $6$ can be done efficiently:
\begin{lemma}
\label{l:termination}
The rewriting of a DAG $d$ using \proc{Apply-Rules} always terminates, and it does so
in $\mathrm{O}(|\nodes{d}|^2)$ steps.
\end{lemma}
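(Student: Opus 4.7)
My approach is to exhibit a potential function $\Phi$ on DAG patterns such that every rule application strictly decreases $\Phi$, and $\Phi(d_0) = O(n^2)$ for the initial DAG $d_0$ of size $n$. This immediately yields the claimed $O(n^2)$ termination bound.

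First I would catalog what each rule does structurally. Rules R1, R3, R4, R6, R7, R8 strictly reduce the number $m(d)$ of main branch nodes, either by collapsing two main branch nodes into one or by deleting a redundant path. Rule R2 preserves $m(d)$ but reduces the number of parallel outgoing main branch edges at some node $n_0$ (it turns two siblings into a sequential path). Rules R5 and R9 preserve $m(d)$ but append a predicate $[Q]$ at a main branch node; crucially, their trigger conditions require $[Q]$ to be absent at the target node at the time of firing, and $Q$ is drawn from the (finite) pool of /-subpredicates already present in $d$. The key invariants to establish up front are that no rule ever introduces a fresh main branch node and that the set of distinct predicate-subtree shapes cannot grow beyond the pool present in $d_0$; both follow by induction on rule applications, since every added predicate is a verbatim copy of one already in the DAG.

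Next I would define $\Phi(d) = C_1 \cdot m(d) + C_2 \cdot \beta(d) + \mu(d)$, where $\beta(d)$ counts unordered pairs of distinct main branch edges sharing a common source (capturing parallel branching), $\mu(d)$ counts pairs $(n, Q)$ with $n \in \mbn{d}$ and $Q$ a predicate from the initial pool not currently attached at $n$, and $C_1, C_2$ are integer weights on the order of $n$ chosen large enough that losing one main branch node or one parallel pair outweighs any possible increase in $\mu$. Verifying that each rule strictly decreases $\Phi$ is then case-by-case but routine: the collapsing/removal rules drop $C_1 \cdot m(d)$ by at least $C_1$; R2 drops $C_2 \cdot \beta(d)$ by at least $C_2$; and R5, R9 keep the first two terms fixed and drop $\mu(d)$ by exactly one. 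Finally, $m(d_0) \le n$, $\beta(d_0) \le n^2$, the predicate pool has at most $O(n)$ distinct elements, and $\mu(d_0) \le m(d_0) \cdot O(n) = O(n^2)$, so with $C_1, C_2 = O(n)$ we obtain $\Phi(d_0) = O(n^2)$.

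The main obstacle is the analysis of the collapsing rules: merging two main branch nodes can inflate $\mu$ by combining their predicate sets and opening fresh (node, predicate) slots at the merged node, and a careless choice of $C_1$ would be swamped by this term. The fix is to bound the possible single-step increase of $\mu$ by the product of pool size and 1 (only one node disappears), which is $O(n)$, and to take $C_1$ strictly larger than this worst-case jump; an analogous bookkeeping handles how R2 interacts with $\mu$ through the shifted edge. A subsidiary but necessary sanity check is that the invariants on main branch nodes and predicate pool really do survive every rule — in particular that R5 and R9, which duplicate subtrees, never manufacture a predicate shape that was not already in $d_0$.
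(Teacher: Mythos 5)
Your overall strategy (a potential function that every rule strictly decreases) is a legitimate way to recast the paper's argument, and your handling of the node-collapsing rules and of R5/R9 via the fixed predicate pool matches the paper's reasoning. The gap is in the term you use for R2. Rule R2 does not delete the edge $n_0\!\to\!n_2$; it re-sources it as $n_1/\!/n_2$. So your quantity $\beta(d)$ (pairs of main branch edges sharing a source) loses $\deg(n_0)-1$ pairs at $n_0$ but gains $\deg(n_1)$ new pairs at $n_1$, and this net change can be zero or even positive (e.g.\ $n_0$ with out-degree $2$ and $n_1$ with out-degree $2$ gives a strict increase from $2$ to $3$). Hence $\beta$ is not monotone under R2 and your claim that R2 drops $\Phi$ by at least $C_2$ fails; worse, since R2 preserves $m(d)$ and need not touch $\mu(d)$, nothing in $\Phi$ is guaranteed to decrease on an R2 step. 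This is exactly the delicate point of the lemma: the paper must explicitly rule out a ping-pong between R2.i (which pushes an edge down the branch) and R2.ii (which pushes one up), and it does so with a different measure — each R2 application fixes the relative order, in all interleavings, of a pair of main branch nodes whose order was previously undetermined, and this information is never retracted, giving at most $|\mbn{d}|^2$ firings of R2. If you want to keep the potential-function format, replace $\beta$ by the number of unordered pairs of main branch nodes with undetermined relative order and argue (as the paper does around its Figure on R2 interaction) that the rule's reachability side-conditions prevent any rule from "forgetting" an established order.

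Two smaller issues. First, even if $\beta$ were monotone, your arithmetic gives $\Phi(d_0)=C_2\cdot\beta(d_0)=O(n)\cdot O(n^2)=O(n^3)$, not $O(n^2)$, since R5/R9 steps decrease $\Phi$ by only $1$; the weights must be constants (which they can be once the three counters are individually monotone, as in the paper, where the bound is simply the sum $O(n)+O(n^2)+O(n)$). Second, the paper's proof of this lemma also establishes that each rule's precondition is testable in polynomial time (with a nontrivial case analysis for R9); that part is needed for the PTIME corollary, though not for the step-count statement itself.
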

\begin{proof} First, let us notice that none of the rules increases the number of
main branch nodes, and in fact R1, R3, R4, R6, R7, R8 always decrease
it, hence the number of times they are applied is less than $|\mbn{d}|
= \mathrm{O}(|\nodes{d}|)$. R5 can also fire only a finite number of
times, as the number of predicates to be introduced is bound by the
initial number of predicates in the pattern $d$, which is in
$\mathrm{O}(|\nodes{d}|)$.

R2 leaves the number of nodes unchanged and may decrease
the number of edges by one or leave it the same. R2.i 
always progresses down main branches, and R2.ii always up,
respectively. The only possibility of going into a loop 
would come from the interaction of R2.i and R2.ii.

Consider the generic case depicted in
Figure~\ref{fig:termin_R2}(a), in which R2.ii would apply for nodes
$n_1$, $n_2$, $n_3$.  (The case in which we apply an R2.i step is
symmetrical). Suppose that $n_3$ also has a /-edge towards a node
$n_4$. There would be a danger of looping if R2.i had been previously
applied to nodes $n_3$, $n_2$, $n_4$, and now it would apply again
because $n_2//n_3$ would be re-introduced. But then, R2.i applied to
those nodes would have introduced an edge $n_4//n_2$ and any later
applications of R2 (or of any other rule) would have maintained $n_2$
reachable from $n_3$, as in Figure~\ref{fig:termin_R2}(b). In this
case, R2.ii would not introduce any //-edge between $n_3$ and $n_2$,
as it is explicitly specified in the rule.

Thus, R2 can fire at most $|\mbn{d}|^2$ times, because at
each step it infers the order, in all interleavings, of a pair of
nodes from \mbn{d} whose ordering was unknown before. So, rewriting
with R1-R9 always terminates in at most $\mathrm{O}(|\nodes{d}|^2)$
steps.

Each of the rules R1-R8 can be tested in polynomial time in straightforward manner. They are mostly based on existence or non-existence of mappings). For some of them (in particular,  for R7, R8 and R9), we will discuss certain implementation choices that may speed-up execution in Section~\ref{sec:implementation}.   Similarity can also be tested in polynomial time, since the number of patterns $p_{12}$ to be considered (Definition~\ref{def:similar}) is linear in the size of the two /-patterns $p_1$ and $p_2$. 

We discuss next how R9 can be tested.

 For the given $n$, $p_1$, $p_2$,  and $Q$ (the number of such predicates is of the order  of $|d|$),  we can check in polynomial time whether the rule triggers  as follows.  $Q$ can be seen as having the following form (Figure~\ref{fig:r9Q}): a /-path $l_1/\dots/l_k$ followed by either (a) one or more //-edges, (b) one or more //-edges and one or more /-edges, or (c) one or more /-edges. In other words, $l_k$ denotes the highest node having either several outgoing edges (of either kind) or one outgoing edge, of  the // kind.

\begin{figure}[t]
\begin{center}
\includegraphics[trim=0mm 130mm 120mm 0mm, clip=true, scale=0.45]{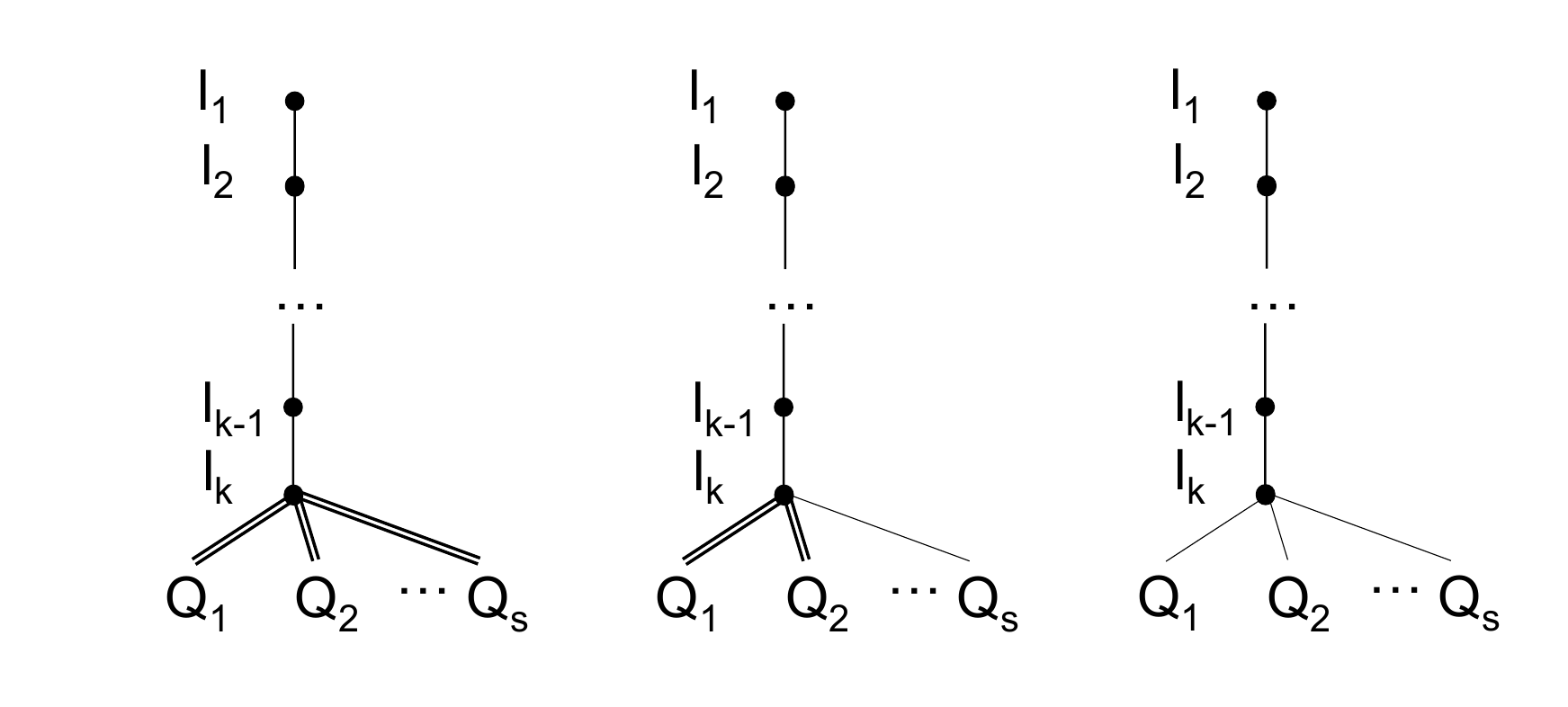}
\vspace{-2mm}
\caption{The possible configurations for predicate subtree Q.\label{fig:r9Q}}
\end{center}
\vspace{-4mm}
\end{figure}

\textbf{Case a.} If  $Q$ is of the first kind, since at node $n$ in $p_1$ the predicate $Q$ would verify \xppes (see the following section), it means that $n$ is followed by a main branch that is incompatible with $l_1/l_2/\dots/l_k$. Let $l_1/\dots/l_{k'}$, for $1\leq k'<k$, be the maximal prefix that is compatible with the  main branch (if one exists). This means that the main branch below $n$ starts by a sequence of labels $l_1/\dots/l_{k'}/l$, where $l \neq l_{k'+1}$.

For $Q$ to hold at $n$ in each  interleaving of $p_2$ with $p_1$, it means that in it we have either:
\begin{enumerate}
\item $Q$ or a predicate into which $Q$ can map attached to $n$ itself (i.e. we do not need the main branch descendants of $n$ and their predicates), or
\item the predicate $l_2/\dots/l_k[Q_1] \dots [Q_s]$ or a predicate into which it can map attached to $n$'s main branch child $n'$ (i.e. we do not need the main branch descendants of $n'$ and their predicates), or 
\item the predicate $l_3/\dots/l_k[Q_1] \dots [Q_s]$ or a predicate into which it can map attached to $n$'s main branch descendant at distance $2$, $n''$ (i.e. we do not need the main branch descendants of $n''$ and their predicates), or  so on, \dots

        \item [(k')] the predicate $l_{k'+1}/\dots/l_k[Q_1] \dots [Q_s]$ or a predicate into which it can map attached to $n$'s main branch descendant at distance $k'$, $n^{(k')}$, (i.e. we do not need the main branch descendants of $n^{(k')}$ and their predicates).
\end{enumerate}

Accordingly, in order to test that  $Q$ holds at $n$ in each  interleaving of $p_2$ with $p_1$, we need  to test the \emph{non-existence} of a mapping from $p_2$ into $p_1$ that would not bring a predicate as the ones described above on any of the nodes $n, n', n'', \dots, n^{(k')}$. This test can be done in polynomial time, top-down and one token at a time, by choosing as long as possible for each token of $p_2$  the highest-possible image that does not contribute any predicates like the ones described above.

\textbf{Case b.} This case is similar to the previous since we have the same setting, i.e.,  $n$ is followed by a main branch that is incompatible with $l_1/l_2/\dots/l_k$ and we have at most  a prefix of it $l_1/\dots/l_k'$, for $1\leq k'<k$, that is compatible (if such a prefix exists).

\textbf{Case c.} If $n$ is followed by a main branch that is incompatible with $l_1/l_2/\dots/l_k$, then the same reasoning of the two previous cases applies here as well. Otherwise, for $Q$ to hold at $n$ in each  interleaving of $p_2$ with $p_1$, it means that in each interleaving we have either:
\begin{enumerate}
\item $Q$ or a predicate into which $Q$ can map attached to $n$ itself (i.e. we do not need the main branch descendants of $n$ and their predicates), or
\item predicate $l_2/\dots/l_k[Q_1] \dots [Q_s]$ or one into which it can map attached to $n$'s main branch child $n'$ (i.e., we do not need the main branch descendants of $n'$ and their predicates), or so on, \dots

        \item [(k)] the predicate $l_k[Q_1] \dots [Q_s]$ or a predicate into which it can map being present (as a predicate) on $n$'s main branch descendant at distance $k$, $n^{(k)}$, (i.e. we do not need the main branch descendants of $n^{(k)}$ and their predicates), or
            \item [(k+1)] all the predicates $[Q_1], \dots, [Q_s]$ verified at $n$'s main branch descendant at distance $k+1$, $n^{(k+1)}$.

\end{enumerate}
So a similar test for the non-existence of a mapping has to be done, but with some minor adjustments. Top-down, we will chose a mapping image for each token of $p_2$ into $p_1$, as long as we do not arrive at the position of $n^{(k+1)}$ or below it (i.e. we will chose an image for a token  if it does not overpass this position and does not contribute predicates like the ones described by the items (1) to (k) above). Then, for the remaining suffix of $p_2$, we check the existence of a mapping for it that would (i) not 
contribute predicates like the ones given in conditions (1) to (k), and (ii) would not contribute \emph{all} the predicates of the last condition, i.e.,  that there is a mapping for the remaining part of $p_2$ in the remaining part of $p_1$ s.t. 
among $Q_1, \dots, Q_s$ there is at least one predicate $Q_i$  which will not be verified at $n^{(k+1)}$ after coalescing $p_2$'s nodes with their mapping images.  This can be seen as a recursive call, that can be run for each $Q_i$ individually, and will take us back to the three cases depending on the shape of $Q_i$. (Note that all the predicates $Q_1, \dots, Q_s$ at node $n^{(k+1)}$ on $p_1$ will verify the condition for extended skeletons.)

A dynamic programming approach can be used to perform all these tests in polynomial time, based on the to-be-mapped suffix of $p_2$, the target suffix of $p_1$ and the predicate to be tested (it is not necessary to perform the test several times for a given such triple). 
\end{proof}

An immediate corollary of Lemma~\ref{l:termination} is the following.
\begin{corollary}
\efficient always runs in PTIME.
\end{corollary}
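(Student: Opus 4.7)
The plan is to decompose \efficient into its constituent steps and argue that each of them is polynomial, then observe that the number of iterations of the outer loop is also polynomial.

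First I would bound the size of \id{Prefs} built in line~\ref{li:init-prefs}. A lossless prefix of $q$ is determined by the choice of a main-branch node of $q$ to serve as the new output, so there are at most $|\mb{q}|$ such prefixes, which is $O(|q|)$. For each prefix $p$, each view $v_i \in {\cal V}$ contributes at most one pair $(v_i,b_i)$, since we only need one witness root-mapping; moreover, deciding whether such a root-mapping exists and finding the image of the output node is polynomial (it is a containment-style test between wildcard-free tree patterns, known to be PTIME by~\cite{DBLP:journals/jacm/MiklauS04}). Thus $|\id{Prefs}|$ is polynomial in $|q| + |{\cal V}|$ and can be constructed in polynomial time.

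Next I would analyze the body of the main loop. For each $(p,W) \in \id{Prefs}$: line~\ref{li:rw-build-compensations} performs at most $|W|$ compensations, each of size polynomial in $|q| + |{\cal V}|$; line~\ref{li:build-plan} assembles these into a plan $r$ of polynomial size; unfolding to obtain the DAG $d$ just replaces each view reference by its definition, yielding a DAG of polynomial size. Then \proc{Apply-Rules}$(d)$ runs in $O(|\nodes{d}|^2)$ steps by Lemma~\ref{l:termination}, and each rule trigger is itself testable in polynomial time (as argued in the proof of that lemma). Finally, the modified line ($\ref{li:rw-cnt-check}'$) first checks whether the resulting DAG is a tree, which is trivial; only in that case does it perform the containment test $d \sqsubseteq p$. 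When $d$ is a tree, $d \sqsubseteq p$ amounts to tree pattern containment in the wildcard-free fragment, which is in PTIME by Lemma~\ref{l:containment-tree-dag} and the classical result of~\cite{DBLP:journals/jacm/MiklauS04}. If the test succeeds, the final \func{compensate} at line~\ref{li:ret-rw} produces an output of polynomial size.

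Combining these observations, the total running time is bounded by $|\id{Prefs}|$ times a polynomial per-iteration cost, hence polynomial overall. The only subtlety — and the one place where the argument could go wrong — is that \proc{Apply-Rules} may return a DAG rather than a tree; in that case \efficient simply skips the containment test and moves on. Since by Theorem~\ref{thm:soundness} only correct rewritings can ever be returned, and since the tree-restricted test is the only extra filter imposed by \efficient relative to \algoname, no nontrivial work is performed in the DAG case. Therefore each iteration runs in polynomial time, and so does the whole procedure.
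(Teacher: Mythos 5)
Your proposal is correct and follows essentially the same route as the paper, which presents the corollary as an immediate consequence of Lemma~\ref{l:termination} (PTIME for \proc{Apply-Rules}, with each rule trigger testable in polynomial time) combined with the earlier observations that the number of candidate plans is polynomially bounded and that the containment test on line~($\ref{li:rw-cnt-check}'$) is only performed when $d$ is a tree, where wildcard-free tree-pattern containment is in PTIME. Your more explicit accounting of $|\id{Prefs}|$ and the per-iteration costs just spells out what the paper leaves implicit.
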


To summarize the results so far, we showed that \algoname is complete for \xp, and we gave matching (coNP) complexity bounds for this problem. Moreover, we described a variant of this algorithm (\efficient) which runs in polynomial time, but is only sound.  

We consider next restrictions by which \efficient becomes also
complete, thus turning into a complete and efficient rewriting
algorithm. Note that one may impose restrictions on either the
\xp fragment used by the query and views, or on the rewrite plans
that \algoname deals with. We consider both cases,  by this charting a tight
tractability frontier for this problem. 

The next section shows that \efficient is complete under fairly permissive restrictions on the input query, and this \emph{without restricting the language of views (which remains \xp).}

\section{Tractability frontier - \xp fragment for PTIME} 
\label{sec:frontier1}
We introduce in this section a fragment of \xp that, intuitively,  limits the use of //-edges in predicates, in the following manner: any token $t$ of a pattern $p$ will not have predicates \emph{with //-edges} that  may become redundant in some interleaving $p$ might be involved in, due to descendants of $t$ and their respective predicates.

Let us first fix some necessary terminology. 
By a \emph{//-subpredicate} \textit{st} we denote a predicate subtree whose root is connected by a //-edge
to a /-path $p$ that comes from the main branch node $n$ to which $st$ is associated (as in $n[\dots[.//st]]$). 
 $p$ is called the \emph{incoming /-path} of $st$ and can be empty.

By
\emph{extended skeletons} (\xppes) we denote tree patterns having
the following property: for any main branch node $n \neq \out{p}$ and //-subpredicate
$st$ of $n$, there is no mapping (in either direction) between the code of the incoming /-path of
$st$ and the one of the /-path following $n$ in the main branch (where the
empty code is assumed to map in any other code).   For instance,  the patterns $\snippet{a[b//c]/d//e}$ or $\snippet{a[b//c//d]/e//d}$ are extended skeletons, while $\snippet{a[b//c]/b//d}$, $\snippet{a[b//c]//d}$, 
$\snippet{a[.//b]/c//d}$ or $\snippet{a[.//b]//c}$ are not. 

Observe  that the above definition imposes no restrictions on predicates of the output node. This relaxation was not present in~\cite{CautisWebDB08}'s definition of extended skeletons but it is easy to show that it does not affect any of the results that were obtained with the more restrictive definition. This is because there is only one choice for ordering the output nodes in interleavings of an intersection; they are collapsed into one output node. Note that this  \xp sub-fragment  does not restrict in any way the use of descendant edges in the main branch or the use of predicates with  child edges only.  
Note also that all the paths given in the running example are from this fragment.

We denote by $\xppescap$ the fragment of $\xpcap$ in which only $\xppes$ expressions are used. For any tree pattern  $v$, by its extended skeleton, we denote the \xppes\ query $s(v)$ obtained by pruning out all the //-subpredicates violating the \xppes\ condition.   This notion can be easily generalized to extended skeletons $s(d)$ for any DAG patterns $d$.

The following two lemmas have the auxiliary role of  allowing us to rewrite input queries from \xppes\  in polynomial time, without imposing any restrictions on the views.
\begin{lemma}
\label{lem:skel-necessary}\hspace{-2mm}
A DAG pattern $d$ is union-free only if its extended skeleton DAG pattern $s(d)$ is so.
\end{lemma}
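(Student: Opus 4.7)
I would prove the contrapositive: \emph{if $s(d)$ is not union-free, then $d$ is not union-free.}

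The plan rests on a tight correspondence between $\interleave{d}$ and $\interleave{s(d)}$. Since $s(d)$ is obtained from $d$ by pruning certain //-subpredicates only, the sets of main-branch nodes $\mbn{d}$ and $\mbn{s(d)}$ coincide, as do all main-branch edges. Consequently, the pairs $(i,f_i)$ of code and collapsing function in Definition~\ref{def:interleaving} index interleavings of both patterns: to each interleaving $p$ of $s(d)$ there corresponds a unique \emph{lift} $\hat p \in \interleave{d}$ built from the same $(i,f_i)$, differing from $p$ only in that $\hat p$ carries, attached at the $f_i$-images of their original anchors, copies of the //-subpredicates that skeletonization removed.

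With this correspondence in hand, the first step is to invoke Lemmas~\ref{lem:cap_sub_cup} and~\ref{lem:equiv_tree_union}: non-union-freedom of $s(d)$ means that no single interleaving contains all others, so there exist two incomparable interleavings $p_1, p_2 \in \interleave{s(d)}$. I lift them to $\hat p_1, \hat p_2 \in \interleave{d}$. It then suffices to show that $\hat p_1$ and $\hat p_2$ remain incomparable, since by Lemmas~\ref{lem:cap_sub_cup} and~\ref{lem:equiv_tree_union} this implies $d$ has no dominating interleaving and is thus not union-free.

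Incomparability of the lifts I would establish by contradiction: suppose there is a containment mapping $h \colon \hat p_2 \to \hat p_1$, and argue that its restriction to $p_2$ lands inside $p_1$, yielding a containment mapping $p_2 \to p_1$ that contradicts the incomparability of $p_1$ and $p_2$. Main-branch nodes behave correctly because $\hat p_i$ and $p_i$ share their main branches and, by definition, $h$ sends main to main and output to output. The \emph{main obstacle} is to show that each //-subpredicate of $p_2$ is mapped by $h$ into $p_1$ rather than into one of the //-subpredicates added when lifting $p_1$ to $\hat p_1$. The key structural observation I would exploit is that a //-subpredicate of $p_2$, coming from $s(d)$, satisfies the extended-skeleton condition, so its incoming /-path admits no code mapping into the main-branch continuation after its anchor; by contrast, the added //-subpredicates of $\hat p_1$ were pruned by $s$ precisely because their incoming /-paths \emph{do} admit such a mapping in $d$. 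A mapping of an extended-skeleton-respecting //-subpredicate into an added one would, after translating through the collapsing function $f_i$ defining the interleaving, produce exactly the forbidden code mapping into the main-branch continuation of $p_1$ at the appropriate anchor, contradicting the extended-skeleton property. Hence $h$ must restrict to a containment mapping $p_2 \to p_1$, giving the desired contradiction and completing the proof.
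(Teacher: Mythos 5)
Your correspondence between $\interleave{d}$ and $\interleave{s(d)}$ and your key structural observation (an extended-skeleton-respecting //-subpredicate cannot be matched inside a pruned one, because its incoming /-path is incompatible in both directions with the main-branch continuation at its anchor) are exactly the ingredients the paper uses. However, the way you assemble them contains a genuine logical gap. From the non-union-freedom of $s(d)$ you extract two incomparable interleavings $p_1,p_2$ and claim that the incomparability of their lifts $\hat p_1,\hat p_2$ already implies that $d$ has no dominating interleaving. That inference is invalid: by Lemmas~\ref{lem:cap_sub_cup} and~\ref{lem:equiv_tree_union}, union-freedom of $d$ is equivalent to the existence of \emph{some} interleaving containing all others, and that dominating interleaving could be a third pattern $r$ with $\hat p_1\sqsubseteq r$ and $\hat p_2\sqsubseteq r$, which is perfectly compatible with $\hat p_1$ and $\hat p_2$ being mutually incomparable. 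Your argument therefore only rules out $\hat p_1$ and $\hat p_2$ themselves as candidates for the dominant interleaving, not every candidate.

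The paper quantifies over the right object: it assumes toward a contradiction that $d$ is union-free with dominant interleaving $p_i$, observes that $p_i'=s(p_i)$ is an interleaving of $s(d)$, uses non-union-freedom of $s(d)$ to produce some $p_j'$ with $p_j'\not\sqsubseteq p_i'$, lifts it to $p_j\in\interleave{d}$, and then combines $p_j\sqsubseteq p_i\sqsubseteq s(p_i)$ (the first containment from dominance, the second because $s(p_i)$ is a sub-pattern of $p_i$) with precisely your incompatibility argument: any containment mapping from $s(p_i)$ into $p_j$ must avoid the pruned //-subpredicates of $p_j$, hence restricts to a containment mapping $s(p_i)\to s(p_j)$, contradicting $s(p_j)\not\sqsubseteq s(p_i)$. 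To repair your proof you would need to apply your key observation not to two pre-selected interleavings of $s(d)$ but to an arbitrary would-be dominant interleaving of $d$ --- equivalently, show that if $r$ dominates $\interleave{d}$ then $s(r)$ dominates $\interleave{s(d)}$ --- which is essentially the paper's argument.
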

\begin{proof}
The proof is based on the following property: modulo //-predicates,
the sets of tree patterns \interleave{d} and \interleave{s(d)} are the
same. More precisely, for each $p_i \in \interleave{d}$ there exists
$p_i' \in \interleave{s(d)}$ such that $s(p_i)=p_i'$ and the other way
round. Supposing that $s(d)$ is not union-free, let us assume towards
a contradiction that $d$ is union-free. Let $p_i$ denote the
interleaving such that $p_i \equiv d$ and let $p_i'$ denote the
associated interleaving from $s(d)$, $p'_i = s(p_i)$. Since $s(d)$ is
not union-free, there exists some $p_j' \in \interleave{s(d)}$ such
that $p_j' \not \sqsubseteq p_i'$.  Then there is $p_j \in
\interleave{d}$ such that $s(p_j) \not \sqsubseteq s(p_i)$.  Finally,
since $p_j \sqsubseteq p_i$ we also have $p_j \sqsubseteq s(p_i)$.

\begin{figure}[tb]
\begin{center}
\includegraphics[trim=15mm 20mm 50mm 60mm, clip=true,
scale=0.40]{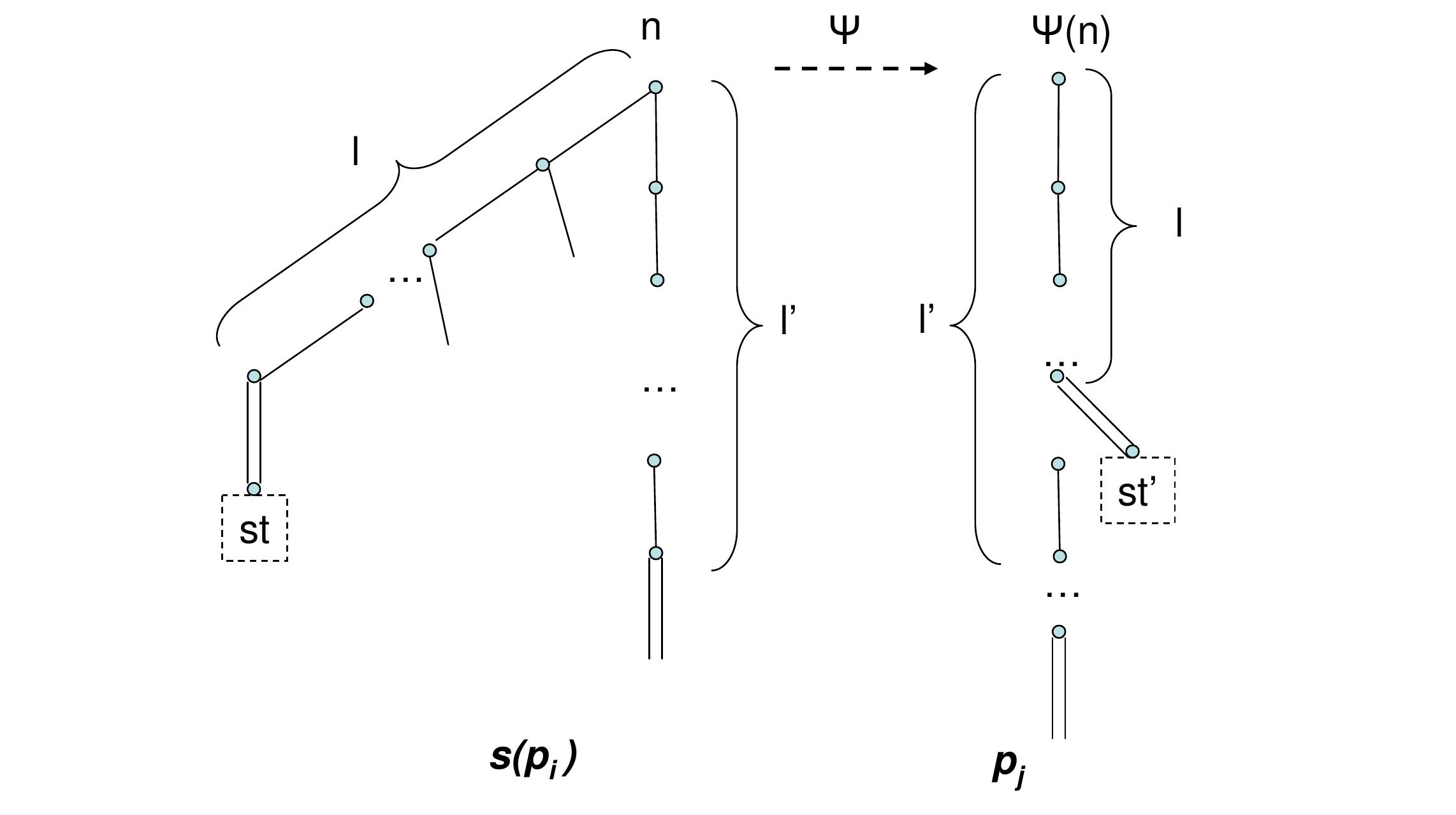}
\end{center}
\vspace{-3mm}
\caption{Extended skeletons and mappings. \label{fig:picture-skel}}
\vspace{-0.3cm}
\end{figure}
Suppose for the sake of contradiction that $p_j \sqsubseteq s(p_i)$
and $s(p_j) \not \sqsubseteq s(p_i)$ both hold. Then any containment
mapping $\psi$ from $s(p_i)$ into $p_j$ should use some of the
predicates of $p_j$ starting with a //-edge (otherwise we would have a
containment mapping from $s(p_i)$ into $s(p_j)$ as well). But this is
not possible since for any //-subpredicate $st$, its incoming /-path
$l$ is incompatible (does not map) with the path $l'$ following the
main branch node (see Figure~\ref{fig:picture-skel}).
\end{proof}

The following result also follows similarly to Lemma~\ref{lem:skel-necessary}.
\begin{lemma} 
\label{l:extskelsuff}
There exists an \xpcap rewriting of an input  query $q \in \xppes$ using  a set of \xp views  iff there exists one using the extended skeletons of the views.
\end{lemma}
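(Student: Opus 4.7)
My plan is to adapt the interleaving-based correspondence used in the proof of Lemma~\ref{lem:skel-necessary}, lifting it from bare DAG patterns to DAGs obtained by unfolding rewrite plans. The key object is, for a fixed syntactic plan $r$, the pair of DAGs $d = \dagp{\unfold{r}}$ (where unfolding uses the definitions in $\mathcal{V}$) and $d_s = \dagp{\unfold{r}}$ (where unfolding uses the definitions in $s(\mathcal{V})$). These DAGs share the same skeleton, and $d_s$ is obtained from $d$ precisely by pruning those //-subpredicates occurring inside view components that violate the \xppes\ condition. In particular, there is a natural bijection between the interleavings of $d$ and those of $d_s$ obtained by adding or removing the pruned //-subpredicates in place.

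The first part of the argument would handle the forward direction: assuming $r$ is a rewriting using $\mathcal{V}$, i.e., $d \equiv q$, I would show that $r$ is also a rewriting using $s(\mathcal{V})$, i.e., $d_s \equiv q$. The containment $q \sqsubseteq d_s$ is immediate from $v_i \sqsubseteq s(v_i)$, hence $d \sqsubseteq d_s$. For the reverse $d_s \sqsubseteq q$, I would invoke Lemma~\ref{lem:cap_sub_cup} to reduce to verifying that every interleaving $p^s$ of $d_s$ satisfies $p^s \sqsubseteq q$. For each $p^s$, I would pair it with the corresponding interleaving $p$ of $d$ (obtained by reinstating the pruned //-subpredicates); since $p \sqsubseteq q$ (from $d \equiv q$), there exists a containment mapping $h : q \to p$. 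It then suffices to show that $h$ can be chosen to avoid the //-subpredicates of $p$ that disappear in passing to $p^s$.

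This last step is the main obstacle. The argument uses $q \in \xppes$ crucially: every //-subpredicate of $q$ attached to a non-output main branch node has an incoming /-path that does not map (in either direction) to the outgoing main branch /-path, while the pruned //-subpredicates of $p$ have exactly the opposite property -- their incoming /-path does map into the outgoing main branch. As in the proof of Lemma~\ref{lem:skel-necessary} (cf.\ Fig.~\ref{fig:picture-skel}), this labeling mismatch prevents any //-subpredicate of $q$ from being forced into a pruned //-subpredicate of $p$; dually, any portion of $h$ that does route through a pruned //-subpredicate of $p$ can be redirected to the main branch of $p$, since the pruning condition guarantees that the labels along the pruned subpredicate's incoming /-path coincide with those on the main branch. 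This rerouting yields a mapping $h' : q \to p^s$, establishing $p^s \sqsubseteq q$ as required.

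For the reverse direction, I would apply the symmetric interleaving correspondence: given a rewriting $r$ using $s(\mathcal{V})$ with $d_s \equiv q$, I would check that the added //-subpredicates distinguishing $d$ from $d_s$ do not invalidate the equivalence when replanted inside $\mathcal{V}$, again using that these subpredicates are precisely the ones whose incoming /-paths can be absorbed into the main branch of each interleaving. Mechanically, the argument is structurally analogous to the one above, showing that the normal form of $d$ coincides with that of $d_s$ modulo these absorbable predicates, and that $q \in \xppes$ renders the difference irrelevant to equivalence with $q$.
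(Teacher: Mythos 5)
Your forward direction (a rewriting over ${\cal V}$ yields one over $s({\cal V})$) is in substance the paper's argument: because $q\in\xppes$, a containment mapping from $q$ cannot enter the pruned //-subpredicates, so the same mapping witnesses containment for the skeleton plan. You route this through Lemma~\ref{lem:cap_sub_cup} and a per-interleaving containment mapping, whereas the paper argues with a single mapping from $q$ into the whole unfolding; your version is if anything the more defensible one, since containment of a DAG in a tree need not be witnessed by one such mapping. One caveat: the ``rerouting'' clause does not work as stated --- the body of a pruned //-subpredicate sits below a //-edge and has arbitrary shape, so a portion of $h$ landing there cannot simply be redirected onto the main branch (only the \emph{incoming /-path} coincides with the main branch, and that path is not pruned anyway). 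Fortunately this clause is redundant: your prefix-incompatibility argument already shows $h$ never crosses the //-edge of a pruned subpredicate, which is all you need.

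The reverse direction is where your proof has a genuine gap. You assert that the //-subpredicates distinguishing $d$ from $d_s$ ``do not invalidate the equivalence when replanted'' because they are absorbable into the main branch. Only their incoming /-paths are absorbable; the subtree below the //-edge is a strictly stronger constraint, so $d\sqsubseteq d_s$ with strict containment in general. Concretely, take $q=\snippet{a/b//d}$ and the single view $v=\snippet{a[b//c]/b//d}$: here $s(v)=\snippet{a[b]/b//d}\equiv q$, so a rewriting over the skeleton exists, yet on the document consisting of the single path $a/b/d$ (no $c$ anywhere) every plan over $v$ returns the empty set while $q$ does not, so no plan over $v$ is equivalent to $q$ and the normal forms of $d$ and $d_s$ do not coincide modulo anything. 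The paper disposes of this direction with ``the if direction is immediate'' and offers no argument either; whatever justification is intended (presumably that a view participating in a useful candidate plan must admit a root-mapping of its \emph{full} pattern into $q$, which makes the pruned predicates redundant and restores $q\sqsubseteq d$), your write-up does not supply that hypothesis, and the absorbability argument you give in its place proves a statement that is false as written.
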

\begin{proof}
 The if direction is immediate.  For the only if direction, it suffices to see that since $q$ is an extended skeleton, any containment mapping from $q$ into the unfolding of the rewrite plan  will actually use only parts that are not violating the \xppes condition. This means that a containment mapping from $q$ into this plan also gives a containment mapping from $q$ into the corresponding plan using instead of the original views their extended skeletons. 
\end{proof}
By Lemma~\ref{l:extskelsuff}, assuming \xppes~input queries,  without loss of generality all views can be  assumed in the rest of this section to be  from \xppes~as well (when this is not the case, the views can be substituted by their extended skeletons before the DAG rewriting,   for instance at Step 3 in \algoname).  

A key result of our paper is the following (for readability purposes,  proof  given in Section~\ref{sec:proof1}):

\begin{theorem}
\label{th:completenessUF-es}
For any pattern $d$ in $\xppescap$, $d$ is union-free iff the algorithm~\proc{Apply-Rules} rewrites $d$
into a tree.
\end{theorem}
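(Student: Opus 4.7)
The $\Leftarrow$ direction is immediate from Proposition~\ref{l:step-soundness}: every rewrite rule preserves equivalence, so if \proc{Apply-Rules}$(d)$ returns a tree pattern $t$ then $d\equiv t$, and $d$ is union-free by definition. The interesting direction is $\Rightarrow$, and the plan is to argue it by contrapositive. Let $d'=\proc{Apply-Rules}(d)$; since each rule preserves equivalence we have $d\equiv d'$, so it suffices to show that if $d'$ is not a tree then $d'$ is not union-free. Because $d'\in\xppescap$ is saturated with respect to R1--R9, the failure of each rule to apply at any candidate location is a structural constraint that I would turn into an explicit pair of incomparable interleavings.

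Concretely, I would pick in $d'$ a ``branching'' main-branch node $n$ with two main-branch children $n_a$, $n_b$ lying on distinct parallel sub-DAGs (such an $n$ exists since $d'$ is not a tree). The plan is to construct two interleavings $p_1$ and $p_2$ of $d'$ that, at the first position where the merge of the two sub-DAGs is not forced, schedule a critical node from the $n_a$-branch before its $n_b$-counterpart in $p_1$ and in the opposite order in $p_2$; I would then prove that neither $p_1\sqsubseteq p_2$ nor $p_2\sqsubseteq p_1$. The non-applicability of the nine rules translates directly into local obstructions to any would-be containment mapping: R1 rules out forced coalescing of same-labeled siblings; R2.i/R2.ii rule out implied ancestor/descendant orderings; R3 and R4 prevent one main-branch path from being absorbed by a parallel one; R6 forbids the collapse of similar /-paths; R7 and R8 forbid an entire parallel sub-DAG or its tip from being embedded into the other; R5 and R9 rule out the implicit propagation of predicates from one branch onto the main-branch nodes of the other. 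Combining these, the $n_a$-side of $p_1$ should expose a predicate/main-branch configuration with no image in $p_2$, and symmetrically.

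The main obstacle is lifting these local obstructions into a \emph{global} incomparability statement in the presence of arbitrary predicate nesting. This is exactly where the \xppes~hypothesis is essential: by definition, for every intermediate main-branch node $m$ and every //-subpredicate $st$ of $m$, the incoming /-path of $st$ is code-incompatible with the /-path continuing the main branch below $m$, so no //-subpredicate of $p_1$ can be ``hidden'' further down the main branch of $p_2$ via a descendant predicate. I would use this in the spirit of the proof of Lemma~\ref{lem:skel-necessary} to propagate the local obstruction through the entire candidate mapping, together with Lemma~\ref{l:containment-tree-dag} to reduce tree-into-tree containment reasoning. The natural structure of the argument is an induction on $|\mbn{d'}|$: the base case handles a single shared branching point where one rule among R1--R9 would apply were it not for the chosen asymmetry, and the inductive step extends the witness through the remainder of the DAG, repeatedly invoking the \xppes~condition to forbid predicate ``rescues'' lower down.
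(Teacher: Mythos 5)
Your $\Leftarrow$ direction is fine and matches the paper (soundness of R1--R9 plus the definition of union-freedom). The $\Rightarrow$ direction, however, has a genuine logical gap at its core: you plan to exhibit two interleavings $p_1,p_2$ of the saturated DAG $d'$ with $p_1\not\sqsubseteq p_2$ and $p_2\not\sqsubseteq p_1$, and to treat this as establishing that $d'$ is not union-free. It does not. By Lemmas~\ref{lem:cap_sub_cup} and~\ref{lem:equiv_tree_union}, $d'$ fails to be union-free only if \emph{no} interleaving contains all the others; a pair of mutually incomparable interleavings is perfectly compatible with the existence of a third, strictly larger interleaving $p_3$ dominating both (and if a dominant interleaving exists at all, it must be exactly such a third one, since it cannot be equivalent to either of two incomparable patterns). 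Ruling out that third interleaving is precisely the hard part, and your proposal never engages with it. The paper's proof is organized around this difficulty: in the two-branch case (Lemma~\ref{lem:2-comp-skel}) it builds $p'$ and $p''$ from maximal mappable token-suffixes and then shows that \emph{assuming} a third interleaving $p$ containing both forces the contradiction $t_1\equiv t_2$ on the dissimilar sibling tokens; in the general case it first pins down the shape that any candidate dominant interleaving must have (minimal extended skeleton $p_r//m//t_o$, via the characterization in Lemma~\ref{lem:n-comp-skel} that union-freedom requires one intermediary part into which all others map), and only then constructs a single witness $w$ with $w\not\sqsubseteq p$ against each such candidate. Your appeal to the \xppes\ condition to block predicate ``rescues'' lower down is in the right spirit (it is what makes tokens map strictly inside tokens, and it underlies Lemma~\ref{lem:skel-necessary}), but it is deployed in service of the wrong target statement.

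Two further structural points. First, the paper must, and does, treat separately the case where some branch of $d$ has a single token (root connected to output by a /-path, Lemma~\ref{l:onetokenq_nviews}); there the argument is not about incomparable pairs at all but about a predicate $Q$ that fails to hold at a collapsed node in some interleaving. Your induction on $|\mbn{d'}|$ does not obviously degenerate to this case. Second, the general case requires a preliminary reduction (replacing branches attached strictly above $n_r$ or strictly below $n_o$ by their ``middle parts'' attached at $n_r$ and $n_o$) before the candidate dominant interleaving becomes unique; without something playing that role, the set of candidates you would have to refute is not under control. As written, the proposal is a plausible outline of where the obstructions come from, but it does not contain the argument that actually establishes non-union-freedom.
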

 From this, it follows immediately that:
\begin{corollary}[\xppes]
\label{c:completeness-es} Algorithm~\efficient is complete for rewriting $\xppes$ queries using \xp views. 
\end{corollary}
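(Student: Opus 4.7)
The plan is to derive Corollary~\ref{c:completeness-es} by chaining three already-established facts: the general completeness of \algoname (Theorem~\ref{th:completeness}(1)), the ability to restrict views to \xppes\ when the query is in \xppes\ (Lemma~\ref{l:extskelsuff}), and the characterization of union-freedom in $\xppescap$ given by Theorem~\ref{th:completenessUF-es}. Intuitively, \efficient\ differs from \algoname\ only in insisting that \proc{Apply-Rules}$(d)$ returns a tree before performing the containment test; so it suffices to show that, whenever \algoname\ would succeed at some iteration, the DAG $d$ constructed there is in fact union-free (and in $\xppescap$), so that \proc{Apply-Rules}\ indeed produces a tree.

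First, suppose that a rewriting of $q \in \xppes$ using the \xp\ views ${\cal V}$ exists. By Lemma~\ref{l:extskelsuff}, a rewriting using the extended skeletons $s({\cal V}) \subseteq \xppes$ exists as well, and per the convention of this section I run \efficient\ with the views preprocessed into $s({\cal V})$ (or equivalently, substituted at line~3 of \algoname). By Theorem~\ref{th:completeness}(1), \algoname\ succeeds at some iteration $(p,W) \in \id{Prefs}$: for the DAG $d$ built at that iteration, $d \sqsubseteq p$ holds after \proc{Apply-Rules}. Moreover, the argument used in the proof of Theorem~\ref{thm:soundness} shows $p \sqsubseteq d$ by construction (the lossless prefix $p$ maps into the unfolding of every compensated view, hence into $d$). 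Thus $d \equiv p$; since $p$ is a tree pattern, $d$ is union-free, and since every view in $s({\cal V})$ lies in \xppes, the DAG $d$ is in $\xppescap$.

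Second, Theorem~\ref{th:completenessUF-es} applies to $d$ and yields that \proc{Apply-Rules}$(d)$ rewrites $d$ into a tree. Consequently, at the very same iteration $(p,W)$, \efficient's modified test at line $\ref{li:rw-cnt-check}'$ succeeds: the tree check passes by Theorem~\ref{th:completenessUF-es}, and the subsequent containment check $d \sqsubseteq p$ passes because Proposition~\ref{l:step-soundness} guarantees that the rewrite rules preserve equivalence. Hence \efficient\ returns the compensated rewrite plan, which by Theorem~\ref{thm:soundness} is a genuine rewriting of $q$.

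The main substantive obstacle is Theorem~\ref{th:completenessUF-es} itself, which is the deep technical result relied upon; once it is taken for granted, the corollary is essentially a bookkeeping exercise. The one point that needs a brief explicit justification is that the DAG built at the critical iteration lies in $\xppescap$: this follows because the views have been replaced by their \xppes\ extended skeletons, and the compensation paths come from the \xppes\ input query $q$, so that the intersection of the resulting compensated views falls into $\xppescap$ by definition.
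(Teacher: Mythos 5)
Your proposal is correct and follows exactly the route the paper intends: the paper states this corollary as an immediate consequence of Theorem~\ref{th:completenessUF-es}, relying (as you do) on Lemma~\ref{l:extskelsuff} to replace views by their extended skeletons and on Theorem~\ref{th:completeness} to locate the successful iteration where the DAG is union-free, so that \proc{Apply-Rules} yields a tree and \efficient's modified test succeeds. You have merely made explicit the bookkeeping the paper leaves implicit.
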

We show next that relaxing the extended skeleton restrictions
leads to hardness for  union-freedom and rewriting using views. We consider here as middle-ground between extended skeletons and \xp the fragment  \xppdesc\ obtained from extended
skeletons by  allowing  predicates that are connected by
a //-edge to the main branch (such as in $v_2'$) and freely allowing  //-edges in these predicates. Obviously, $\xppes \subsetneq \xppdesc$.
 We denote by $\xppdesccap$ the fragment of $\xpcap$ in which only $\xppdesc$ expressions are used.

We first prove the following complexity lower bound for union-freedom: 
 \begin{theorem}
\label{th:hardnessUF-desc-2}
For a pattern $d$ in $\xppdesc^{\cap}$~, the problem of deciding if $d$ is union-free
is  coNP-hard.
\end{theorem}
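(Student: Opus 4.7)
The plan is to reduce the coNP-complete 3DNF-TAUTOLOGY problem to union-freedom for $\xppdesc^{\cap}$, building on the encoding used in Theorem~\ref{thm:containment-dag-tree}. The key observation is that by Lemmas~\ref{lem:cap_sub_cup} and~\ref{lem:equiv_tree_union}, a DAG $d$ is union-free iff some one of its interleavings $p^{\ast}$ contains (dominates) all the other interleavings. So from a 3DNF formula $\phi$ I would construct a DAG pattern $d_{\phi} \in \xppdesccap$ whose interleavings include a designated candidate $p^{\ast}$ structurally playing the role of the tree target $p$ from the containment construction, so that $p^{\ast}$ dominates all other interleavings iff every truth assignment satisfies $\phi$, i.e., iff $\phi$ is a tautology.

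The skeleton would re-use the truth-assignment encoding of Theorem~\ref{thm:containment-dag-tree}: two intersected branches $\doc{A}//T//\textit{out}$ and $\doc{A}/U/v/S/\textit{out}$ whose interleavings are in bijection with assignments $\alpha$ of $\bar{x}$. For each such interleaving $p_\alpha$, the clause-propagation gadget makes the predicate $Q_{C_i}$ accumulate at the $i$-th $c$-node when $\alpha$ satisfies $C_i$, so that the tree $p^{\ast}$ analogous to $p = \doc{A}//c[Q_{C_1}]\ldots[Q_{C_m}]//\textit{out}$ admits a containment mapping into $p_\alpha$ iff all clauses are satisfied under $\alpha$. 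I would then add a third intersected branch, carefully crafted so that one of its ways of merging with the two existing branches produces exactly $p^{\ast}$ as an interleaving of the whole DAG, while any other merging yields an interleaving that is dominated by $p^{\ast}$ (or by some $p_\alpha$) whenever the containment-proof logic holds.

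Given such a $d_{\phi}$, the forward direction would follow directly from the containment analysis: if $\phi$ is a tautology then $p^{\ast}$ maps into every $p_\alpha$, so $p^{\ast}$ dominates and $d_\phi$ is union-free. For the reverse direction, I would show that if $\phi$ is falsified by some $\alpha$, then $p_\alpha$ lacks the features required for $p^{\ast}$ (or any other candidate interleaving arising from the three branches) to map into it, so no interleaving dominates, and $d_\phi$ is not union-free. Fragment membership in $\xppdesc^{\cap}$ would be preserved because all relevant predicates from $U$ are //-subpredicates whose internal //-edges are allowed by $\xppdesc$.

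The main obstacle will be the design of the third branch: it must be \emph{rigid enough} to force $p^{\ast}$ to arise as a valid interleaving, yet \emph{weak enough} not to create spurious dominant candidates. This will require fresh label distinctions (in the spirit of the $v$-node trick of Theorem~\ref{thm:containment-dag-tree}) to prevent unintended coalescings, together with a careful case analysis showing that for a falsifying $\alpha$ the interleaving $p_\alpha$ is incomparable with every potential dominant interleaving. A subtle point is ruling out dominance by non-$p^{\ast}$ interleavings produced by the third branch, which essentially amounts to arguing that any interleaving carrying the clause-predicate structure of $p^{\ast}$ enjoys the same mapping condition into $p_\alpha$ as $p^{\ast}$ itself, thereby reducing everything to the containment-proof criterion.
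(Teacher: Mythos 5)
There is a genuine gap here, and it is not merely that the construction is deferred: the specific architecture you commit to cannot work. You propose to keep the two branches of the containment reduction, $\snippet{doc(A)//T//out}$ and $\snippet{doc(A)/U/v/S/out}$, ``whose interleavings are in bijection with assignments,'' and to add a third branch that manufactures a dominant candidate $p^{\ast}$. But the assignment encoding in that reduction works precisely because $S$ is a rigid /-path, so every node $x_k[yes]$ of $T$ is \emph{forced} to coalesce with either the $x_k[true]$ or the $x_k[false]$ node of $S$. Consequently \emph{every} interleaving of the full three-branch DAG restricts to some assignment $\alpha$ and carries $\alpha$'s pattern of $[yes]$ predicates on the rigid $S$ segment. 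Since the main branch there is all /-edges, a containment mapping between two such interleavings must be the identity on that segment, so interleavings for distinct assignments are pairwise incomparable, and your candidate $p^{\ast}$ — which is itself one of these interleavings — cannot contain the interleavings of the other $2^n-1$ assignments no matter what clause-checking predicate it carries. The DAG is then never union-free (for $n\geq 1$), independently of whether $\phi$ is a tautology, and the reduction collapses. The difficulty you flag as ``the main obstacle'' is therefore not a routine case analysis but the actual mathematical content of the theorem, and your starting point forecloses its resolution.

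The paper's proof resolves exactly this by abandoning the single $T$ branch. It uses $n+1$ patterns $p_0,p_1,\dots,p_n$, one per variable plus a spine $p_0$ whose main branch contains a \emph{true} copy of $x_1/\dots/x_n$, a \emph{false} copy, and additionally a trailing copy that already carries all the $[yes]$ predicates (the $P_{yes}$ gadget). Each $p_j$ attaches its $[yes]$ for $x_j$ to a floating $a$-node that can land on the first or second $a$-node of $p_0$ (encoding a truth value) \emph{or} on the third $a$-node, where its $[yes]$ is absorbed by the already-decorated trailing copy. This escape hatch is what makes a unique ``neutral'' candidate interleaving $p_c$ exist at all: the one where every floating $a$-node lands on the third $a$-node, yielding the minimal skeleton and carrying the clause-checking predicate $[.//Q]$ there. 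Only then does union-freedom reduce to whether $p_c$ contains every assignment interleaving, i.e., to whether $\phi$ is a tautology. If you want to salvage your plan, you must redesign the assignment gadget so that a predicate-free dominant landing position exists — at which point you have essentially reconstructed the paper's $n+1$-branch construction rather than augmented the two-branch one.
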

\begin{proof}
\begin{figure*}[t]
\begin{center}
\includegraphics[trim=0mm 0mm 0mm 0mm, clip=true, scale=0.45]{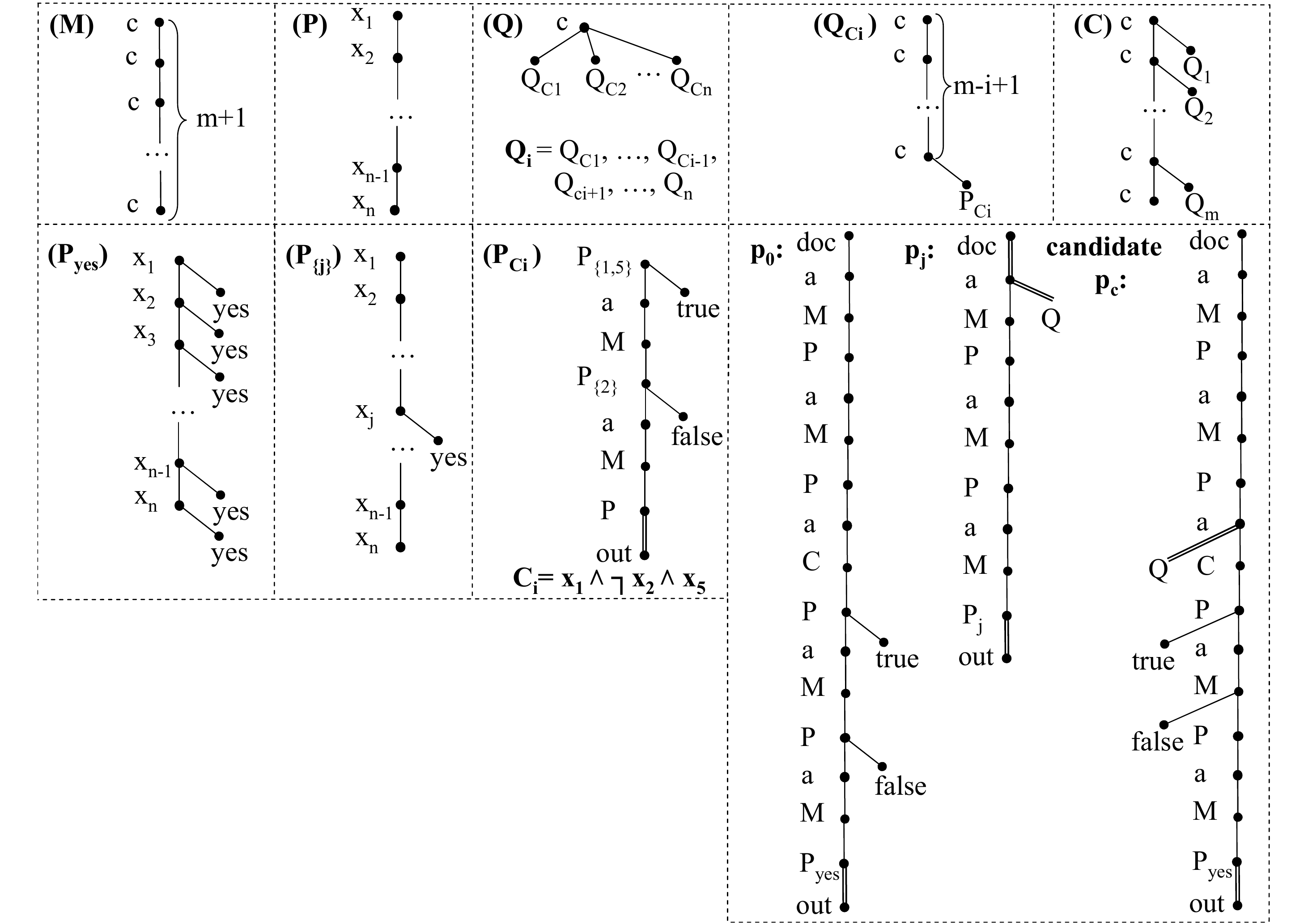}
\end{center}
\caption{The construction for coNP-hardness of union-freedom (\xppdesc). \label{fig:hardness1}}
\vspace{-3mm}
\end{figure*}
We use a  reduction from tautology of 3DNF
formulas, which is known to be coNP-complete. We start from a 3DNF
formula $\phi(\bar x) = C_1(\bar x) \vee C_2(\bar x) \vee \dots
C_m(\bar x) $ over the boolean variables $\bar x = (x_1,\dots x_n)$,
where $C_i(\bar x)$ are conjunctions of literals.

Out of $\phi$, we build patterns $p_0, p_1, \dots, p_n \in \xppdesc$ over
$\Sigma = \{x_1, \dots, x_n, a, c, yes, out\}$ such that the DAG pattern $d=p_0
\cap p_1 \cap \dots \cap p_n$ is union-free iff $\phi$ is a tautology.

We build the patterns $p_0$, $p_1, \dots, p_n$, based on the gadgets $P,  P_{yes}, P_{C_1}, \dots P_{C_m},  M, Q_{C_1}, \dots, Q_{C_m}, C, Q,$ and $P_X$ where $X$ can be any set of one, two or three variables (see Figure~\ref{fig:hardness1}).
 More precisely, these gadgets are defined as follows:

\begin{enumerate}
\item the linear pattern with $m+1$ $c$-nodes, $c/c/\dots/c$ (denoted $M$)

\item the pattern $x_1/x_2/\dots/x_n$ (denoted $P$)

\item the pattern $x_1[yes]/x_2[yes]/\dots/x_n[yes]$ (denoted $P_{yes}$)


\item patterns $P_X$, where $X$ is a set of variables of size at most $3$, obtained from $P$ by putting a $[yes]$ predicate below the nodes labeled by the variables in $X$.

\item for each clause $C_i$,  the pattern $P_{X_t}[true]/a/M/P_{X_f}[false]/a/M/P//out$, where $X_t$ is the set of positive variables in $C_i$ and $X_f$ is the set of negated variables in $C_i$ (this pattern is denoted $P_{C_i}$ ). For instance,  for $C_i=(x_1 \wedge \bar{x_2} \wedge x_5)$, we have the pattern $P_{C_i} =P_{\{1,5\}}[true]/a/M/P_{\{2\}}[false]/a/M/P//out$.

\item for each clause $C_i$, $Q_{C_i}$ denotes the predicate $[c/c/c/\dots/c[P_{C_i}]]$, with $m-i +1$ $c$-nodes,

\item for each $C_i$, the predicate $Q_i=[Q_{C_1}, \dots Q_{C_{i-1}}, Q_{C_{i+1}}, \dots, Q_{C_m}]$, that is the list of all $Q_{C_j}$ predicates for $j \neq i$.
    \item the  pattern $c[Q_1]/c[Q_2]/c[Q_3]/\dots c[Q_m]/c$ (denoted $C$),  the predicate $Q=[Q_{C_1}, \dots, Q_{C_m}]$.
\end{enumerate}

The $n+1$ patterns are then given  in the last section of Figure~\ref{fig:hardness1}.

First, note that no inheritance of predicates occurs in these patters. $Q_{C_i}$ predicates are not inherited in the $C$ part of $p_0$ because that would require some $x_1$-label to be equated with the $c$-label. Similarly, the $P_{yes}$ part of the main branch does not put implicit $Q_{C_i}$ predicates at $c$-nodes either.

We argue that the candidate interleaving  $p_c$ such that $p_c\equiv d$
is unique: $p_c$ is obtained by the code $i$ corresponding to the main
branch of $p_0$, and the function $f_i$ that maps the first $a$-node (the one with a predicate $[.//Q]$)
of each pattern $p_1, \dots, p_n$ in the same image as the third $a$-node of $p_0$ (the parent of the $C$ part). This
is the interleaving that will yield the ``minimal'' extended skeleton
(namely the one of $p_0$), since nodes with a $[yes]$ predicate are coalesced with $p_0$ nodes having already that predicate. All others would at least have additional $[yes]$ predicate branches and even longer main branches and thus cannot not map into $p_c$. Hence no other interleaving can contain $p_c$.

We show in the following that $p_c$ will contain (and reduce) all other
interleavings of $p_0 \cap \dots \cap p_n$ iff $\phi$ is a tautology. Moreover, it is easy see that $p_c$ contains some interleaving $p$ if and only if its $[.//Q]$ predicate can be mapped at the third $a$-node from the root in $p$.

Note now that $p_c$ will contain all
interleavings $p$ that  for at least some pattern $p_j$ ``put'' its  first $a$-node either below or in the
third $a$-node of $p_0$.  This is because $[.//Q]$ would be either explicitly present at the third $a$-node in $p$ or it would be inherited by this node from some $a$-labeled descendant.   

So, the interleavings  that remain be considered are those described by a function
$f_i'$ which takes \emph{all} the first $a$-nodes from $p_1, \dots, p_n$ higher in $p_0$, i.e. in
either the first or the second $a$-node of $p_0$. Each of these interleaving
will basically make a choice between these two $a$-nodes.

For some $p_j$, by choosing to coalesce its first $a$-node with the first $a$-node of $p_0$ we  get an $[yes]$
predicate at the $x_i$ node of the \emph{true} $P$ part of
$p_0$. Similarly, by coalescing with the second $a$-node we get an $[yes]$
predicate at the $x_i$ node of the \emph{false} $P$ part of
$p_0$. So, these $n$ individual choices of where to coalesce  $a$-nodes amount to a truth assignment for the $n$ variables, and in each interleaving the $yes$ predicate will indicate that assignment.

Recall that in order for $p_c$ to contain such an interleaving $p$, it must be possible to map the predicate
$[.//Q]$ of the third $a$-node
of $p_c$  at the third $a$-node of $p$.

We can now argue that $p_0 \cap p_1 \cap \dots \cap p_n \sqsubseteq p_c$ iff $\phi$ is a tautology. The if direction (when each truth assignment $t$ makes at least one clause $C_i$ true) is immediate. For a truth assignment with clause $C_i$ being true, in the corresponding interleaving $p$, the $P_{C_i}$ predicate will hold at the last $c$-node in the $C$  part, hence the $Q_{C_i}$ predicate will hold at the $i$th $c$-node in $C$. Since all other $Q_{C_j}$ predicates, for $j\neq i$, were already explicitly present at this $i$th $c$-node, it is now easy to see that the $[.//Q]$ predicate would be verified at the $a$-labeled ancestor. Hence there exists a containment mapping from $p_c$ into $p$.

The only if direction is similar. If for some truth assignment, none of the clauses is \textsc{true} (in the case $\phi$ is not a tautology), then it is easy to check that $p_c$ will not have a containment mapping into the interleaving $p$ corresponding to that truth assignment. This is because the $[.//Q]$ predicate would not map at the third $a$-node in $p$. 
\end{proof}

We next prove that the coNP lower bound is tight, showing  that union-freedom for  patterns from \xppdesc\  is in coNP (recall that the problem was shown to be tractable for extended skeletons). 
\begin{theorem}
\label{th:hardnessUF-desc-1}
For a pattern $d$ in $\xppdesc^{\cap}$ , the problem of deciding if $d$ is union-free
is in  coNP.
\end{theorem}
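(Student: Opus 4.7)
The plan is to prove the complement problem, non\-/union-freedom, lies in NP. By Lemmas~\ref{lem:cap_sub_cup} and~\ref{lem:equiv_tree_union}, $d$ is union-free iff there exists some interleaving $p^*$ of $d$ with $d \sqsubseteq p^*$ (since $p^* \sqsubseteq d$ holds automatically for any interleaving). By Definition~\ref{def:interleaving}, every interleaving is specified by a code $i$ of length at most $|\mbn{d}|$ together with an onto function $f_i : \mbn{d} \to$ the $\Sigma$-positions of $i$; hence each interleaving has size polynomial in $|d|$ and can be constructed from $(i, f_i)$ in polynomial time.

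My plan is to identify, in polynomial time, a canonical candidate interleaving $p^*$ of $d$, obtained by adapting the rule-based procedure of Section~\ref{sec:rules} so as to maximally collapse the main branch nodes of the parallel branches of $d$ and to attach the //-connected (non-ES) subpredicates permitted in $\xppdesc$ at a canonical position. The core structural claim to establish is: \emph{$d$ is union-free iff $d \equiv p^*$}. Given this, the problem reduces to a single containment check $d \sqsubseteq p^*$, which by Theorem~\ref{thm:containment-dag-tree} is decidable in coNP, yielding the coNP upper bound for union-freedom. For the complement, an NP witness for $d \not\sqsubseteq p^*$ is a polynomial-size tree $t$ together with a node $n \in d(t) \setminus p^*(t)$, verifiable in polynomial time via a DAG embedding into $t$ and a failed tree-pattern matching against $p^*$ (both standard for wildcard-free patterns).

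I expect the main obstacle to be establishing the canonicity claim for the \xppdesc fragment: namely, that if \emph{any} interleaving $p$ of $d$ satisfies $d \equiv p$, then in fact $d \equiv p^*$. For \xppes, the analogous statement is exactly Theorem~\ref{th:completenessUF-es}, which says \proc{Apply-Rules} converges on a tree whenever $d$ is union-free, and that tree plays the role of $p^*$. The added difficulty in \xppdesc is that //-subpredicates with arbitrary nested //-edges can be ``slid'' along the main branch in different interleavings, yielding several a priori candidates for dominance (this flexibility is exactly what the coNP-hardness reduction of Theorem~\ref{th:hardnessUF-desc-2} exploits). The argument should show that whenever a dominant interleaving exists, these //-subpredicates' placements necessarily resolve uniformly into the most permissive attachment point recorded in $p^*$, so that any dominant interleaving must be equivalent to $p^*$; the sliding freedom is therefore compatible with a single canonical PTIME-computed candidate, and the coNP check against it is enough.
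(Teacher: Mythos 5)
Your proposal follows essentially the same route as the paper's proof: construct in PTIME a single canonical candidate interleaving (the paper's $p_c$), show that union-freedom of $d$ is equivalent to $d \sqsubseteq p_c$, and decide that containment in coNP by guessing a witness, exactly as in Theorem~\ref{thm:containment-dag-tree}. The canonicity claim you defer is precisely what the paper supplies, via Lemma~\ref{lem:skel-necessary} (skeleton dominance forces any dominant interleaving to have extended skeleton $s(t_r)//s(i_1)//s(t_o)$ for the skeleton-dominant branch $i_1$) combined with your ``most permissive attachment point'' observation, realized there by mapping each token of $s(i_j)$ to its highest possible image in $i_1$ so that the //-predicates contributed by coalescing are inherited by the main-branch nodes above.
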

\begin{proof}
To show that union-freedom is in coNP we use the following approach: we show that one can always build an interleaving $p_c$ that is the \emph{unique} candidate for $p_c \equiv d$. Then, we can use an argument similar to
the one used in the proof of Theorem~\ref{thm:containment-dag-tree}, to check in coNP if $d\sqsubseteq  p_c$.

If there are no views that have only /-edges in their main branch, we start can with $d$ being a DAG pattern as the one illustrated in Figure~\ref{fig:completeness4}. This is the result of applying  R1 steps until saturation,  on what will be  the root and result tokens of interleavings of $d$.  Let $t_r$ denote the root token (ending with node $n_r$) and let $t_o$ denote the result token (starting with node $n_o$).  We have some branches in parallel $i_1, \dots, i_k, i_{k+1}, \dots, i_{k+l}$, starting and ending at various nodes of $t_r$ and $t_o$, with the first $k$ ones being connected to $n_r$ and $n_o$. We proceed towards building $p_c$. 

Let us first  assume that $l=0$, i.e., all the branches in parallel are connected to $n_r$ and $n_o$. 
\eat{
If $l=0$ (i.e. all the branches start from the last node of $t_r$ and end at first node of $t_o$) we can jump directly to after the claim, with $d'=d$.

If not, we will do first some manipulations on $d$. We use  the same notation as in the proof of Theorem~\ref{th:completenessUF-es}. For $j$ from $k+1$ to $k+l$,  for each $i_{j}$,   $n_{j}^r$ denotes the node in $t_r$ that is sibling of  the first node in $i_j$, and $n_{j}^o$ denotes the node of $t_o$ that is ``parent-sibling'' of $i_j$ (they have the same child node).  By $pr_j$ we denote its maximal token-prefix such that \emph{its extended skeleton} $s(pr_j)$ maps in $\tp{d}{n_{j}^r/\dots/n_r}$. Note that some of the predicates, among those starting by a //-edge, may no map so we may not have a full mapping from $s(pr_j)$ as well.

 Then, for each $i_j$ by $sf_j$ we denote the maximal token-suffix such that $sf_j$ \emph{fully} maps in $\tp{d}{n_o/\dots/n_{j}^o}$.

We can thus write each $i_j$ as $i_j=pr_j//m_j//sf_j,~ for~ j=k+1,l+1$. If, for some $i_j$, $pr_j$ and $sf_j$ overlap then in this case the $m_j$ part is considered empty.

Now, we consider a second DAG pattern $d'$ obtained from $d$ by replacing each $i_j$ branch by the branch $m_j$, connected this time by //-edges to $n_r$  and $n_o$ (similar to Figure~\ref{fig:completeness5}), instead of the parent of $n_{j}^r$ and the child of $n_{j}^o$.

We can prove the following:

\begin{claim}
$d$ is union-free only if $d'$ is union-free. If $d'\equiv p_c$, then $p_c$ is the only interleaving candidate for $p_c \equiv d$.
\end{claim}
\textbf{Proof of the claim:} This is because all other interleavings are either subsumed by those of $d'$, or change the extended skeleton of $t_r$, or add some predicates on $t_o$ that will not be present in all interleavings (the candidate must end exactly by $//t_o$).
\\

To simplify presentation let us in the following rename $m_{k+1},$  $\dots, m_{k+l}$ as  $i_{k+1}, \dots, i_{k+l}$ and then let $m=k+l$. So we now have the branches in parallel $i_1, \dots, i_m$.
}

 We know by Lemma~\ref{lem:skel-necessary} that $d$ is union-free only if its  extended skeleton $s(d)$ is union-free. This in turn means that there exists some branch in $d$, say $i_1$, such that all other $s(i_j)$ map into $s(i_1)$ (Lemma~\ref{lem:n-comp-skel}). So, all the branches in parallel map their extended skeletons into the branch $i_1$, while predicates starting by a //-edge may not map (these are allowed in \xppdesc, contrary to \xppes).

 It is now easy to see that $p_c$ can only have an extended skeleton of the form $s(t_r)//s(i_1)//s(t_o)$ (this is the minimal skeleton). Note that several choices for mapping each $s(i_j)$ into $i_1$ may be available, and each such mapping can be seen as a way of coalescing $i_j$ nodes with the $i_1$ ones. In fact, interleavings that do not use one of these choices cannot lead to $p_c$ (they will no longer yield the minimal candidate extended skeleton).

Among these coalescing choices that do not modify the candidate extended skeleton, it now easy to pick the one that gives the unique candidate $p_c$.  For each $i_j$ let $\psi_j$ denote the mapping of $s(i_j)$ in $i_1$ that uses the \emph{highest possible image} for each token of  $s(i_j)$. We build $p_c$  from $d$ by transformation steps that coalesce each $n \in \mb{i_j}$ with the node $\psi_j(n) \in \mb{i_j}$.

 We argue that the pattern $p_c$ we obtained  is the only interleaving candidate  for $p_c \equiv d$.
\eat{
 \begin{claim}
 $d$ is union-free only if $p_c \equiv d$.
 \end{claim}
\textbf{Proof of the claim:}  
} This is because all other interleavings that do not modify the candidate extended skeleton will be subsumed (as contained interleavings) by this one. This is because all the predicates that were missing from $i_1$ and were added by coalesce steps must start by a //-edge. If they would be added  below this  first possible image, they would anyway be inherited by the main branch nodes above.

 Now, since we have a clear candidate $p_c$, obtained in polynomial time, we can guess a witness interleaving $p_w$ of $d$ such that $p_w \not \sqsubseteq p_c$ in polynomial time.
 
 We now consider the case when, after the initial phase of R1 steps,  $l\neq 0$, i.e., some of the branches in parallel are not from  $n_r$ to $n_o$. In this case, we  can advance towards the unique candidate interleaving and a new DAG as the one discussed above by  the following coalescing choices. For $j$ from $k+1$ to $k+l$,  for each $i_{j}$,   let $n_{j}^r$ denote the node in $t_r$ that is sibling of  the first node in $i_j$, and let $n_{j}^o$ denote the node of $t_o$ that is ``parent-sibling'' of $i_j$ (they have the same child node).  Let  $pr_j$ denote its maximal token-prefix such that \emph{its extended skeleton} $s(pr_j)$ maps in $\tp{d}{n_{j}^r/\dots/n_r}$.  (Note that some of the predicates - among those starting by a //-edge - may not fully map.)

 Then, for each $i_j$, by $sf_j$ we denote the maximal token-suffix such that $sf_j$ \emph{fully} maps in $\tp{d}{n_o/\dots/n_{j}^o}$. We can thus write each $i_j$ as $i_j=pr_j//m_j//sf_j,~ for~ j=k+1,l+1$. If, for some $i_j$, $pr_j$ and $sf_j$ overlap then in this case the $m_j$ part is considered empty.

Now, we consider the  DAG pattern $d'$ obtained from $d$ by replacing each $i_j$ branch by the branch $m_j$, but connected this time by //-edges to $n_r$  and $n_o$.  We   argue that $d$ is union-free only if $d'$ is union-free, and that if $d'\equiv p_c$, then $p_c$ is also the only interleaving candidate for $p_c \equiv d$. This is because all other interleavings are either subsumed by those of $d'$, or change the extended skeleton of $t_r$, or add some predicates on $t_o$ that will not be present in all interleavings (the candidate must end exactly by $//t_o$). From this point on, we can continue with the initial  line of reasoning, over $d'$.

Finally, the case when at least one of the views has no //-edges in the main branch can be handled  by a similar construction of the unique candidate interleaving. 
\end{proof}

It is now easy to show that the same  complexity lower bound holds for the rewriting problem. 
\eat{
\begin{theorem}\label{thm:hardness-rewriting-desc-1}
For queries and views from
\xp, the rewriting problem is in coNP.
\end{theorem}
\begin{proof}
For each node of \mb{q} in which some of the views ${\cal V}_1
\subseteq {\cal V}$ map, it is enough to guess one interleaving of
$\bigcap_{{\cal V}_1} v_j$ in which $q$ does not map.  If we put
together a polynomial number of polynomially large witnesses, they
make up a polynomial witness for the entire problem. In other words,
one can verify in polynomial time that there is no rewriting.
\end{proof}
}
\begin{theorem}\label{thm:hardness-rewriting-desc-2}
 For queries and views from
$\xppdesc$~, the rewriting problem is  coNP-hard.
\end{theorem}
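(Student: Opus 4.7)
I would reduce from tautology of 3DNF formulas, reusing the construction from the proof of Theorem~\ref{th:hardnessUF-desc-2}. Given a 3DNF formula $\phi$, build the DAG pattern $d = p_0 \cap p_1 \cap \dots \cap p_n$ as in that proof, where each $p_j \in \xppdesc$ and $p_c \in \xppes$ is the unique candidate interleaving identified there; recall $d \equiv p_c$ iff $\phi$ is a tautology. The reduction takes as input query $q := p_c$ and as views $\mathcal{V} := \{p_0, p_1, \dots, p_n\}$. The claim is that $q$ admits a rewriting using $\mathcal{V}$ iff $\phi$ is a tautology.

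The $(\Leftarrow)$ direction is immediate: if $\phi$ is a tautology, then $d \equiv p_c = q$, so the plan $r := \bigcap_{j=0}^{n} \doc{v_j}/v_j$ unfolds to $d$ and is a valid rewriting.

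The hard part is the $(\Rightarrow)$ direction: I must rule out ``shortcut'' rewritings that use only a strict subset of views, or exotic compensations, and happen to coincide with $q$ even when $\phi$ is not a tautology. To handle this cleanly, I would augment the construction with fresh distinct marker labels $\ell_0, \ell_1, \dots, \ell_n$ outside the alphabet used by the original construction: attach a child predicate $[\ell_j]$ to the first main-branch node of each view $p_j$, and attach the conjunction $[\ell_0][\ell_1]\cdots[\ell_n]$ to the first main-branch node of $q$. Since $\ell_j$ appears only in $v_j$, any rewriting of $q$ must include $v_j$ in its intersection in order to satisfy $[\ell_j]$, thereby forcing every view to participate. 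The augmentation does not affect the union-freedom analysis of Theorem~\ref{th:hardnessUF-desc-2}, since the marker predicates sit on the (unique) first main-branch node that is shared across all interleavings. Once every view is forced into the intersection, the unfolding of any candidate rewriting (as enumerated by the complete algorithm \algoname of Theorem~\ref{th:completeness}) coincides with $d$ modulo trivial compensation at the output node; hence $d \equiv q$ must hold, which by Theorem~\ref{th:hardnessUF-desc-2} forces $\phi$ to be a tautology. The main obstacle I anticipate is verifying that the marker augmentation truly prevents all non-trivial compensation tricks, i.e., that no combination of compensating above or below the intersection can substitute for a missing view; this follows from the fact that each $[\ell_j]$ lives on a node above the intersection point (the output) and so cannot be introduced by post-intersection navigation.
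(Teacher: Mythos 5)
Your reduction is the same one the paper uses (input query $q:=p_c$, views $\{p_0,\dots,p_n\}$, correctness hinging on Theorem~\ref{th:hardnessUF-desc-2}), but the way you close the $(\Rightarrow)$ direction has a gap, and the marker gadget you introduce does not close it. First, the gadget is unnecessary: sub-plans that drop views are harmless by monotonicity of intersection. For any subset $S$ of the views, $\unfold{\bigcap_{j\in S} v_j} \sqsupseteq d \sqsupseteq q$, so if such a plan were equivalent to $q$ then $d$ would be sandwiched between $q$ and something equivalent to $q$, forcing $d\equiv q$ anyway. So there is nothing to ``force'' about which views participate.

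The genuinely dangerous case is the opposite one, and your argument does not address it: compensation applied to individual views makes the unfolding \emph{strictly smaller} than $d$ (or than $d_S$). If some view could map its output node to an internal node $b$ of $q$, \algoname would append $\sub{q}{b}$ to it, and the resulting plan $r$ satisfies $\unfold{r}\sqsubseteq d$; then $\unfold{r}\equiv q$ only yields $q\sqsubseteq d$ (trivially true) and no longer implies $d\sqsubseteq q$, so you cannot conclude that $\phi$ is a tautology. Your remark that the markers ``cannot be introduced by post-intersection navigation'' only shows that compensation cannot fake a marker; it does not show that compensation cannot occur. The paper closes this with a one-line observation you omit: every view ends in an \snippet{out}-labeled output node, and the only \snippet{out}-labeled node of $q=p_c$ is $\out{q}$, so every root-mapping of a view into (a lossless prefix of) $q$ sends the view's output to $\out{q}$ and the compensation is empty. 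Hence every candidate plan unfolds to a super-pattern of $d$, and $r$ is a rewriting iff $d\equiv q$ iff $d$ is union-free iff $\phi$ is a tautology. Adding that observation makes your proof go through --- at which point the marker augmentation (which would anyway oblige you to re-verify the union-freedom analysis of Theorem~\ref{th:hardnessUF-desc-2} on the modified patterns, something you only assert) can be dropped entirely.
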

\begin{proof}
We can use the same construction as in the proof
of Theorem~\ref{th:hardnessUF-desc-2}, for a reduction from tautology
of a formula $\phi$. We define $n+1$ views, $v_0 =p_0$, $v_1=p_1,
\dots, v_n=p_n$. We define $q$ as $q=p_c$, for $p_c$ being the unique
candidate interleaving for the DAG $d=\dagp{\unfold{v_0 \cap v_1 \cap
\dots \cap v_n}}$. Moreover, it is easy to see that the only rewrite
plan that has chances to be a rewriting is $r = v_0 \cap v_1 \cap
\dots \cap v_n$ (the output node of each view can only be mapped in
the output node of $q$). From this, it follows that $r$ is an
equivalent rewriting iff $d \equiv q$ iff $d$ is union-free iff the
formula $\phi$ is valid. This shows that deciding the existence of an
equivalent rewriting $r$ for queries and view from \xppdesc\ is
coNP-hard.
\end{proof}
%
\eat{
\vspace{-3mm}
\noindent  \textbf{Point (2).}
The fact that the problem is in coNP was already discussed
(see proof of Theorem~\ref{thm:hardness-rewriting-desc}(1)).

For coNP-hardness, we will use the same construction as in the proof
of Theorem~\ref{th:hardnessUF-desc}(2), for a reduction from tautology
of a formula $\phi$.
\end{proof}
}

\section{Tractability frontier: Rewrite-plans for PTIME}
\label{sec:frontier2}
We also identify a large class of rewrite plans that lead to
PTIME completeness.

We say that two (or several) tree patterns are \emph{akin} if their
root tokens have the same main branch codes. For instance, while the
views $v_1$ and $v_2$ from our example are not akin, $v_1$ is
akin to:
$$v_2': \snippet{doc(``L'')//figure[.//caption//label]//subfigure/image[ps].}$$
Under the assumption of akin views, we can relax the syntactic restrictions  of the \xppes\ fragment for tractability and accept
the class of patterns $\xppdesc$.
\eat{
obtained from extended
skeletons by  allowing  predicates that are connected by
a //-edge to the main branch (such as in $v_2'$) and freely allowing  //-edges in these predicates. Obviously, $\xppes \subsetneq \xppdesc$.
 We denote by $\xppdesccap$ the fragment of $\xpcap$ in which only $\xppdesc$ expressions are used.} 
Our main result for restricted rewrite plans is  following (for readability purposes, the proof is given in Section~\ref{sec:completenessUF-desc-akin}):

\begin{theorem}\label{th:completenessUF-desc-akin}
For DAG patterns of the form $d = \bigcap_j p_j$, where all $p_j$ are in \xppdesc~ and akin,
$d$ is union-free iff the algorithm~\proc{Apply-Rules} rewrites $d$
into a tree.
\end{theorem}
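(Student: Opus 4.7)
The ``if'' direction is immediate from Proposition~\ref{l:step-soundness}: if \proc{Apply-Rules} produces a tree $p$, then $d \equiv p$, so $d$ is union-free by definition. The real work is the ``only if'' direction, for which my plan is to adapt the strategy used for Theorem~\ref{th:completenessUF-es} (proved in Section~\ref{sec:proof1}), with the akinness assumption taking over the role played there by the extended-skeleton restriction.

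The first step is to exploit akinness structurally. Since all $p_j$ share the same root-token code, repeated application of R1 collapses all nodes of the root tokens into a single linear /-path at the top of the DAG. After this phase, $d$ takes the canonical shape of a single root token followed by a collection of parallel sub-DAGs $q_j$ meeting again at the result token (and possibly in intermediate main-branch nodes). I would then construct, as in the proof of Theorem~\ref{th:hardnessUF-desc-1}, a unique dominant candidate interleaving $p_c$: for each $q_j$, choose the mapping of $s(q_j)$ into a fixed ``reference'' branch $q_1$ that uses the highest possible image for each token, and collapse main-branch nodes along this mapping. Assuming $d$ is union-free, $p_c \equiv d$ must hold; the task is to show that R2--R9 actually discover $p_c$.

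The rules are designed exactly for this. R3 and R4 enforce the alignments forced by the equivalent-/-path test; R6 handles the similar-paths case, which under akinness becomes the common configuration; R7 and R8 remove branches that the mapping into $q_1$ renders redundant; and R5/R9 copy precisely those predicates whose presence is implied in every interleaving, so that subsequent applications of R7/R8 can fire. I would therefore proceed by induction on the number of main-branch nodes outside the already-collapsed root token, showing that as long as the DAG is not yet a tree, at least one of R2--R9 is applicable, and its application strictly decreases the induction measure while preserving union-freedom (the latter by Proposition~\ref{l:step-soundness}). The base case is a single main-branch path, i.e.\ a tree.

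The main obstacle is the handling of //-subpredicates that are permitted by \xppdesc but forbidden by \xppes. Such predicates can in general be ``inherited'' along descendant edges, which is exactly the phenomenon that made union-freedom coNP-hard in Theorem~\ref{th:hardnessUF-desc-2}. The akinness assumption neutralizes this: because the root tokens of all $p_j$ are aligned code-wise, every //-subpredicate anchored above the branching point has a canonical home in the final DAG and cannot be ``shifted'' into distinguishing two different interleavings. Concretely, I would prove an analogue of Lemma~\ref{lem:skel-necessary} stating that, under akinness, the interleavings of $d$ are in bijection (up to containment) with those of the DAG obtained from $d$ by pruning its non-\xppes //-subpredicates, thereby reducing the analysis to the \xppes case already handled by Theorem~\ref{th:completenessUF-es}. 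Termination and PTIME bounds come for free from Lemma~\ref{l:termination}, so completing this reduction lemma is the only remaining step.
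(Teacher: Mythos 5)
Your overall scaffolding (soundness gives the ``if'' direction; akinness plus R1 aligns the root tokens; then identify a unique dominant candidate interleaving and show the rules find it) matches the paper's strategy, but the step you defer to --- the reduction lemma stating that under akinness the interleavings of $d$ are in bijection up to containment with those of the DAG obtained by pruning the non-\xppes\ //-subpredicates --- is not a routine remaining step: as stated it is false, and it is precisely where the entire difficulty of the theorem lives. Lemma~\ref{lem:skel-necessary} is deliberately one-directional ($d$ union-free implies $s(d)$ union-free); the converse fails because the containment relations among interleavings of $d$ genuinely depend on \emph{where} the //-subpredicates land. A //-predicate $[.//P]$ attached to a main-branch node is inherited upward, so a candidate dominant interleaving must place it as high as possible, and whether that candidate contains all other interleavings reduces to whether that placement is forced in every interleaving --- exactly the question that encodes tautology in Theorem~\ref{th:hardnessUF-desc-2}, where the offending predicates sit \emph{below} the branching point (at the $a$-nodes), not above it, so your observation that predicates ``anchored above the branching point have a canonical home'' does not neutralize the problem. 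Your induction (``some rule among R2--R9 always applies until a tree is reached'') likewise just asserts the hard half: the real obligation is to show that when no rule applies and the DAG is not a tree, $d$ is \emph{not} union-free.

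The paper discharges that obligation by a direct witness construction rather than a reduction to \xppes. In the special case where result tokens also share a main branch, it fixes a reference branch $p_1$ whose skeleton receives all the others, observes that any candidate $c$ with $c \equiv d$ must have skeleton $s(p_r//p_1//t_o)$, and then, for an arbitrary such candidate determined by a collapsing function $f_i$, locates the \emph{lowest} node $n_c$ of $c$ carrying a //-predicate $st$ not already present at the corresponding node of $p_1$ (such a node must exist, else R7 would have fired and produced a tree). It then invokes Lemma~\ref{lem:n-comp-skel} on auxiliary skeleton patterns built from the token-prefixes above $n_c$ to manufacture an explicit interleaving $w$ of $d$ in which $st$ does not hold at that position, whence $w \not\sqsubseteq c$ and $d$ is not union-free; the general case is then reduced to this one by reconnecting the parallel branches at the top of the result token. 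If you want to salvage your plan, you should replace the pruning bijection with this kind of explicit witness argument (or with a proof that R5/R9 propagate exactly the //-predicates that are forced in every interleaving, which is the rule-level counterpart of the same fact).
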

 From this, it follows that:
\begin{corollary}[\xppdesc]
\label{c:completeness-//}
\efficient always finds a rewriting for \xppdesc\ ,
provided there is at least a rewriting $r$ such that the patterns intersected in
\unfold{r} are akin.
\end{corollary}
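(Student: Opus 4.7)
The plan is to deduce the corollary as a direct consequence of Theorem~\ref{th:completenessUF-desc-akin}, coupled with the completeness of \algoname established in Theorem~\ref{th:completeness}. The global strategy is to show that for the particular prefix of $q$ that witnesses the akin rewriting, \efficient's stricter test at line~($7'$) passes because \proc{Apply-Rules} is guaranteed by Theorem~\ref{th:completenessUF-desc-akin} to rewrite the corresponding DAG into a tree.

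First I would fix an akin \xppdesc\ rewriting $r$ and identify the associated lossless prefix $p$ of $q$ --- namely the one compensated at line~\ref{li:ret-rw} of \algoname. Each view participating in $r$ admits a root-mapping into $p$ landing at some main-branch node, so these views all appear in the collection $\id{Prefs}$ at prefix $p$. When \efficient reaches this iteration, the DAG $d$ built at line~5 is the intersection of all applicable views at $p$; under the corollary's hypothesis this intersection is (or reduces to) an intersection of akin \xppdesc\ patterns.

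Second, I would observe that $d$ is union-free at this iteration: since $d$ dominates the sub-intersection arising from $r$, while every added applicable view must still contain $p$ (because there is a root-mapping from the view into $p$), we obtain $d \equiv p$, hence $d$ is equivalent to a single tree pattern. With $d$ akin, \xppdesc, and union-free, Theorem~\ref{th:completenessUF-desc-akin} applies and guarantees that \proc{Apply-Rules}$(d)$ terminates with a tree. By soundness of the rules (Proposition~\ref{l:step-soundness}) this tree is equivalent to $d$, so the modified test at line~($7'$) --- first verify $d$ is a tree, then $d \sqsubseteq p$ --- succeeds on both conjuncts. The algorithm then returns $\func{compensate}(r, q, \out{p})$, which by Theorem~\ref{thm:soundness} is a valid rewriting of $q$.

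The main subtle point is reconciling the hypothesis ``the patterns intersected in \unfold{r} are akin'' with the fact that \efficient builds the \emph{full} intersection of applicable views at $p$, not just the (possibly strict) subset appearing in $r$. The cleanest reading is that the hypothesis extends to this full intersection: any view applicable to $p$ must have a root token compatible with $p$'s main-branch code in the relevant region, which in practice forces akinship of all applicable views. Should one wish to allow stray non-akin views in the full intersection, an additional argument is required, in which one shows that rule~R7 (and its companions in \proc{Apply-Rules}) eliminates semantically-redundant parallel branches contributed by such views, reducing the DAG back to the akin form governed by Theorem~\ref{th:completenessUF-desc-akin}. Either way, the crux of the proof remains the single invocation of Theorem~\ref{th:completenessUF-desc-akin}.
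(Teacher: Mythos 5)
Your overall route is the one the paper itself takes: the corollary appears with no separate proof, presented as an immediate consequence of Theorem~\ref{th:completenessUF-desc-akin}, and your derivation that the DAG $d$ built for the witnessing prefix $p$ is union-free (namely $d \equiv p$, since $p$ is contained in every compensated applicable view while $d$ is contained in the unfolding of the akin rewriting, which is itself equivalent to $p$) is correct and supplies exactly the glue the paper leaves implicit.

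The genuine gap is the one you flag yourself and then fail to close. Theorem~\ref{th:completenessUF-desc-akin} applies only to intersections in which \emph{all} branches are pairwise akin, whereas \efficient intersects \emph{every} compensated view applicable to $p$, not just those occurring in $r$. Your reading (a) is false: a root-mapping of a view into $p$ only forces the view's root-token code to be a \emph{prefix} of $p$'s root-token code, so two applicable views can have root tokens of different lengths (root tokens \snippet{doc/a} and \snippet{doc/a/b} against a prefix whose root token is \snippet{doc/a/b/c}), and a view whose main branch starts with a //-edge below the document node is applicable to essentially any prefix while being akin to almost nothing. Your reading (b) is an appeal to R7 without justification: R7 deletes a parallel branch only when its entire tree pattern maps into the sub-DAG below the branch point with main-branch nodes landing on a parallel main branch, and an extra applicable view is only guaranteed to \emph{contain} $p$, which yields a mapping in the opposite direction, not the one R7 needs. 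So as written the argument does not establish that $\proc{Apply-Rules}(d)$ returns a tree for the full intersection. To finish one must either prove a strengthening of Theorem~\ref{th:completenessUF-desc-akin} that tolerates additional redundant non-akin branches, or run the argument on the akin subset of views only (as \allrw does by iterating over subsets $W \subseteq U$); identifying and carrying out one of these repairs is the actual content of the corollary's proof.
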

\eat{
\textbf{Observation for the restricted PTIME fragments.} For input queries $q$ from  \xppes ~or \xppdesc, in order to ensure completeness for~\proc{Apply-Rules} when applied to  $q$'s  lossless prefixes as well, we need to adjust their  construction so that these prefixes  remain in the same fragment. We adapt the definition of lossless prefix to these two fragments of \xp as follows, for  the lossless prefix $p[\dots]$:
\begin{enumerate}
\item for the fragment \xppes: (a) if $q$ is of the form $p/t//r$ (where $t$ is not empty and $r$ may be empty),  add only $[t]$ as side branch on \out{p} (instead of $[t//r]$), (b) if $q$ is of the form $p//r$ (for $r$ not empty), no additional predicates need to be added on \out{p}.
    \item for the fragment \xppdesc: (a) if $q$ is of the form $p/t//r$ (where $t$ is not empty and $r$ may be empty), add as side branches on \out{p} both $[t]$ and  $[.//t//r]$, (b) if $q$ is of the form $p//r$ (for $r$ not empty), add  $[.//r]$ as side branch on \out{p}.
\end{enumerate}
It is easy to check that, with this adjustment, the lossless prefixes remain in the corresponding fragment and, importantly,  necessary properties used in the general completeness proof for \algoname (see for instance beginning of Claim~\ref{c:completeness})   remain valid.  Further details are omitted.  
}
\eat{
-  \emph{``if there exists a containment mapping between $i_j'$ and $i_l'$, then there must exist a root-mapping between the corresponding $i_j$ and $i_l$''}  - remains true (no new predicate matching opportunities are brought by the compensation step). 
}

Once again, we can show that dropping the restriction of akin patterns leads to hardness for  both union-freedom and rewriting using views.
 \begin{theorem}
\label{th:hardnessUF-desc-3} For a pattern $d = \bigcap_j p_j$, where all $p_j$ are in \xp and
akin, deciding if $d$ is union-free is coNP-hard.
\end{theorem}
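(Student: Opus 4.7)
The plan is to adapt the coNP-hardness reduction from Theorem~\ref{th:hardnessUF-desc-2} so that all patterns intersected in the DAG share a common root token, thereby becoming akin. Observe that the statement here allows the $p_j$ to come from the full fragment \xp (broader than \xppdesc), which actually gives us more flexibility than before: predicates may freely contain //-edges in arbitrary configurations. The only new difficulty is the akin constraint.

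First, I would reduce from 3DNF-tautology in the same way as in Theorem~\ref{th:hardnessUF-desc-2}. Starting from a 3DNF formula $\phi = C_1 \vee \dots \vee C_m$ over variables $x_1,\dots,x_n$, I would build $n+1$ patterns $p_0,p_1,\dots,p_n$ and then set $d = \bigcap_j p_j$ so that $d$ is union-free iff $\phi$ is a tautology. The simplest way to force akinness is to make the root token \emph{trivial and identical} for every $p_j$: prepend each $p_j$ with $\snippet{doc(A)//}$, i.e., have the root of the document be connected to the rest by a //-edge. The root token then reduces to $\snippet{doc(A)}$ in all patterns, so the patterns are vacuously akin.

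Second, I would reconstruct the combinatorial machinery of Theorem~\ref{th:hardnessUF-desc-2} (the variable-selection $a$-nodes in $p_0$, the single $a$-predicate-bearing node in each $p_j$, the gadgets $P$, $P_{yes}$, $M$, $C$, $Q_{C_i}$, and $Q$, and the output node $\mathit{out}$) below the shared //-edge. Because we now work in the full \xp, any \xppdesc\ gadget from the old proof is still legal, so the reduction transfers essentially verbatim. The only adjustment is in $p_0$: instead of beginning by $\snippet{doc(A)/P_{yes}\dots}$, I start with $\snippet{doc(A)//P_{yes}\dots}$ (and similarly for the $p_j$). Correctness then follows exactly as before: there is a unique candidate interleaving $p_c$ (obtained by mapping each $p_j$'s distinguished $a$-node to the top choice $a$-node of $p_0$), and $p_c$ contains a given interleaving associated with a truth assignment $t$ iff the predicate $[.//Q]$ holds at the relevant $a$-position, iff some clause is satisfied by $t$. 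Hence $d\equiv p_c$ iff $\phi$ is a tautology.

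The main obstacle is that with a trivial root token, the //-edge at the top could a priori enable unintended coalescings between main-branch nodes of distinct $p_j$'s (coalescings that did not exist when the original reduction used a non-trivial /-prefix to anchor the $a$-nodes at specific depths). To prevent this, I would use fresh labels for the "anchoring" nodes in $p_0$ — in particular, distinct labels for the $a$-nodes playing the role of choice points, and for the $c$-chain of $C$ — so that only the intended images are available under any mapping. A careful choice of the alphabet ensures that in every interleaving of $d$ each $p_j$'s choice-$a$-node can be coalesced with exactly the same pair of positions in $p_0$ as in the original proof (true-slot vs.~false-slot), preserving the $2^n$ correspondence between interleavings and truth assignments. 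Once this is verified, the unique-candidate and clause-satisfaction arguments from Theorem~\ref{th:hardnessUF-desc-2} apply unchanged, yielding coNP-hardness of union-freedom under the akin restriction.
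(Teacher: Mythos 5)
There is a genuine gap, and it is structural rather than a matter of detail. Your modified patterns are obtained from the gadgets of Theorem~\ref{th:hardnessUF-desc-2} by changing only a main-branch edge of $p_0$ from $/$ to $//$; since membership in \xppdesc\ constrains predicates and not the main branch, all of $p_0,\dots,p_n$ remain in \xppdesc, and after the change they are akin. But Theorem~\ref{th:completenessUF-desc-akin}, combined with Lemma~\ref{l:termination}, shows that union-freedom for intersections of akin \xppdesc\ patterns is decidable in PTIME by \proc{Apply-Rules}. So your reduction lands in a provably tractable class and cannot be answer-preserving (unless $\mathrm{P}=\mathrm{coNP}$). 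The statement you are proving is hard precisely because it lets the $p_j$ use what \xp has and \xppdesc\ lacks --- predicates attached to the main branch by a /-edge and containing //-edges that violate the extended-skeleton condition (the situation rule R9 exists for) --- and you observe that this extra power is available but then never use it. A correct ``similar'' construction must relocate the assignment/clause machinery into such predicates, since after imposing akinness this is the only remaining source of intractability.

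Concretely, here is where your reduction breaks. The edge $\snippet{doc(A)/}\dots$ at the top of the original $p_0$ anchors $p_0$'s long /-token at the root, so in every interleaving each $p_j$'s selector token must either coalesce into $p_0$'s spine (at one of the designated $a$-positions) or slide below it; the case analysis of Theorem~\ref{th:hardnessUF-desc-2} covers exactly these options. Once you write $\snippet{doc(A)//}\dots$, a third option appears: an interleaving may place a $p_j$'s entire selector token strictly \emph{above} $p_0$'s spine, in the new gap created by the top //-edge. In such an interleaving the $[.//Q]$ predicate sits only at $a$-nodes that are ancestors of the spine, so it is neither explicitly present at nor inherited by the third $a$-node (inheritance flows up only from descendants), and no $[yes]$ marks are deposited on the true/false $P$ parts; hence the candidate $p_c$ does not contain this interleaving even when $\phi$ is a tautology, and the ``tautology implies union-free'' direction fails. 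Your proposed fix --- fresh labels for the anchoring nodes --- does not help, because the failure is not an unintended coalescing that relabeling could forbid, but an unintended \emph{non}-coalescing (insertion above the spine), which a //-edge always permits.
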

\begin{proof} Similar to the proof of Theorem~\ref{th:hardnessUF-desc-2}.
\end{proof}
It follows easily that the same lower-bound holds   for the rewriting problem as well.
\begin{theorem}\label{thm:hardness-rewriting-desc-3}
For queries and views from
\xpp, deciding the existence of a rewriting $r$ such that the patterns intersected in \unfold{r} are akin is
coNP-hard.
\end{theorem}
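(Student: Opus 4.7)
The plan is to reduce from the union-freedom problem for akin patterns from \xpp (shown coNP-hard in Theorem~\ref{th:hardnessUF-desc-3}), in a manner that closely mirrors the reduction used for Theorem~\ref{thm:hardness-rewriting-desc-2}. The intuition is that the rewriting problem is at least as hard as union-freedom on the corresponding intersection, because when the only plausible rewrite plan is forced to be a particular intersection of views, equivalence of that plan to the query reduces exactly to the union-freedom of its unfolding.

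Concretely, I would start from the construction underlying Theorem~\ref{th:hardnessUF-desc-3}: given a 3DNF formula $\phi$, build akin patterns $p_0, p_1, \ldots, p_n \in \xpp$ such that $d = p_0 \cap p_1 \cap \cdots \cap p_n$ is union-free iff $\phi$ is a tautology. I would then define the view set ${\cal V} = \{v_0, v_1, \ldots, v_n\}$ with $v_j = p_j$, and take as the input query $q$ the unique candidate interleaving $p_c$ of $d$ (this unique candidate is identified by essentially the same construction used in the proof of Theorem~\ref{th:hardnessUF-desc-1}, adapted to the akin case). By construction, $q$ has a root-mapping into each $v_j$ and, moreover, the output node of each $v_j$ can only be mapped to the output node of $q$, which pins down both the set of views that can appear in an akin rewrite plan and the compensation (none is needed).

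The key step is then to argue that the \emph{only} akin rewrite-plan candidate whose unfolding could be equivalent to $q$ is $r = v_0 \cap v_1 \cap \cdots \cap v_n$: dropping any $v_j$ strictly loses selectivity of the candidate interleaving $p_c = q$, so any smaller intersection cannot be contained in $q$. Given this, $r$ is a rewriting iff $\unfold{r} = d \equiv q = p_c$ iff $d$ is union-free iff $\phi$ is a tautology. Since the reduction is polynomial and the views $p_0, \ldots, p_n$ are akin by construction, coNP-hardness of the rewriting problem under the akin-plan restriction follows.

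The main obstacle I expect is the ``uniqueness of the rewriting candidate'' argument: one has to rule out any plan that uses a strict subset of the views (or a different compensation), and one has to ensure that the gadget patterns from the proof of Theorem~\ref{th:hardnessUF-desc-3} are arranged so that the output node is reachable in a single, rigidly determined way from every $v_j$. This forces a careful choice of labels and main-branch structure in the gadgets so that no alternative mapping of the $v_j$ output nodes into $q$ is possible; once this rigidity is established, the rest of the reduction mirrors that of Theorem~\ref{thm:hardness-rewriting-desc-2}.
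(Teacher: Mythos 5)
Your proposal matches the paper's proof: the paper also reduces from tautology via the construction of Theorem~\ref{th:hardnessUF-desc-3} (the akin variant of the union-freedom gadgets), sets the views to be the gadget patterns $p_0,\dots,p_n$, takes $q$ to be the unique candidate interleaving $p_c$, argues that the only viable rewrite plan is the full intersection (since each view's output node can only map to $\out{q}$), and concludes that a rewriting exists iff $d$ is union-free iff $\phi$ is a tautology. This is essentially the same approach, so no further comparison is needed.
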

\begin{proof} Similar to the proof of Theorem~\ref{thm:hardness-rewriting-desc-2}.
\end{proof}

\section{Nested Intersection}\label{sec:nestedinter}
We have considered so far the  \xpcap formalism, for rewritings that may first compensate the views, may then perform an intersection, and finally may compensate the result of this intersection.  Such rewritings may not be sufficient for certain input queries, as illustrated in the following example. 
\begin{example} 
Let us assume that we have a view $v_1$, that
retrieved all sections from papers :
$$v_1: \snippet{doc(``L'')//paper//section}$$
The result of $v_1$ is stored in the cache as a materialized view, rooted at 
an element named $v_1$.
Later, the query processor had to answer another XPath $v_2$ looking
for sections having at least one theorem:
$$v_2: \snippet{doc(``L'')//section[theorem]}$$
The result of $v_2$ is not contained in that of $v_1$, so it is also
executed and its answer cached.

A third view has to be processed then, looking for library images in figures:
$$v_3: \snippet{doc(``L'')/lib//figure/image}$$
and $v_3$ has to be computed as well, since it cannot be answered using the results of $v_1$ and $v_2$.

Let us now look at an incoming query $q$, asking for all
 images in figures that appear in sections of papers with theorems:
$$q: \snippet{doc(``L'')/lib//paper//section[theorem]//figure/image}$$
However, by intersecting the results of the first two views  we get the right section elements. Then, we can further navigate inside them, and intersect their image descendants with the images selected by $v_3$. So  one can build a
rewriting equivalent to $q$:
$$r_2: \snippet{(doc(``$v_1$'')/$v_1$ $\cap$ doc(``$v_2$'')/$v_2$)//figure/image $\cap$ doc(``$v_3$'')/$v_3$}$$
\end{example}
In the following, we will consider an extension \xpint of $\xp$ with respect
to intersection, which includes \xpcap, and allows us to handle rewritings with arbitrary many levels of intersection and compensation over views (i.e., \emph{nested intersections}). The grammar of \xpint is obtained from that of \xpcap by adding the rule
\begin{eqnarray*}
\jpath &::=& \ipath~|~(\jpath) | ~\jpath \cap \cpath  |  ~(\jpath)/\rpath~|~(\jpath)//\rpath, 
\end{eqnarray*}
having the straightforward semantics.

Let us notice that \xpint queries can also be represented by DAG
patterns having the particular property that if there are two distinct main
branches from a node $n_1$ to another node $n_2$, then $n_1$
must be the root of the DAG.

As before,  given a query
$r \in \xpint$ over the view documents $D_{\cal V}$, we define
\unfold{r} as the \xpint query obtained by replacing in $r$ each
$\doc{``V"}/V$ with the definition of $V$. The notion of rewriting also extends naturally.
\begin{definition}\label{def:equiv-nested-rw}
 For a given XML document $D$, an \xpint query $q$ and \xpint views ${\cal
 V}$ over $D$, an \xpint-\emph{rewrite plan} of $q$ using ${\cal V}$ is
 a query $r \in \xpint$ over $D_{\cal V}$.

 If $\unfold{r} \equiv q$, then we also say r is
 an \xpint-\emph{rewriting}.
\end{definition}
We can show that this more general flavor of rewritings is no more expensive to evaluate (over view documents) than the \xpcap ones. 
\begin{lemma}\label{l:nested-rwplan-eval}
An \xpint-rewrite plan can be evaluated over a set of view documents
$D_{\cal V}$ in polynomial time in the size of $D_{\cal V}$.
\end{lemma}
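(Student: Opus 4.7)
The plan is to proceed by structural induction on the \xpint query, essentially generalizing the argument already sketched for Lemma~\ref{l:rwplan-eval}. For any sub-expression $s$ of the plan $r$, I want to show that: (i) the intermediate result $\llbracket s \rrbracket$ is a set of nodes (arity~1) whose cardinality is bounded by $|D_{\cal V}|$, and (ii) given polynomial-time access to the results of the immediate sub-expressions of $s$, the value $\llbracket s \rrbracket$ can be computed in polynomial time in $|D_{\cal V}|$. Since $r$ has size independent of the data and the evaluation tree of $r$ mirrors its parse structure, combining (i) and (ii) at each internal node yields the overall polynomial bound.

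First I would handle the base case of leaves $\doc{``v"}/v$: each such sub-expression just returns the set of children of the root of the corresponding cached view document, trivially bounded by $|D_{\cal V}|$ and computable in linear time. Next, for the two navigation constructs $(\jpath)/\rpath$ and $(\jpath)//\rpath$, I would use the fact that $\rpath$ is an \xp expression (no intersection inside), which can be evaluated as a binary tree-pattern query on each view document in PTIME. By the induction hypothesis, $\llbracket \jpath \rrbracket$ has polynomial size; since navigating from a polynomial-sized set of context nodes by a fixed-size tree pattern produces at most $|D_{\cal V}|$ results and takes PTIME, both the bound and the complexity claim go through.

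For the intersection construct $\jpath \cap \cpath$, the inductive hypothesis gives that $\llbracket \jpath \rrbracket$ is polynomial and, since $\cpath$ is in the \xpcap fragment, Lemma~\ref{l:rwplan-eval} already ensures $\llbracket \cpath \rrbracket$ is computed in PTIME and is of polynomial size. Intersecting two polynomial-sized node sets is polynomial, and the result is bounded by the smaller of the two; in particular, the arity-$1$ invariant is preserved. The parenthesized case $(\jpath)$ is immediate.

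The only potentially subtle point (and what I would flag as the main obstacle) is ensuring that the nesting depth of intersections does not interact with the cost: because each intermediate value is a \emph{set} of nodes rather than a relation of growing arity, composition at a nesting level multiplies the cost only by a factor polynomial in $|D_{\cal V}|$, not by an exponential blow-up in intermediate arity. Since $|r|$ is a constant with respect to the data, induction on the structure of $r$ closes the bound at $\mathrm{poly}(|D_{\cal V}|)$, completing the proof.
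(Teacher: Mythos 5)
Your proof is correct, but it takes a genuinely different route from the paper. The paper's argument is a one-line appeal to heavy machinery: it observes that \xpint is subsumed by navigational XPath, which is captured by monadic second-order logic on trees, and then invokes Courcelle's theorem (and its generalizations) to conclude linear-time data complexity for a fixed plan. Your argument is instead an elementary structural induction that directly extends the proof of Lemma~\ref{l:rwplan-eval}: you maintain the invariant that every intermediate value is a node \emph{set} of size at most $|D_{\cal V}|$ (never a relation of growing arity), check that each grammar production --- leaves $\doc{``v"}/v$, the two navigation constructs, intersection with a \cpath (where you correctly reuse Lemma~\ref{l:rwplan-eval}), and parenthesization --- preserves this invariant at polynomial cost, and then close the bound using the fact that the parse tree of $r$ has size independent of the data. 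Both arguments are sound. The paper's buys generality (the bound holds for any MSO-definable extension, not just \xpint) and a stronger linear data-complexity statement, at the price of opacity and of constants that depend non-elementarily on the plan; yours buys a self-contained, explicit polynomial bound and makes visible the one genuinely load-bearing observation, which you correctly flag: it is the arity-1 invariant that prevents nested intersection from causing an exponential blow-up of intermediate results. The only point worth tightening is that intersecting node sets drawn from distinct view documents is meaningful only because node identities are assumed persistent across views --- but the paper's own proof of Lemma~\ref{l:rwplan-eval} elides this in exactly the same way, so it is not a gap relative to the paper's standard of rigor.
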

\begin{proof}[Idea] 
This a direct consequence of Courcelle's theorem~\cite{Courcelle90} and generalizations thereof~\cite{FlumFG02}, on evaluating in linear time (data complexity) monadic second-order (MSO) formulas on trees or bounded tree-width structures. MSO is known to capture the \emph{navigational XPath} outputting sets of nodes~\cite{GottlobK02},  a fragment that strictly subsumes \xpint. 
\end{proof}
Equally important,  the rewriting problem stays in the same
complexity class, even if the expressivity of the rewriting language
increases.
\begin{theorem}\label{thm:hardness-xpint-rewriting}
Deciding the existence of an \xpint-rewriting for a query and views from
$\xppdesc$ is coNP-complete.
\end{theorem}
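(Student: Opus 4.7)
The plan is to establish coNP-completeness by inheriting hardness from Theorem~\ref{thm:hardness-rewriting-desc-2} and obtaining membership via a single canonical candidate plan argument. For the hardness direction, the reduction from 3DNF tautology used in Theorem~\ref{thm:hardness-rewriting-desc-2} already produces queries and views in \xppdesc\ for which no \xpcap-rewriting exists unless the formula is a tautology. Since any \xpcap-plan is an \xpint-plan, what remains to verify is that nested intersections do not introduce new rewriting opportunities in that specific construction. This holds because the output node of $q = p_c$ is characterized by a rigid interleaving structure: the only root-mappings from any $v_i$ into $q$ send $\out{v_i}$ to $\out{q}$, so any \xpint-plan whose unfolding selects the correct output must, up to equivalence, intersect the compensations of all $v_0, \ldots, v_n$ at $\out{q}$. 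Additional intersections at interior main branch positions can only further restrict the plan and cannot substitute for the output-level intersection. Hence an \xpint-rewriting exists iff the \xpcap-plan $v_0 \cap \cdots \cap v_n$ is a rewriting, iff the input formula is a tautology.

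For the coNP upper bound, my strategy is to construct a single canonical candidate plan $r^*$ of polynomial size, depending only on $q$ and $\mathcal{V}$, such that an \xpint-rewriting exists iff $r^*$ is itself a rewriting. Guided by the canonical-plan argument in the proof of Theorem~\ref{thm:coNPcompleteness}, for each main branch node $n$ of $q$ let $\mathcal{V}_n$ be the set of views admitting a root-mapping into the lossless prefix of $q$ ending at $n$ that sends the view's output to $n$. The plan $r^*$ is obtained by traversing $\mb{q}$ from root to output and, at each successive node $n$, intersecting the running partial plan with the compensations to $n$ of all views in $\mathcal{V}_n$, then navigating one step further along $q$. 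Both $r^*$ and $\unfold{r^*}$ have size polynomial in $|q|$ and $|\mathcal{V}|$, and by construction there is a containment mapping from $\unfold{r^*}$ into $q$, which yields $q \sqsubseteq \unfold{r^*}$.

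The core claim is that for any \xpint-rewriting $r$, there exists a containment mapping from $\unfold{r}$ into $\unfold{r^*}$, so that $\unfold{r^*} \sqsubseteq \unfold{r}$. Intuitively $r^*$ saturates intersections at every canonical main-branch position using all admissible views, hence is at least as constrained as any $r$. Granting this claim, if some \xpint-rewriting $r$ exists then $\unfold{r} \equiv q$ and so $\unfold{r^*} \sqsubseteq q$; combined with $q \sqsubseteq \unfold{r^*}$ this gives $\unfold{r^*} \equiv q$, making $r^*$ a rewriting. Thus non-existence of an \xpint-rewriting is equivalent to $\unfold{r^*} \not\sqsubseteq q$, which by the coNP-part of Theorem~\ref{thm:containment-dag-tree} admits a polynomial certificate: a single interleaving of $\unfold{r^*}$ into which $q$ fails to map. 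Guessing and verifying such an interleaving places non-existence in NP, establishing coNP membership for the rewriting problem.

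The main obstacle is proving the containment-mapping claim for arbitrarily nested \xpint-plans. Unlike the flat \xpcap case, where every intersection occurs at a single level, an \xpint-plan may perform intersections at several interior main branch positions interleaved with navigation. I would proceed by induction on the plan structure of $r$, mapping each view instance inside $\unfold{r}$ to a corresponding copy inside $\unfold{r^*}$ while respecting intersection and navigation boundaries. Care is required near intersection nodes that lie strictly inside (not at the root of) the DAG of $\unfold{r}$, where the induced mapping must collapse several parallel branches of $\unfold{r^*}$ onto a single branch of $\unfold{r}$; the admissibility condition defining each $\mathcal{V}_n$ is exactly what guarantees that such a collapse respects labels, edge kinds, and predicate constraints.
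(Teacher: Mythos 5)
Your proposal is correct and follows essentially the same route as the paper: the hardness direction inherits the 3DNF-tautology construction of Theorem~\ref{thm:hardness-rewriting-desc-2} (where, as you observe, every view's output can only map to $\out{q}$, so nested intersection adds nothing), and the membership direction rests on a single polynomial-size, minimally containing candidate plan whose failure to be a rewriting is certified by one bad interleaving, as in Theorem~\ref{thm:containment-dag-tree}. Your canonical plan $r^*$ and the ``core claim'' that $\unfold{r^*} \sqsubseteq \unfold{r}$ for every rewriting $r$ are exactly the paper's \candRw{q}{\cal V} built by \textbf{BuildRewriteCandidate} and Lemma~\ref{l:necessary-containment}/Theorem~\ref{th:mincontaining}, whose structural induction on the plan (the cases $\doc{V}/V$, $\alpha/x$, and $\alpha \cap \doc{V}/V$) carries out the argument you sketch.
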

\begin{proof}Similar to the proof of Theorem~\ref{thm:coNPcompleteness}.
\end{proof}



For the purposes of this section, we introduce the notion of \emph{rewriting graphs}, which are similar
to DAG patterns, with the following differences:

\begin{itemize}
\item in addition to $\Sigma$ nodes, they  have \emph{view nodes} with labels of
the form $\doc{``v"}$ where $v$ is a symbol from a set of views $\cal
V$; a view node cannot have any incoming edge.
\item they do not have a distinguished node \mbox{{\small \sc ROOT}}.
\end{itemize}
 We also generalize the
function \unfold{r} to return the
\xpint query that corresponds to the following transformation: replace every  $\doc{``v"}/v$ with the query that defines $v$, coalescing the newly introduced node corresponding to  \out{v} with its unique main branch child.
This leads to a DAG pattern, since the views are assumed to be evaluated over the same, unique document.

Starting from the pattern of the input query $q$ and the views $\cal
V$, we will build  step by step  a rewriting graph \candRw{q}{\cal V}
that we call the rewriting candidate. We will then prove that the
obtained candidate is \emph{minimally containing} w.r.t. $q$, i.e.
\candRw{q}{\cal V} is contained in any query over the view documents that contains $q$. This
guarantees completeness in the sense that if an \xpint rewriting
exists, \xpath{\candRw{q}{\cal V}} is also a rewriting.


\vspace{1em}
\noindent {\bf Algorithm BuildRewriteCandidate}\\
\noindent
input: query $q \in \xp$, set of views $\cal V$ defined by \xp queries\\
output: candidate rewriting \candRw{q}{\cal V}

\begin{enumerate}
\item set $r$ to $\pattern{q}$

\item let $S$ $\gets$
$\{ (v_i,o)\; | \textrm{ if }
\textrm{there is a mapping } h
\textrm{ from } v_i \textrm{ into  } q, o = h(\out{v})\}$


\item\label{it:loop-add-paths} foreach $(v,o)$ in $S$ \\
\indent add to $r$ a new view-node
     labeled $\doc{``v"}$, with a child labeled $v$ connected by a /-edge to $o$


\item keep in $r$ only paths accessible
  starting from view-nodes 
\item if \out{q} is not in $r$, then \textbf{fail}
\item set $\out{r}$ to \out{q}
\item \textbf{return}  \candRw{q}{\cal V} := r  
\end{enumerate}

\noindent {\bf Algorithm NestedRewrite}

\vspace{-2mm}
\begin{enumerate}
\item \candRw{q}{\cal V} $\gets$ {\bf BuildRewriteCandidate}($q, \cal V$)
\item\label{it:equiv-test} if $\unfold{\candRw{q}{\cal V}} \equiv q$ then output $\candRw{q}{\cal V}$ \\
      else {\bf fail}
\end{enumerate}

As the step (7) of \algoname, the equivalence test at step (2) uses Lemma~\ref{lem:cap_sub_cup}.
\eat{
and  the following one (similar to Lemma~\ref{lem:equiv_tree_union}).
\begin{lemma}\label{l:cnt-union-tp}
Let $p = \cup_i p_i$ and $q = \cup_j q_j$ be two finite unions of tree
patterns. Then $p \sqsubseteq q$ iff $\forall i, \exists j$ s.t. $p_i
\sqsubseteq q_j$.
\end{lemma}
}
We show next that our  algorithm  always produces the
\emph{minimally containing rewriting}, in the following sense.

\begin{lemma}\label{l:necessary-containment}
For $q \in \xp$, $\cal V$ a set of \xp views and $r'$ a
rewriting graph, if $q \sqsubseteq \unfold{r'}$, then
$\unfold{\candRw{q}{\cal V}} \sqsubseteq
\unfold{r'}$.
\end{lemma}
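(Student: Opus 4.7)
The plan is to produce a containment mapping from $\unfold{r'}$ into $\unfold{\candRw{q}{\cal V}}$; by Lemma~\ref{l:mapping-suff} this already yields $\unfold{\candRw{q}{\cal V}} \sqsubseteq \unfold{r'}$. I would not reason at the semantic (set-of-answers) level.

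First, since $q$ is a tree pattern and $q \sqsubseteq \unfold{r'}$, Lemma~\ref{l:containment-tree-dag} provides a containment mapping $h : \unfold{r'} \to q$. I would then exploit the natural decomposition of $\unfold{r'}$ into (i) one copy of $\pattern{v_i}$ for each view occurrence $v_i$ in $r'$, and (ii) the ``scaffolding'' of $\Sigma$-labeled main-branch nodes added by $r'$'s own intersection/compensation structure. Restricting $h$ to each embedded copy of $\pattern{v_i}$ gives a root-mapping from $v_i$ into $q$, and in particular a node $o_i := h(\out{v_i}) \in \nodes{\pattern{q}}$. Step 2 of \textbf{BuildRewriteCandidate} enumerates exactly the set of such pairs $(v,o)$ induced by root-mappings of views into $q$. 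Therefore, for every occurrence of $v_i$ in $r'$, the candidate $\candRw{q}{\cal V}$ contains a view-branch whose unfolding attaches an isomorphic copy of $\pattern{v_i}$ to the backbone $\pattern{q}$ with $\out{v_i}$ identified with $o_i$.

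Next, I would define the target mapping $g : \unfold{r'} \to \unfold{\candRw{q}{\cal V}}$ piecewise. For a node $n$ belonging to the $i$-th view copy inside $\unfold{r'}$, put $g(n)$ equal to its image under the obvious isomorphism onto the matching unfolded view copy in $\unfold{\candRw{q}{\cal V}}$ (the one glued at $o_i$). For a scaffolding node $n$ (i.e.\ a node of $\unfold{r'}$ not lying in any view copy), set $g(n) := h(n)$, viewed inside the backbone copy of $\pattern{q}$ that sits in $\unfold{\candRw{q}{\cal V}}$. The two clauses agree on interface nodes (the outputs of views), because both send $\out{v_i}$ to $o_i$: the first by the way the copy is glued into $\candRw{q}{\cal V}$, and the second by the very definition $o_i = h(\out{v_i})$.

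It remains to check that $g$ preserves labels, test conditions, both kinds of edges, and satisfies $g(\out{\unfold{r'}}) = \out{q} = \out{\unfold{\candRw{q}{\cal V}}}$. Inside each view copy this is immediate from the isomorphism; on scaffolding it follows because $h$ is a containment mapping into $q$; at the view-to-scaffolding interfaces it follows from the consistency just noted, together with the fact that the gluing in $\candRw{q}{\cal V}$ uses a /-edge whose endpoints match those induced by $h$. The main obstacle is bookkeeping: a single view $v$ may appear in $r'$ at several occurrences, and $h$ may map these occurrences to different nodes of $q$; each occurrence must be matched to the specific candidate branch whose attachment point $o$ coincides with $h$'s choice for that occurrence. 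This is well-defined precisely because step~2 of \textbf{BuildRewriteCandidate} collects every pair $(v,o)$ admissible through a root-mapping, hence in particular one per occurrence. Once this matching is fixed, all the verifications reduce to the local checks above.
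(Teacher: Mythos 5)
Your proof is correct, but it takes a genuinely different route from the paper's. The paper never constructs a single homomorphism between the two DAGs: it decomposes $d=\unfold{\candRw{q}{\cal V}}$, $d'=\unfold{r'}$ and $q$ into unions of interleavings (via Lemmas~\ref{lem:cap_sub_cup} and~\ref{l:cnt-union-tp}) and then argues by structural induction on the shape of $r'$ --- the cases $\doc{V}/V$, $\alpha/x$ and $(\alpha\ \cap\ \doc{V}/V)$ --- that every interleaving of $d'$ admits a containment mapping from some interleaving of $d$. You instead extract one containment mapping $h:\unfold{r'}\to q$ (legitimate by Lemma~\ref{l:containment-tree-dag}, since $q$ is a tree pattern), and splice it with the per-occurrence view isomorphisms into a single global containment mapping $g:\unfold{r'}\to\unfold{\candRw{q}{\cal V}}$, concluding by Lemma~\ref{l:mapping-suff}; this is sound because only the sufficiency direction of ``homomorphism implies containment'' is needed, and that direction holds even when the target is a DAG. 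Your argument is more direct and sidesteps both the interleaving machinery and the induction, which is available here precisely because $q$ is a tree. Two details deserve an explicit sentence in a polished write-up: (i) step~4 of \textbf{BuildRewriteCandidate} prunes $\pattern{q}$ down to the portion reachable from view-nodes, so you should note that every scaffolding node of $\unfold{r'}$ lies below some $\out{v_i}$ in $\unfold{r'}$, whence its $h$-image lies below the attachment point $o_i$ and is therefore retained in the backbone (the same observation shows that $\out{q}=h(\out{\unfold{r'}})$ is retained, so the algorithm does not \textbf{fail} at step~5); and (ii) an intersection node of $\unfold{r'}$ belongs simultaneously to several view copies (and possibly to the scaffolding), so your clauses for $g$ could in principle disagree there --- they do not, but only because all the relevant attachment points equal $h$ of that single coalesced node, which is worth stating since it is exactly where a naive per-copy definition would break. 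Neither point is a gap; both follow from $h$ being a homomorphism on the unfolded DAG.
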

\begin{proof}
Let $d$ be $\dagp{\unfold{\candRw{q}{\cal V}}}$, $d'$ be
$\dagp{\unfold{r'}}$ and $p$ be $\dagp{q}$. As $q \sqsubseteq
\unfold{r'}$.  We can write $d = \cup_i d_i$, $d' = \cup_j d'_j$ and
$p = \cup_k p_k$ as the union of their interleavings.  By
Lemma~\ref{l:cnt-union-tp}, there are containment mappings $h_i$ from
each $d_i$ into some $p_k$ and $h'_j$ from every $d'_j$ into some
$p_k$.

We  show by structural induction that there is a containment
mapping from every $d'_j$ into some $d_i$.

(1) Suppose $r'$ is of the form $\doc{V}/V$, where $V \in \cal
    V$. Since every $d'_j$ maps into some $p_k$ and $\candRw{q}{\cal
    V}$ is built using all possible mappings of the views, there is at
    least one occurence of $V$ in $d$ connected to a node
    $o$. Moreover, since $q \sqsubseteq\unfold{r'}$, there must be
    such an occurrence in which $o$ actually corresponds to \out{p}.
    Hence there is trivially a containment mapping from every $d'_j$
    into some $d_i$.

(2) Consider now the case in which $r'$ corresponds to a query of the
    form $\alpha/x$, where every interleaving $\alpha_j$ of the
    pattern $\unfold{\alpha}$ has a mapping $m^1_j$ into some $d_i$
    and $x$ is a relative path. Let $n_\alpha$ be the last node on the
    main branches of $\unfold{\alpha}$. Let $d'_j$ be $\alpha_j/x$. By
    construction, the subgraph of $d_i$ accessible starting from
    $m^1_j(n_\alpha)$, call it $t_{ij}$ is a tree that is isomorphic
    to a subtree \id{st} of $h'_j(d'_j)$, where $h'_j$ is the mapping
    from $d'_j$ into some interleaving $p_k$. Hence the identity
    function \id{id} is a mapping from \id{st} into $t_{ij}$,
    therefore $m^1_j$ can be extended to a mapping $m^2_j$ from $d'_j$
    into $d_i$. If \out{d'} is in the pattern of $\alpha$, then from
    the containment of $q$ into $\alpha/x$ we can also infer the
    containment of $q$ into $\alpha$ and the induction hypothesis
    guarantees that $m_1$ can be chosen such that it is a containment
    mapping. Otherwise, \out{d'} is part of the pattern of $x$ and
    $h'_j(\out{d'}) = \out{p}$. But then $m^2_j(\out{d'}) =
    h'_j(\id{id}(\out{d'})) = \out{d}$ and $m^2_j$ is a containment
    mapping.

(3) Suppose that $r'$ is of the form $(\alpha\ \cap\ \doc{V}/V)$,
    where every interleaving $\alpha_j$ of the pattern of
    $\unfold{\alpha}$ has a mapping $m^1_j$ into some $d_i$. Let
    $n_\alpha$ be the last node in the main branches of
    $\dagp{\unfold{\alpha}}$ and $n'_\alpha = m^1_j(n_\alpha)$, unique
    for all $j$. Then again we can find $\doc{V}/V$ in $d$ that was
    added to \candRw{q}{\cal V} by a set of mappings $\{m'_l\}$ into
    $p$ that agree with $h'_j$ on $\unfold{\doc{V}/V}$ and such that
    $m'_l(n'_\alpha) = \out{p}$, $\forall l$.  Hence $n'_\alpha$ is
    the output node of $d$ and $m^1_j$ can be extended to a mapping
    $m^2_j$ from $d'_j$ into $d_i$ such that $m^2_j(\out{d'}) =
    n'_\alpha = \out{d}$.

Then by Lemma~\ref{l:mapping-suff},  we have that $d \sqsubseteq d'$.
\end{proof}

\begin{theorem}\label{th:mincontaining}
For any query $q \in \xp$ and any \xp rewriting $r'$ such that
$\unfold{r'} \sqsupseteq q$, $\unfold{\candRw{q}{\cal V}} \sqsubseteq
\unfold{r'}$.
\end{theorem}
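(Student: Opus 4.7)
The plan is to obtain Theorem~\ref{th:mincontaining} as an immediate corollary of Lemma~\ref{l:necessary-containment}, which is essentially the same statement but phrased in terms of rewriting graphs rather than \xp/\xpint\ rewritings. So my proof will reduce one to the other: the only real work is the syntactic translation between the two representations, while the substantive inductive argument is already carried out in the lemma.

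First I would observe that any \xpint\ rewriting $r'$ over $D_{\cal V}$ admits a canonical representation as a rewriting graph $G_{r'}$, obtained by a straightforward inductive translation that mirrors the one used to turn \xpcap\ expressions into DAG patterns: atomic subexpressions of the form $\doc{``v"}/v$ become view-nodes labeled $\doc{``v"}$ with a $v$-labeled child, relative navigation $/\rpath$ or $//\rpath$ appends the tree pattern of $\rpath$ at the current output node (with the appropriate edge type), and intersections coalesce the output nodes of their operands. By construction $\unfold{G_{r'}}$ (in the sense of the rewriting-graph \unfold{}) coincides as a DAG pattern with $\unfold{r'}$ (in the sense of the \xpint\ \unfold{}), since in both cases each view atom is simply replaced by the DAG of its definition over the underlying document.

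Once this translation is in place, I would apply Lemma~\ref{l:necessary-containment} directly to $q$, $\cal V$, and the rewriting graph $G_{r'}$: its hypothesis $q \sqsubseteq \unfold{G_{r'}}$ is exactly the premise $q \sqsubseteq \unfold{r'}$ of the theorem, and its conclusion $\unfold{\candRw{q}{\cal V}} \sqsubseteq \unfold{G_{r'}}$ gives the desired containment $\unfold{\candRw{q}{\cal V}} \sqsubseteq \unfold{r'}$.

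The only point that warrants a second look is whether the three-case structural induction of Lemma~\ref{l:necessary-containment} truly exhausts the shapes of rewriting graphs arising from \xpint\ expressions: a view atom, a compensation $\alpha/x$ or $\alpha//x$, and an intersection $\alpha \cap \doc{V}/V$. This is precisely what the \xpint\ grammar produces, and arbitrary nesting (in particular nested intersections) is already absorbed by applying the inductive hypothesis to the $\alpha$ subexpression; commutativity and associativity of $\cap$ let us always peel off one view branch at a time. Hence no extra case analysis is required, and the theorem follows. The hardest part, conceptually, is already localized inside the lemma's proof -- namely the argument that step~(\ref{it:loop-add-paths}) of \textbf{BuildRewriteCandidate} considers every mapping witness of each view into $q$, so every view occurrence used by $r'$ has a matching copy in $\candRw{q}{\cal V}$.
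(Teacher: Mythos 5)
Your proposal is correct and matches the paper's approach: the paper gives no separate proof for Theorem~\ref{th:mincontaining}, treating it as an immediate consequence of Lemma~\ref{l:necessary-containment}, which is exactly the reduction you carry out. Your explicit translation of the \xpint\ rewriting into a rewriting graph and your check that the lemma's three-case induction covers the \xpint\ grammar simply make visible the step the paper leaves implicit.
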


\noindent {\bf Remark.} Note that  algorithm BuildRewriteCandidate runs in PTIME,
but NestedRewrite is  worst-case exponential, which is the best we can hope for, given the hardness result of Theorem~\ref{thm:hardness-xpint-rewriting}.

It is remains open whether for \xpint the complexity of union-freedom and view-based rewriting drops to PTIME under the input query restrictions (extended skeletons) or the rewrite plan ones discussed in the previous section. This represents one of our main directions for future research  on XML view-based query rewriting, and challenging one. 

%
%
%
%

\section{Optimization opportunities}
\label{sec:implementation}
The implementation of the \algoname algorithm revealed several possible refinements, and we discuss in this section the most relevant high-level optimizations that enabled performance improvements in practice. We stress that, by this implementation,  our main goal   was to provide  a proof-of-concept prototype, illustrating advantages and scalability.  Many other directions for optimization, including low-level aspects and supporting data structures, remain to be explored. 


\subsection{Optimizing plans}

We start by introducing the theoretical foundations  of a PTIME technique for partially minimizing redundancy in rewrite plans, without paying the price of full minimization. We show that, at line~\ref{li:rw-build-compensations} of \algoname, for a given view participating in a plan,  we can always choose  one unique  compensation instead of all the valid compensations.  To this end, for a view $v$ that maps into $p$, $\func{bestcomp}(v,p)$ returns $v'$ formed by $v$ plus some compensation, such that $v'$ is contained in  all other compensated versions of $v$. 

\begin{lemma} For a given prefix $p$ of the input query and a given view $v$, $\func{bestcomp}(v,p)$ is well-defined.  It is sufficient to use in algorithm \algoname, for the corresponding rewrite plan, only $v$'s $\func{bestcomp}(v,p)$ compensation.
\end{lemma}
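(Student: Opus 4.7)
I would define $\func{bestcomp}(v,p) := \func{compensate}(\doc{``v"}/v, p, b^{*})$, where $b^{*}$ is the highest main-branch node of $p$ (closest to $\droot{p}$) that arises as $h(\out{v})$ for some root-mapping $h : v \to p$. Because $p$'s main branch is linear, the set of such images forms a chain, so its top element is unique; the set is nonempty by the hypothesis that $v$ maps into $p$. All elements of this set share the label $\lambda(\out{v})$, so $b^{*}$ depends only on $v$ and $p$.

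\textbf{Containment step.} Next, I would prove that for every other valid image $b$, $\unfold{\func{bestcomp}(v,p)} \sqsubseteq \unfold{\func{compensate}(\doc{``v"}/v, p, b)}$. Writing $v_{b^{*}}$ and $v_{b}$ for the two unfolded patterns, by Lemma~\ref{l:mapping-suff} it suffices to exhibit a containment mapping $\phi : v_{b} \to v_{b^{*}}$. On the $v$-portion of $v_{b}$'s main branch, I let $\phi$ follow the structural correspondence dictated by the root-mapping $h_{b} : v \to p$ with $h_{b}(\out{v}) = b$, but reinterpreted inside $v_{b^{*}}$: the fragment of $p$ from $\droot{p}$ to $b^{*}$ is realized by $v$'s own main branch (via $h_{b^{*}}$) inside $v_{b^{*}}$, while the fragment from $b^{*}$'s child through $b$ to $\out{p}$ is appended verbatim as $v_{b^{*}}$'s compensation suffix. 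Concretely, $\phi$ sends $\out{v}$ to the $b$-position inside $v_{b^{*}}$'s compensation, and the compensation tail of $v_{b}$ (navigation from $b$'s child to $\out{p}$) is sent by the identity onto the matching tail of $v_{b^{*}}$, so $\phi(\out{v_{b}}) = \out{v_{b^{*}}}$ and edge types along the tail agree by literal copy.

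\textbf{Sufficiency in \algoname.} For the second claim, fix any set $W$ of $(v_{i},b_{i})$ pairs produced at line~\ref{li:init-prefs} for the prefix $p$, and let $W^{*}$ be obtained from $W$ by replacing every pair involving a view $v$ by the single pair $(v, b^{*}_{v})$. Writing $r_{W}$ and $r_{W^{*}}$ for the corresponding plans built at line~\ref{li:build-plan}, the containment step yields $\unfold{r_{W^{*}}} \sqsubseteq \unfold{r_{W}}$. Moreover, every compensated view arises from a root-mapping $v \to p$ extended by $p$'s main-branch tail from the image of $\out{v}$ down to $\out{p}$, so Lemma~\ref{l:mapping-suff} gives $p \sqsubseteq$ each compensated view, and hence $p \sqsubseteq \unfold{r_{W^{*}}}$. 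Combining, whenever $\unfold{r_{W}} \equiv p$, the sandwich $p \sqsubseteq \unfold{r_{W^{*}}} \sqsubseteq \unfold{r_{W}} \equiv p$ forces $\unfold{r_{W^{*}}} \equiv p$. Since $r_{W^{*}}$ is itself one of the plans that \algoname\ builds (namely the one obtained by using $\func{bestcomp}$ uniformly at line~\ref{li:rw-build-compensations}), restricting each view to its bestcomp never loses an equivalent rewriting.

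\textbf{Main obstacle.} The only delicate point is verifying that $\phi$ respects edge types (/ vs. //) at the junction where $v$'s main branch meets $v_{b^{*}}$'s compensation. This reduces to the observation that $v$ can only admit two distinct root-mappings with outputs $b^{*}$ and $b$ on $p$'s main branch if some //-edge of $v$'s main branch absorbs the $p$-path from $b^{*}$ to $b$; the same //-edges then also accommodate the verbatim copy of that $p$-fragment inside $v_{b^{*}}$'s compensation, making $\phi$ well-formed.
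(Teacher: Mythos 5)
Your overall strategy mirrors the paper's: take $b^{*}$ to be the highest image of $\out{v}$ on $\mb{p}$ (equivalently, the longest compensation), argue that the resulting compensated view is contained in every other compensated version of $v$, and conclude sufficiency by the sandwich $p \sqsubseteq \unfold{r_{W^{*}}} \sqsubseteq \unfold{r_{W}}$. The well-definedness argument and the sufficiency step are fine. The gap is in the containment step, and it is a real one: the mapping $\phi$ from $v_{b}$ into $v_{b^{*}}$ is never actually constructed on the $v$-portion. You say that ``the fragment of $p$ from $\droot{p}$ to $b^{*}$ is realized by $v$'s own main branch (via $h_{b^{*}}$)'', but $h_{b^{*}}$ goes from $v$ \emph{into} $p$; it does not provide a mapping of $p$'s prefix (with its predicates) into $v$. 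Consequently, when $\phi$ must route the image of $v$'s result token through the stretch of $v_{b^{*}}$'s main branch corresponding to the nodes of $p$ strictly between $b^{*}$ and $b$, the only available targets are $v$'s own main-branch nodes, which need not carry the predicates that $p$ carries there and that $h_{b}$ may have exploited. Your ``main obstacle'' paragraph worries only about edge types at the junction, which is not the real obstacle.

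Concretely, the claimed containment $\unfold{\func{bestcomp}(v,p)} \sqsubseteq \unfold{\func{compensate}(\doc{``v"}/v,p,b)}$ can fail. Take $p = \snippet{doc(``L'')/a[z]/a[z]/a/a}$ and $v = \snippet{doc(``L'')//a[z]/a/a}$: the output of $v$ admits exactly two images on $\mb{p}$, the third and the fourth $a$, giving $v_{b^{*}} = \snippet{doc(``L'')//a[z]/a/a/a}$ and $v_{b} = \snippet{doc(``L'')//a[z]/a/a}$. On the document $r/a[z]/a/a/a$, where only the first $a$ has a $z$-child, $v_{b^{*}}$ selects the fourth $a$ while $v_{b}$ does not, so $v_{b^{*}} \not\sqsubseteq v_{b}$; in fact neither compensation contains the other, so no choice of image makes $\func{bestcomp}$ satisfy its defining property here. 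To be fair, the paper's own argument is only a proof idea that asserts this same containment without constructing the mapping, so you are reproducing its reasoning faithfully; but your write-up does not close the gap, and the step as stated fails on examples like the one above. A correct proof would have to either restrict the setting so that distinct images of $\out{v}$ in $p$ cannot disagree on the predicates of the intervening main-branch nodes, or establish sufficiency for \algoname\ directly without going through pairwise containment of the compensated views.
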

\begin{proof}[Idea]
It is immediate that compensations that are subsumed (contain) by others will be redundant in the rewrite plan and can be safely discarded  (this could be performed by applying R7 to saturation on the original plan).  However, the best compensation is uniquely defined for each prefix and view, and can be constructed and used directly in the plan, in this way avoiding expensive rule testing and application. 
More precisely, the compensated view $\func{bestcomp}(v,p)$ corresponds to the longest possible compensation -- in terms of main branch nodes of $p$ -- that can be applied on $v$ in \algoname, which in turn corresponds to the highest possible mapping image of $v$ in $p$.   It can be easily verified that the $\func{bestcomp}(v,p)$ compensation obtained in this way will be contained in all the other compensations of $v$.
\end{proof}

We  also identified as potentially beneficial a test  for the existence of a sub-plan equivalent to the input query formed by \emph{akin} views, even for  input queries outside the fragment \xppdesc. Recall that \algoname is complete for \xppdesc\  if limited to  plans using akin views.   Intuitively, rule application on the akin plan is likely  to advance faster towards a conclusion. Note however that whenever the resulting  DAG is not a tree we cannot conclude whether the initial plan is  union-free, and this plan has to be processed as well.
 
Regarding the simplification of plans, due to the  compensation steps,  many redundant predicates may be present on nodes of the initial plan. These can be  pruned out, before the \proc{Apply-Rules} subroutine,  in this way making the various mapping tests for rule applications lighter. 

Finally, we can use some efficient tests to detect plans that cannot be equivalent to the input query and can be discarded before the \proc{Apply-Rules} subroutine. For example, in the case of input queries with only /-edges in the main branch, a view having that same main branch must be available. Similar tests on the main branches of the root and output tokens of the query and the plan's views can be  used to discard plans.

\subsection{Optimizing rule applications}

First, the implementation of the rewrite rules uses as a key building block the ``bottom-up''  computation of mappings. While for mappings into tree structures we can rely directly on a dynamic programming approach as in~\cite{DBLP:journals/jacm/MiklauS04},   extending this approach for mapping of trees into  DAGs requires to maintain the DAG nodes in topological order.

Second, complementing the steps discussed in the proof of Lemma~\ref{l:termination} -- for testing in polynomial time the premises of the rules  --  other optimizations are possible in the DAG rewrite phase.    Regarding rule R1,  note that the only DAG changes that may trigger it, besides R1 itself,  can come from the rules R2, R6 and R8. Therefore, we can (i) start by applying  R1 to saturation, and (ii)  go  through the rest of the rules and reapply rule R1 to saturation only after one of these three rules triggered. Rule application  can be tested efficiently when the various paths involved  can be easily  identified, due to either being ``single incoming edge, single outgoing edge'' or being /-paths;   for identifying /-paths,   one must however apply to saturation rule R1, before any  tests involving such paths.   The test for R7 is slightly more involved, since part of its input, namely the $p_1$ candidates,  is not easily identifiable. However,  it is sufficient to test the existence of such candidates and to  handle them implicitly, contrary to $p_2$ candidates which can be found easily by the properties required by the rule.  Given a $p_2$ candidate, any mapping of $p_2$ nodes in the DAG rooted at node $n_1$ will also determine at least one such path $p_1$, and we only need to keep track of the allowed images for nodes of $p_2$ (between $n_1$ and the common descendant of $n_1$ and $p_2$'s nodes).

\eat{
Then, in order to implement rule R7, what we do is
(i) we mark as "allowed" for main-branch nodes mapping all main branch nodes in the subdag delimited by n1 and n2, with the exception of nodes in p2, n1 and n2.
(ii) we test whether a mapping exista from TP(p2) to SP(n1) \textbf{with the additional condition of mapping main branch nodes of p2 in allowed nodes}. The mapping computation is easily adaptable to this additional constraint.
If such mapping exists we can infer the existence of some p1 and safely remove p2, according to the rule.
}

Third, regarding the tests for collapsible nodes, the formal definition does not necessarily lead to the most efficient implementation. We can in fact bypass the computation of tentative DAGs and simply compare  the incoming and outgoing  /-paths for the tested nodes.  

Fourth, in order to efficiently check the applicability of rule R8, instead of the  na\"{i}ve test for each possible pair of nodes $(n_1, n_2)$, we can rely on two parallel mapping computations: one that is bottom-up, ending at the common ancestor node, and one that is top-down, ending that the common descendant node. Then, the intersection of the valid mappings for nodes of $p_2$ can reveal those that are relevant for R8, having only one possible image in $p_1$. 

\subsection{Other optimization opportunities}
Some general adjustments that proved to be useful for the overall performance include dedicated data structures, such as adjacency lists for incoming and outgoing main branch edges, predicates, child and descendant edges -- as many of the rules involve iterating on specific children types -- as well as lists of topologically-sorted nodes,  built with the candidate plan and  updated only when needed, after certain rules were applied.


\section{Experiments}
\label{sec:experiments}
We performed our experiments on an Intel(R) Core(TM) i7-2760QM@2.40GHz machine, with  8G of RAM and the Ubuntu 11.10 operating system. 

We evaluated the performance and scalability of the \algoname algorithm,  focusing on two main aspects: 
\begin{itemize}
\item the \emph{rewrite time}, i.e.,  the time necessary to find an equivalent rewriting, when one exists, and 
\item the improvements on \emph{evaluation time}, i.e., the comparison between the evaluation time of  the input query over the data, on one hand,  and the  rewrite time cumulated with the evaluation time of the rewriting, over the view documents, on the other hand.  
\end{itemize}
In the space of analysis, we looked at how these two performance indicators vary with the size and the type of input queries -- w.r.t. the various XPaths fragment that were discussed in the paper,  the size of the view set that may give a rewriting, and the size of the input data.

\subsection{Documents, queries and views}
Our experimental setup was guided by: (a) our focus on measuring rewrite time as well as improvements on evaluation time, (b) the intention to stress-test our implementation for performance evaluation purposes.
We thus needed:
\begin{itemize}
\item queries and views spanning the \xp fragments analyzed in our theoretical study,
\item a set of documents the queries and views would apply to,
\item \emph{the ability to scale} query, view and document sizes, for performance assessment.
\end{itemize}   

Given our needs in terms of variation of query and view structure, number and size, we could not benefit from existing benchmarks or real-life settings publishing queries and views.  Therefore we designed \emph{our own synthetic query and views generator}, suiting our testing purposes. Starting from a given XML input document, this generator produces queries and views over that document (i.e. yielding a non-empty result), controlling their structure, number and size, as well as pair-wise containment.

While our synthetic queries and views generator can be plugged on any XML document, our need to scale with the document size  limited the usefulness of existing XML documents. We have therefore adopted in our experiments the 
extensively  cited XMark document generator ~\cite{DBLP:conf/vldb/SchmidtWKCMB02}.  This generator allows varying the size of its output, while ensuring similar structure and properties across the XML documents it produces. 

Three input documents were generated with the XMark generator, of sizes $41$KB, $91$MB and $18$GB. On each of the documents, we used our custom generator to produce input queries and view sets.

We generated, for each of the documents, input queries of main branch size $5$, $7$ and $9$ (the XMark documents have a maximal depth of $11$).  We considered input queries from three categories: \xppes, beyond \xppes\ but in \xppdesc, and in \xp but beyond  \xppdesc, i.e., without restrictions.  


For  input queries,  each possible pairing of main-branch size and category for input queries  was  used to generate $10$ random input queries,  for a total of $90$ input queries. Importantly, these queries were generated from the data, in a way  that ensures that they all  have  a\emph{ non-empty result} on  the three  input documents. This was to avoid meaningless evaluation time measurements and to preclude the case when an alternative detection of unsatisfiability would shortcut the rewrite time.

For the generation of views, to each of the input queries we associated five randomly generated view sets of variate size, namely  consisting of $40$, $80$, $160$, $320$,  or $640$  views, for a total of $450$ view sets. We had the following guidelines in the generation of view sets:

\begin{enumerate}
\item  While  we wanted many views, we wanted to control the percentage of views that would be useful in  a rewriting; more precisely, all the view sets  consisted of  $10$\% useful views (for the rewriting), while the remaining $90$\% were useless (i.e., they did not  map in the input query)\footnote{We adopted this $10\%-90\%$ ratio as a reasonable one for most practical scenarios.}.

\item We wanted  views that were not equivalent to the input query nor a prefix thereof, and did not allow single-view rewritings.  This was to exercise precisely the non-trivial, multiple-view rewritings that our algorithm achieves. 

\item We wanted only view sets that gave an equivalent rewriting for the input query they were associated to. This was in order to be able to evaluate rewritings and 
to check their benefit over the evaluation of the input queries.  

\item For the fragment \xppes, no restrictions were imposed on the views, consistent with our theoretical results (completeness does not depend on view restrictions). 

\end{enumerate}
Note that, although all views have a non-empty result by construction, the size of their result and their selectivity could  vary significantly and were not controlled in the generator. Other aspects that were not controlled by our query generator were (i) the overall size of input queries (only the size of the main branch was chosen), (ii) the overall size of the views, and (iii) the overall size of the plans built and tested in \algoname.

\subsection{\algoname vs. \efficient}
As a first experiment, through the random generation of sets of views, we took a first step towards understanding how often one may lose completeness in practice, if interleavings are not computed.  In other words, we wanted to quantify how often, at line (7) in \algoname, a positive containment test has to deal with a DAG pattern in the left-hand side.  (Recall that the only  difference introduced by the \efficient variant of the algorithm is that the containment test is done only if the DAG transformations yield a tree pattern.)  This is important for input queries from  \xppdesc\  and \xp, as the computation of interleavings  -- potentially exponentially many --  is expected to represent the main overhead in the search for a rewriting. 

To this end,  the random generation flow was the following: (a) a set of views would be generated, for a given input query and a given set size, and (b) the rewrite plans would be constructed and each would be tested for equivalence in \algoname, \emph{within a limit of 30 minutes of execution time}. 

This experiment gave us valuable insight : within the time limit of this random process, yielding $450$ sets of views, we obtained no view set that did provide an equivalent rewriting, \emph{but only by performing interleavings,  after \proc{Apply-Rules}} (line ($7$) in \algoname).  In the process, for all the generation tentatives,  a large majority of the allotted time was spent in computing interleavings, when  a tree pattern was  not outputted by \proc{Apply-Rules}. Moreover, in many of these cases, the time limit was met without reaching a conclusion. This outcome, on one hand,  confirms the fact that the computation of interleavings is very costly. More importantly,  while this represents just  a preliminary set of results, it does suggest that  one  could ``turn off'' the computation of interleavings: while we know that interleavings are needed for completeness, it seems that we may not have a sufficiently important gain in finding rewritings by computing them; if supported by further evidence,  we can suggest avoiding the computation of interleavings,  using \efficient instead of \algoname.


Consequently,  with the $450$ sets of views we obtained, we continued our empirical evaluation using the \efficient variant of our algorithm.

\subsection{Rewrite time}
For this set of measurements, for each input query size and category, for the corresponding $10$ input queries, we recorded the average time to find a rewriting for each possible size of the view set (among the $5$ sizes given previously). This allowed us to understand how the rewrite time using \efficient varies  with the input query size and  category, on one hand, and with the size of the view set, on the other hand. 

\begin{figure}[t]
\hspace{-1.9cm}
\vspace{-4mm}
  \includegraphics[trim=0mm 125mm 0mm 0mm, clip=true,
scale=0.6]{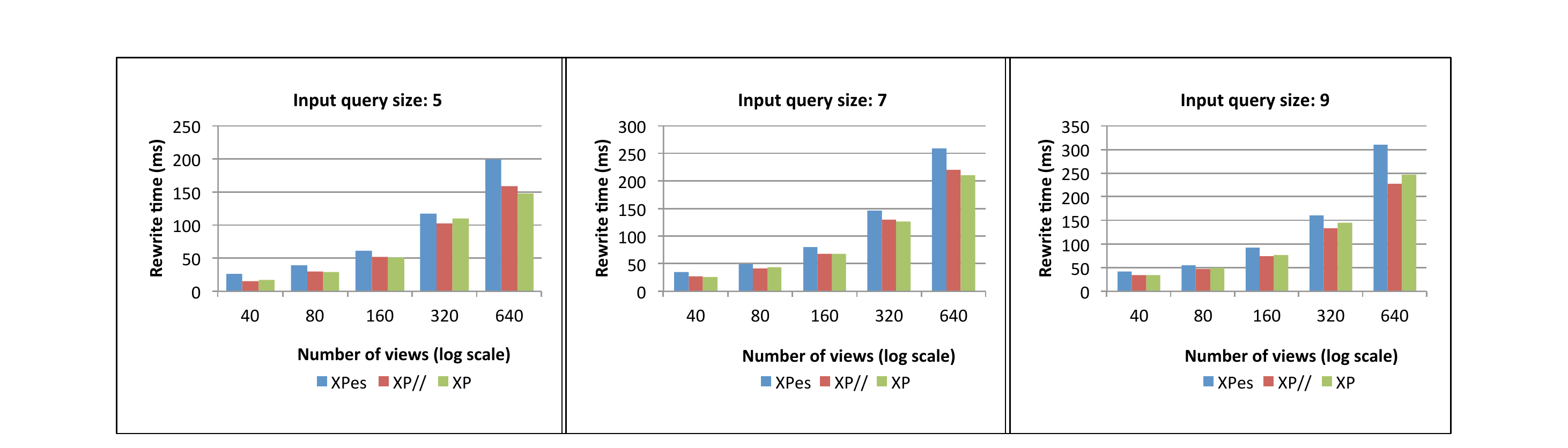}
\vspace{-2mm}
\caption{Rewrite time results.\label{fig:running-time}}
\vspace{-3mm}
\end{figure}

We present our  measurements for the running time of  \efficient in Figure~\ref{fig:running-time}. There,  we give one set of results (a sub-figure) for each query length. In each sub-figure, we  give five groups of three  columns. A group corresponds to one possible size of the view set, and in each group the first column corresponds to \xppes\ input queries, the second column corresponds to  \xppdesc\ input queries, and the third column corresponds to input queries without  restrictions. 

We can draw several important conclusions from the results of in Figure~\ref{fig:running-time}. First, our proof-of-con\-cept prototype of \efficient  can process efficiently, in a fraction of a second, queries of significant size -- up to $9$ nodes in the main branch, with $3-4$ predicates in average on each main branch node and with predicates of average depth of $3$ -- and view sets of significant size as well (order of hundreds).  Note that the measurements follow closely a linear progression with respect to the size of the view set. With respect to varying  the query size, the observed progression is even less pronounced  -- for example, for queries without restrictions, from 110ms to 210ms to 250ms. 


\begin{figure}[t]
\hspace{-1cm}
\vspace{-4mm}
  \includegraphics[trim=0mm 30mm 0mm 0mm, clip=true,
scale=0.6]{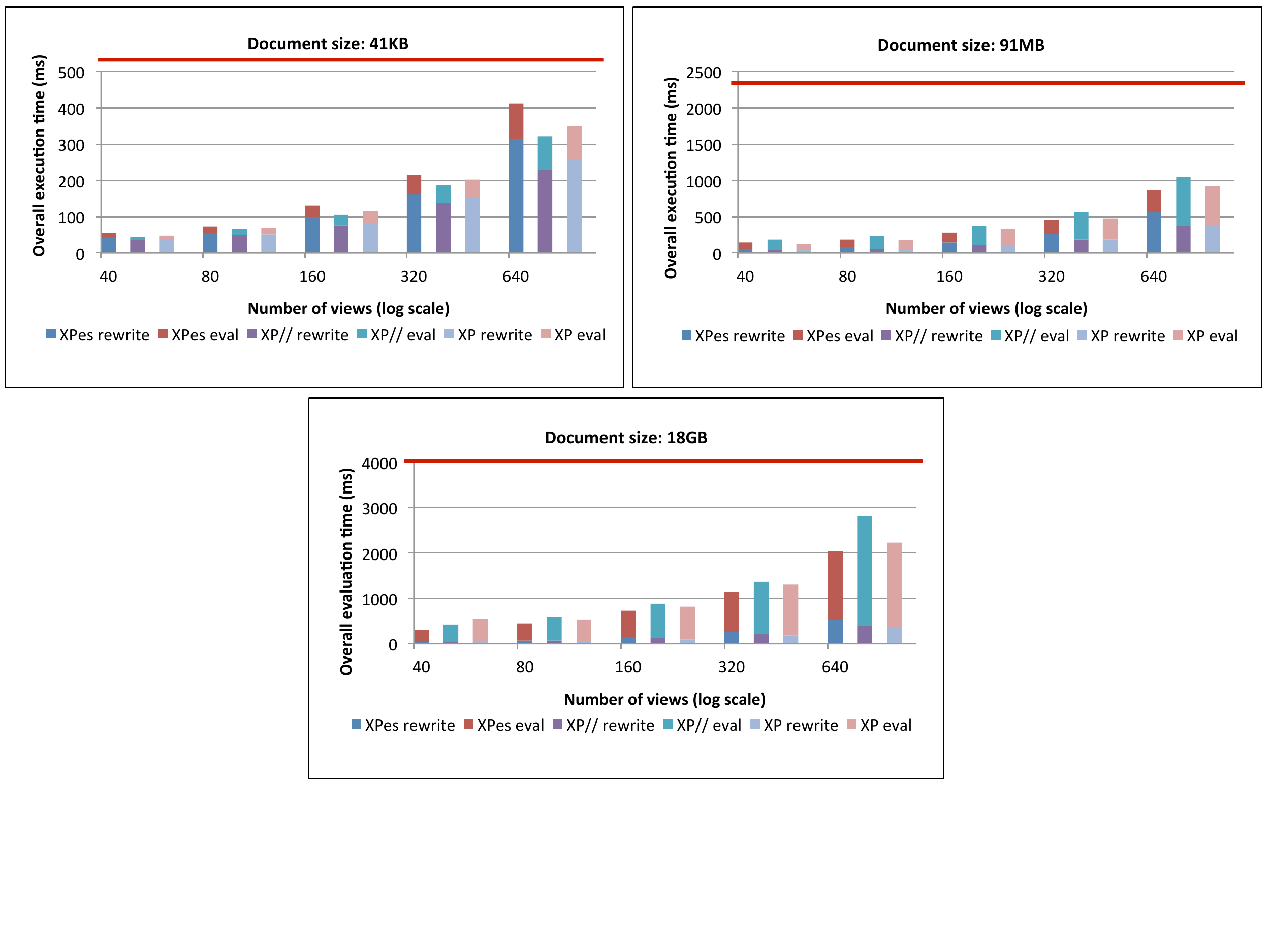}
\vspace{-1mm}
\caption{Global time results (rewrite time plus evaluation time).\label{fig:execution-time}}
\vspace{-4mm}
\end{figure}

\subsection{Evaluation time}
Regarding evaluation time,  we  compared the time necessary to evaluate the initial (input) queries over the input documents with the time necessary to build, test and then evaluate the rewriting over the view documents. We present these  measurements in Figure~\ref{fig:execution-time}.  For the sake of conciseness, we present only the results for the input queries of  maximal size ($9$ nodes in the main branch).  We give one set of results (a sub-figure) for each document size. As before, in each sub-figure, we  give five groups of three  columns, with  one group for each possible size of the view set.  
 As for rewrite time, we measured the average  time in each setting. Since the time necessary to run the input query over the input documents does not depend on the views, it is represented by a horizontal  line in the plot. 

Query evaluation was done using the SAXON query engine\footnote{http://saxon.sourceforge.net/},  which we extended with the Id-based JOIN functionality across multiple documents (the view documents), as SAXON's ability to perform this task was  incomplete.

A first important aspect to be noted in  Figure~\ref{fig:execution-time} is that,  over all input documents,  the time necessary to eva\-lu\-ate  the rewriting is smaller than the one for the input query, for all sizes of view sets. Moreover, the evaluation time based on view documents exhibits a linear progression and, overall, remains quite low, roughly $0.5$ seconds for the smallest document, around  $1$ second for the intermediary document, and around $2.5$ seconds for the largest document.

One can note the intuitive trend indicating that the larger the set of views in the rewriting, the less

\noindent important the performance benefit  over the original query plan. (Note that we measured the  plans  consisting of all the useful views.) In our results,  this trend stems from the way we set up the experiments, doubling at each step the number of views that were applicable in a rewriting (while this seems to be an unlikely scenario in practice, it represents a  suited stress test for our  algorithm). In our experimental configuration, many views meant, inevitably, more data and many opened documents, hence the overhead related to managing them, which for SAXON  starts being  noticeable.


Within one group of columns, the differences in evaluation time  based on views between the three categories of queries are mainly due to the variations in terms of selectivity and view documents' size. For instance, on the smallest document, the views generated for extended skeletons were significantly less selective, yielding view documents  almost two times larger than the ones corresponding to the other two categories. Similar differences could be observed for the second document,  between views for the \xppes\ and \xppdesc\ queries on one hand, with larger view documents, and the views for  arbitrary queries on the other hand.  

Finally,  we also considered the execution time of \efficient over the view sets without the $10\%$ useful views, measuring the ratio between the direct evaluation of the query and the overhead of unsuccessful \efficient  runs. Without listing the precise measurements, we mention here  that we observed negligible overhead in all cases -- up to 2\% --   even over the smallest document.

\subsection{Discussion}
Our main conclusions from this experimental evaluation are the following:
\begin{itemize} 
\item \efficient scales to large sets of views, with rewrite time under one second, in all cases. Recall ours is an initial prototype and there is certainly room for further optimization, therefore these are very encouraging results. Moreover, the rewrite time represented a small percentage of the evaluation time.  At the same time,  there are  many  scenarios (e.g., with security views) where rewriting is not done for performance purposes, and in which the comparison between rewrite time and evaluation time is immaterial.


\item The evaluation of the rewriting, including the rewrite time, is significantly more efficient than the evaluation of the input query, under assumptions that are widely-applicable in practice. Despite the fact that views were  generated randomly, without means to control their selectivity or  how they may ``cover'' the input query, the rewriting was evaluated two to three times faster than the input query,  even for hundreds of views.  Note that in the random generation process we do control the percentage of useful views and, for stress testing the algorithm,  this was increased  exponentially;  this is certainly not what one would expect in practice.   Finally,  the evaluation time depends undoubtedly on the particular query engine that is used, and it is not clear whether the one that we relied on had an optimal behavior when handling many opened, large documents.

\item   The benefit of our tractable techniques for equivalently rewriting XPath queries using multiple views is significant.  
 Indeed,  interleavings are key to achieving completeness (as our theoretical results show) yet, at the same time, our experiments show that the benefit of computing interleavings is limited; therefore, \efficient is a good candidate for practical, performance-oriented scenarios, even when completeness cannot be guaranteed (e.g., beyond \xppes).  
\end{itemize}

\section{conclusion}\label{sec:conclusion}
We considered  the problem of rewriting XPath queries using multiple views, characterizing the complexity of the intersection-aware rewriting problem.
More precisely, our work identified the tightest restrictions under which an XPath
query can be rewritten in PTIME using an intersection of views and
we propose an algorithm that works for any type of
identifiers.  A side effect of this research was to establish a similar tractability
frontier for the problem of deciding if an intersection of XPaths can
be equivalently rewritten as an XPath without intersection or
union. We  extended our formal study of the view-based rewriting problem for XPath to  more complex rewrite plans, with no limitations on the number of intersection and navigation steps inside view outputs they may employ. We also presented a proof-of-concept implementation of our  techniques and a thorough experimental evaluation, focusing on  scalability   and the comparison between the evaluation time of the initial query plan and the evaluation time of the rewriting,  using the view documents.

An important open question  remains to   provide  a more accurate characterization for the complexity of XPath rewriting with multiple levels of intersection and compensation (\xpint rewritings).  In particular,  an important question is whether view-based rewriting for \xppes\ input queries
allows tractable, sound and complete procedures when rewrite plans are from \xpint.

\bibliographystyle{acmsmall}
\bibliography{biblio-tcs}
\appendix

\section*{APPENDIX} \setcounter{section}{0}

\section{Proof of Theorem~\ref{th:completenessUF-es} (\xp fragment for PTIME)}
\label{sec:proof1}
The proof is organized as follows. We first show that $\proc{Apply-Rules}$ is complete  over  DAG patterns in which the root and the output node are connected by a path having only /-edges (Lemma~\ref{l:onetokenq_nviews}).  

We then consider the complementary case when all the branches in parallel (the compensated views) have at least one //-edge in the main branch. For clarity, we prove completeness progressively, starting with the case of intersecting two views under certain restrictions: their root tokens have the same main branch, their result tokens have the same main branch as well (Lemma~\ref{lem:2-comp-skel}). We then extend to the case of arbitrary many views, with these restrictions (Lemma~\ref{lem:n-comp-skel}). Then we consider the general case, which will rely on results proven for the limited cases. 

We start with DAG patterns -- unfoldings of the rewrite plan -- in which the root and the output node are connected by a path having only /-edges, i.e. there one of the views involved in the intersection has only one token.
\begin{lemma}\label{l:onetokenq_nviews}
For $n$ \xppes\ patterns $v_1, v_2, \dots, v_n$, where $v_1$ has
  only one token, $\dagp{v_1 \cap v_2 \cap \dots \cap v_n}$ is
  union-free iff \apprules rewrites it into a tree.
\end{lemma}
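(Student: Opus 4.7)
The easy direction follows immediately from Proposition~\ref{l:step-soundness}: every rule application preserves equivalence, so if \apprules outputs a tree pattern $t$, then $d \equiv t$ and $d$ is union-free.

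For the hard (only if) direction, my plan is to argue by contrapositive. Assume \apprules terminates on $d$ with some DAG $d'$ that is not a tree; I will exhibit two incomparable interleavings of $d'$ (hence of $d$), contradicting union-freedom. The key structural observation is that since $v_1$ consists of a single /-token of length $k$, the main branch of \emph{every} interleaving of $d$ has exactly $k$ nodes connected by /-edges, and the main branch nodes of every other $v_i$ must be placed injectively into the $k$ positions of $v_1$'s main branch, respecting /-edges (consecutive positions) and //-edges (ordered positions). Thus each interleaving corresponds to a choice of placements $\psi_2,\ldots,\psi_n$ for the main branches of $v_2,\ldots,v_n$ into $v_1$'s main branch.

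The bulk of the work is a careful structural analysis of $d'$ at termination. Because R1 is saturated, every pair of main branch nodes with equal labels lying on directly comparable paths is already coalesced. Because R3 cannot fire, no two equivalent /-sub-paths of $d'$ admit subsumption of their associated tree patterns in one direction. Because R7 cannot fire, no main branch path is rendered redundant by a parallel one. Because R8 cannot fire, there is no pair of main branch nodes forced to coincide in every placement. Because R9 cannot fire, no /-subpredicate is forced by every placement onto a specific position of an alternative branch. The plan is to show that since $d'$ is not a tree, there exist two parallel main branch paths $p$ and $p'$ sharing endpoints, and the failure of all these rules provides, for each of them, a witness placement that differs from the placement forced by the others. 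Combining such witnesses along $v_1$'s main branch yields two placements $(\psi_2,\ldots,\psi_n)$ and $(\psi'_2,\ldots,\psi'_n)$ whose associated interleavings $p_i$ and $p'_i$ admit neither a containment mapping from $p_i$ into $p'_i$ nor conversely (since \xppes forbids //-subpredicates whose incoming code is compatible with the following main-branch code, predicates originating in different placements cannot be ``inherited'' to absorb each other).

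The main obstacle is step~(iii): assembling, from the local non-applicability of each rule at each pair of parallel paths, a globally consistent pair of incomparable interleavings. I would handle this by a bottom-up traversal of $d'$ along $v_1$'s main branch, at each node invoking the appropriate rule's failure condition to extend a partial pair of placements into a larger partial pair that still yields incomparable interleavings. The single-token structure of $v_1$ is what makes this induction tractable: there is a unique linear skeleton on which all branches are placed, so conflicts are detected locally and can be resolved by independent choices at each position. Once both interleavings are constructed, the \xppes restriction on every $v_i$ guarantees that no //-subpredicate from one can map into the other at a position that would equate them, which yields the required incomparability and completes the contradiction.
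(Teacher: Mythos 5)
Your overall shape (contrapositive on the output of \apprules, rigidity of the main branch forced by the single-token view $v_1$, and use of the rules' failure conditions) matches the paper's intent, but your concluding step has a genuine logical gap: exhibiting \emph{two} incomparable interleavings of $d'$ does not contradict union-freedom. Union-freedom asserts the existence of one interleaving that contains \emph{all} others; two mutually incomparable interleavings can perfectly well coexist with a third one that dominates both (indeed, in the single-token case all interleavings share the same $k$-node /-main-branch and differ only in which predicates sit at which positions, so a ``sparser'' interleaving could still contain both of your witnesses). The paper is careful about exactly this point elsewhere --- in Lemma~\ref{lem:2-comp-skel} it builds $p'$ and $p''$ \emph{and then separately rules out a third dominating interleaving} --- and in the present lemma it avoids the issue by arguing against an \emph{arbitrary} candidate witness: assume $d'\equiv i$ for some interleaving $i$ and derive a contradiction for that $i$.

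The missing chain is the following, and it is where R7 and R9 (rather than R3 or R8) do the real work. Since $v_1$ has one token, after R1 the root-to-output path of $d'$ is a /-path, and any surviving parallel branch $p_2$ must, in every interleaving, have its nodes placed onto nodes of that /-path; the failure of R7 means that no such placement $\psi$ extends to a mapping of $\tp{d'}{p_2}$ into the corresponding subpattern, so \emph{every} placement --- in particular the one underlying the hypothetical dominant interleaving $i$ --- forces some predicate $Q$ from a node $n_7$ of $p_2$ onto a node $n_8=\psi(n_7)$ at which $Q$ did not already hold. The failure of R9 then says precisely that $Q$ at $n_8$ is \emph{not} implied in all interleavings, so some interleaving $i'$ lacks it; since a containment mapping from $i$ into $i'$ must act as the identity on the shared /-main-branch, $i' \not\sqsubseteq i$, contradicting $i \equiv d'$. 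Your sketch gestures at these rule failures but never assembles them into an argument against the dominant interleaving itself, and the bottom-up assembly of two specific placements that you propose in its place does not establish the lemma.
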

\begin{proof}
Without loss of generality,  we can consider that  $v_2, v_3, \dots v_n$ have more than one token.


Suppose that the output of the rewriting algorithm, call it $d'$, is
not a tree pattern. It is easy to see that there is a subpattern \textit{sd} in
$d'$ that is not a tree and it has a //-edge, otherwise R1 (plus maybe
R7) would have reduced that subgraph to a tree. Then there must be a
node $n_1$ in $d'$ such that there are at least $2$ main branch paths ($p_1$ and $p_2$) going out
of $n_1$: one starting with a /-edge $n_1/n_2$ and another one
starting with a //-edge $n_1//n_3$, such that the 2 paths meet again
starting from a node $n_4$. We can also infer that on the $n_1//n_3$
branch, the last edge before $n_4$ is a //-edge $n_5//n_4$, as
in Figure~\ref{fig:subgraph_onetokencase}, otherwise
R1.ii would have applied or the pattern would have been unsatisfiable.

\eat{

\begin{figure}[h]
\begin{center}
\includegraphics[trim=0mm 130mm 190mm 0mm, clip=true,
scale=0.60]{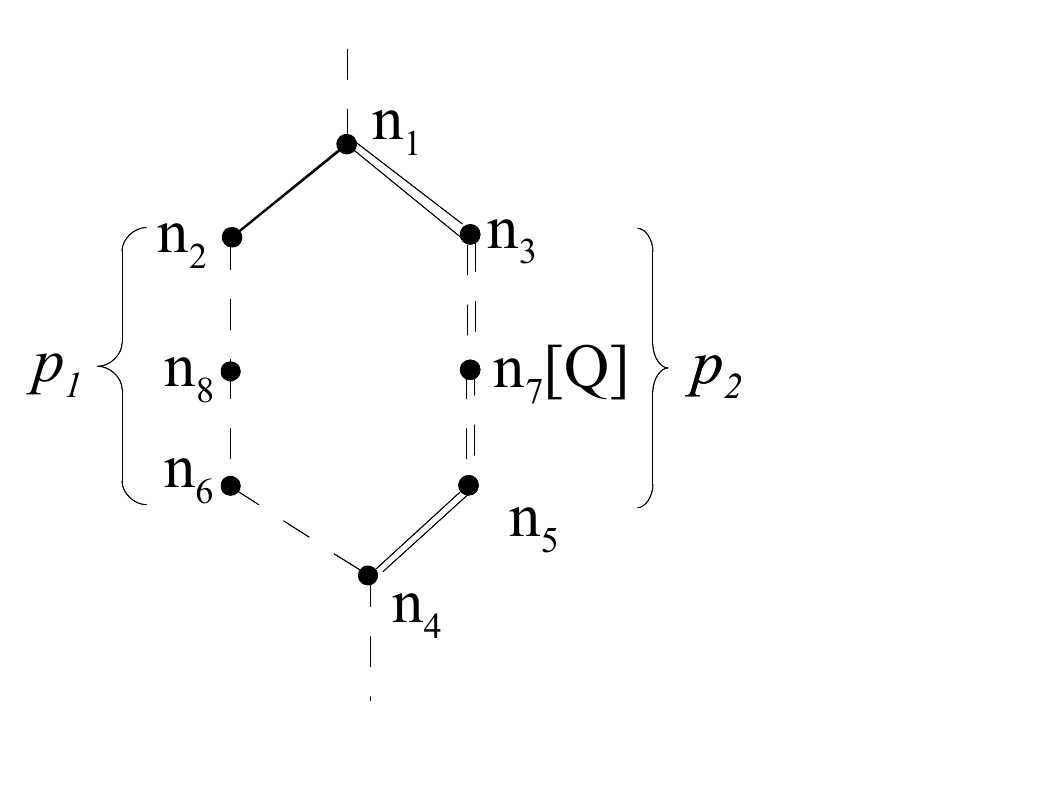}
\caption{Subgraph \textit{sd} (when one query has 1 token). \label{fig:subgraph_onetokencase}}
\end{center}
\end{figure}
}

Since R2 did not apply, it means that $n_2$ and $n_3$ are collapsible and, moreover, that the root token of $p_2$ is a prefix of $p_1$ yet  there is
at least one predicate attached to a node of it that  that does not map in the corresponding position in $p_1$.

Besides these two branches, other branches may be present in parallel with the /-only main branch in $d'$ of which $p_1$ is part of. 

In general, since R7 does not apply, it means that $p_2$ does not fully map into $p_1$. This means that an interleaving $i$  of $d'$ s.t. $d'\equiv i$ (witness for union-freedom), built by some choice $\psi$ of mapping $p_2$'s main branch nodes into $p_1$'s main branch nodes, must for at least some node $n_7$ in $p_2$ and  predicate $Q$ attached to it collapse $n_7$ with $n_8=\psi(n_7)$ of $p_1$, even though  \pattern{\lambda_d(n_7)[Q]} did not map into \sub{d'}{n_8}.

We will show that there exists at least one other interleaving $i'$ of $d'$ in which the predicate $Q$ does not hold at $n_8$. Since R7  did not apply, it means that $Q$ did not hold at $n_8$ before any possible application of R9 and it still does not hold after R9 steps. But this basically tells us there there exists an interleaving in which $Q$ does not hold at $n_8$. This implies that $d$ is not union-free.
\end{proof}

We now consider the case when all the branches in parallel (the compensated views) have at least one //-edge in the main branch.  

We start by proving the following  lemma.
\begin{lemma}
\label{lem:2-comp-skel}
For two \xppes\ patterns $v_1, v_2$ s.t. their root and result tokens have the same main branch, the DAG pattern $d=dag(v_1
\cap v_2)$ is union-free iff \apprules rewrites $d$ into a
tree.
\end{lemma}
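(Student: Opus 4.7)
The $(\Leftarrow)$ direction is immediate: if \apprules outputs a tree pattern $d'$, then by Proposition~\ref{l:step-soundness} we have $d\equiv d'$, so $d$ is union-free by definition. The substance of the proof lies in $(\Rightarrow)$, and the plan is to argue the contrapositive: if \apprules halts on some DAG $d'$ that is \emph{not} a tree, then $d$ admits two incomparable interleavings, hence is not union-free.

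Since $v_1$ and $v_2$ share the main branches of their root and result tokens, the first block of R1 applications inside \apprules coalesces the root tokens into a single /-path $t_r$ ending at a node $n_r$ and the result tokens into a single /-path $t_o$ starting at a node $n_o$. Any non-tree residue of $d'$ must therefore lie strictly between $n_r$ and $n_o$, taking the form of at least two main-branch paths $p_1, p_2$ sharing a common ancestor $n_1$ (above or equal to $n_r$) and a common descendant $n_4$ (at or below $n_o$). I will first argue that, after saturation under R1, at least one of these parallel paths must contain a //-edge, because otherwise R1 followed by R7 would collapse them; moreover, the failure of R2 forces the two outgoing edges at $n_1$ (and the two incoming edges at $n_4$) to be of opposite kinds in a very constrained way, mirroring the configuration exploited in Lemma~\ref{l:onetokenq_nviews}.

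The key step is a case analysis on why each of R3--R9 failed on $p_1, p_2$. Because both $v_1, v_2 \in \xppes$, every //-subpredicate has an incoming /-path whose code is incompatible with the /-path following the node it is attached to; this is precisely what prevents predicates from being implicitly inherited across //-edges and is what makes interleavings \emph{distinguishable}. Using this, for each pair $(n_i, n_j)$ of nodes at comparable positions in $p_1$ and $p_2$ I will build an interleaving $I_{ij}$ of $d$ by coalescing $n_i$ with $n_j$ and ordering the remaining nodes greedily. The failure of R6 and R8 provides two distinct admissible coalescing choices; the failure of R7 and R9 guarantees that for some main-branch node $n$ there is a predicate $Q$ present in one of the resulting interleavings but not in the other (otherwise R9 would have copied $Q$ onto $n$ and R7 would have removed the parallel branch). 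The extended-skeleton hypothesis then prevents $Q$ from being recoverable via inheritance through any //-edge, so the two interleavings are mutually non-containing. By Lemmas~\ref{lem:cap_sub_cup} and~\ref{lem:equiv_tree_union}, $d$ cannot be union-free, contradicting our assumption.

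The main obstacle, and the place where I expect to spend most of the work, is the bookkeeping in the last step: showing that the witness predicate $Q$ produced by the failure of R9 really survives as a distinguisher in \emph{both} of the two candidate interleavings $I_{ij}, I_{ji}$ that one obtains from the two admissible coalescings. This requires tracking how compensations accumulated by R3--R6 interact with the extended-skeleton condition along $t_r$ and $t_o$, and ruling out the possibility that $Q$ ``migrates'' to a node common to both interleavings via a chain of inheritance. Once this is handled, the rest of the argument is a straightforward assembly of the rule preconditions and the properties of \xppes~patterns.
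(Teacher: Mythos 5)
There is a genuine gap at the final step of your argument, and it is precisely where the paper's proof does its real work. You construct two mutually non-containing interleavings and then conclude ``By Lemmas~\ref{lem:cap_sub_cup} and~\ref{lem:equiv_tree_union}, $d$ cannot be union-free.'' That inference does not follow. Union-freedom means that \emph{some} interleaving contains all others; exhibiting two incomparable interleavings $I_1, I_2$ only shows that neither of \emph{those two} is the dominant one. A third interleaving could still contain both of them and everything else. The paper's proof of this lemma spends most of its length closing exactly this hole: after building the two incomparable interleavings $p' = p_r//pr_2//pr_1//sf_1//t_o$ and $p'' = p_r//pr_1//pr_2//sf_2//t_o$ (where $sf_i$ is the \emph{maximal} token-suffix of one intermediary part that maps into the other and $pr_i$ is the non-empty remainder), it assumes a third interleaving $p$ with containment mappings into both, traces the images of the token-prefixes $pr_1$ and $pr_2$ through the composed mappings, and derives that the first tokens $t_1$ and $t_2$ of the two intermediary parts would have to be equivalent --- contradicting the failure of R6 (dissimilarity). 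Without an argument of this kind your proof does not establish the lemma.

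A secondary issue: your case analysis invokes the failure of R2, R3, R8 and R9, and speaks of ``compensations accumulated by R3--R6.'' In this setting --- two \xppes\ patterns whose root and result tokens share main branches, after R1 saturation --- the paper observes that only R6 and R7 can fire (there are no nodes with mixed incoming/outgoing edge kinds, and R5 concerns //-predicates only). Your witness construction via a predicate $Q$ left behind by the failure of R9 is therefore aimed at rules that are not in play here; the actual obstruction when R7 fails is the absence of a mapping between the whole intermediary parts, and the correct witnesses are built from the maximal-mappable-suffix decomposition, not from a single stranded predicate. The extended-skeleton condition does enter, but its role is to guarantee that tokens map strictly inside tokens (so that $\psi'(pr_1)$ is a well-defined token-prefix), not to block predicate inheritance in the way you describe.
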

\begin{proof}
Let us first consider what rule steps may apply in order to refine
$d$. First, since we are dealing with patterns with root and result tokens having the same main branch, R1 steps
will first apply, coalescing the root and result tokens of the two
branches. At this point, the only rules that remain applicable are R6
and R7.  This is because we do not have nodes with incoming (or
outgoing) /-edges and //-edges simultaneously and R5 will only apply
to predicates starting by a //-edge.

We argue now that $d$ is union-free iff rule R7 applies on it.

Note that since we only have 2 parallel branches an application of
rule R7 would immediately yield a tree. So the \emph{if} direction is
straightforward. For the \emph{only if} direction, if R7 does not
apply this translates into \emph{$(\dagger)$ there is no mapping (not necessarily root-mapping)
between the intermediary part of $v_1$ and the intermediary part of
$v_2$.}

Assuming that $(\dagger)$ holds, rule R6 remains the only option. So,
possibly after some applications of R6, followed by applications of R1
collapsing entire tokens, we obtain a refined DAG $d$ as illustrated
in Figure~\ref{fig:completeness1} (only the main branches of $d$ are
illustrated). $p_r$ has the common main branch following the root (may
have several tokens if R6 applied) and $t_o$ denotes the result
token. $t_1$ and $t_2$ denote the two sibling /-patterns for which R6
no longer applies. As $t_1$, $t_2$ are \emph{dissimilar} we have
that $t_1 \not \equiv t_2$.

\begin{figure}[t]
  \centering
  \subfloat[DAG pattern $d$ for Lemma~\ref{lem:2-comp-skel}.]{\label{fig:completeness1}\includegraphics[trim=50mm 65mm 90mm 25mm, clip=true,
scale=0.40]{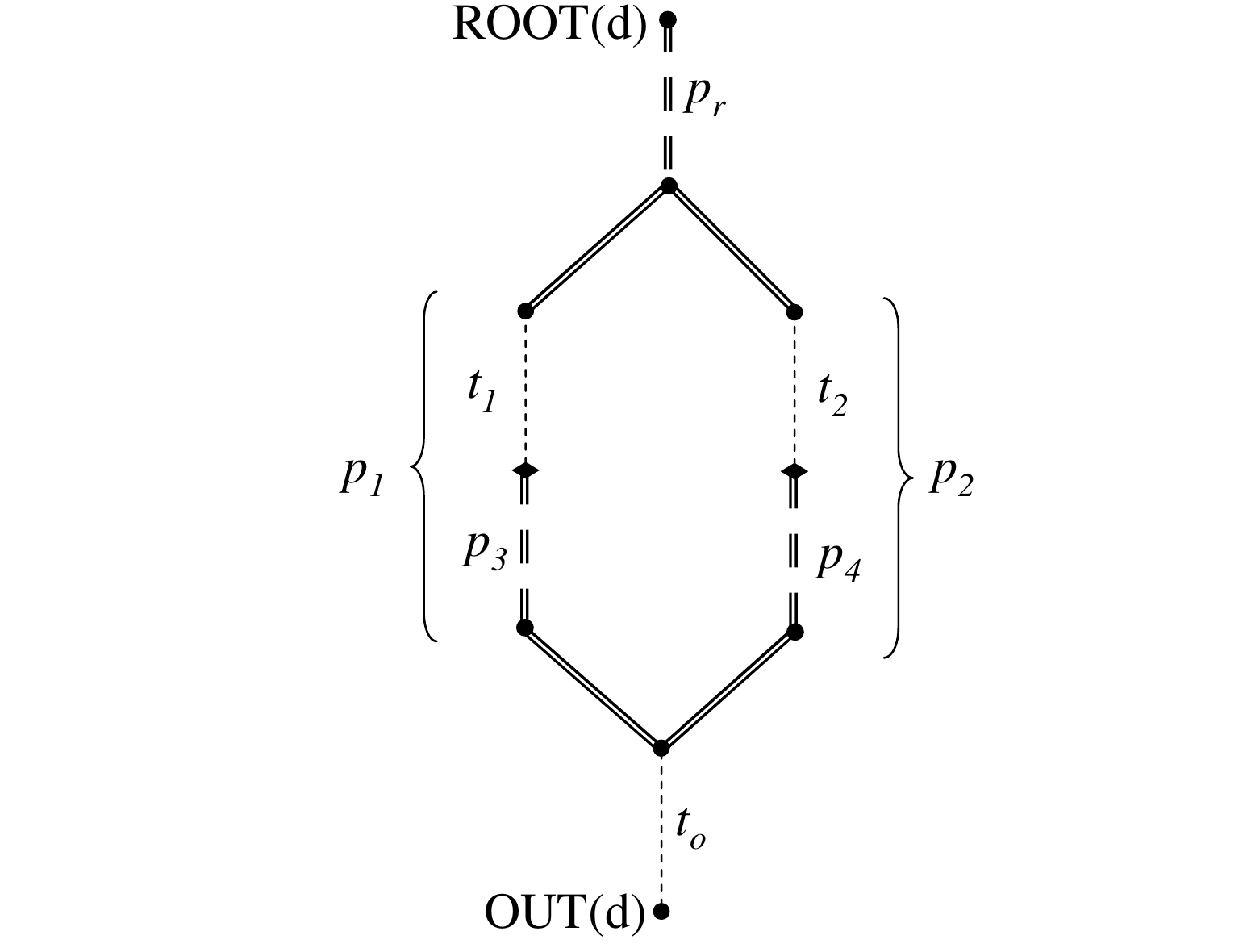}}                
 \subfloat[DAG pattern $d$ for Lemma~\ref{lem:n-comp-skel}.]{\label{fig:completeness3}\includegraphics[trim=50mm 85mm 90mm 25mm, clip=true,
scale=0.40]{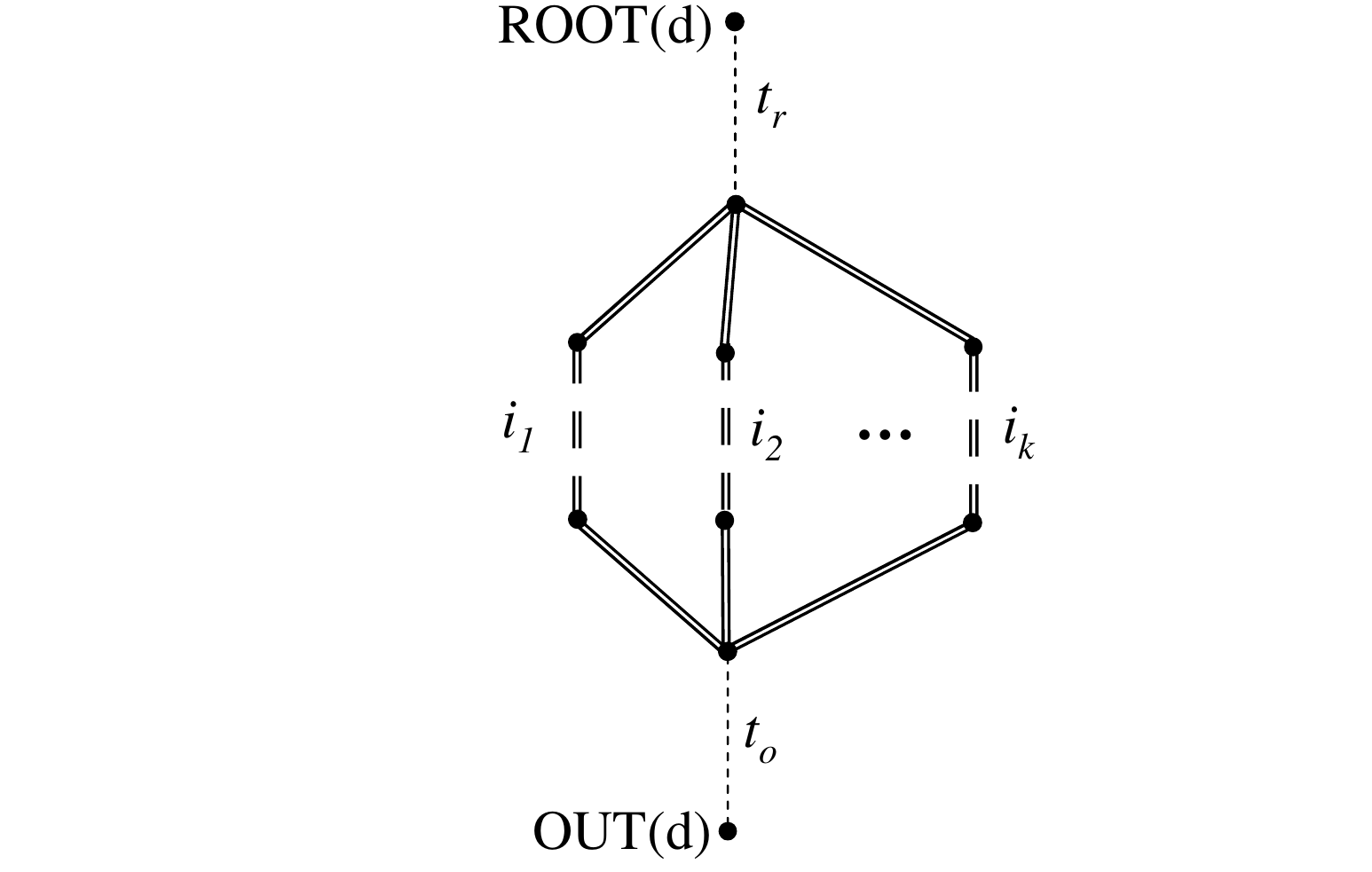}} 
 \caption{DAG patterns for the proof of  Theorem~\ref{th:completenessUF-es}.}
   \label{fig:contour1}
   \vspace{-3mm}
 \end{figure}

We show that $d$ is not union-free by the following approach: we build
two interleavings, $p'$ and $p''$, that do not contain one another,
and then show that by assuming the existence of a third interleaving $p$
that contains both we obtain the contradiction $t_1 \equiv t_2$.

We continue with the following observation which follows easily from the  \xppes\ restriction on  usage of //-edges  in predicates: given two /-patterns $t_1$ and $t_2$ from \xppes, if $t_1$ does not map in $t_2$  then, for any tree pattern $q$ of the form $\dots//t_2'//\dots$ with $t_2'$ being an isomorphic copy of $t_2$, we have that $t_1$ does not map into $t_2'$.

The following steps will implicitly use this  observation.

Let $p_1=t_1//p_3$ denote the left branch and let $p_2=t_2//p_4$
denote the right branch in $d$. Because of $(\dagger)$, there is no
mapping (not necessarily root-mapping) between $p_1$ and $p_2$.

Let $sf_1$ be the maximal token-suffix of $p_1$ that can map into
$p_2$, and let $pr_1$ be the remaining part (i.e., a
token-prefix). Note that $pr_1$ cannot be empty,  so we can write it 
as $p_1=pr_1//sf_1$.

Similarly, let $sf_2$ denote the maximal token-suffix of $p_2$ that
can map into $p_1$, and let $pr_2$ denote the remaining part,
non-empty as well. So we can write $p_2$ as $p_2=pr_2//sf_2$.

We build $p'$ and $p''$ as follows:
\begin{eqnarray*}
 p' & = & p_r'//pr_2'//p_1'//t_o'=p_r'//pr_2'//pr_1'//sf_1'//t_o'\\
  p''&=& p_r''//pr_1''//p_2''//t_o''=p_r''//pr_1''//pr_2''//sf_2''//t_o''
\end{eqnarray*}
where the $\#'$, $\#''$ parts are isomorphic copies of the $\#$ parts of $d$.

Note that $pr_1'$ (resp. $pr_2''$) starts by token $t_1' \equiv t_1$  (resp. $t_2'' \equiv t_2$).

These two queries are obviously in $interleave(d)$. Moreover, there
can be no containment mapping between $p'$ and $p''$ since, by the way
$sf_1$ and $sf_2$ were defined, $pr_1'$ (resp. $pr_2''$) could only
map in $pr_1''$ (resp. $pr_2'$).

So neither $p'$ nor $p''$ can be the interleaving that reduces all the
others.  We show in the following that no other interleaving $p$ of
$d$ can reduce both $p'$ and $p''$ unless $t_1 \equiv t_2$.

Let us assume that such a $p$ exists. Without loss of generality, let
$p$ be of the form $$p=p_r//m//t_o.$$ (interleavings that are not of
this kind will not remain in the normal form of $d$).

We assume a containment mapping $\phi'$ from $p$ to $p'$ and another one
  $\phi''$ from $p$ to $p''$. Obviously, $v_1,v_2$
must have containment mappings into $p$, since $p \equiv v_1 \cap
v_2$. In particular, their sub-sequences $p_1$ and $p_2$ have images
in the $m$ part of $p$. Let $\psi'$ and $\psi''$ be these containment
mappings.

With a slight abuse of notation, let $\psi'(pr_1)$ denote the minimal
token-prefix of $m$ within which the image under $\psi'$ of the $pr_1$
part of $v_1$ occurs. $\psi'(pr_1)$ is well defined because $v_1$ and
$v_2$, and hence $p_1$ and $p_2$, are in \xppes, hence the image of a
token of $p_1$ and of its predicates is included into a token of
$m$. In other words, $\psi'(pr_1)$ starts with the root token of $m$
and ends with the token into which the output token of $pr_1$
maps. Similarly, let $\psi''(pr_2)$ denote the minimal token-prefix of
$m$ within which the image under $\psi''$ of the $pr_2$ part of $v_2$
occurs.

We can thus write $p$ in two forms, as  
\begin{eqnarray*}
p& =& p_r//\psi'(pr_1) \dots \psi'(sf_1)//t_o \\
p & = &p_r//\psi''(pr_2)\dots\psi'(sf_2)//t_o
\end{eqnarray*}
Next, we argue that in the containment mapping $\phi''$ of $p$ in
$p''$, we must have $\phi''(\psi'(pr_1))=pr_1''$. Similarly, we must
have that $\phi'(\psi''(pr_2))=pr_2'$. This follows easily from the
way $sf_1$ and $sf_2$ were defined. (For instance, no node of
$\psi'(pr_1)$ can map below $pr_2''$ in $p''$, otherwise
$\textit{sf}_1$ would not be maximal. And $\mb{p_r}=\mb{p_r''}$ and
$|\psi'(pr_1) \geq pr_1''|$, hence no node
of $\psi'(pr_1)$ can map higher than $pr_1''$ either,
otherwise $p_r$ would not map into $p''$.) And it then implies that
$\psi'(pr_1) \equiv pr_1$ and $\psi''(pr_2) \equiv pr_2$.

But since $m$ starts by both the token-prefix $\psi'(pr_1)$ and by
$\psi''(pr_2)$, hence by token-prefixes $pr_1$ and $pr_2$,   $pr_1$ and $pr_2$ should at least start by the same token. \emph{Hence
$t_1 \equiv t_2$, which is a contradiction.
}

In other words, we showed  that $d$ is union-free iff, after a  sequence
of R1 steps,  R7 applies, transforming the pattern into a
tree. As we know from Lemma~\ref{l:termination} that \apprules also
terminates, it follows that $d$ is union-free iff \apprules rewrites
$d$ into a tree.
\end{proof}

For $d=v_1 \cap v_2$, where $v_1$ and $v_2$ are two
skeleton patterns such that their root and result tokens have the same main branch, with the previous notations, we can also easily prove the
following.

\begin{lemma}
\label{cl:shape}
All the interleavings of \nf{d} are of the form $p_r//\dots//t_o$.
\end{lemma}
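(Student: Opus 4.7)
The plan is to exploit the fact that, after the initial R1 saturation on the root and result tokens (which is what Lemma~\ref{lem:2-comp-skel} relies on), the common root tokens of $v_1,v_2$ merge into a single /-path $p_r$ of $d$ starting at $\droot{d}$, and the common result tokens merge into a single /-path $t_o$ of $d$ ending at $\out{d}$. Since $p_r$ is the root token of each $v_j$ (i.e., its maximal /-prefix on the main branch), the last node $r_k$ of $p_r$ must have all of its outgoing main branch edges in $d$ labeled by //; symmetrically, the first node $o_0$ of $t_o$ has all of its incoming main branch edges labeled //. These three structural observations are what I would use to pin down the shape of every interleaving.

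Next, I would fix an arbitrary interleaving $p_i \in \interleave{d}$ given by a code $i$ and function $f_i$ (as per Definition~\ref{def:interleaving}), and argue by a straightforward induction on $j$ that the first $|p_r|$ positions of the code $i$ are exactly the images $f_i(r_0), f_i(r_1),\dots, f_i(r_k)$ of $p_r$'s nodes, connected by /-edges. The base case is forced because $f_i$ must map $\droot{d}=r_0$ to the first $\Sigma$-position of $i$, and the inductive step uses the fact that each /-edge $(r_j,r_{j+1})$ of $d$ forces the code to contain the contiguous fragment $f_i(r_j)/f_i(r_{j+1})$, so $f_i(r_{j+1})$ is necessarily the position directly following $f_i(r_j)$. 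A symmetric argument shows that the last $|t_o|$ positions of $i$ are exactly $f_i(o_0),\dots,f_i(\out{d})$, connected by /-edges. Hence $p_i$ starts with a copy of $p_r$ (carrying all of $p_r$'s predicates, by point (2.b) in Definition~\ref{def:interleaving}) and ends with a copy of $t_o$.

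Finally, I would show that in $i$ the transition just after $f_i(r_k)$ is a "//", and the transition just before $f_i(o_0)$ is also a "//". The main branch node $n$ whose image directly follows $f_i(r_k)$ must be in $\mb{v_1}\cup\mb{v_2}$ and reachable from $r_k$ by a main-branch edge in $d$; by the first observation, any such edge is a //-edge, so the code must have "//" between $f_i(r_k)$ and $f_i(n)$. A symmetric argument handles $o_0$. Combining this with the previous paragraph, $p_i$ has exactly the shape $p_r//\dots//t_o$, with the middle "$\dots$" denoting the interleaved intermediate parts of $v_1$ and $v_2$ (possibly collapsed by further R1 steps). Since $\nf{d}\subseteq\interleave{d}$, the claim follows.

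The main obstacle — though modest — will be rigorously ruling out any interleaving that tries to push a non-$p_r$ node inside the prefix $p_r$ or a non-$t_o$ node inside the suffix $t_o$; this is where the forced adjacency of /-edge images in the code, together with the maximality of $p_r$ as a /-prefix and $t_o$ as a /-suffix, does the real work.
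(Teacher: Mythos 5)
Your block-structure argument is sound and is essentially what the paper has in mind (it gives no explicit proof, calling the lemma easy): the onto-ness of $f_i$ plus the forced adjacency of /-edge images pins the nodes of $p_r$ to the first positions of the code and those of $t_o$ to the last, and the //-edges leaving the last node $n_r$ of $p_r$ (resp.\ entering the first node $o_0$ of $t_o$) force every other main-branch node strictly in between, so nothing can be collapsed into either block. The gap is in your final step. A //-edge $(n_1,n_2)$ of $d$ only constrains the code to be of the form $\dots f_i(n_1)\dots f_i(n_2)\dots$, i.e.\ it constrains the \emph{order} of the two images, not the separator written between them; Definition~\ref{def:interleaving} leaves the axis free at any position not crossed by a /-edge of $d$. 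Consequently there \emph{are} legitimate interleavings whose code reads $p_r/\dots$ or $\dots/t_o$ at the junctions (the construction in the proof of Lemma~\ref{lem:cap_sub_cup} produces exactly such codes, collapsing runs of slashes only when longer than two), so the statement you actually argue for --- that \emph{every} interleaving of $d$ has the shape $p_r//\dots//t_o$ --- is false. Your closing sentence ``since $\nf{d}\subseteq\interleave{d}$ the claim follows'' shows that the restriction to $\nf{d}$ is never used, whereas it is precisely what makes the lemma true.

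The repair is short. Given an interleaving whose junction separator is ``/'', replace that separator by ``//'': the new code still satisfies all constraints of Definition~\ref{def:interleaving}, because you have already established that no /-edge of $d$ maps across the junction ($n_r$ has only //-outgoing main-branch edges and $o_0$ only //-incoming ones), so it is again an interleaving of $d$, and it strictly contains the original one via the identity containment mapping. Hence the ``/''-junction interleavings are not maximal under containment and are pruned from $\nf{d}$, leaving only those of the form $p_r//\dots//t_o$. (The same maximality argument also covers any //-edges internal to $p_r$ in the case, allowed by the paper after R6 steps, where $p_r$ consists of several tokens rather than a single /-path.)
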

We know so far that the intersection $d$ of two
skeleton queries such that their root and result tokens have the same main branch is union-free iff \apprules rewrites $d$ into a
tree. Moreover, this happens iff there is a mapping between the
intermediary part of one into the intermediary part of the
other. If $d$ is not union-free, the result is a union of
queries having the same root and result tokens, as described in Lemma~\ref{cl:shape}.

We now consider intersections of more than two patterns.

\begin{lemma}
\label{lem:n-comp-skel}
Given  \xppes\ patterns $v_1, \dots, v_n$ s.t. their root and result tokens have the same main branch, the DAG pattern
$d=dag(v_1 \cap \dots\cap v_n)$ is union-free iff~\apprules
rewrites $d$ into a tree.
If the skeletons are of the form $v_i = p_r//p_i//t_o$, $1\leq i \leq n$,
then $d$ is union-free iff
there is a query among
them, $v_j$, having an intermediary part $p_j$ such that all other
$p_i$ map into $p_j$.
\end{lemma}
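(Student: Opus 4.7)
The plan is to reduce the analysis to the two-branch case handled in Lemma~\ref{lem:2-comp-skel} and extend its witness construction to accommodate arbitrarily many branches. As in that lemma, R1 applied to saturation will coalesce the identical root tokens $p_r$ and identical result tokens $t_o$ across $v_1,\dots,v_n$, leaving a DAG consisting of $p_r$ followed by $n$ parallel intermediary branches $p_1,\dots,p_n$, followed by $t_o$. Under the \xppes\ restriction, no side predicates force further collapses, so only R6 and R7 can apply further. Note R6 only merges similar tokens within the branches (and is thus absorbed by subsequent R1 steps), so the only rule that can reduce the number of parallel branches is R7.

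For the ``if'' direction of the structural characterization, suppose some $v_j$ has intermediary part $p_j$ into which every other $p_i$ maps. Then for each $i\neq j$ the premises of R7 are met (with $p_1=p_j$, $p_2=p_i$), so R7 eliminates branch $p_i$. After $n-1$ applications, only $p_j$ remains and \apprules\ outputs a tree.

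For the ``only if'' direction (the main technical content), suppose \apprules\ does not produce a tree. Then at saturation at least two parallel branches remain and no branch fully maps into another (else R7 would fire). Pick any surviving $p_i,p_j$. Mimicking Lemma~\ref{lem:2-comp-skel}, let $sf_j$ be the maximal token-suffix of $p_j$ that maps into $p_i$ and $pr_j$ the (non-empty) remaining token-prefix, so $p_j=pr_j//sf_j$; define $sf_i,pr_i$ symmetrically. Fix an arbitrary ordering $k_1,\dots,k_{n-2}$ of the indices $\neq i,j$ and set
\[
p' \;=\; p_r\,/\!/\,pr_j\,/\!/\,p_i\,/\!/\,p_{k_1}\,/\!/\,\cdots\,/\!/\,p_{k_{n-2}}\,/\!/\,t_o,
\]
\[
p''\;=\; p_r\,/\!/\,pr_i\,/\!/\,p_j\,/\!/\,p_{k_1}\,/\!/\,\cdots\,/\!/\,p_{k_{n-2}}\,/\!/\,t_o.
\]
Each $p'$ and $p''$ is a valid interleaving of $d$: branches $p_i,p_j$ and every $p_{k_l}$ admit a containment mapping into the respective interleaving (for $p_j$ into $p'$, one maps $pr_j$ to its copy and $sf_j$ into the $p_i$ block, using the definition of $sf_j$; symmetrically for $p''$). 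Now $p'$ and $p''$ are incomparable: a containment mapping from $p'$ into $p''$ would have to map the first token $t_j$ of $pr_j$, which by maximality of $sf_j$ does not map into $p_i$, into the $p''$ fragment; the \xppes\ condition on $p_j$ (no //-predicate crossing token boundaries) then forces this image to lie inside $pr_i$, but $p_r$ must map into $p_r$, leaving too little room above $pr_i$. The symmetric argument handles the reverse direction.

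It remains to show that no third interleaving $p$ of $d$ can contain both $p'$ and $p''$. This is exactly the contradiction reached at the end of the proof of Lemma~\ref{lem:2-comp-skel}: writing $p=p_r//m//t_o$ and fixing containment mappings $\phi':p\to p'$, $\phi'':p\to p''$ together with embeddings $\psi_i,\psi_j$ of $v_i,v_j$ into $p$, one analyzes the minimal token-prefixes $\psi_i(pr_i),\psi_j(pr_j)$ of $m$. The \xppes\ property confines each token's image to a single token of $m$, and tracing the images under $\phi',\phi''$ forces $\phi''(\psi_i(pr_i))=pr_i''$ and $\phi'(\psi_j(pr_j))=pr_j'$, so that $m$ begins with both $pr_i$ and $pr_j$; hence their first tokens are equivalent, contradicting $p_i\not\hookrightarrow p_j$.

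The principal obstacle is the coexistence of the ``unrelated'' branches $p_k$ ($k\neq i,j$) in the witness interleavings: one must check both that they embed into $p'$ and $p''$ (securing validity) and that they do not create an alternative mapping route that would either render $p'$ and $p''$ comparable or enable some other interleaving to dominate both. The \xppes\ restriction, which prevents //-subpredicates from spanning token boundaries, is precisely what keeps the token-by-token bookkeeping tight and lets the two-branch argument go through with the extra $p_{k_l}$ blocks treated as inert copies.
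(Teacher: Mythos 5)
Your reduction to the two-branch case does not go through; the paper instead proceeds by induction on $n$, and that induction is doing essential work that your pairwise argument cannot replace. Concretely, the step you yourself flag as ``the principal obstacle'' --- showing that no third interleaving of the full $n$-branch DAG contains both of your witnesses $p'$ and $p''$ --- is precisely where the argument breaks. The forcing step at the end of Lemma~\ref{lem:2-comp-skel} (namely $\phi''(\psi'(pr_1))=pr_1''$, whence $m$ must begin with both $pr_i$ and $pr_j$) relies on $p''$ having \emph{exactly} the form $p_r//pr_1''//p_2''//t_o$, so that the image of a token-prefix of $m$ has nowhere else to go. Once you append the blocks $p_{k_1}//\cdots//p_{k_{n-2}}$, a hypothetical dominating interleaving $p=p_r//m//t_o$ has $m$ built as an interleaving of \emph{all} $n$ intermediary parts, and the images of $\psi_i(pr_i)$, $\psi_j(pr_j)$ under $\phi'$, $\phi''$ are no longer pinned down: tokens can land inside the $p_{k_l}$ blocks, and the conclusion that $m$ starts with two inequivalent tokens does not follow. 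For the same reason your incomparability claim for $p'$ versus $p''$ is weaker than stated: the first token of $pr_j$ need not map into $pr_i$ or the $p_i$ fragment of $p''$; it may map into some $p_{k_l}$. A secondary inaccuracy is the dismissal of R6 as absorbed by R1: R6 merges similar tokens \emph{across} parallel branches and genuinely extends the common prefix $p_r$ beyond the root token; the paper applies it under a controlled strategy (only when it fires on all branches simultaneously) before the contradiction argument.

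The paper's route is: assume a dominating interleaving $q=p_r//t//m//t_o$ exists; since the leading tokens $t_1,\dots,t_k$ of the surviving branches are not all similar, pick a branch $v_1'$ with $t_1\not\equiv t$; remove it and apply the inductive hypothesis to write the remaining DAG as a union $q_1\cup\dots\cup q_x$ of incomparable interleavings; then reformulate $d\equiv\bigcup_i(v_1'\cap q_i)$ and split into three cases according to how many $q_i$ are contained in $v_1'$, using Lemma~\ref{lem:2-comp-skel} pairwise on $v_1'$ against each $q_i$ (which do share root and result tokens) and deriving the contradiction $t\equiv t_1$ only in the case where exactly one $q_i\sqsubseteq v_1'$. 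If you want to salvage a direct (non-inductive) argument, you would need to construct, for \emph{every} candidate dominating interleaving, a witness interleaving it fails to contain --- which is essentially what the induction packages for you.
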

\begin{proof}
We prove this by induction on the number of
patterns (Lemma~\ref{lem:2-comp-skel} covers $n=2$).

As in the case of Lemma~\ref{lem:2-comp-skel}, we first rewrite $d$ by
rule R1, coalescing the root and result tokens of the parallel
branches. At this point, the only rules that remain applicable are R6
and R7.

Let us now assume that some run of \apprules terminates on $d$ without
outputting a tree. Then, it is easy to check that \apprules will also
stop in the particular run, in which we start by applying only R7
until it does not apply anymore.

We show in the following that $d$ resulting from this run is not
union-free.

We continue with $d$ obtained, as said previously, possibly after some
applications of R7 that removed some of the branches in parallel,
yielding a DAG pattern as the one illustrated in
Figure~\ref{fig:completeness3}. $2\leq k\leq n$ denotes the number of
remaining branches in parallel and $i_1,
\dots i_k$ denote these branches. Without loss of generality, let these be the intermediary
parts of $v_1, \dots, v_k$ respectively.

Note that we are now in a setting in which $d \equiv v_1 \cap \dots \cap v_k$ and the following holds: \emph{$(\dagger)$ there is no mapping  between the intermediary parts of any  of $v_1, \dots, v_k$.}

Next, starting from the DAG pattern $d$ in Figure~\ref{fig:completeness3}, by $(\dagger)$, only rule R6 is applicable. 

For convenience, we  assume that R6 steps are applied by a slightly different strategy: we take an R6 step only if it applies to \emph{all} the parallel branches simultaneously. At some point, this process will stop as well and we obtain a refined $d$ as illustrated in Figure~\ref{fig:completeness2} (only the main branches are given).  Let  $p_i=t_i//r_i$  denote the branches in parallel. Note that the $t_i$ tokens cannot all be equivalent (recall that in \xppes similar patterns must be equivalent).

\begin{figure}[t]
  \centering
  \subfloat[DAG pattern $d$ for Lemma~\ref{lem:n-comp-skel}.]{\label{fig:completeness2}\includegraphics[trim=50mm 65mm 70mm 25mm, clip=true,
scale=0.40]{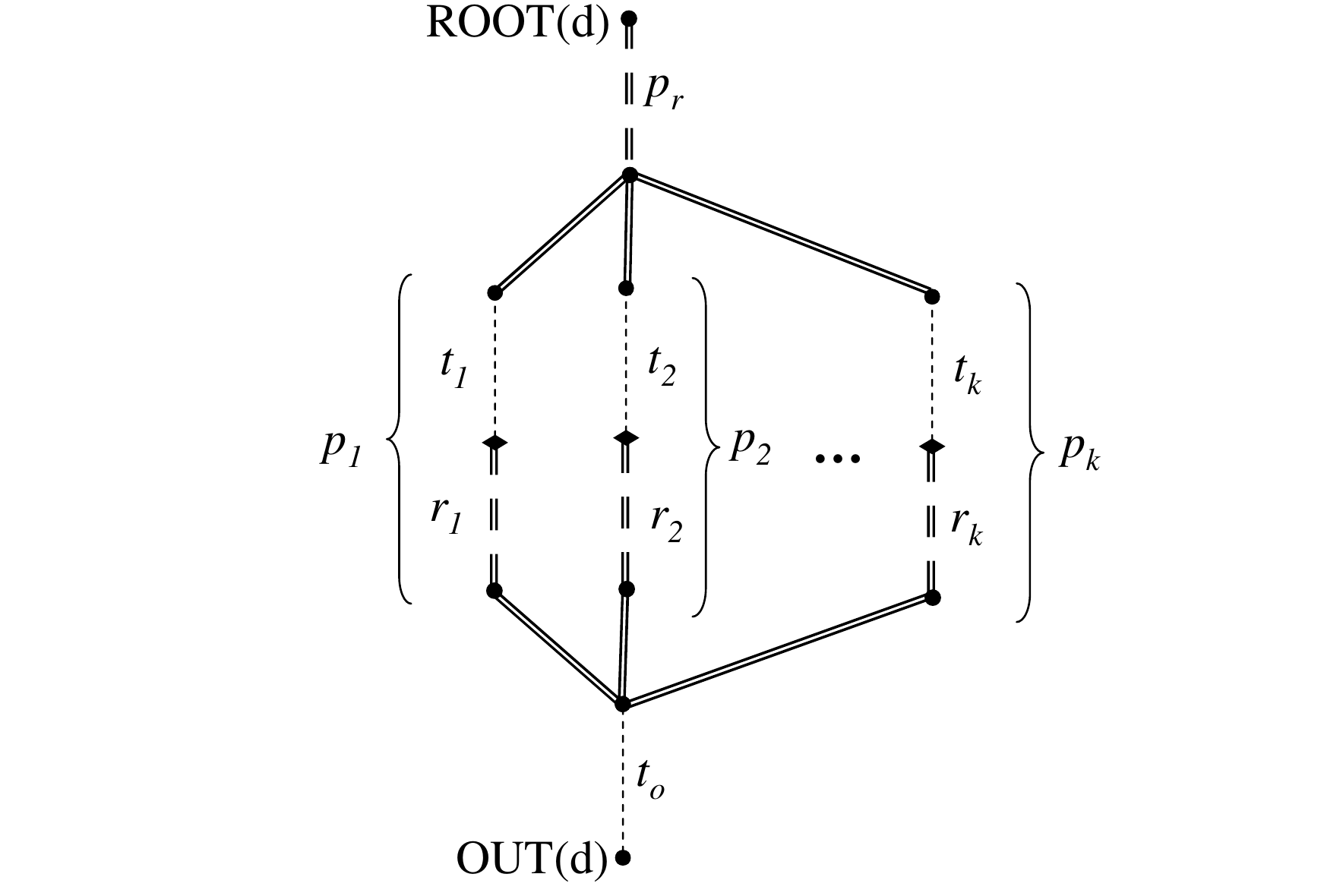}}                
 \subfloat[DAG pattern $d$ for Lemma~\ref{l:onetokenq_nviews}.]{\label{fig:subgraph_onetokencase}\includegraphics[trim=0mm 130mm 210mm 0mm, clip=true,
scale=0.50]{images/sd_onetokencase-bis}} 
 \caption{DAG patterns for the proof of  Theorem~\ref{th:completenessUF-es}.}
   \label{fig:contour2}
     \vspace{-3mm}
 \end{figure}

Let us assume towards a contradiction that \emph{$d$ is union-free}
and let $q$ be the interleaving such that $q\equiv d$. Without loss of generality, let $q$
be of the form $q=p_r//t//m//t_o$,  where $t$ is the token
immediately following $p_r$ (the $m$ part might be empty). 

Without loss of generality, 
let us also  assume that $t_1 \not \equiv t$ (we know that there must be at
least one such token among $t_1, \dots, t_k$.) We  show that by
assuming $q\equiv d$ we obtain the contradiction  $t \equiv
t_1$.

For $p_1$ chosen in this way, let $d'$ denote the DAG pattern obtained
from $d$ by removing its $p_1$ branch. Introducing for each $i$ the
pattern $v_i'=p_r//p_i//t_o,$ by $(\dagger)$ all incomparable, note
that $d'$ can be seen as $d'=\dagp{v_2' \cap \dots \cap v_k'}$ and
note also that $d \equiv d' \cap v_1' = d' \cap (p_r//t_1//r_1//t_o)$. 
By the inductive hypothesis, $d'$ is not union-free, i.e., 
there is some $x \geq 2$ and some patterns $q_1, \dots, q_x$, which are some incomparable
interleavings of $d'$ (such their root and result tokens have the same main branch), all of the form $p_r//\dots//t_o$ (by
induction, from Lemma~\ref{cl:shape}), such that
$d' \equiv q_1 \cup \dots \cup q_x$. 

So can conclude that $d \equiv v_1' \cap (q_1 \cup q_2 \cup \dots \cup q_m) = (v_1' \cap q_1) \cup (v_1' \cap q_2) \cup \dots \cup (v_1' \cap q_m)$.

Note now that we cannot have $v_1' \sqsubseteq q_i$, for any $q_i$,
since this would mean that $v_1' \sqsubseteq v_2', \dots, v_k'$, in
contradiction with ($\dagger$).

We proceed by  an exhaustive case
analysis:

\textbf{\emph{ Case 1:}}  \emph{for all $q_i$, we have $q_i \not \sqsubseteq v_1'$.}

In this case, each intersection of two given above will not be
union-free. This follows easily from Lemmas~\ref{lem:2-comp-skel} and~\ref{cl:shape}, since $v_1'$ and $q_i$ have the same root tokens and result
tokens (there is no containment mapping between them, so  there can
be no mapping between their intermediary parts).

Hence any interleaving resulting from some DAG pattern $d_i=dag(q_1
\cap v'_i)$ cannot even reduce all the other interleavings of $d_i$, so
$d$ cannot be union-free in this case, since $d=\cup_i d_i$. This case
can be thus discarded.

\textbf{\emph{ Case 2:}} \emph{at least two interleavings of $d'$, say $q_1$ and $q_2$, are such that $q_1 \sqsubseteq v_1$ and
$q_2 \sqsubseteq v_1$.}

We can thus reformulate $d$ as $d \equiv q_1 \cup  q_2 \cup (v_1' \cap q_3) \cup \dots\cup (v_1'\cap q_m)$. 
Now, each DAG pattern $v_1' \cap q_j$ is not union-free and, moreover, their interleavings cannot contain
$q_1$ or $q_2$ (since $q_1, q_2 \not \sqsubseteq q_j$ in the first place). Also, obviously, $q_2 \not \sqsubseteq q_1$ and $q_1 \not \sqsubseteq q_2$. So again $d$ can not be union-free and this case can be discarded as well.

\textbf{\emph{Case 3:}} \emph{exactly one of the interleavings of $d'$ ,
call it $q_1$, is contained in $v_1'$ ($q_1 \sqsubseteq v_1'$).}

In this case, $d$ can be reformulated as
$ d \equiv q_1 \cup(v_1' \cap q_2) \cup \dots\cup (v_1'\cap q_m)$ 
and cannot be union-free unless it is in fact equivalent to
$q_1$. This means that for all other $q_i$'s we must have $v_1' \cap
q_i \sqsubseteq q_1$. Of course, $q_1$ should be equivalent
(isomorphic modulo minimization, by Lemma~\ref{lem:equiv-iso}) to $q$,
the interleaving of $d$ for which we supposed $d \equiv q$, i.e.
$q_1\equiv p_r//t//m//t_o$. 

We continue by assuming for instance that $v_1' \cap q_2 \sqsubseteq q_1$.

Recall that $v_1'$ is of the form
$v_1'=p_r//p_1//t_o$ 
and let $q_2$ be of the
form $q_2=p_r//m_2//t_o$. Since $q_2 \not \sqsubseteq v_1'$ and they have the same root
 and result tokens, there is no mapping from $p_1$ into
$m_2$. Consequently, let $sf_1$ denote the maximal token-suffix of
$p_1$ that can map into $m_2$, and let $pr_1$ denote the remaining
part (i.e., a token-prefix). Since $pr_1$ cannot be empty,  we
can write $v_1'$ as
 $v_1'=p_r//pr_1'//sf_1//t_o$ where $pr_1'$ is an isomorphic copy of $pr_1$.

Let us now consider the interleaving $u$ of $v_1' \cap q_2$, of the
form $u=p_r//pr_1''//m_2//t_o$  where $pr_1''$ is an isomorphic copy
of $pr_1$ as well.

As we assumed that $u \sqsubseteq v_1'\cap q_2 \sqsubseteq q_1$,
there must exist a containment mapping
$\psi$ from $q_1$ to $u$.

Since $q_1 \sqsubseteq v_1'$, let $\phi$ be a containment mapping from
$v_1'$ into $q_1$.
So we have
$v_1' \stackrel{\phi}{\longrightarrow} q_1 \stackrel{\psi}{\longrightarrow} u$.

In particular, $\phi$ must map the $pr_1'//sf_1$ part of $v_1'$ in the
$t//m$ part of $q_1$. With a slight
abuse of notation, let $\phi(pr_1')$ denote the minimal token-prefix
of $t//m$ within which the image under $\phi$ of $pr_1'$ occurs. In
other words $\phi(pr_1')$ starts with the root token of $t$ and ends
with the token into which the output token of $pr_1'$ is mapped.
(Again, $\phi(pr_1')$ is well defined because all patterns are
skeletons and tokens can only map strictly inside tokens.)

We can thus write $q_1$ as $q_1=p_r//\phi(pr_1') \dots
\phi(sf_1)//t_o$. 

Next, we argue that in the containment mapping $\psi$ of $q_1$ into
$u$, we must have $\psi(\phi(pr_1'))=pr_1''$. (This follows easily
from the definition of $sf_1$.)  And this implies that $\phi(pr_1')
\equiv pr_1'' \equiv pr_1$. Hence $q_1$ and $v_1'$ start by  some
common non-empty token-prefix. Since one of them starts by $t$ and the
other by $t_1$ this means in the end that\emph{ $t \equiv t_1$, which is  a
contradiction.}

\textbf{Remark.} We can also generalize Lemma~\ref{cl:shape} as follows: the interleavings of \nf{d} are of the form $p_r//\dots//t_o$ (see Figure~\ref{fig:completeness2}).
\end{proof}
So we know for now that \apprules is complete for the
case of DAG patterns that are defined as the intersection of
skeleton queries when their root and result tokens have the same main branch.  Such an intersection is union-free iff there is
a query $v_i$ among them having an intermediary part into
which all the other intermediary parts map. If this is not the case,
the DAG is equivalent to a union of interleavings having the same
root tokens and result tokens.

We are now ready to give sum up the results so far and conclude the completeness proof for \xppes.

\begin{proof}[Summing-up]
We will show that, given $n$ (extended) skeletons $v_1, \dots, v_n$,  all having several tokens,   \apprules is complete for deciding union-freedom
for the DAG pattern $$d=\dagp{v_1 \cap \dots\cap v_n}.$$

We first rewrite $d$ by R1 steps. We obtain after this phase a DAG
pattern $d$ in which the root token of $d$ may have several main
branch nodes with outgoing //-edges. Similarly, the result token may
have several nodes with incoming //-edges. If this is not the case,
neither for the root token nor for the result token, then we know that
the algorithm \apprules is in this case complete by Lemma~\ref{lem:n-comp-skel}.

Let us assume that some run of \apprules ends without a tree. We
can easily prove that in this case the following run \apprules
would also stop without yielding a tree:
\begin{itemize}

\item first refine by rules R2, R3 and R4 the root token and the result token
 w.r.t. their outgoing/incoming //- edges,

\item then rewrite out some of the branches in parallel by applying R7.
\end{itemize}

We continue assuming that we do not obtain a tree by the above run. At
this point, $d$ is a DAG pattern as the one illustrated in
Figure~\ref{fig:completeness4}, where $t_r$ denotes the root token
(ending with node $n_r$) and $t_o$ denotes the result token (starting
with $n_o$). Rules R2, R3 and R4 no longer apply, hence each //-edge
outgoing from a node of $t_r$ that is ancestor of $n_r$ cannot be
refined into connecting it to a lower node in $t_r$. Similar for
//-edges incoming for nodes of $t_o$ that are descendants of $n_o$.

\eat{

\begin{figure}[h]
\hspace{0.5cm}
\includegraphics[trim=50mm 85mm 0mm 25mm, clip=true,
scale=0.40]{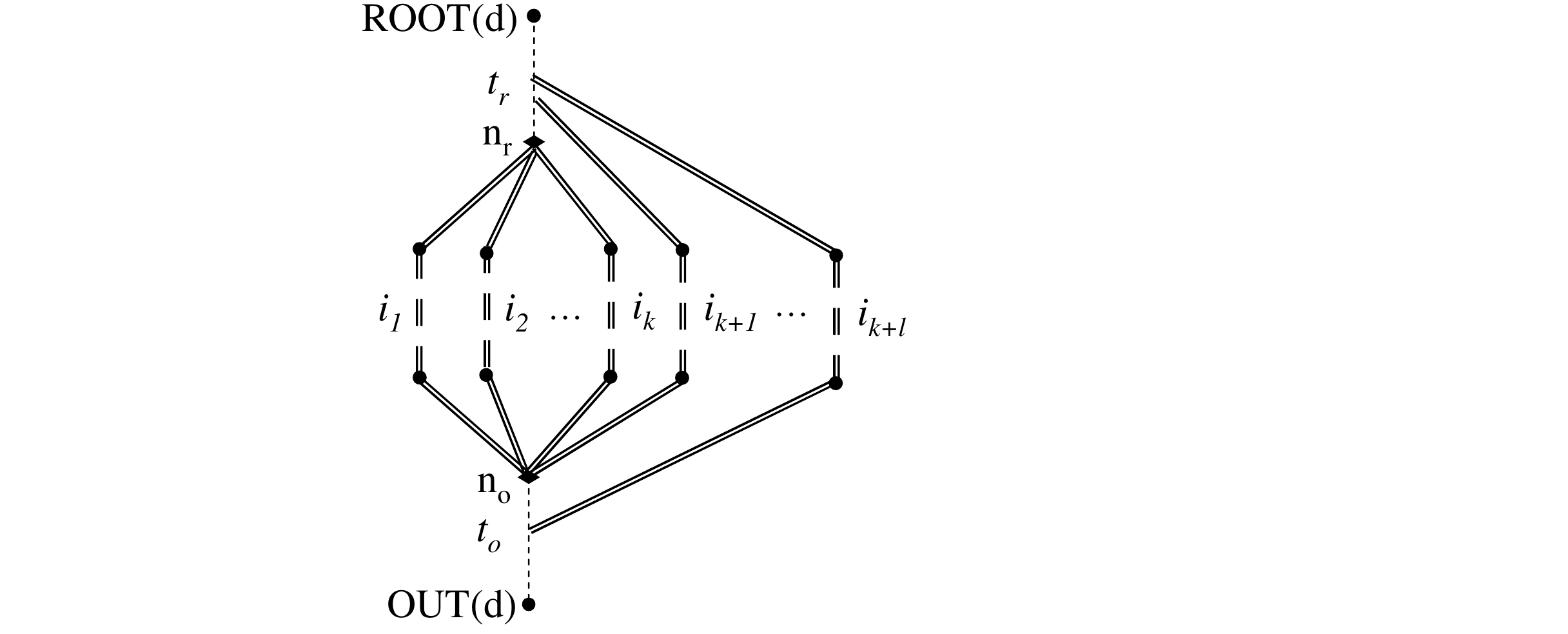}
\caption{DAG pattern $d$. \label{fig:completeness4}}
\end{figure}
}

The intermediary branches $i_1, \dots, i_k$ denote those that start
from $n_r$ and end at $n_o$ (we use this notation, even if there may
be no such $i_1, \dots, i_k$ and $k=0$). The other branches in
parallel, $i_{k+1},\dots, i_{k+l}$, denote those that do not obey both
conditions.  If $l=0$, i.e. there are no such branches, we fall again
in the case handled by Lemma~\ref{lem:n-comp-skel}, for which the algorithm is complete. We continue with the assumptions that $k \geq
0$ and $l \geq 1$ as well.
%


We next prove that $d$ is not union-free.
%



We introduce some additional notation. For each $i_{j}$, $k+1 \leq j
\leq k+l$, such that $i_j$ starts above $n_r$, let $n_{j}^r$ denote
the node in $t_r$ that is sibling of the first node in $i_j$ (i.e.,
$n_{j}^r$ and the first node in $i_j$ have the same parent node, a
node in $t_r$). Note that $n_{j}^r$ is ancestor-or-self of $n_r$.  Let
$n_{j}^o$ denote the node of $t_o$ that is ``parent-sibling'' of $i_j$
(they have the same child node). $n_{j}^o$ is defined if $i_j$ ends below
$n_o$ and it is descendant-or-self of $n_o$.

For each $i_j$, by $pr_j$ we denote its maximal token-prefix that can
map in $\tp{d}{n_{j}^r/\dots/n_r}$. Similarly, for each $i_j$ by
$sf_j$ we denote the maximal token-suffix that can map in
$\tp{d}{n_o/\dots/n_{j}^o}$.

Note that $pr_j$ and $sf_j$ cannot overlap since in this case $i_j$
would have been rewritten away by R7. 

We can thus write each $i_j$ as
$i_j=pr_j//m_j//sf_j,~ \textrm{for}~ k+1\leq j \leq l+1$. 

Now, we consider a second
DAG pattern $d'$ obtained from $d$ by replacing each $i_j$ branch by  $m_j$, connected now  by //-edges to $n_r$ and $n_o$
(Figure~\ref{fig:completeness5}), instead of the parent of
$n_{j}^r$ and the child of $n_{j}^o$.

\eat{
\begin{figure}[h]
\hspace{0.5cm}
\includegraphics[trim=50mm 85mm 0mm 25mm, clip=true,
scale=0.40]{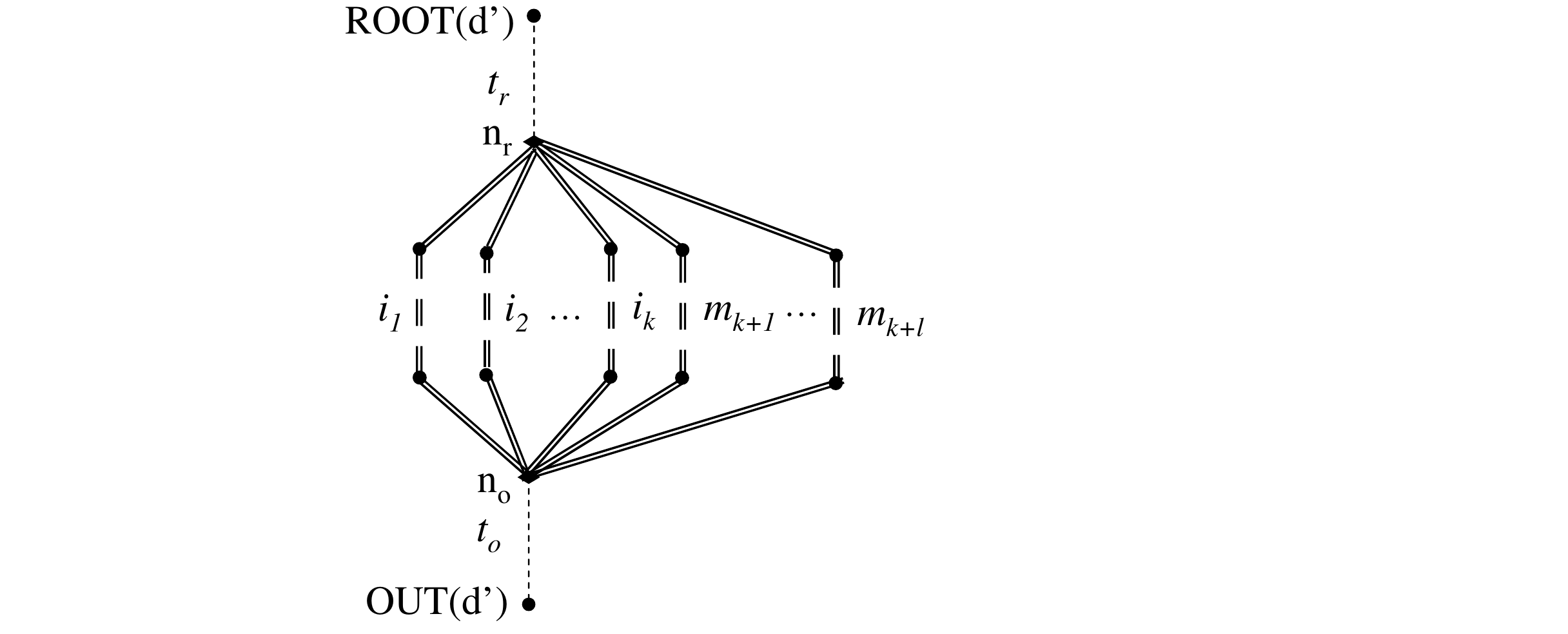}
\caption{DAG pattern $d'$. \label{fig:completeness5}}
\end{figure}
}

\begin{figure}[t]
  \centering
                 \subfloat[DAG pattern $d$ for Lemma~\ref{lem:n-comp-skel}.]{\label{fig:completeness4}\includegraphics[trim=50mm 85mm 80mm 25mm, clip=true,
scale=0.40]{images/completeness4}} 
 \subfloat[DAG pattern $d$ for Theorem~\ref{th:completenessUF-es}.]{\label{fig:completeness5}\includegraphics[trim=60mm 85mm 60mm 25mm, clip=true,
scale=0.40]{images/completeness5}} 
 \caption{DAG patterns for the proof of  Theorem~\ref{th:completenessUF-es}.}
   \label{fig:contour3}
     \vspace{-3mm}
 \end{figure}

We argue now that the set of interleavings of $d'$ is included in the set of interleavings of $d$ (set inclusion). Moreover,    $d$ is union-free only if $d'$ is union-free, such that  if $d'\equiv p$, for an interleaving $p$, then $p$ is the only candidate for $d\equiv p$.    
 First, it is straightforward that all the interleavings of $d'$ are interleavings of $d$ as well. The particularity of $d'$ is that its interleavings do not modify the tokens $t_r$ and $t_o$. More precisely each interleaving will be of the form $t_r//\dots//t_o$. Moreover, by the way $d'$ was defined and given that no R2, R3 or R4 steps applied on $d$, we  argue that all other interleavings of $d$ will either
(a) be redundant, i.e. contained in those of $d'$, (b) add some predicate on $t_r$ or $t_o$ or (c) have a longer root token (resp. result token) than $t_r$ (resp. $t_o$). But this means that an interleaving $p \in \nf{d} - \nf{d'}$ cannot have a containment mapping into an interleaving of the form $t_r//\dots//t_o$. Hence it cannot be equivalent to $d$. So the only interleaving $p$ candidates for $p\equiv d$ are those of $\nf{d'}$. From this it follows that $d$ can be union-free only if $d'$ is union-free.

Note  now that by Lemma~\ref{lem:n-comp-skel} $d'$ is union-free iff there exists some $m_j$ into which all $i_1, \dots, i_k$  and all other $m_i$'s map. This is because of the assumption that among $i_1,\dots,i_k$ there is no branch $i_j$ into which all other $i_i$'s map.

We continue towards showing that $d$ is not union-free with this assumption and  let $m$ denote the branch into which all others map. Note that among  $m_{k+1},\dots, m_{k+l}$ there can be more than one ``copy'' of $m$ (i.e., equivalent to $m$). By $m_c$ we denote all these copies. Among $i_1,\dots, i_k$ there is no copy of $m$ (otherwise R7 would have triggered).

Let $d' \equiv p=t_r//m//t_o$, for $m=t//m'$.  We build next an
interleaving $w$ of $d$ s.t. $w\not \sqsubseteq p$.

 W.l.g. let  us assume that all the $m_c$ copies of $m$ are connected  in $d$ to a node that is strict ancestor of $n_r$\footnote{The remaining cases when
  \begin{itemize}
  \item all the $m_c$ copies of $m$ are connected in $d$ to a node that is strict descendant of $n_o$, or
   \item all the $m_c$ copies of $m$ but one (we cannot have more than one, otherwise R7 would have triggered leaving only one) are connected in $d$ to a node that is strict ancestor of $n_r$  and all the $m_c$ copies of $m$ but one are connected in $d$ to a node that is strict descendant of $n_o$,
        \end{itemize} can be handled similarly. 
 }. Since R2 or R4  did not apply on these copies of $m$, it means that a strict prefix of the main branch of $m$'s root token $t$ maps in a suffix of the main branch of $t_r$, when the possibly non-empty preceding token-prefix $pr_j$ is collapsed somewhere ``higher''.

Let $\psi$ denote the partial mapping from $t$ into $t_r$ that uses the maximal possible prefix of $t$ across all the copies $m_c$. Let $t$ be $t=t'/t''$, where $t'$ is this maximal prefix (not empty).

We are now ready to build $w$.

We build first the root token $t_r'$ of the $w$ interleaving as follows: let $t_r'$ denote an interleaving of $t_r$ and $t$ defined by the code $i=\mb{t_r}/\mb{t''}$, and $f_i$ defined as ``identity'' on $t_r$ and $t''$, and  $f_i(n)=\psi(n)$ for the main branch nodes of $t'$.

We build the intermediary part $p$ of the $w$ interleaving as follows: starting from $i_x \in \{i_1, \dots, i_k, m'\}$ (or simply from $i_x \in \{i_1 \dots, i_k\}$ in the case $m'$ is empty), let us interpret them as the intermediary parts of the following skeleton patterns
$ s_x = start//i_x//end$.

Let also $s$ denote the pattern $s=start//m//end$.

Let us now consider now the DAG pattern $d'=dag(\cap_x s_x)$. Since none of the $s_x$ patterns is equivalent to $s$, from Lemma~\ref{lem:n-comp-skel} we have that $d' \not \equiv s$. Moreover, since $s \sqsubseteq d'$ (because $s_x \sqsubseteq s$), we must have that $d' \not \sqsubseteq s$. In other words, there must exist an interleaving $w'$ of $d'$, of the form $start//p//end$ such that $w' \not \sqsubseteq s$. Finally, this means $p$ is such that while all the $i_1, \dots i_k, m'$ map into it, we have that $m$ does not map into it.

Finally, we define $w$ as $ w=t_r'//p//t_o$.  It is easy to check that $w$ is an interleaving of
$d$ ($d$ has a containment mapping into $w$) but $w \not \sqsubseteq
p=t_r//t//m'//t_o$. Hence $d$ is not union-free.
\eat{
In order to complete the proof, recall that we already handled the case in
which one of the skeleton queries has only one token (Lemma~\ref{l:onetokenq_nviews}).
}
\end{proof}
 \textbf{Remark.} We can draw the following conclusions from the proof of
Theorem~\ref{th:completenessUF-es}:  When \apprules is applied to DAG patterns built from multi-token views from \xppes, after R1 steps, followed eventually by R2,
R3 and R4 steps, we obtain the branches in parallel $i_1, \dots, i_k$
starting from the last node of the root token ($t_r$) and ending with
the first node of the result token ($t_o$). Other branches in parallel
may exist in $d$, but connected to other nodes of $t_r$ and
$t_o$. Then, by eventually some R7 steps, the DAG pattern must become
a tree, otherwise it is not union-free.  Under the extended skeletons restrictions, R5 and R6 are not necessary for completeness. The resulting tree is
$t_r//i_1//t_o$, where $i_1$ is one of the branches in parallel, into
which all other, $i_2, \dots, i_k$ map.

\eat{
We go now beyond extended skeletons, taking into account predicates that start by a //-edge. We adopt the following approach: assuming that the DAG rewriting process stops outputting a DAG pattern $d$ that is not a tree, we identify a set of candidate interleavings $c \in \interleave{d}$ such that $d$ is union-free only if one of them is equivalent to $d$. Then, for each candidate $c$ we build an interleaving $w \in \interleave{d}$ that is not contained in $c$. This is sufficient to conclude that $d$ cannot be union-free, hence the algorithm is also complete.
}

\eat{
\section{Proof of Theorem~\ref{th:hardnessUF-desc-2}}
\label{sec:hardnessUF-desc-2}
\begin{figure*}[h]
\begin{center}
\includegraphics[trim=0mm 0mm 0mm 0mm, clip=true, scale=0.45]{images/uf-hard1}
\end{center}
\caption{The construction for coNP-hardness of union-freedom (\xppdesc). \label{fig:hardness1}}
\end{figure*}
coNP-hardness  is proven by reduction from tautology of 3DNF
formulas, which is known to be coNP-complete. We start from a 3DNF
formula $\phi(\bar x) = C_1(\bar x) \vee C_2(\bar x) \vee \dots
C_m(\bar x) $ over the boolean variables $\bar x = (x_1,\dots x_n)$,
where $C_i(\bar x)$ are conjunctions of literals.

Out of $\phi$, we build patterns $p_0, p_1, \dots, p_n \in \xppdesc$ over
$\Sigma = \{x_1, \dots, x_n, a, c, yes, out\}$ such that the DAG pattern $d=p_0
\cap p_1 \cap \dots \cap p_n$ is union-free iff $\phi$ is a tautology.

We build the patterns $p_0$, $p_1, \dots, p_n$, based on the gadgets $P,  P_{yes}, P_{C_1}, \dots P_{C_m},  M, Q_{C_1}, \dots, Q_{C_m}, C, Q,$ and $P_X$ where $X$ can be any set of one, two or three variables (see Figure~\ref{fig:hardness1}).

More precisely, these gadgets are defined as follows:

\begin{enumerate}
\item the linear pattern with $m+1$ $c$-nodes, $c/c/\dots/c$ (denoted $M$)

\item the pattern $x_1/x_2/\dots/x_n$ (denoted $P$)

\item the pattern $x_1[yes]/x_2[yes]/\dots/x_n[yes]$ (denoted $P_{yes}$)


\item patterns $P_X$, where $X$ is a set of variables of size at most $3$, obtained from $P$ by putting a $[yes]$ predicate below the nodes labeled by the variables in $X$.

\item for each clause $C_i$,  the pattern $P_{X_t}[true]/a/M/P_{X_f}[false]/a/M/P//out$, where $X_t$ is the set of positive variables in $C_i$ and $X_f$ is the set of negated variables in $C_i$ (this pattern is denoted $P_{C_i}$ ). For instance,  for $C_i=(x_1 \wedge \bar{x_2} \wedge x_5)$, we have the pattern $P_{C_i} =P_{\{1,5\}}[true]/a/M/P_{\{2\}}[false]/a/M/P//out$.

\item for each clause $C_i$, $Q_{C_i}$ denotes the predicate $[c/c/c/\dots/c[P_{C_i}]]$, with $m-i +1$ $c$-nodes,

\item for each $C_i$, the predicate $Q_i=[Q_{C_1}, \dots Q_{C_{i-1}}, Q_{C_{i+1}}, \dots, Q_{C_m}]$, that is the list of all $Q_{C_j}$ predicates for $j \neq i$.
    \item the  pattern $c[Q_1]/c[Q_2]/c[Q_3]/\dots c[Q_m]/c$ (denoted $C$)
    \item the predicate $Q=[Q_{C_1}, \dots, Q_{C_m}]$
\end{enumerate}

The $n+1$ patterns are then given  the last section of Figure~\ref{fig:hardness1}.

First, note that no inheritance of predicates occurs in these patters. $Q_{C_i}$ predicates are not inherited in the $C$ part of $p_0$ because that would require some $x_1$-label to be equated with the $c$-label. Similarly, the $P_{yes}$ part of the main branch does not put implicit $Q_{C_i}$ predicates at $c$-nodes either.

We argue that the candidate interleaving  $p_c$ such that $p_c\equiv d$
is unique: $p_c$ is obtained by the code $i$ corresponding to the main
branch of $p_0$, and the function $f_i$ that maps the first $a$-node (the one with a predicate $[.//Q]$)
of each pattern $p_1, \dots, p_n$ in the same image as the third $a$-node of $p_0$ (the parent of the $C$ part). This
is the interleaving that will yield the ``minimal'' extended skeleton
(namely the one of $p_0$), since nodes with a $[yes]$ predicate are coalesced with $p_0$ nodes having already that predicate. All others would at least have additional $[yes]$ predicate branches and even longer main branches and thus cannot not map into $p_c$. Hence no other interleaving can contain $p_c$.

We show in the following that $p_c$ will contain (and reduce) all other
interleavings of $p_0 \cap \dots \cap p_n$ iff $\phi$ is a tautology. Moreover, it is easy see that $p_c$ contains some interleaving $p$ if and only if its $[.//Q]$ predicate can be mapped at the third $a$-node from the root in $p$.

Note now that $p_c$ will contain all
interleavings $p$ that  for at least some pattern $p_j$ ``put'' its  first $a$-node either below or in the
third $a$-node of $p_0$.  This is because $[.//Q]$ would be either explicitly present at the third $a$-node in $p$ or it would be inherited by this node from some $a$-labeled descendant.   

So, the interleavings  that remain be considered are those described by a function
$f_i'$ which takes \emph{all} the first $a$-nodes from $p_1, \dots, p_n$ higher in $p_0$, i.e. in
either the first or the second $a$-node of $p_0$. Each of these interleaving
will basically make a choice between these two $a$-nodes.

For some $p_j$, by choosing to coalesce its first $a$-node with the first $a$-node of $p_0$ we  get an $[yes]$
predicate at the $x_i$ node of the \emph{true} $P$ part of
$p_0$. Similarly, by coalescing with the second $a$-node we get an $[yes]$
predicate at the $x_i$ node of the \emph{false} $P$ part of
$p_0$. So, these $n$ individual choices of where to coalesce  $a$-nodes amount to a truth assignment for the $n$ variables, and in each interleaving the $yes$ predicate will indicate that assignment.

Recall that in order for $p_c$ to contain such an interleaving $p$, it must be possible to map the predicate
$[.//Q]$ of the third $a$-node
of $p_c$  at the third $a$-node of $p$.

We can now argue that $p_0 \cap p_1 \cap \dots \cap p_n \sqsubseteq p_c$ iff $\phi$ is a tautology. The if direction (when each truth assignment $t$ makes at least one clause $C_i$ true) is immediate. For a truth assignment with clause $C_i$ being true, in the corresponding interleaving $p$, the $P_{C_i}$ predicate will hold at the last $c$-node in the $C$  part, hence the $Q_{C_i}$ predicate will hold at the $i$th $c$-node in $C$. Since all other $Q_{C_j}$ predicates, for $j\neq i$, were already explicitly present at this $i$th $c$-node, it is now easy to see that the $[.//Q]$ predicate would be verified at the $a$-labeled ancestor. Hence there exists a containment mapping from $p_c$ into $p$.

The only if direction is similar. If for some truth assignment, none of the clauses is \textsc{true} (in the case $\phi$ is not a tautology), then it is easy to check that $p_c$ will not have a containment mapping into the interleaving $p$ corresponding to that truth assignment. This is because the $[.//Q]$ predicate would not map at the third $a$-node in $p$.  
}

\section{Proof of Theorem~\ref{th:completenessUF-desc-akin} (Rewrite-plans for PTIME)}
\label{sec:completenessUF-desc-akin} 
We give in this section  the completeness proof for rewrite plans formed by akin patterns.


We  show that,
given $n$ akin tree patterns $v_1, \dots, v_n$, \apprules decides union-freedom
for $d=dag(v_1 \cap \dots \cap v_n)$. Let each $v_j$ be defined as $v_j=t_r^j//i_j//t_o^j$.
\\

\vspace{-3mm}
\noindent \textbf{Special case.} We start by considering the special case when the patterns have the same main branch for their result tokens as well.

By
Lemmas~\ref{lem:skel-necessary} and~\ref{lem:n-comp-skel}, we know
that $d$ is union-free only if the intermediary parts $i_j$ are such
that their skeletons map in the skeleton of one of them. Without loss
of generality, let us assume that all $s(i_j)$ map in $s(i_1)$. We
continue with this assumption.

First, the initial R1 steps coalesce the root and result tokens of the
$n$ branches, yielding a DAG pattern similar to the one illustrated in
Figure~\ref{fig:completeness3}. Then, the only rules that may be
applicable are R6 and R7.  Let us assume that \apprules stops
outputting a pattern that is not a tree.  We show that $d$ is not
union-free.

If the algorithm~\apprules stops without outputting a tree in some
run, then it will also stop without outputting a tree in the following
particular rewriting strategy\:

  \begin{itemize}
  \item we first apply R6 on the ``biggest''branches in parallel $i_j,i_k$ such that $s(i_j)\equiv s(i_k)\equiv s(i_1)$, if any. It is straightforward that R6 must apply for these branches, coalescing entirely the two branches into one branch. After this phase, there will be no other parallel branch with skeleton $s(i_1)$, besides $i_1$ itself.

   \item   Then, R6 is applied only if applicable on all the branches in parallel at once. This phase  will terminate with a refined $d$ similar to the one illustrated in Figure~\ref{fig:completeness2}, where $2 \leq k \leq n$, $p_r$ denotes the common part following the root (may have several tokens if R6 was applied) and $t_1, \dots t_k$ denote the sibling tokens on which R6 no longer applies (i.e., they are not all \emph{similar} hence they do not all have the same skeleton).

       Also, rule R7 is applied freely, and it can rewrite out some of the branches in parallel.

  After this phase, while there exists a mapping from each $s(p_i)$ into $s(p_1)$, there is no mapping from $p_i$ into $p_1$ . Note also that, by the first phase of the rewriting strategy, we cannot have the opposite mapping from $s(p_1)$ into $s(p_i)$.

  \item  Finally, rule R6 is applied only between $p_1$ on the one hand, and other branches $p_i$ on the other hand, while R7 is still applied freely.
  \end{itemize}

We obtain a DAG pattern similar to the one in Figure~\ref{fig:completeness6}. Let us assume that besides $p_1$ there are $l$ remaining branches in parallel, connected by a //-edge either to $p_r$ or to various tokens of $p_1$.

\begin{figure}[t]
\centering
\includegraphics[trim=65mm 0mm 50mm 0mm, clip=true,
scale=0.35]{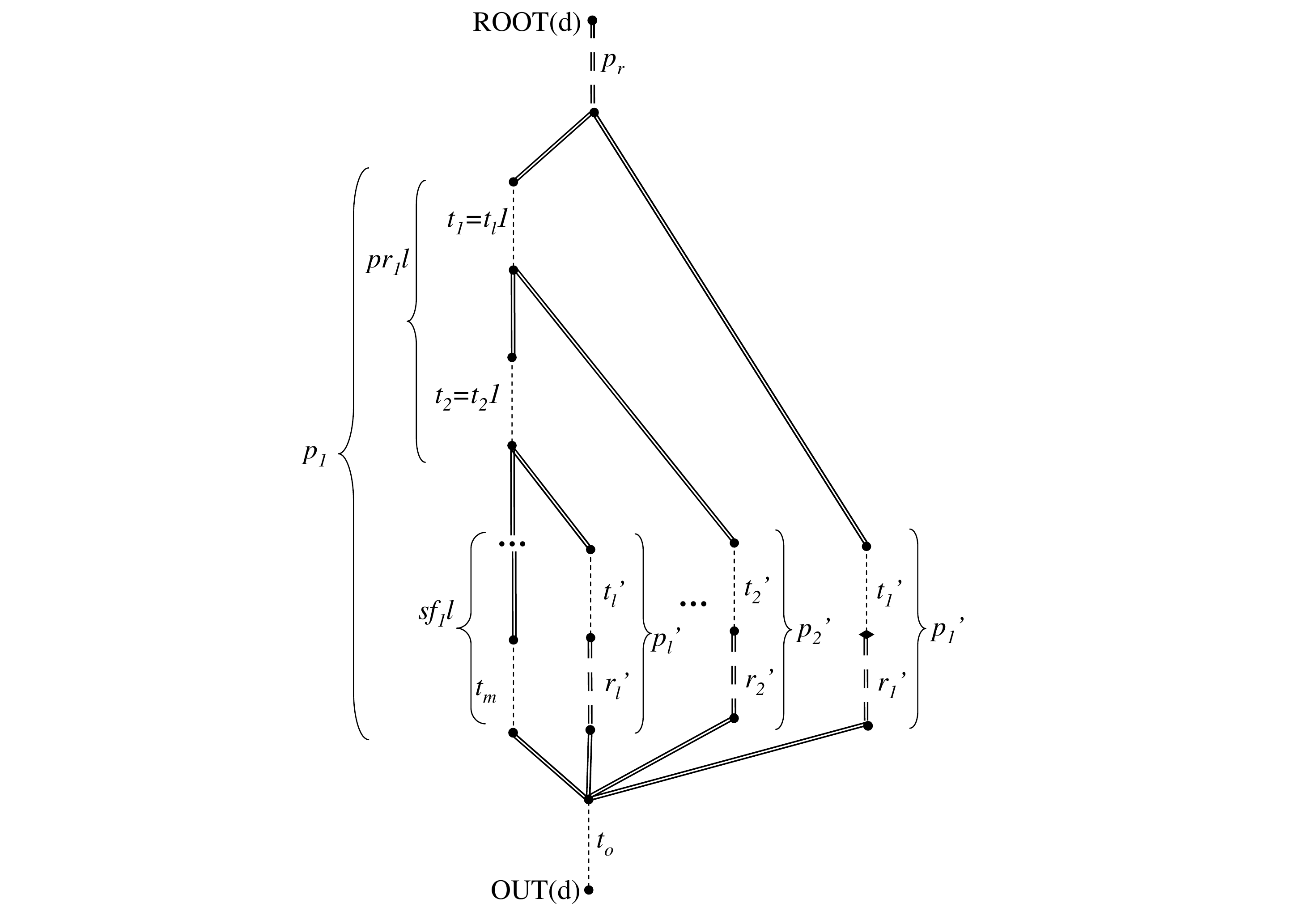}
\caption{DAG pattern $d'$. \label{fig:completeness6}}
\vspace{-0.3cm}
\end{figure}

Let $p_1=t_1//\dots//t_m$. For $i=1,l$, let
$p_i'=t_i'//r_i'$ denote now these branches in parallel with (part of)
$p_1$. For each $i=1,l$, let $t_i^1$ denote the token in $p_1$ that is
sibling of the token $t_i'$. Note that $t_i'$ and $t_i^1$ must be
dissimilar, hence will have different skeletons. Let $sf_i^1$ denote
the token-suffix of $p_1$ that is in parallel with $p_i'$ and let
$pr_i^1$ denote the rest of $p_1$ (a token-prefix).  For each $i=1,l$,
we can thus reformulate $p_1$ as $p_1=pr_i^1//sf_i^1$, where the root 
token of $sf_i^1$ is $t_i^1$. Note that for each $i$ we have that
$s(p_i')$ maps into $s(sf_i^1)$, while the opposite is not true.

It is immediate  that $d$ can be union-free, for some $c$ such that  $d \equiv c$, only if $c$ is of the form $c=p_r//m//t_o$ where $s(m)=s(p_1)$, since  this is the minimal skeleton for an interleaving.

All the candidate interleavings $c$ will be defined by the code
$i=\mb{p_r//m//t_o}$ and some function $f_i:\mbn{d} \rightarrow
i$. What distinguishes the various $c$'s is the definition of 
$f_i$  on the nodes of the branches $p_1', \dots, p_l'$
(since the other main branch nodes in $d$ have only one possible
image).  We  show next  that for any such $f_i$ and associated
interleaving $c$ we can build the $w$ witness with  $w \not
\sqsubseteq c$.

Let $f_i$ be fixed and let $c$ denote the corresponding interleaving
for code $i$ and function $f_i$. Note that we can interpret $f_i$ as a
series of rewrite steps over $d$ that collapse the pairs of nodes $(n,
f_i(n))$, for all the nodes $n$ in the $p_1', \dots, p_l'$ branches,
outputting as end result the tree pattern $c$. These steps do not
modify the skeleton of $p_1$, hence can only bring some new predicates
starting by //-edge.

Next, we describe how the interleaving $w \not \sqsubseteq c$ is built,
from the current pattern $d$ of Figure~\ref{fig:completeness6}.

Let $n_c \in \mb{c}$ denote the lowest main branch node in $c$'s $m$ part which has a subtree predicate $st$ that is not present (in other words, cannot be mapped) at the associated node $n_1$ in the $p_1$ part of $d$. $st$ must start with a //-edge and must come from a node of some (maybe several) branches $p_i'$. (We know that such a node $n_c$ must exist, otherwise the $p_i'$ branches would fully map in the corresponding branch in parallel $sf_i^1$ and rule R7 would have applied).

Without loss of generality,  let $n_{i_1}',\dots, n_{i_s}'$, for $\{i_1,\dots, i_{s}\} \subseteq \{1,\dots, l\}$, denote the nodes from the branches $p_{i_1}', \dots, p_{i_s}'$ that are the ``source'' of $st$\footnote{They have a predicate $st'$ into which $st$ maps.}. So we have  $f_i(n_1)=f_i(n_{i_1}')= \dots =f_i(n_{i_s}')=n_c$ and we can say that $n_c$ is the result of coalescing $n_1$ with $n_{i_1}', \dots n_{i_s}'$.

Now, we can see the left branch $p_1$ as being divided into two parts, the one down to the token of $n_1$ (that token included), denoted $p_{11}$, and the rest, denoted $p_{12}$. So we can write $p_1$ as $p_1=p_{11}//p_{12}$. 

Similarly, for each $x \in \{i_1,\dots, i_{s}\}$ we can see each main branch $pr_{x}^{1}//p_{x}'$ as being divided into two parts, the one down to the token of $n_{x}'$ (that token included), denoted $p_{x1}'$, and the rest, denoted  $p_{x2}'$. So we can write each main branch $pr_{x}^{1}//p_{x}'$ of $d$ as $pr_{x}^{1}//p_{x}'=p_{x1}'//p_{x2}'$.


Note that by the way $n_c$ was chosen (as the lowest node) we can
conclude that by $f_i$ (on the main branch nodes) we can fully map
$\tp{d}{p_{x2}'}$ into $\sub{d}{n_1}$, for all $x$ (i.e., there are no
other added predicates below $n_1$'s level). It is also easy to see
that while $s(p_{x1}')$ maps in $s(p_{11})$ (by $f_i$), the
opposite is not true, otherwise R6 steps would have applied up to this
point.

We are now ready to construct $w$. First, we obtain a part $p$ of the
$w$ interleaving as follows: starting from the set of skeleton queries
$s(p_{x1}')$, for all $x \in \{i_1,\dots, i_{s}\}$, let us interpret
them as the intermediary parts of the following skeleton patterns $ s_x = start//s(p_{x1}')//end$. 

Let also $s$ denote the skeleton pattern $s=start//s(p_{11})//end$.

Let us now consider the DAG pattern $d'=dag(\cap_x s_x)$. Since none
of the skeleton patterns $s_x$ is equivalent to $s$, from
Lemma~\ref{lem:n-comp-skel} we have that $d' \not \equiv s$. Moreover,
since $s \sqsubseteq d'$ (because $s \sqsubseteq s_x$, by the way $c$
was defined), we must have that $d' \not \sqsubseteq s$. In other
words, there must exist an interleaving $w'$ of $d'$, of the form
$start//p//end$ such that $w' \not \sqsubseteq s$. Finally, this means
$p$ is such that while all the $s(p_{x1}')$ map into it, we have that
$s(p_{11})$ does not map into it. We will use this property. For each
$p_{x1}'$, let $f_{x1}$ denote a mapping from $s(p_{x1}')$ into $p$.

Next, we obtain a second part of $w$ as follows. Let $pr_{11}$ denote
the maximal token-suffix of $p_{11}$ such that $s(p_{11})$ can map in
$p$, and let $sf_{11}$ denote the remaining part. $sf_{11}$ cannot be
empty, so it is formed by at least the output token of $p_{11}$, the one
with node $n_1$.  So we can see $p_{11}$ as
 $p_{11}=pr_{11}//sf_{11}$.  
 
 Let $f_p$ denote a partial mapping from
$s(p_{11})$ into $p$ that exhibits $sf_{11}$.

We will define $w$ by a code $i'$ and function $f_i'$ as follows:

\begin{itemize}
\item $i' = \lambda(p_r//p//sf_{11}//p_{12}//t_o)$,
\item $f_i'$ maps nodes of \mbn{d} into $i'$ positions as follows:
\begin{itemize}
\item $f_i'$ is ``identity'' for  the main branch nodes of $p_r$, $t_o$, for the $sf_{11}$ part of the $p_{11}$ prefix of $p_1$ and for the $p_{12}$ suffix of $p_1$,
 \item for the remaining main branch nodes $n$ in $p_{11}$ (i.e., those of $pr_{11}$), $f_i'(n) = f_p(n)$,
    \item for the main branch nodes $n$ of the $p_{x1}'$ prefix of the $pr_x^1//p_x'$ branch in $d$, for  $x \in \{i_1,\dots, i_{s}\}$, $f_i'(n)=f_{x1}(n)$
 \item for the remaining nodes $n$ in the $pr_x^1//p_x'$ branches (i.e. those in $p_{x2}'$), $f_i'(n)=f_i'(f_i(n))$.
 \item finally, for all the main branch nodes of the remaining branches $p_{y}'$, for $y \not \in \{i_1,\dots, i_{s}\}$, $f_i'(n)=f_i'(f_i(n))$.

     (they go where their images under $f_i$ go.)
\end{itemize}
\end{itemize}
We now argue that $w$ is an interleaving of $d$ and $w \not \sqsubseteq c$. First, it is easy to check that $w$ is an interleaving for $d$.
Recall that $c$ is s.t.  $s(c) = s(p_r//p_1//t_o)=s(p_r//pr_{11}//sf_{11}//p_{12}//t_o)$.  Second, it is also easy to check that $c$ can have a containment
mapping in $w$ iff its $sf_{11}$ part maps in the $sf_{11}$ of
$w$. But this is not possible because the $st$ subtree predicate is
not present on the $f_i'(n_1)$ node of $w$ (which is found somewhere
in the output token of the $sf_{11}$ part).
\\

\vspace{-3mm}
\noindent \textbf{General case.} We now consider the general case, when the result tokens do not necessarily have the same main branch. After the possible rewrite R1(i) steps on the root tokens, and after the possible rewrite steps of R1(ii), R2(ii), R3(ii) and  R4(ii) on the result tokens, we may now obtain a DAG pattern in which the branches in parallel may not be ``connected'' to $t_o$ at its highest node ($n_o$), but at some other node that is strict descendant of $n_o$. If this is not the case, then we are back to the special case discussed previously.

Otherwise, let us now consider the DAG pattern $d'$ obtained from $d$ by connecting the endpoints of the branches in parallel at $n_o$. We can easily see that the interleavings of $d'$  are all among those of $d$ and moreover, $d$ is union-free only if $d'$ is union-free, with $d' \equiv d\equiv p$, for some $p \in interleave(d')$. This is because the interleavings of $d$ that are not interleavings of $d'$ as well are those that add some predicates on $t_o$ that are not present in all the interleavings.

By Lemmas~\ref{lem:skel-necessary} and~\ref{lem:n-comp-skel}, we know
that $d$ is union-free only if the intermediary parts $i_j$ are such
that their skeletons map in the skeleton of one of them, which in addition, in the current $d$ pattern, must start at $n_r$ and end at $n_o$. Without loss of generality, let us assume that this is $i_1$ (note that the $i_1$ branch will not be affected by the transformation from $d$ to $d'$).

From the special case, we know under what conditions $d'$  is union-free, and it is immediate that, when they hold, the interleaving $p$ is obtained from $i_1$ possibly by adding some predicates of the form $[.//\dots]$ to some of its main branch nodes. Importantly, each such predicate is added on the highest possible main branch node of $i_1$.

Finally, it is now easy to check that an interleaving $p$ of $d'$ obtained in this way will always have a containment mapping in any interleaving $p'$ of $d$: everything except the added predicates will map (by identity), while the added predicates (of the form $[.//\dots]$) will map in their respective occurrence in $p'$ (by necessity, found at a lower main branch node then in the one in $p$).

This ends the completeness proof of \apprules over unfoldings of rewrite plans that intersect only  akin views from \xppdesc.

\end{document}